\newcommand{\snapsNo}{\ensuremath{\mathcal X}}	
\newcommand{\Qfull}{\ensuremath{\Qmat_{\mathrm{full}}}}	
\newcommand{\leftSing}{\ensuremath{\mathbf{u}}}	
\newcommand{\snaps}[1]{\ensuremath{\snapsNo_{#1}}}	
\newcommand{\range}[1]{\ensuremath{\mathrm{range}\left(#1\right)}}	
\newcommand{\leftGappy}[1]{\ensuremath{\mathbf Y}_{#1}}	
\newcommand{\matApproxANo}{\ensuremath{\tilde \A}}	
\newcommand{\matApproxA}{\ensuremath{\matApproxANo\left(\param\right)}}	
\newcommand{\matApproxAOn}{\ensuremath{\matApproxANo\left(\paramOnline\right)}}	
\newcommand{\matApproxMNo}{\ensuremath{\tilde \M}}	
\newcommand{\matApproxM}{\ensuremath{\matApproxMNo\left(\param\right)}}	
\newcommand{\podArgsNo}{\ensuremath{{\mathbf W}}}	
\newcommand{\podArgs}[2]{\ensuremath{\podArgsNo\left(#1,#2\right)}} 
\newcommand{\podArgsSub}[1]{\ensuremath{\podArgsNo_{#1}}}
\newcommand{\Mbases}{\ensuremath{\underline \M}}	
\newcommand{\Mbasis}[1]{\ensuremath{\Mbases^{#1}}}	
\newcommand{\paramTraini}{\param^i}	
\newcommand{\paramTrain}{\{\paramTraini\}}	
\newcommand{\paramNom}{\ensuremath{\bar\param}}
\newcommand{\tTrain}{\ensuremath{\mathsf T_\mathrm{sample}}}	
\newcommand{\lastT}{\ensuremath{\mathsf T}}	
\newcommand{\timeDom}{\ensuremath{\left[0,\lastT\right]}}	
\newcommand{\paramOnline}{\ensuremath{\param^\star}}	
\newcommand{\x}{\ensuremath{\mathbf x}}	
\newcommand{\ext}{\ensuremath{}}
\newcommand{\fext} {\ensuremath{\mathbf f\ext}}
\newcommand{\fextApprox} {\ensuremath{\tilde \fext}}
\newcommand{\fextApproxparam} {\ensuremath{\fextApprox\left(\q,\dot \q, t;\param\right)}}
\newcommand{\tildefext} {\ensuremath{\fextApprox}}
\newcommand{\fextred} {\ensuremath{\mathbf f_r\ext}}
\newcommand{\fextredparam} {\ensuremath{\mathbf f_r\ext\left(\qred,\dot \qred, t;\param\right)}}
\newcommand{\fextredApprox} {\ensuremath{\tilde \fextred}}
\newcommand{\fextredApproxparam} {\ensuremath{\fextredApprox\ext\left(\qred,\dot \qred, t;\param\right)}}
\newcommand{\X} {\ensuremath{{\sparsePotEnNo}}}
\newcommand{\PsiMat} {\ensuremath{{\mathbf \Psi}}}
\newcommand{\q} {\ensuremath{{\mathbf q}}}
\newcommand{\p} {\ensuremath{{\mathbf p}}}
\newcommand{\pRed} {\ensuremath{{\mathbf p_r}}}
\newcommand{\vVec} {\ensuremath{{\mathbf v}}}
\newcommand{\w} {\ensuremath{{\mathbf w}}}
\newcommand{\Mi} {\ensuremath{{\M_i}}}
\newcommand{\qr} {\ensuremath{{\q_r}}}
\newcommand{\podstate} {\ensuremath{{\mathbf V}}}
\newcommand{\podVectorizedA} {\ensuremath{{\podArgsNo_{\vectorizeAsample}}}}
\newcommand{\sparse} {\ensuremath{{\mathbf U}}}
\newcommand{\sparsePotEn} {\ensuremath{{\sparse_V\left(\param\right)}}}
\newcommand{\sparsePotEnOn} {\ensuremath{{\sparse_V\left(\paramOnline\right)}}}
\newcommand{\sparsePotEnNo} {\ensuremath{{\sparse_V}}}
\newcommand{\sparseA} {\ensuremath{{\sparse_\A}}}
\newcommand{\Aone} {\ensuremath{{\A_1}}}
\newcommand{\Atwo} {\ensuremath{{\A_2}}}
\newcommand{\h} {\ensuremath{{h}}}
\newcommand{\hone} {\ensuremath{{\h_1}}}
\newcommand{\htwo} {\ensuremath{{\h_2}}}
\newcommand{\honeparam} {\ensuremath{{\hone\left(\param\right)}}}
\newcommand{\htwoparam} {\ensuremath{{\htwo\left(\param\right)}}}
\newcommand{\sparseMass} {\ensuremath{{\sparse_M}}}
\newcommand{\sampleMat} {\ensuremath{{\mathbf P}}}
\newcommand{\sampleMatT} {\ensuremath{{\sampleMat^T}}}
\newcommand{\sampleMatMin} {\ensuremath{{\mathbf P_1}}}
\newcommand{\sampleMatMinT} {\ensuremath{{\sampleMatMin^T}}}
\newcommand{\sampleMatVec} {\ensuremath{{\mathbf p}}}
\newcommand{\sampleMatVectorized} {\ensuremath{\bar\sampleMat}}
\newcommand{\sampleMatVectorizedVec} {\ensuremath{\bar{\mathbf p}}}
\newcommand{\sampleMatVectorizedT} {\ensuremath{\bar\sampleMat^T}}
\newcommand{\podf} {\ensuremath{\podArgsNo_{\f}}}
\newcommand{\podfn} {\ensuremath{\podArgsNo_{\f}}}
\newcommand{\metricRed} {\ensuremath{g_r}}
\newcommand{\metricRedApprox} {\ensuremath{\tilde g_r}}
\newcommand{\nf} {\ensuremath{n_{\f}}}
\newcommand{\nsample} {\ensuremath{m}}
\newcommand{\nTrain} {\ensuremath{p}}
\newcommand{\A}{\mathbf A}
\newcommand{\B}{\mathbf B}
\newcommand{\D}{\mathbf D}
\newcommand{\I}{\mathbf I}
\newcommand{\Lmat}{\mathbf L}
\newcommand{\Ared} {\ensuremath{{\podstate^T\A(\param)\podstate}}}
\newcommand{\AredApproxOn} {\ensuremath{{\tilde \A\left(\paramOnline\right)}}}
\newcommand{\Aparam} {\ensuremath{{\A\left(\param\right)}}}
\newcommand{\AparamArg}[1]{\ensuremath{{\A\left(#1\right)}}}
\newcommand{\Abases} {\ensuremath{{\underline \A}}}
\newcommand{\Abasis}[1] {\ensuremath{\Abases_#1}}
\newcommand{\Abasisi} {\ensuremath{{\Abases_i}}}
\newcommand{\vectorizeAbasis} {\ensuremath{{\underline {\mathbf a}}}}
\newcommand{\vectorizeAbasisi} {\ensuremath{{\underline {\mathbf a}^i}}}
\newcommand{\Coeff} {\ensuremath{\xi}}
\newcommand{\Acoeff} {\ensuremath{\xi_\A}}
\newcommand{\AcoeffParamOn} {\ensuremath{\Acoeff\left(\paramOnline\right)}}
\newcommand{\Acoeffi} {\ensuremath{{\Acoeff^i}}}
\newcommand{\AcoeffiParam} {\ensuremath{{\Acoeffi\left(\param\right)}}}
\newcommand{\Mcoeffi} {\ensuremath{\Coeff^i_{\M}}}
\newcommand{\nA} {\ensuremath{{n_{\A}}}}
\newcommand{\nM} {\ensuremath{{n_M}}}
\newcommand{\param} {\ensuremath{{\mu}}}
\newcommand{\M} {\ensuremath{{\mathbf M}}}
\newcommand{\Mparam} {\ensuremath{{\M\left(\param\right)}}}
\newcommand{\MEmptyparam} {\ensuremath{{\M}}}
\newcommand{\MparamOn} {\ensuremath{{\M\left(\paramOnline\right)}}}
\newcommand{\AparamOn} {\ensuremath{{\A\left(\paramOnline\right)}}}
\newcommand{\C} {\ensuremath{{\mathbf C}}}
\newcommand{\Cparam} {\ensuremath{{\C\left(\param\right)}}}
\newcommand{\rayleighM} {\ensuremath{\alpha}}
\newcommand{\rayleighK} {\ensuremath{\beta}}
\newcommand{\CEmptyparam} {\ensuremath{{\C}}}
\newcommand{\CparamOn} {\ensuremath{{\C\left(\paramOnline\right)}}}
\newcommand{\fextparam} {\ensuremath{\fext\left(\q,\dot \q, t;\param\right)}}
\newcommand{\fextparamRom} {\ensuremath{\fext\left(\initialQ + \podstate\qred,\podstate\dot \qred, t;\param\right)}}
\newcommand{\fextparamRomRef} {\ensuremath{\fext\left(\qRef + \podstate\qred,\podstate\dot \qred, t;\param\right)}}
\newcommand{\fextEmptyparam} {\ensuremath{\fext}}
\newcommand{\tildefextparam} {\ensuremath{\tildefext\left(\q,\dot \q, t;\param\right)}}
\newcommand{\fextparamOn} {\ensuremath{{\fext\left(\q,\dot \q, t;\paramOnline\right)}}}
\newcommand{\dissFunNo} {\ensuremath{\mathcal F}}
\newcommand{\dissFun}[1] {\ensuremath{{\dissFunNo\left(#1;\param\right)}}}
\newcommand{\dissFunRedNo} {\ensuremath{{\dissFunNo_r}}}
\newcommand{\dissFunRed}[1] {\ensuremath{{\dissFunNo_r\left(#1;\param\right)}}}
\newcommand{\dissFunRedApprox}[1] {\ensuremath{{\tilde \dissFunNo_r\left(#1;\param\right)}}}
\newcommand{\dissFunRedApproxNo} {\ensuremath{{\tilde \dissFunNo_r}}}
\newcommand{\potEn} {\ensuremath{V}}
\newcommand{\potEnParam} {\ensuremath{{\potEn\left(\q;\param\right)}}}
\newcommand{\potEnEmptyParam} {\ensuremath{{\potEn}}}
\newcommand{\potEnSpace} {\ensuremath{\mathsf \potEn}}
\newcommand{\potEnRed} {\ensuremath{\potEn_r}}
\newcommand{\potEnRedApprox} {\ensuremath{\tilde\potEn_r}}
\newcommand{\qred} {\ensuremath{{ \q_r}}}
\newcommand{\qredOnline} {\ensuremath{{ \q_r^\star}}}
\newcommand{\nstate} {\ensuremath{n}}
\newcommand{\Lred} {\ensuremath{L_r}}
\newcommand{\tildeLred} {\ensuremath{\tilde L_r}}
\newcommand{\Lredapprox} {\ensuremath{\tilde L_r}}
\newcommand{\vectorizeNo} {\ensuremath{v}}
\newcommand{\vectorize}[1] {\ensuremath{\vectorizeNo\left(#1\right)}}
\newcommand{\vectorizeInvNo} {\ensuremath{\vectorizeNo^{-1}}}
\newcommand{\vectorizeInv}[1] {\ensuremath{\vectorizeInvNo\left(#1\right)}}
\newcommand{\vectorizeAsample} {\ensuremath{\mathbf{a}}}
\newcommand{\vectorizeAsamplei} {\ensuremath{\vectorizeAsample^i}}
\newcommand{\vectorAparamCoeffNo} {\ensuremath{\mathbf z}}
\newcommand{\vectorAparamCoeff} {\ensuremath{\vectorAparamCoeffNo\left(\paramOnline\right)}}
\newcommand{\vectorAparamCoeffi} {\ensuremath{z^i\left(\paramOnline\right)}}
\newcommand{\vectorAparamCoeffiNo} {\ensuremath{z}}
\newcommand{\energyCrit} {\ensuremath{\eta}}
\newcommand{\snapmat} {\ensuremath{\mathbf X}}
\newcommand{\snapvec} {\ensuremath{\mathbf x}}
\newcommand{\nsnap} {\ensuremath{{n_\snapvec}}}
\newcommand{\approxlambda} {\ensuremath{{\tilde \lambda}}}
\newcommand{\approxeig} {\ensuremath{{\tilde {\mathbf y}}}}
\newcommand{\TransformedsparsePotEn} {\ensuremath{\widetilde \sparsePotEnNo}}
\newcommand{\sparsePotEnSmall} {\ensuremath{\underline \sparsePotEnNo\left(\param\right)}}
\newcommand{\sparsePotEnSmallOn} {\ensuremath{\underline \sparsePotEnNo\left(\paramOnline\right)}}
\newcommand{\sparsePotEnSmallNo} {\ensuremath{\underline \sparsePotEnNo}}
\newcommand{\cholSample } {\ensuremath{\mathbf L_1}}
\newcommand{\cholHess } {\ensuremath{\mathbf L_2}}
\newcommand{\cholSol } {\ensuremath{\mathbf X}}
\newcommand{\sparseASmall} {\ensuremath{\underline \sparseA}}
\newcommand{\sparseASmallT} {\ensuremath{\underline \sparseA^T}}
\newcommand{\sparsePotEnSmallVar} {\ensuremath{\mathbf X}}
\newcommand{\LambdaMat} {\ensuremath{\boldsymbol{\Lambda}}}
\newcommand{\entrytuple}[2]{\left(#1^{1}, \ldots, #1^{#2}\right)}
\newcommand{\spd}[1]{\ensuremath{\mathrm{SPD}\left(#1\right)}}
\newcommand{\cvector}{\ensuremath{\mathbf{c}}}
\newcommand{\evector}{\ensuremath{\mathbf{e}}}
\newcommand{\nenergy}{\ensuremath{n_e}}
\newcommand{\paramDomain}{\ensuremath{\mathcal D}}
\newcommand{\half}{\ensuremath{\frac{1}{2}}}
\newcommand{\Q}{\ensuremath{Q}}
\newcommand{\U}{\ensuremath{\mathbf{U}}}
\newcommand{\V}{\ensuremath{\mathbf{V}}}
\newcommand{\Qmat}{\ensuremath{\mathbf Q}}
\newcommand{\Rmat}{\ensuremath{\mathbf R}}
\newcommand{\sparsity}{\ensuremath{\omega}}
\newcommand{\Qred}{\ensuremath{Q_r}}
\newcommand{\Qredfull}{\ensuremath{\mathsf Q_r}}
\newcommand{\qRef}{\ensuremath{\bar\q(\param)}}	
\newcommand{\qRefOn}{\ensuremath{\bar\q\left(\paramOnline\right)}}	
\newcommand{\qRefNo}{\ensuremath{\bar\q}}	
\newcommand{\initialQ}{\ensuremath{\q_0\left(\param\right)}}	
\newcommand{\initialQOn}{\ensuremath{\q_0\left(\paramOnline\right)}}	
\newcommand{\initialQNo}{\ensuremath{\q_0}}	
\newcommand{\f}{{\boldsymbol \theta}}
\newcommand{\fEmptyparam}{\f}
\newtheorem{lem}{Lemma}
\tikzset{
>=stealth',
  punktchain/.style={
    rectangle, 
    rounded corners, 
    draw=black, very thick,
    text width=10em, 
    minimum height=3em, 
    text centered, 
    on chain},
  graychain/.style={
    rectangle, 
    rounded corners, 
    draw=gray, very thick,
    text width=10em, 
    minimum height=3em, 
    text centered, 
    on chain},
  approxMeth/.style={
    rectangle, 
    rounded corners, 
    draw=none, thin,
    text width=10em, 
    minimum height=3em, 
    text centered, 
    on chain},
  line/.style={draw, thick, <-},
  element/.style={
    tape,
    top color=white,
    bottom color=blue!50!black!60!,
    minimum width=8em,
    draw=blue!40!black!90, very thick,
    text width=10em, 
    minimum height=3.5em, 
    text centered, 
    on chain},
  every join/.style={->, thick,shorten >=1pt},
  decoration={brace},
  tuborg/.style={decorate},
  tubnode/.style={midway, right=2pt},
}
\newcommand{\myspace}{\hskip .025in}
  \newcommand{\IO}{[\I~~~\mathbf{0}]}
  \newcommand{\IOT}{\IO^T}  
\newcommand{\REMOVE}[1]{}
\newcommand{\bordered}{\ensuremath{\mathrm{bordered}}}
\newcommand{\forceMag}{\ensuremath{\gamma}}
\newcommand{\forcing}{\ensuremath{\fext}}	
\newcommand{\forceDist}{\ensuremath{{\mathbf r}}}	
\newcommand{\forceTime}{\ensuremath{{r}}}	
\newcommand{\initialCondition}{\ensuremath{{\q(0;\param)}}}	
\newcommand{\initialCond}{\ensuremath{{\mathbf s}}}	
\newcommand{\initialCondMag}{\ensuremath{{ s}}}	
\newcommand{\fNomi}{\ensuremath{\underline f_i}}	
\newcommand{\fNomnum}[1]{\ensuremath{\underline f_{#1}}}	
\newcommand{\y}{\ensuremath{\mathbf{y}}}	
\newcounter{remctr}
\newenvironment{remark}{\refstepcounter{remctr}\begin{trivlist}
\item {\emph {Remark}.\:}}
{\end{trivlist}}
\newcounter{problemctr}
\newcommand{\RR}[1]{\ensuremath{\mathbb{R}^{ #1 }}}
\newcommand{\vecmat}[2]{\left[#1^1 \ \cdots\ #1^{#2}\right]}
\definecolor{Red}{rgb}{1,0,0}
\definecolor{Green}{rgb}{.2,.5,.3}
\definecolor{Blue}{rgb}{0,0,1}
\title{Preserving Lagrangian structure in\\ nonlinear model reduction with \\
application to structural dynamics}
\author{Kevin Carlberg\thanks{Harry S.\ Truman Fellow, Quantitative Modeling
\& Analysis Department
\texttt{ktcarlb@sandia.gov}}
\and Ray Tuminaro\thanks{Numerical
Analysis and Applications Department, \texttt{rstumin@sandia.gov}}
\and
Paul
Boggs\thanks{Quantitative Modeling \&
Analysis Department (retired), \texttt{ptboggs@sandia.gov}}%
}
\begin{document}
\maketitle

\SetAlgorithmName{Procedure}{}

\begin{abstract}
This work proposes a model-reduction methodology that preserves Lagrangian
structure (equivalently Hamiltonian structure) and achieves computational
efficiency in the presence of high-order nonlinearities and arbitrary
parameter dependence. As such, the resulting reduced-order model retains key
properties such as energy conservation and symplectic time-evolution maps.  We
focus on parameterized simple mechanical systems subjected to Rayleigh damping
and external forces, and consider an application to nonlinear structural
dynamics. To preserve structure, the method first approximates the system's
`Lagrangian ingredients'---the Riemannian metric, the potential-energy
function, the dissipation function, and the external force---and subsequently
derives reduced-order equations of motion by applying the (forced)
Euler--Lagrange equation with these quantities.  From the algebraic
perspective, key contributions include two efficient techniques for
approximating parameterized reduced matrices while preserving symmetry and
positive definiteness: matrix gappy POD and reduced-basis sparsification
(RBS).  Results for a parameterized truss-structure problem demonstrate the
importance of preserving Lagrangian structure and illustrate the proposed
method's merits: it reduces computation time while maintaining high accuracy
and stability, in contrast to existing nonlinear model-reduction techniques
that do not preserve structure.  
\end{abstract}

\begin{keywords}
nonlinear model reduction,  structure preservation, Lagrangian dynamics,
Hamiltonian dynamics, structural dynamics, positive definiteness, matrix
symmetry
\end{keywords}

\section{Introduction}\label{sec:intro}

Computational modeling and simulation for simple mechanical systems characterized by a Lagrangian
formalism has become indispensable across a variety of industries. Such
simulations enable the understanding of complex systems, reduced design costs,
and improved reliability for a wide range of applications. For example,
computational structural dynamics tools have become widely used in
applications ranging from aerospace to biomedical-device design;
molecular-dynamics simulations have gained popularity in materials science and
biology, in particular.  However, the high computational cost incurred by
simulating large-scale simple mechanical systems can result in simulation
times on the order of weeks, even when using high-performance computers.  As a
result, these simulation tools are impractical for time-critical applications
that demand the accuracy provided by large-scale, high-fidelity models. In
particular, applications such as nondestructive evaluation for structural
health monitoring, multiscale modeling, embedded control, design optimization,
and uncertainty quantification require highly accurate results to be obtained
quickly.

In this work, we consider models that depend on a set of parameters, e.g.,
design variables, operating conditions.  In this context, model-reduction
methods present a promising approach for addressing time-critical problems.
During the \emph{offline stage}, these methods perform computationally
expensive `training' tasks, which may include evaluating the high-fidelity model for
several instances of the system parameters  and computing a low-dimensional
subspace for the solution. Then, during the inexpensive \emph{online stage},
these methods quickly compute approximate solutions for arbitrary values of the system
parameters. To accomplish this, they reduce the dimension of the high-fidelity
model by restricting solutions to lie in the low-dimensional subspace that was
computed offline; they also introduce other approximations when nonlinearities are
present.  Thus, the  reduced-order model used online is characterized
by a low-dimensional dynamical system that arises from a projection process on
the high-fidelity-model equations.  This offline/online strategy is effective
primarily in two scenarios: `many query' problems (e.g., Bayesian inference),
where the high offline cost is amortized over many online evaluations, and
real-time problems (e.g., control) characterized by stringent constraints on
online evaluation time.

Generating a reduced-order model that preserves the Lagrangian
structure intrinsic to mechanical systems is not a trivial task. Such
structure is critical to preserve, as it leads to fundamental properties such
as energy conservation (in the absence of non-conservative forces),
conservation of quantities associated with symmetries in the system, and
symplectic time-evolution maps. In fact, the class of structure-preserving
time integrators (e.g., geometric integrators \cite{hairer2006geometric},
variational integrators \cite{marsden2001discrete}) has been developed to
ensure that the discrete solution to the high-fidelity computational model
associates with the time-evolution map of a (modified) Lagrangian system.

Lall et al.~\cite{lall2003structure} show that performing a Galerkin
projection on the Euler--Lagrange equation---as opposed to the first-order
state-space form---leads to a reduced-order model that preserves Lagrangian
structure.  However, the computational cost of assembling the associated
low-dimensional equations of motion scales with the dimension of the
high-fidelity model. For this reason, this approach is efficient only when
the low-dimensional operators can be assembled \textit{a
priori}; this occurs only in very limited cases 
 e.g., when operators have a low-order polynomial dependence on the state and
 are affine in functions of the parameters
 \cite{maday2002reliable}.

Several methods have been developed in the context of nonlinear-ODE model
reduction that can reduce the computational cost of assembling the
low-dimensional equations of motion. However, these methods destroy Lagrangian
structure when applied to simple mechanical systems. For example, collocation
approaches \cite{astrid2007mpe,ryckelynck2005phm} perform a Galerkin
projection on only a small subset of the full-order equations characterizing
the high-fidelity model. Although this method works well for some nonlinear
ODEs, it destroys Lagrangian structure. The discrete empirical interpolation
method
(DEIM) \cite{chaturantabut2010journal,galbally2009non,drohmannEOI} and gappy
proper orthogonal decomposition
(POD) reconstruction methods \cite{sirovichOrigGappy,CarlbergGappy,carlbergJCP}
compute a few entries of the vector-valued nonlinear functions, and then
approximate the uncomputed entries by interpolation or least-squares
regression using an empirically derived basis. Galerkin projection can then be
performed with the approximated nonlinear function. Again, this technique
destroys Lagrangian structure. 

The goal of this work is to devise a reduced-order model for nonlinear simple
mechanical systems with general parameter dependence that leads to
computationally inexpensive online solutions and preserves Lagrangian
structure. We focus particularly on parameterized structural-dynamics models
under Rayleigh damping and external forces. The methodology we propose
constructs a reduced-order model by first approximating the `Lagrangian
ingredients' (i.e., quantities defining the problem's Lagrangian structure)
and subsequently deriving the equations of motion by applying the
Euler--Lagrange equation to these ingredients. The method approximates the
Lagrangian ingredients as follows:
 \begin{enumerate}[I.]
 \item \label{prop:config}\emph{Configuration space}. The
 low-dimensional configuration space is derived using standard
 dimension-reduction techniques, e.g., proper orthogonal decomposition, modal
 decomposition.
  \item \label{prop:metric}\emph{Riemannian metric}. The Riemannian metric is defined by a
	low-dimensional symmetric positive-definite matrix. We propose two
	efficient methods for approximating this
	low-dimensional matrix that preserve symmetry and positive definiteness.
	\item \label{prop:potential}\emph{Potential-energy function}. The potential energy function is
	approximated by employing the original potential-energy function, but with
	the low-dimensional reduced-basis matrix replaced by a low-dimensional
	\emph{sparse} matrix with only a few nonzero
	rows. This sparse matrix is computed online by matching the gradient of the
	reduced potential to first order about the equilibrium configuration.
	\item \label{prop:diss}\emph{Dissipation function}. The
	damping matrix associated with the Rayleigh dissipation function is a linear
	combination of the mass matrix (which defines the Riemannian metric) and the Hessian
	of the potential. Thus, we form the approximated Rayleigh dissipation
	function in the same fashion, but employ the approximated mass matrix
	from ingredient \ref{prop:metric} and
	approximated potential from ingredient \ref{prop:potential}.
	\item \label{prop:ext}\emph{External force}. The external force is derived by applying the
	Lagrange--D'Alembert principle with variations in the configuration space.
	We approximate this by applying gappy POD
	reconstruction to the external force as expressed in the original
	coordinates.  As a result, the external force appearing in the reduced-order
	equations of motion can be derived by applying the Lagrange--D'Alembert principle
	to this modified external force with variations restricted to the
	low-order configuration space.
	 \end{enumerate}
We note that a structure-preserving method \cite{structurePreserveBeattie} has been recently proposed for
nonlinear port-Hamiltonian systems, which are generalizations of Hamiltonian
systems. While this technique guarantees that properties such as stability and
passivity are preserved, it does not in fact preserve Lagrangian or classical
Hamiltonian structure. In particular, the resulting reduced-order equations of motion
cannot be derived from approximated ingredients such as those enumerated above; as
a consequence, the approach does not ensure symplecticity or energy conservation
for conservative systems, for example.

As hinted above, preserving structure for Lagrangian ingredient
\ref{prop:metric} is equivalent to
efficiently approximating a low-dimensional reduced matrix while preserving
symmetry and positive definiteness. This algebraic task is relevant to a
broad 
scope of applications, 
e.g.,
approximating extreme eigenvalues/eigenvectors of a parameterized
matrix, preserving Hessian positive definiteness in optimization algorithms. For this reason, Section \ref{sec:matrixApprox} presents 
approximation techniques for Lagrangian ingredient
\ref{prop:metric}
in a stand-alone algebraic setting that does
not rely on the Lagrangian formalism.
Similarly, Section \ref{sec:potEnGen} considers 
Lagrangian ingredient \ref{prop:potential} 
in a purely alegraic
context 
that does
not depend on Lagrangian dynamics.

The remainder of the paper is organized as follows. Section \ref{sec:prob}
introduces the Lagrangian-mechanics formulation. Section \ref{sec:existMR}
outlines existing model-reduction techniques and highlights the need for an
efficient, structure-preserving method. Section \ref{sec:newMethod} presents
the proposed method.  Section \ref{sec:experiments} presents numerical
experiments applied to a simple mechanical system from structural dynamics.
Finally, Section \ref{sec:conclusions} 
summarizes the contributions and suggests further research.

The structure-preserving model-reduction methods proposed in this work also
preserve Hamiltonian structure when the Hamiltonian formulation of classical
mechanics is used. Appendix \ref{app:Hamiltonian} provides this connection.

\section{Preserving matrix symmetry and positive definiteness}\label{sec:matrixApprox}

This section presents approximation techniques for Lagrangian ingredient
\ref{prop:metric} 
in an algebraic setting.
First, to establish notation, denote the system parameters by
$\param\in\paramDomain$, where $\paramDomain$ represents the parameter domain.
Let $\Aparam$ denote an $N\times N$  parameterized symmetric
positive-definite (possibly dense) matrix. Finally, let $\podstate$ denote a
dense, parameter-independent, full-column-rank $N\times\nstate$ matrix with
$\nstate\ll N$ whose columns can be interpreted as a \emph{reduced basis}
spanning an $\nstate$-dimensional subspace of $\RR{N}$.  We consider the
following \emph{online problem}:
\vskip .1in
\begin{list}
{(P\arabic{problemctr})}{\usecounter{problemctr}\setlength{\rightmargin}{\leftmargin}}
\item \label{probMat}
 At a cost independent of $N$, compute a symmetric positive-definite matrix
 $\AredApproxOn$ that is appropriately close to the matrix
 $\podstate^T\AparamOn\podstate$ for any specified online point
 $\paramOnline\in\paramDomain$.
\refstepcounter{problemctr}
\end{list}
\vskip .1in
Note that due to the density of $\podstate$, directly computing
$\podstate^T\AparamOn\podstate$ requires computing all $\mathcal O(N)$ entries of the matrix
$\AparamOn$. 
Recall that the offline/online strategy we adopt permits expensive offline
operations that facilitate the solution to online problem (P\ref{probMat}).
These operations may include collecting $\nTrain$ `snapshots'
of the matrix $ \AparamArg{\paramTraini}$, $i=1,\ldots,\nTrain$, where
$\paramTraini\in\paramDomain$ denotes the $i$th instance of the training set.

We assume that computing a single entry of $\AparamOn$ for any specified
online point $\paramOnline\in\paramDomain$ is inexpensive, i.e., the number of
floating-point operations (flops) is independent of $N$. However, we make no
other assumptions regarding the parameters or the matrix. In particular, we do
not assume affine parametric dependence of the matrix, and we view
$\param\mapsto \Aparam$ simply as a mechanism for generating instances of the matrix
$\A$. 

We now present two methods for solving online problem (P\ref{probMat}).
Method 1 approximates the reduced matrix by projecting the full matrix
onto a sparse basis, while Method 2 approximates the reduced matrix as a
linear combination of pre-computed reduced matrices.
Later, Section \ref{sec:existingMethods} constrasts the proposed methods with
existing model reduction approaches such as DEIM, gappy POD, and collocation.
These existing methods apply a one-sided sampling operator to the
matrix, i.e., they replace $\podstate^T$ with some type of sparse matrix.
While this leads to an inexpensive approximation, it
gives rise to a non-symmetric reduced matrix approximation. This destroys the
underlying problem structure and therefore fails to meet the requirements of
online problem (P\ref{probMat}).

\subsection{Reduced-basis sparsification (RBS)}\label{sec:sparseMat}
We first consider a strategy that
`injects sparseness' into the matrix $\podstate$.
That is, we replace $\podstate$ 
by $\sparseA\in\RR{N\times\nstate}$, which has full column rank and only
$\nsample$ rows  (with $\nstate\leq \nsample\ll N$)
containing
nonzero entries: 
\begin{align} 
\label{eq:spApproxSparse}\matApproxA&=
\sparseA^T\Aparam\sparseA.
  \end{align} 
This sparse matrix may be expressed as $\sparseA\equiv\sampleMat
\sparseASmall$, where $\sampleMat\in\{0,1\}^{N\times \nsample}$ is a `sampling
matrix' consisting of $\nsample$ selected columns of the $N\times N$ identity
matrix, $\sparseASmall\in\RR{\nsample\times \nstate}_*$ is a dense matrix
with full column rank $\nstate$, and
$\RR{\nsample\times \nstate}_*$ denotes the
noncompact Stiefel manifold: the set of full-rank $\nsample\times \nstate$
matrices.  Clearly, $\matApproxA$ is symmetric positive definite if $\Aparam$
is symmetric positive definite; thus, the approximation defined by
\eqref{eq:spApproxSparse} preserves the requisite structure. Note that this approximation will also
preserve structure in cases where $\Aparam$ is symmetric positive
semidefinite or simply symmetric. 
Further,
the (online) operation count for computing $\matApproxAOn$ for online point
$\paramOnline\in\paramDomain$ is independent of $N$.
Computing $\sampleMatT\AparamOn\sampleMat $ is equivalent to computing only
$\nsample^2$ (symmetric) entries of $\AparamOn$ and entails $\mathcal
O(\nsample^2)$ flops; subsequently computing $\AredApproxOn =
 \sparseASmall^T\left[\sampleMatT\AparamOn\sampleMat \right]\sparseASmall$ requires $\mathcal
O(\nsample^2 \nstate + \nsample
 \nstate^2)$ flops.

Given a sampling matrix $\sampleMat$, the matrix $\sparseASmall$ can be
computed offline to minimize the average approximation error over the
snapshots, i.e., according to the following optimization problem:
 \begin{equation} \label{eq:optMatchMat}
 \sparseASmall =\arg\min_{\sparsePotEnSmallVar\in\RR{\nsample\times \nstate}_*}\sum_{i=1}^{\nTrain}\bigl\|
\sparsePotEnSmallVar^T\sampleMatT\AparamArg{\paramTraini}\sampleMat \sparsePotEnSmallVar - 
\podstate^T\AparamArg{\paramTraini}\podstate\bigr\|_F^2,
  \end{equation} 
where
the subscript $F$ denotes the Frobenius
norm.  
To handle the
fact that $\RR{\nsample\times \nstate}_*$ is an open set, optimization problem \eqref{eq:optMatchMat} can
first be solved over $\RR{\nsample\times \nstate}$ and the solution can be
subsequently projected onto $\RR{\nsample\times \nstate}_*$, which is
analogous to the approach taken by Vandereycken \cite[Algorithm 6]{vandereyckenMatCom}.
We note that other objective functions may be considered for specialized
online analyses, e.g., the approximation error of
extreme eigenvalues or eigenvectors over the matrix snapshots.
Note that problem \eqref{eq:optMatchMat} is a small-scale optimization
problem, as $\nsample,\nstate\ll N$.  It can be solved at a cost independent
of $N$ (for each optimization iteration) during the offline stage after the matrix snapshots
$\AparamArg{\paramTraini}$, $i=1,\ldots,\nTrain$ and their
reduced counterparts $\podstate^T\AparamArg{\paramTraini}\podstate$,
$i=1,\ldots,\nTrain$ have been computed.

Procedure \ref{procedure:rbs} provides the offline and online steps required
to implement the RBS approximation.
\begin{algorithm2e}\label{procedure:rbs}
\caption{\textsf{Reduced-basis sparsification for symmetric matrices}}
\BlankLine
\tcm{\textbf{\textsf{Offline stage}}}
\BlankLine
Collect matrix snapshots $\AparamArg{\paramTraini}$, $i=1,\ldots,\nTrain$. \;
Form reduced the matrices
	$\podstate^T\AparamArg{\paramTraini}\podstate$,
	$i=1,\ldots,\nTrain$.\;
Choose the sample matrix $\sampleMat$.\;
Determine $\sparseASmall$ as the solution to problem
	\eqref{eq:optMatchMat}.\;
\setcounter{AlgoLine}{0}
\BlankLine
\tcm{\textbf{\textsf{Online stage}} \textsf{(given $\paramOnline$)}}
\BlankLine
Compute $\sampleMatT\AparamOn\sampleMat $.\;\label{step:sampleA} 
Form $\AredApproxOn =
 \sparseASmall^T\left[\sampleMatT\AparamOn\sampleMat \right]\sparseASmall$.\;
\label{step:projectSampledA} 
\end{algorithm2e}


\begin{remark}
This paper does not focus on methods for selecting the sampling matrix
$\sampleMat$, a task that is typically carried out during the offline stage
and uses
the collected snapshots.  All
numerical experiments presented in Section \ref{sec:experiments} use the GNAT
greedy approach~\cite{CarlbergGappy} for this purpose.  This method has
proven to be quite robust, even when applied to the approximation techniques
proposed in this paper. A more careful study of sampling algorithms will be
addressed in future work, where ideas of tailoring the sampling approach to
the specific reduced-order modeling approximation will be explored.
\end{remark}
\subsubsection{Exactness conditions}\label{sec:exactSparseQuad}
In the full-sampling case where $\nsample = N$, the approximation is exact if
problem \eqref{eq:optMatchMat} is solved via a gradient-based method with an
initial guess of $\sparsePotEnSmallVar^{(0)} = \sampleMatT \podstate$; we do
this in practice. 
Under these conditions, $\sparseA = \podstate$ and so $\matApproxA =
\podstate^T\Aparam\podstate$.

In the general case where $\nsample < N$, it is possible to show that 
the approximation is exact if the matrix is parameter-independent (i.e.,
$\Aparam = \A$) and $\nsample \ge \nstate$.  This situation is considered
in the discussion that follows Theorem~\ref{thm:quadCase} below.
It is also possible to prove a more general exactness result in cases where
the parametric dependence of $\Aparam$ is relatively simple. In particular,
consider the parametric form
\begin{equation}\label{eq:aAffine}
\Aparam = \honeparam~\Aone + \htwoparam~\Atwo,
\end{equation}
where $\Aone$ and $\Atwo$ are $N\times N$ symmetric
positive-definite matrices and $\hone,\htwo: \paramDomain \rightarrow \RR{}$.
It can then be shown that a sparse reduced basis exists that exactly 
captures $\podstate^T\Aparam\podstate$ under conditions related to how 
well the eigenvalues of the sampled matrix
$ \sampleMatT \A (\param) \sampleMat $
{\it encompass} (or surround) those of the reduced matrix $ \podstate^T \A (\param) \podstate $.
Loosely stated, the encompassing conditions amount to how well
the sampled matrix captures the behavior of the reduced matrix.
Formally, the following theorem  makes precise the notion of encompassing
using eigenvalue interlacing ideas from classical linear algebra.
\begin{theorem}\label{thm:quadCase}
Let $\Aparam $ have the form given by Eq.~\eqref{eq:aAffine}.
\vskip .1in
\noindent
Then, 
\begin{equation} \label{quadratic exact}
\exists~\sparseASmall\in\RR{\nsample\times\nstate} ~~\mbox{such that}~~~
\sparseASmallT\myspace\sampleMatT\myspace\Aparam\myspace\sampleMat\myspace\sparseASmall
= \podstate^T \myspace\Aparam\myspace\podstate,\quad \forall \param \in \paramDomain
\end{equation}
if and only if the generalized eigenvalues of $(\podstate^T \Atwo
	\podstate,\podstate^T \Aone \podstate)$ interlace the generalized
	eigenvalues of $(\sampleMatT \Atwo \sampleMat, \sampleMatT \Aone
	\sampleMat)$, i.e., 
\begin{equation} \label{new interlace}
\lambda^{(s)}_i \le \lambda^{(r)}_i \le \lambda^{(s)}_{i+\nsample -\nstate } 
~~~for~i=1,...,\nstate,
\end{equation} 
with
$$
\left [ \podstate^T \Atwo \podstate \right ] \x_i^{(r)} = 
\lambda_i^{(r)} \left [ \podstate^T \Aone \podstate \right ] \x_i^{(r)}, \quad
i=1,\ldots \nstate 
$$
and
$$
\left [ \sampleMatT \Atwo \sampleMat \right ] \x_i^{(s)} = 
\lambda_i^{(s)} \left [ \sampleMatT \Aone \sampleMat \right ] \x_i^{(s)},\quad
i=1,\ldots,\nsample.
$$
Note that the eigenvalues are sorted in order of increasing magnitude.
\end{theorem}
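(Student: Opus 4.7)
My plan is to reduce the theorem to a classical eigenvalue interlacing result via simultaneous diagonalization. First, I would exploit the fact that Eq.~\eqref{quadratic exact} must hold for every $\param \in \paramDomain$: substituting $\Aparam = \honeparam\Aone + \htwoparam\Atwo$ and assuming $\hone, \htwo$ are linearly independent as functions on $\paramDomain$ (otherwise the parametric dependence collapses to a single matrix and the claim is trivial), the identity decouples into the two separate conditions
\begin{equation*}
\sparseASmallT M_i \sparseASmall = N_i, \quad i=1,2,
\end{equation*}
where $M_i := \sampleMatT \A_i \sampleMat$ and $N_i := \podstate^T \A_i \podstate$. Because $\Aone$ is SPD and $\sampleMat$, $\podstate$ have full column rank, both $M_1$ and $N_1$ are SPD.

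Next, I would simultaneously diagonalize each pair $(M_2,M_1)$ and $(N_2,N_1)$ by congruence. Since $M_1$ and $N_1$ are SPD, there exist nonsingular $X_s \in \RR{\nsample \times \nsample}$ and $X_r \in \RR{\nstate \times \nstate}$ satisfying $X_s^T M_1 X_s = \I_\nsample$, $X_s^T M_2 X_s = \mathrm{diag}(\lambda_1^{(s)}, \ldots, \lambda_\nsample^{(s)}) =: \LambdaMat^{(s)}$, and analogously $X_r^T N_1 X_r = \I_\nstate$, $X_r^T N_2 X_r = \mathrm{diag}(\lambda_1^{(r)}, \ldots, \lambda_\nstate^{(r)}) =: \LambdaMat^{(r)}$. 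The columns of $X_s$ and $X_r$ are precisely the generalized eigenvectors appearing in the theorem statement. Writing $\sparseASmall = X_s\, Y\, X_r^{-1}$ with $Y \in \RR{\nsample \times \nstate}$, the two conditions above transform into
\begin{equation*}
Y^T Y = \I_\nstate, \qquad Y^T \LambdaMat^{(s)} Y = \LambdaMat^{(r)}.
\end{equation*}
Hence the existence of $\sparseASmall$ is equivalent to the existence of a matrix $Y$ with orthonormal columns whose compression of $\LambdaMat^{(s)}$ reproduces the prescribed diagonal $\LambdaMat^{(r)}$.

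Finally, I would dispatch both directions using Cauchy's interlacing theorem and its converse. For necessity, Cauchy interlacing applied to $Y^T \LambdaMat^{(s)} Y$ immediately gives Eq.~\eqref{new interlace}. For sufficiency, I would invoke the classical converse of Cauchy interlacing, which constructs an orthonormal-columns $Y$ realizing any prescribed interlacing tuple; pulling back through $\sparseASmall = X_s Y X_r^{-1}$ then yields a reduced basis satisfying Eq.~\eqref{quadratic exact}. The main obstacle is the sufficiency step: while the statement is standard, a self-contained proof requires a delicate induction on $\nsample - \nstate$, adding one row of $Y$ at a time and invoking the intermediate-value theorem on a secular equation to hit the next target eigenvalue in the interlacing window. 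Everything else—the parametric decoupling in Step~1 and the simultaneous diagonalization in Step~2—is essentially bookkeeping once one recognizes that two SPD matrices can be simultaneously diagonalized by congruence.
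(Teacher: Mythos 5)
Your proposal is correct and follows essentially the same route as the paper: decouple the parametric identity into the two congruence conditions on the $\Aone$- and $\Atwo$-blocks, apply a congruence change of variables (the paper uses Cholesky factors of $\podstate^T\Aone\podstate$ and $\sampleMatT\Aone\sampleMat$ where you simultaneously diagonalize, an immaterial difference) to reduce to the existence of an orthonormal-column matrix $\U$ with $\U^T\hat{\B}^{(s)}\U=\hat{\B}^{(r)}$, and then invoke Cauchy interlacing for necessity and its converse for sufficiency. The paper makes the converse self-contained exactly along the lines you anticipate, via the bordered-matrix theorem (Horn--Johnson, Thm.~4.3.10) and an induction that reduces the dimension from $\nsample$ to $\nstate$ one step at a time while keeping the intermediate spectra in the required interlacing windows.
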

Appendix \ref{app:quadCase} contains the proof. Here, we
discuss the theorem's implications.

When $\A$ is independent of $\param$, we can choose $\hone = \htwo = 1$
and
$\Aone = \Atwo$. The interlacing property is
then trivially satisfied for $\nsample  = \nstate $ with 
$\lambda^{(s)}_i = \lambda^{(r)}_i = 1$, and so the equality in \eqref{quadratic exact}
always holds.  When instead
$\Aone \ne \Atwo$ and 
$\nsample  \hskip -.04in = \hskip -.04in \nstate  \hskip -.04in + \hskip
-.04in 1$, the interlacing definition 
is quite restrictive, as it implies that
$\lambda^{(s)}_k \le \lambda^{(r)}_k \le \lambda^{(s)}_{k+1}$.
We would not generally expect the eigenvalues of the sampled and reduced
matrices to have this property. However, when $\nsample  \gg \nstate +1$, each
interval width is (much) larger and so the condition is not nearly as
restrictive. For example, if $\nstate  = 100$ and $\nsample  = 300$, then interlacing
implies that
$\lambda^{(s)}_i \le \lambda^{(r)}_i \le \lambda^{(s)}_{i+200}$.
Thus, we generally expect the conditions of the theorem to be 
satisfied for sufficiently large
$\nsample $, though this is not
guaranteed and depends on matrix spectrum.
As a final note, although the theorem assumes a
specific form of $\Aparam$, it should characterize the rough
behavior of a more general $\Aparam$ that does not vary `too much' from the
affine functional form \eqref{eq:aAffine}.

\subsection{Matrix gappy POD}\label{sec:lsrecon}
An alternative structure-preserving approximation applicable to problem
(P\ref{probMat}) assumes the following form:
 \begin{align} 
\label{eq:spApprox}\matApproxA&=\sum\limits_{i=1}^\nA
\AcoeffiParam\podstate^T\Abasis{i}\podstate .
  \end{align} 
Here, the matrices $\Abasis{i}$, $i=1,\ldots,\nA$ are $N\times N$ symmetric
matrices that are computed offline and define a basis for the
matrix $\Aparam$. Due to the symmetry of $\Abasis{i}$, $i=1,\ldots,\nA$, the
approximation $\matApproxA$ will always be symmetric. The parameter-dependent coefficients 
 $\Acoeff\equiv\entrytuple{\Acoeff}{\nA}$ are computed online in
 an efficient manner that ensures $\matApproxA$ is positive definite and
 thereby preserves requisite structure.
 
The next sections describe procedures for computing the matrix basis and
coefficients. We refer to this method as `matrix gappy POD', as it amounts
to the gappy POD procedure \cite{sirovichOrigGappy} applied to matrix data
with modifications to preserve positive definiteness. The approach, which we
originally proposed \cite{carlbergStructureAiaa}, is
a more general formulation of the `matrix DEIM' approach \cite{wirtzDeim} (or
`multi-component EIM' \cite{tonnThesis} in the context of the reduced-basis method applied to
parametrized non-affine elliptic PDEs), as 
it permits least-squares reconstruction (not simply interpolation). Further,
it is equipped with a mechanism to maintain positive definiteness. 

\subsubsection{Offline computation: matrix basis}\label{sec:matrixBasis}

To obtain the matrix basis, we propose applying a vectorized POD method,
wherein
the basis can be considered a set of `principal matrices' that
optimally represent\footnote{These matrices are optimal in the sense that they
minimize the average projection error (as measured in the Frobenius norm) of
the matrix snapshots.} the matrix $\A$ over the training set
$\paramTrain$. The (offline) steps for this method are as follows:
 \begin{enumerate} 
  \item Collect matrix snapshots $\AparamArg{\paramTraini}$,
	$i=1,\ldots,\nTrain$.
  \item Vectorize the snapshots 
  $\vectorizeAsamplei \equiv \vectorize{\AparamArg{\paramTraini}}\in\RR{N^2}$,
	$i=1,\ldots,\nTrain$,
	where the function $\vectorizeNo:\RR{N\times N}\rightarrow
	\RR{N^2}$ vectorizes a matrix.
	\item Compute an $\nA$-dimensional (with $\nA\leq\nTrain$) POD basis of the vectorized snapshots
 \begin{equation} \label{eq:vectorizedSnaps}
\podVectorizedA\equiv\vecmat{\vectorizeAbasis}{\nA}\in\RR{N^2\times
	\nA}
 \end{equation} 
	 using vectorized snapshots $\{\vectorizeAsamplei\}_{i=1}^\nTrain$ and
	an `energy criterion' $\energyCrit_{\A}\in\left[0,1\right]$ as inputs to
	Algorithm \ref{PODSVD} of Appendix \ref{app:POD}.  
	\item Transform these POD basis vectors into their matrix counterparts:
 \begin{equation} 
 \Abasis{i} = \vectorizeInv{\vectorizeAbasisi},\quad
 i=1,\ldots,\nA.
  \end{equation} 
	 \end{enumerate}
	 Each matrix $\Abasis{i}$, $i=1,\ldots,\nA$ is guaranteed to be symmetric,
as Algorithm \ref{PODSVD} forms this basis by taking 
	 linear combinations of symmetric matrices.

\subsubsection{Online computation: coefficients}
The approximation error  can be bounded as
follows:
\begin{align}
\|\Ared - \matApproxA \|_F &=
\|\Ared - \sum_{i=1}^\nA\AcoeffiParam \podstate^T\Abasisi\podstate
\|_F\\
\label{eq:upperboundLS}& \leq \|\podstate\|_F^2\bigl\|\Aparam - \sum_{i=1}^\nA\AcoeffiParam \Abasisi
\bigr\|_F
\end{align}
where
$\|\podstate\|_F^2 = \nstate$ if $\podstate$ is orthogonal.  
This leads to a natural choice for the scalar coefficients based on 
minimizing the upper bound \eqref{eq:upperboundLS}.
In particular, we compute coefficients $\AcoeffParamOn$ online for a specific 
$\paramOnline\in\paramDomain$ 
as the solution to 
\begin{align}\label{eq:optProblem}
\begin{split}
\underset{\left(x_1,\ldots,x_{\nA}\right)}{\mathrm{minimize}}\quad&
\|
\sampleMatT\AparamOn\sampleMat  - 
\sum\limits_{i=1}^\nA
x_i\sampleMatT\Abasis{i} \sampleMat 
\|_F^2\\
\mathrm{subject\ to}\quad &\sum_{i=1}^\nA x_i\podstate^T\Abasis{i}\podstate
> 0.
\end{split}
\end{align}
Note that the coefficients are computed to match (as closely as possible) the
full matrix and the linear combination of pre-computed full matrices at a few
entries. The constraints amount to a strict linear-matrix-inequality, where
$\A>
0$ denotes a generalized inequality that indicates $\A$ is a
positive-definite matrix.  This  constraint ensures that structure is
preserved. Note that the constraint can be modified (resp.\
dropped) in cases where positive semidefiniteness (resp.\ simply symmetry)
aims to be preserved.

Problem \eqref{eq:optProblem} is equivalent to a linear least-squares
problem with nonlinear constraints; this can be seen from its vectorized form:
\begin{align}\label{eq:optProblem2}
\begin{split}
\underset{\x=\left[x_1\ \cdots\ x_\nA\right]^T}{\mathrm{minimize}}\quad&
\|
\sampleMatVectorizedT\vectorize{\AparamOn}
- 
\sampleMatVectorizedT\podVectorizedA \x 
\|_2^2\\
\mathrm{subject\ to}\quad &\sum_{i=1}^\nA x_i\podstate^T\Abasis{i}\podstate
> 0.
\end{split}
\end{align}
Here, 
$\sampleMatVectorized$ is an alternate form of the sampling matrix 
that can be applied to vectorized matrices, i.e. 
$\sampleMatVectorized^T \vectorize{\AparamOn} =
\vectorize{\sampleMatT\AparamOn\sampleMat}.$\footnote{Exploiting symmetry,
this sampling matrix can be expressed as $\sampleMatVectorized\equiv
\vecmat{\sampleMatVectorizedVec}{(\nsample^2+\nsample)/2}\in\{0,1\}^{N^2\times
(\nsample^2+\nsample)/2}$, where
$\sampleMatVectorizedVec^{i+(j^2 - j)/2} =
\vectorize{\sampleMatVec^i\left[\sampleMatVec^j\right]^T}$ for
$i=1,\ldots,j$ and $j=1,\ldots,\nsample$
and $\sampleMat\equiv \vecmat{\sampleMatVec}{\nsample}$.
From the the definition of $\sampleMatVectorizedVec^{i+(j^2 - j)/2}$,
it follows that $\sampleMatVectorizedVec^{i+(j^2 - j)/2}$
extracts the  $(i,j)$th entry from the vectorized
form of a matrix.}

The objective function is equivalent to that of the gappy POD method \cite{sirovichOrigGappy}---which will be
further discussed in Section \ref{sec:eim}---applied to matrix
data.
Note that this optimization problem is solved \emph{online} using the
online-sampled data $\sampleMatT\AparamOn\sampleMat$;
Appendix~\ref{app:optimization} describes a method for solving this optimization
problem.
In practice, we
usually observe the constraints to be inactive at the unconstrained solution.
Therefore, typically the constraints need not be handled directly, and solving
problem \eqref{eq:optProblem} amounts to solving a small-scale linear
least-squares problem characterized by an $(\nsample^2 + \nsample)/2 \times
\nA$ matrix. To ensure a unique solution to
problem~\eqref{eq:optProblem}, the matrix
$\sampleMatVectorizedT\podVectorizedA$ must have full column rank. This can be
achieved by enforcing $(\nsample^2 +
\nsample)/2\geq\nA$ as well as mild conditions on the sampling matrix
$\sampleMat$.

Procedure \ref{procedure:matrixGappy} describes the offline and online stages
for implementing the matrix gappy POD approximation.
\begin{algorithm2e}\label{procedure:matrixGappy}
\caption{\textsf{Matrix gappy POD}}
\BlankLine
\tcm{\textbf{\textsf{Offline stage}}}
\BlankLine
\setcounter{AlgoLine}{0}
Compute the basis matrices $\Abasis{i}$, $i=1,\ldots,\nA$ using the
vectorized POD approach described in Section \ref{sec:matrixBasis}.\;
	Determine the sampling matrix $\sampleMat$ 
which gives rise to 
a full column rank matrix $\sampleMatVectorizedT\podVectorizedA$
and with 
$m$ chosen so that $(\nsample^2 + \nsample)/2\geq\nA$.\;
Compute low-dimensional matrices $\podstate^T\Abasisi\podstate$,
	$i=1,\ldots,\nA$.\;
Retain the sampled entries of the matrix basis
	$\sampleMatT\Abasisi\sampleMat$, $i=1,\ldots,\nA$; discard other entries.\;
\setcounter{AlgoLine}{0}
\BlankLine
\tcm{\textbf{\textsf{Online stage}} \textsf{(given $\paramOnline$)}}
\BlankLine
Compute $\sampleMatT\A(\paramOnline)\sampleMat $.\;
Solve the small-scale optimization problem \eqref{eq:optProblem} for coefficients
$\AcoeffParamOn$.\;
Assemble the low-dimensional matrix
$\matApproxAOn$  by Eq.\ \eqref{eq:spApprox}.\;
\end{algorithm2e}

\subsubsection{Exactness conditions}\label{sec:matGappyExact}

\begin{theorem} \label{matrixGapExact}
The matrix gappy POD approximation is exact for any specified online parameters
$\paramOnline\in\paramDomain$ if 
 \begin{enumerate} 
	\item $\vectorize{\AparamOn}\in\range{\podVectorizedA}$
	 and
	\item $\sampleMatVectorizedT\podVectorizedA$ has full column rank.
	 \end{enumerate}

\end{theorem}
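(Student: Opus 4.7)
The plan is to show that under the two hypotheses, the online coefficients $\AcoeffParamOn$ obtained from optimization problem \eqref{eq:optProblem2} exactly recover coefficients that express $\AparamOn$ in the basis $\{\Abasisi\}$, so that the reduced approximation coincides with $\podstate^T\AparamOn\podstate$. The argument proceeds in three short steps.

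First, I would invoke hypothesis 1 to produce a coefficient vector $\x^\star \in \RR{\nA}$ satisfying $\vectorize{\AparamOn} = \podVectorizedA\x^\star$, which by applying $\vectorizeInvNo$ is equivalent to $\AparamOn = \sum_{i=1}^{\nA} x^\star_i\Abasisi$. Applying $\sampleMatVectorizedT$ to both sides yields $\sampleMatVectorizedT\vectorize{\AparamOn} = \sampleMatVectorizedT\podVectorizedA\x^\star$, so $\x^\star$ drives the least-squares objective in \eqref{eq:optProblem2} to zero and is therefore an unconstrained global minimizer.

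Second, I would use hypothesis 2 to establish uniqueness: since $\sampleMatVectorizedT\podVectorizedA$ has full column rank, the normal equations of \eqref{eq:optProblem2} admit a unique solution, so any unconstrained minimizer coincides with $\x^\star$. To confirm the constraint in \eqref{eq:optProblem2} is inactive at $\x^\star$, note that by construction $\sum_{i=1}^{\nA} x^\star_i\podstate^T\Abasisi\podstate = \podstate^T\AparamOn\podstate$, which is symmetric positive definite because $\AparamOn$ is symmetric positive definite and $\podstate$ has full column rank. Hence $\x^\star$ is feasible and remains the unique solution of the constrained problem, so $\AcoeffParamOn = \x^\star$.

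Finally, substituting into \eqref{eq:spApprox} gives $\matApproxAOn = \sum_{i=1}^{\nA}\AcoeffiParam|_{\paramOnline}\podstate^T\Abasisi\podstate = \podstate^T\AparamOn\podstate$, which is exactness. The only subtle point is checking that the positive-definiteness constraint does not exclude the zero-residual solution $\x^\star$; this is handled automatically by the positive definiteness of $\AparamOn$ together with the full column rank of $\podstate$, so no genuine obstacle arises. The argument is essentially a standard least-squares consistency result combined with a feasibility check, and the brevity of the proof reflects the strength of the two hypotheses.
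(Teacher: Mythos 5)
Your proof is correct and follows essentially the same route as the paper's: hypothesis 1 supplies a coefficient vector with zero residual in the sampled least-squares problem \eqref{eq:optProblem2}, hypothesis 2 forces that vector to be the unique minimizer (the paper phrases this via the pseudoinverse formula rather than a uniqueness argument), and substitution into \eqref{eq:spApprox} gives $\matApproxAOn = \podstate^T\AparamOn\podstate$. Your explicit check that the positive-definiteness constraint is inactive at the zero-residual solution is a small point the paper leaves implicit, but it does not change the substance of the argument.
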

See Appendix \ref{sec:matrixGapExactProof} for the proof.

Condition 1 holds, e.g.,  when $\paramOnline \in\paramTrain$ and $\nA =
\nTrain$. Condition 2 can
be straightforwardly enforced by the choice of $\sampleMat$
and automatically holds in the case of full sampling, i.e., $\nsample = N$.

%

\section{Preserving potential-energy structure}\label{sec:potEnGen}

This section presents a technique for approximating Lagrangian ingredient
\ref{prop:potential} within an algebraic setting.  To begin, define
a parameterized scalar-valued function
$\potEn:\RR{N}\times\paramDomain\rightarrow \RR{}$ with $(\q;\param)\mapsto \potEn$  that is nonlinear
in both arguments and can
be interpreted as a Lagrangian dynamical system's (parameterized) potential
energy.
Here, 
$\q\in\RR{N}$ denotes the system's configuration variables and
$\param\in\paramDomain$ denotes the
system parameters that belong to parameter domain $\paramDomain$. Unlike the
matrix approximations of the previous section, the nonlinear dependence on the
configuration variables $\q$ introduces additional challenges that must be considered
carefully. 

We aim to devise an \emph{offline} method---which may entail expensive
operations---for constructing a scalar-valued
function $\potEnRedApprox:\RR{\nstate}\times\paramDomain\rightarrow{\RR{}}$.
This function will be used \emph{online} and should satisfy the demands of
online problem (P\ref{probPotEn}):
\vskip .1in
\begin{list}
{(P\arabic{problemctr})}{\usecounter{problemctr}\setlength{\rightmargin}{\leftmargin}}
\refstepcounter{problemctr}
\item \label{probPotEn}
 Compute the gradient vector
 $\nabla_\qred\potEnRedApprox(\qredOnline;\paramOnline)$ at a cost independent
 of $N$. Given any online parameters $\paramOnline\in\paramDomain$, this
 vector should be appropriately close to
 $\podstate^T\nabla_\q\potEn(\qRefOn + \podstate\qredOnline;\paramOnline)$
 for all coordinates $\qredOnline\in\RR{\nstate}$.
\end{list}
As before, $\podstate$ represents a dense, parameter-independent,
full-column-rank $N\times \nstate$ matrix. 
We denote by $\qRefNo:\paramDomain\rightarrow\RR{N}$ a parameterized reference configuration about which
the low-dimensional reduced configuration space is centered.
Notice that this problem is concerned with approximating the gradient of the
scalar-valued function as opposed to the function itself. As will be discussed in Section \ref{sec:existMR}, this problem arises in
model reduction of parameterized Lagrangian-dynamics systems, where the
gradient of the potential appears in the equations of motion.

In certain specialized cases, the above
approximation can be simplified considerably. For example, 
when $\qRef = 0$, $\forall \param\in\paramDomain$ and the function $\potEn( \q;\param)$ is
purely quadratic in its first argument, then $\podstate^T\nabla_\q\potEn(\qRef +
\podstate\qred;\param) =
\podstate^T\Aparam\podstate\qred$, where
$\Aparam$ is a symmetric Hessian matrix; in this case, one of the
approximation techniques described in Section \ref{sec:matrixApprox} can be
straightforwardly applied. Alternatively, if the potential energy is defined by
the integral over a domain (i.e., $\potEn(\q ;\param) = \int_\Omega
\potEnSpace(X,\q ;\param) d_{\Omega_X}$), a sparse cubature method
\cite{an2008optimizing} can be used to achieve computational efficiency and
structure preservation. In more general cases, however, another approach is 
needed. In the following, we develop a method that makes no simplifying assumptions about
the dependence of the potential $\potEn$ on the configuration variables or
parameters. 

Due to the density of the matrix $\podstate$, the most straightforward approach of setting
$\potEnRedApprox(\qred;\param) = \potEn(\qRef + \podstate\qred;\param)$
leads to expensive online operations: computing the gradient 
$\nabla_\qred\potEnRedApprox(\qredOnline;\paramOnline) =
\podstate^T\nabla_\q\potEn(\qRefOn +
\podstate\qredOnline;\paramOnline)$ requires first computing all $N$
entries of the gradient vector $\nabla_\q\potEn(\qRefOn + \podstate\qredOnline;\paramOnline)$.
To rectify this, we revisit the RBS technique proposed in Section
\ref{sec:sparseMat} and introduce some minor 
modifications.  In particular, we replace $\podstate$ by a 
sparse \emph{parameter-dependent} matrix $\sparsePotEn\equiv\sampleMat
\sparsePotEnSmall\in\RR{N\times\nstate}_*$ with only $\nsample\ll N$ nonzero
rows, where 
$\sparsePotEnSmall\in\RR{\nsample\times \nstate}_*$ is a dense matrix. That is, we approximate the
potential energy as
\begin{align}\label{eq:potEnSparse}
\potEnRedApprox(\qred;\param)
\equiv&\potEn(\qRef+\sparsePotEn\qred;\param).
\end{align}
This approximation preserves structure, as $\potEnRedApprox$ remains a
parameterized scalar-valued function.
Now, we wish to compute
$\sparsePotEnNo$ such that 
$\nabla_\qred\potEnRedApprox(\qredOnline;\paramOnline) =
\sparsePotEnOn^T\nabla_\q\potEn\left(\qRefOn+\sparsePotEnOn\qredOnline;\paramOnline\right)$ is
as close as possible to $\podstate^T\nabla_\q\potEn(\qRefOn +
\podstate\qredOnline;\paramOnline)$ 
for any online point $\paramOnline\in\paramDomain$ and
any $\qredOnline\in\RR{\nstate}$. 
One
can imagine a variety of methods for computing $\sparsePotEnNo$ toward this
stated goal. For example, one can formulate an optimization problem to match
the potential gradient at training points \cite{carlbergStructureAiaa};
this effectively leads to a parameter-independent matrix $\sparsePotEnNo$.
However, we found this approach to lead to significant errors for many
problems. Instead, we pursue an idea motivated by the analysis in Section
\ref{sec:exactPotEn}, which centers on the first two terms in a 
Taylor expansion of 
$\podstate^T\nabla_\q\potEn(\qRefOn + \podstate\qredOnline;\paramOnline)$ 
about the reference configuration.

In practice, we often find that the trajectories of dynamical systems are
localized in the configuration space. This is particularly true for mechanical
oscillators often encountered in structural dynamics, where the trajectory
does not deviate drastically from the
equilibrium configuration. Using this observation, we focus our approximation
efforts on accurately capturing the behavior of the potential in a
neighborhood of the
online reference configuration $\qRefOn$. Implicitly, this assumes that the online
configurations do not greatly diverge from this point.
To this end, 
consider computing
$\sparsePotEnOn$ online such that the approximation
$\sparsePotEnOn^T\nabla_\q\potEn(\qRefOn+\sparsePotEnOn\qredOnline;\paramOnline)$
matches 
$\podstate^T\nabla_\q\potEn(\qRefOn + \podstate\qredOnline;\paramOnline)$ 
to first order about the reference configuration:
 \begin{align} \label{eq:matchFirstOrder}
 \begin{split}
&\sparsePotEnOn^T\nabla_\q\potEn(\qRefOn;\paramOnline) +
\sparsePotEnOn^T\nabla_{\q\q}\potEn(\qRefOn;\paramOnline)\sparsePotEnOn\qredOnline\\
&=
\podstate^T\nabla_\q\potEn(\qRefOn ;\paramOnline) + 
\podstate^T\nabla_{\q\q}\potEn(\qRefOn;\paramOnline)\podstate\qredOnline,\quad
\forall \qredOnline\in\RR{\nstate}.
 \end{split}
  \end{align} 
Notice that the high-order terms amount to
approximating a reduced Hessian (defined via the dense matrix $\podstate$)
by a second reduced Hessian (defined via the sparse matrix $\sparsePotEnOn$).
This is equivalent to online problem (P\ref{probMat}) presented 
in Section~\ref{sec:matrixApprox} that was addressed by the
RBS algorithm (as well as a matrix gappy POD approach). This RBS algorithm
is supported by Theorem~\ref{thm:quadCase}, which
shows that an exact approximation of the reduced Hessian is possible under certain 
assumptions. While these assumptions do not always hold, the theorem  gives an
expectation that a good approximation can be found under more general circumstances.
Unfortunately, the presence of the low-order terms in Eq.\
\eqref{eq:matchFirstOrder}
alters
the character of the reduced approximation and so Theorem~\ref{thm:quadCase} no 
longer applies. In this case, the matrix $\sparsePotEnOn$ must 
serve to capture both gradient and Hessian information simultaneously, which
introduces
restrictive assumptions
in order to
obtain an equivalent result to Theorem~\ref{thm:quadCase}; this will be shown
in Lemma \ref{quadratic solvability}.
%

To avoid the limitations associated with these restrictions,
we choose the reference configuration to be equilibrium, i.e.,
$\qRef=\initialQ$ with equilibrium defined as $\nabla_\q V(\initialQ;\param) = 0$. 
This forces the low-order Taylor terms to zero and
simplifies Eq.~\eqref{eq:matchFirstOrder} to 
 \begin{equation} \label{eq:matchFirstOrderSimple}
 \sparsePotEnOn^T\nabla_{\q\q}\potEn(\initialQOn;\paramOnline)\sparsePotEnOn\\ =
 \podstate^T\nabla_{\q\q}\potEn(\initialQOn;\paramOnline)\podstate.
  \end{equation} 
Now, Theorem~\ref{thm:quadCase} holds, implying that 
Equation \eqref{eq:matchFirstOrderSimple} can be exactly solved 
when $\nsample = \nstate$.  For this reason, we compute $\sparsePotEnOn$ online
for each $\paramOnline$ to satisfy \eqref{eq:matchFirstOrderSimple} using
	$\nstate$ sample indices.
Specifically, we define it according to 
 \begin{align} \label{eq:optMatch}
 \begin{split}
\sparsePotEnSmallOn &= \left[\begin{array}{c}
\cholSol\\
\mathbf 0_{(\nsample-\nstate)\times \nstate}
\end{array}\right], 
 \end{split}
  \end{align} 
where $\cholSol$ is given by solving
$$
 \cholSample^T \cholSol = \cholHess^T,
$$
$\cholHess\in\RR{\nstate \times \nstate}$ denotes the lower-triangular Cholesky factor of
	$\podstate^T\nabla_{\q\q}\potEn(\initialQOn;\paramOnline)\podstate$, 
	$\cholSample\in\RR{\nstate \times \nstate}$ denotes the lower-triangular Cholesky factor of
	$\sampleMatMinT\nabla_{\q\q}\potEn(\initialQOn;\paramOnline)\sampleMatMin$,
	and
	$\sampleMatMin$ represents the first $\nstate$ columns of $\sampleMat$.
We defer discussing the computational cost for this approach to Section~\ref{sec:procedure:rbsPotEn},
and now return to the previously alluded difficulties associated with 
solving \eqref{eq:matchFirstOrder} when the reference configuration does not
correspond to equilibrium.



\subsection{Solvability of the two-term Taylor equation}\label{sec:exactPotEn}
The method presented in the previous section was motivated by difficulties
in inexpensively approximating the reduced gradient of a nonlinear function.
In this section, we give
some insight into these difficulties by investigating a much easier situation:
the solvability of the two-term Taylor equation~\eqref{eq:matchFirstOrder}, which we write 
in matrix/vector form as
\begin{equation} \label{FirstOrderSimpleForm}
 \sparsePotEnSmallNo^T \sampleMat^T 
\cvector 
+
\sparsePotEnSmallNo^T \sampleMat^T 
\A
 \sampleMat \sparsePotEnSmallNo\qredOnline
=
\podstate^T
\cvector
+
\podstate^T
\A
\podstate\qredOnline,\quad
\forall \qredOnline\in\RR{\nstate}.
\end{equation}
Here, we have set 
$\cvector = \nabla_\q\potEn(\qRefOn;\paramOnline)$
and 
$ \A = \nabla_{\q\q}\potEn(\qRefOn;\paramOnline)$.
We have also dropped dependence on $\paramOnline$ such that $\cvector$ and
$\A$ are parameter independent in the following analysis;
this is equivalent to restricting equation~\eqref{eq:matchFirstOrder} 
to a single instance of $\paramOnline$. This is somewhat less than
ideal in that we would normally wish to minimize online costs
by computing a single $\sparsePotEnSmallNo$ during the offline phase 
that is then valid for all subsequent online calculations.
However, what we now show is that it is not always possible to satisfy 
equation~\eqref{eq:matchFirstOrder} even when one is restricted to finding 
a $\sparsePotEnSmallNo$ for a single instance of $\paramOnline$. 

As \eqref{FirstOrderSimpleForm} must hold for all $\qredOnline$, we have
the following two necessary and sufficient conditions 
\begin{equation} \label{original conditions}
\sparsePotEnSmallNo ^T \sampleMatT  \A  \sampleMat  \sparsePotEnSmallNo
= 
\podstate^T \A  \podstate 
~~~\mbox{and}~~~
\sparsePotEnSmallNo^T \sampleMat^T \cvector 
 = \podstate^T  \cvector .
\end{equation}
It is possible to show that satisfying these conditions is equivalent to
finding a $\TransformedsparsePotEn \in \RR{\nsample \times \nstate }$
such that 
\begin{equation} \label{quadratic conditions}
\TransformedsparsePotEn^T \TransformedsparsePotEn = I  
~~~\mbox{and}~~~
\TransformedsparsePotEn^T \sampleMat^T \tilde{\cvector} = 
\widetilde{\podstate}^T \tilde{\cvector} 
\end{equation}
where
$$
\widetilde{\podstate}^T \widetilde{\podstate} = I.
$$
The definitions of $\TransformedsparsePotEn ,\widetilde{\podstate},$
and $\tilde{\cvector}$ are given below. The key point is that
the necessary and sufficient conditions for equation 
\eqref{quadratic conditions} amount to finding
an orthogonal matrix, $\TransformedsparsePotEn $, such that
$ \TransformedsparsePotEn^T \sampleMat^T \tilde{\cvector} = 
\widetilde{\podstate}^T \tilde{\cvector} $ 
for a given orthogonal matrix $\widetilde{\podstate}$, and a 
given vector, $\tilde{\cvector}$.
In the simple case
when $\A$ is the identity and $\podstate$ is orthogonal,
we have 
$\TransformedsparsePotEn = \sparsePotEnSmallNo $
and $\widetilde{\podstate} = \podstate$.
More generally, we have the following definitions:
$$
\TransformedsparsePotEn = \sampleMat^T \Lmat ^T  \sampleMat \sparsePotEnSmallNo  \Lmat_\phi^{-T}\mbox{,\ \ \ }~
\widetilde{\podstate} = \Lmat ^T \podstate \Lmat_\phi^{-T}
\mbox{,\ \ \ }~
\tilde{\cvector} = \Lmat ^{-1} \cvector ,
$$
where $\Lmat $ is the lower-triangular Cholesky factor of $\A $,
and 
$\Lmat_\phi$ is the lower-triangular Cholesky factor of $\podstate^T \A  \podstate$.
The above equivalence hinges on the identities
$\sampleMat^T \Lmat \sampleMat \sampleMat^T \Lmat^T \sampleMat = 
\sampleMat^T \A \sampleMat $ and
$\sampleMat^T \Lmat \sampleMat \sampleMat^T \Lmat^{-1} = 
 \sampleMat^T $.
These hold due to the lower-triangular form of the matrix $\Lmat$.
\REMOVE {
where
$$
\TransformedsparsePotEn = \sampleMat^T \Lmat ^T  \sampleMat \sparsePotEnSmallNo  \Lmat_\phi^{-T}\mbox{,\ \ \ }~
\widetilde{\podstate} = \Lmat ^T \podstate \Lmat_\phi^{-T}
\mbox{,\ \ \ }~
\tilde{\cvector} = \Lmat ^{-1} \cvector ,
$$
$\Lmat $ is the Cholesky factor of $\A $ (i.e., $\Lmat  \Lmat ^T = \A $),
and 
$\Lmat_\phi$ is the Cholesky factor of $\podstate^T \A  \podstate$.
This equivalence hinges on the identities
$\sampleMat^T \Lmat \sampleMat \sampleMat^T \Lmat^T \sampleMat = 
\sampleMat^T \A \sampleMat $ and
$\sampleMat^T \Lmat \sampleMat \sampleMat^T \Lmat^{-1} = 
 \sampleMat^T $
which hold due to the lower triangular form of the matrix $\Lmat$.
The transformation removes explicit references to $\A$ 
in the necessary and sufficient conditions and enforces
orthogonality for the transformed reduced basis, i.e. 
$\widetilde{\podstate}^T \widetilde{\podstate} = I$.
When $\A$
is the identity and $\podstate$ is orthogonal, it follows that
both $\Lmat = I $ and $\Lmat_\phi = I $ and so 
$\TransformedsparsePotEn = \sparsePotEnSmallNo $
and $\widetilde{\podstate} = \podstate$.
}

\REMOVE {
With no loss of generality, assume that matrices have been
permuted so that the first $\nsample $ rows of $\sampleMat $ correspond to an
identity matrix with all remaining rows identically equal to zero.
With this assumption only the first $\nsample $ rows in $\sampleMat  \sparsePotEnSmallNo $ and 
$\tilde{\PsiMat}$ contain nonzeros.  Thus, we can write 
$\tilde{\PsiMat} = \sampleMat  \sampleMatT \tilde{\PsiMat}$ and set $\tilde{\sparsePotEnSmallNo} = \sampleMatT \tilde{\PsiMat} \in
\RR{\nsample \times \nstate }$. Noting that 
$\tilde{\PsiMat}^T \tilde{\PsiMat} =  \tilde{\sparsePotEnSmallNo }^T\widetilde{\X }= I$, we can reformulate
(\ref{quadratic conditions}) to designate $\widetilde{\X }$ as the unknown matrix:
 \begin{equation}\label{quadratic conditions 2} 
\widetilde{\X }^T\widetilde{\X }= I
~~~\mbox{and}~~~
\tilde{\cvector}^T \tilde{\podstate} = \tilde{\cvector}^T \tilde{\PsiMat} .
  \end{equation} 
If a suitable $\widetilde{\X }$
satisfying (\ref{quadratic conditions 2}) can be found, then 
a $\X $ can be straightforwardly computed as
\begin{equation*}
\X  = \left[\sampleMatT \Lmat ^T\sampleMat \right]^{-1}\tilde \X  \Lmat_\phi^T.
\end{equation*} 
That is, the ability to solve 
(\ref{exact representation}) in the quadratic case boils
down to 
}
The following lemma addresses the conditions under which
Eq.~\eqref{quadratic conditions}
or equivalently
Eq.~\eqref{FirstOrderSimpleForm}
 hold.
\begin{lem} \label{quadratic solvability}
Consider the equations
\begin{equation} \label{lemma version}
\TransformedsparsePotEn^T \TransformedsparsePotEn = I  
~~~\mbox{and}~~~
\TransformedsparsePotEn^T \sampleMatT  \widetilde{\cvector}
= 
\widetilde{\podstate}^T \tilde{\cvector} 
\end{equation}
with the following matrices given:
$\sampleMat  \in \{0,1\}^{N\times \nsample}$ consists of selected columns of the identity
matrix (see prior definition),
$\widetilde{\podstate} \in \RR{N\times \nstate }$ with
$\widetilde{\podstate}^T \widetilde{\podstate} = I$,
and
$\tilde{\cvector}    \in \RR{N\times 1}$.
Then, assuming that $\nsample  \ge \nstate $, 
some $\TransformedsparsePotEn  \in \RR{\nsample \times \nstate }$ exists such that
Eq.~\eqref{lemma version} is satisfied if and only if
\begin{equation} \label{new conditions}
|| \widetilde{\podstate}^T \tilde{\cvector}  ||_2 = || \sampleMatT  \tilde{\cvector}  ||_2 
~~\mbox{and}~~ \nsample  = \nstate 
~~~\mbox{or}~~~
|| \widetilde{\podstate}^T \tilde{\cvector}  ||_2 \le || \sampleMatT  \tilde{\cvector}  ||_2 
~~\mbox{and}~~ \nsample  > \nstate  .
\end{equation}
\end{lem}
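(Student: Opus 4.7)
The plan is to reduce the statement to a clean linear-algebra question and then exhibit a direct construction. Set $\w := \sampleMatT \tilde{\cvector} \in \RR{\nsample}$ and $\vVec := \widetilde{\podstate}^T \tilde{\cvector} \in \RR{\nstate}$, so that \eqref{lemma version} reads $\TransformedsparsePotEn^T \TransformedsparsePotEn = I$ and $\TransformedsparsePotEn^T \w = \vVec$. Existence thus reduces to finding a matrix with $\nstate$ orthonormal columns in $\RR{\nsample}$ that sends $\w$ to $\vVec$; the hypothesis $\widetilde{\podstate}^T \widetilde{\podstate} = I$ enters only through the definition of $\vVec$ and plays no further role in the argument.

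For necessity, if such a $\TransformedsparsePotEn$ exists then $\TransformedsparsePotEn \TransformedsparsePotEn^T$ is an orthogonal projector of rank $\nstate$, which gives $\|\vVec\|_2^2 = \|\TransformedsparsePotEn^T \w\|_2^2 = \w^T \TransformedsparsePotEn \TransformedsparsePotEn^T \w \le \|\w\|_2^2$. When $\nsample = \nstate$ the matrix $\TransformedsparsePotEn$ is square orthogonal, so $\TransformedsparsePotEn \TransformedsparsePotEn^T = I$ and the inequality tightens to equality, recovering the sharper condition in that subcase.

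For sufficiency I would first dispatch $\w = 0$ (the norm condition forces $\vVec = 0$, and any matrix with orthonormal columns works since $\nsample \ge \nstate$). Otherwise, let $U \in \RR{\nsample \times \nsample}$ be an orthogonal matrix with $U \w = \|\w\|_2 e_1$ (e.g., a Householder reflector). It suffices to construct $Z := U \TransformedsparsePotEn \in \RR{\nsample \times \nstate}$ whose columns are orthonormal and whose first row equals $\vVec^T / \|\w\|_2$, since recovering $\TransformedsparsePotEn = U^T Z$ then satisfies both conditions. Partitioning $Z$ into its first row and an $(\nsample-1)\times \nstate$ block $B$, orthonormality of the columns of $Z$ is equivalent to $B^T B = I - \vVec \vVec^T / \|\w\|_2^2$. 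This right-hand side is positive semidefinite precisely when $\|\vVec\|_2 \le \|\w\|_2$, has rank $\nstate - 1$ when $\|\vVec\|_2 = \|\w\|_2$, and has rank $\nstate$ when $\|\vVec\|_2 < \|\w\|_2$. In the $\nsample = \nstate$ subcase the norm equality is forced, and one takes the rows of $B$ to be any orthonormal basis of $\vVec^\perp \subset \RR{\nstate}$; in the $\nsample > \nstate$ subcase one takes any positive-semidefinite square root of the right-hand side and pads it with zero rows, which is legitimate because $\nsample - 1 \ge \nstate$.

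The main obstacle is conceptual rather than computational: one has to spot that a single orthogonal change of variables in $\RR{\nsample}$ reduces the whole problem to ``build an $\nsample \times \nstate$ matrix with orthonormal columns whose first row is a prescribed vector of norm at most $1$.'' Once that framing is in place, the dichotomy between $\nsample = \nstate$ and $\nsample > \nstate$ is dictated by the rank of $I - \vVec \vVec^T/\|\w\|_2^2$, and the if-and-only-if drops out by inspection.
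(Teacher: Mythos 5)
Your proof is correct, and it reaches the paper's dichotomy by a somewhat different route than the one in Appendix~\ref{app:ray}. For necessity, the paper argues geometrically that applying $\TransformedsparsePotEn^T$ rotates the component of $\sampleMatT\tilde{\cvector}$ lying in the span of an orthonormal set and hence cannot increase length; you instead observe that $\TransformedsparsePotEn\TransformedsparsePotEn^T$ is a rank-$\nstate$ orthogonal projector, which gives the inequality in one line and the forced equality for $\nsample=\nstate$ immediately. For sufficiency, the paper builds two orthogonal matrices $\Qmat_1$ and $\Qmat_2$ whose first rows are the normalized vectors $\tilde{\cvector}^T\widetilde{\podstate}$ and (a suitable modification of) $\tilde{\cvector}^T\sampleMat$, treating the cases $\|\sampleMatT\tilde{\cvector}\|_2=\|\widetilde{\podstate}^T\tilde{\cvector}\|_2$ and $\|\sampleMatT\tilde{\cvector}\|_2>\|\widetilde{\podstate}^T\tilde{\cvector}\|_2$ separately inside the $\nsample>\nstate$ case, and sets $\TransformedsparsePotEn^T=\Qmat_1^T\Qmat_2$; you apply a single Householder reflection sending $\w=\sampleMatT\tilde{\cvector}$ to $\|\w\|_2\evector_1$ and then solve a completion problem, namely finding $B$ with $B^TB=I-\vVec\vVec^T/\|\w\|_2^2$, whose solvability is governed exactly by positive semidefiniteness and the rank bound $\nsample-1\ge\mathrm{rank}$. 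Your Gram-factorization step unifies the equality and strict-inequality subcases (the rank drops from $\nstate$ to $\nstate-1$ precisely at equality, which is what makes $\nsample=\nstate$ work only with equality), whereas the paper's construction is more explicit but requires the extra vector $y$ from the trailing rows of $\Qfull$ to absorb the norm deficit; both arguments are complete, and your handling of the $\w=0$ degenerate case is a small point the paper leaves implicit.
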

See Appendix~\ref{app:ray} for the proof.

Obviously, equation \eqref{new conditions} is satisfied if either $\tilde{\cvector} = 0$ (i.e., the equilibrium
configuration is taken as the reference configuration) or if $\nsample = N$.
Unfortunately, however, equation \eqref{new conditions} is not guaranteed to be satisfiable
in more general situations. Specifically, when  $\nsample  > \nstate $, the condition
$|| \widetilde{\podstate}^T \tilde{\cvector}  ||_2 \le || \sampleMatT  \tilde{\cvector}  ||_2 $
corresponds to comparing the magnitude of a vector of length $\nstate $ 
obtained by rotating and dropping components with a second vector
of length $\nsample $ obtained by simply dropping components. If 
$\widetilde{\podstate}$ and $\tilde{\cvector}$ are not correlated, then one could perhaps
hope that on average the vector with more components would generally 
have a larger magnitude.  However, when 
$\tilde{\cvector}$ lies completely within the subspace spanned by the columns of
$\widetilde{\podstate}$ and all components of $\tilde{\cvector}$ are non-zero,
then
$|| \widetilde{\podstate}^T \tilde{\cvector}  ||_2 = || \tilde{\cvector}  ||_2 $ and so
satisfying the necessary and sufficient conditions requires `full sampling' $\nsample  = N$. 
While this scenario may be considered pessimistic, one can expect that a very large
value of $\nsample$ will be required when $\tilde{\cvector}$ lies
primarily in the range space of $\widetilde{\podstate}$.
In general, there is no guarantee that even the 
simplified (i.e., parameter-independent)
form of the two-term Taylor equation is solvable. When one also considers
that Eq.~\eqref{FirstOrderSimpleForm} corresponds to the
restriction of Eq.~\eqref{eq:matchFirstOrder} to a single instance
of $\paramOnline$, the above result should be seen as quite discouraging.

For this reason, we abandon any attempt at computing a parameter-independent sparse matrix
$\sparsePotEnNo$ during the offline phase that can serve to approximate
the reduced gradient for all online points $\paramOnline\in\paramDomain$. Instead, we
limit ourselves to the online computation of a parameter-dependent matrix
$\sparsePotEnOn$ 
that is only valid for a single point $\paramOnline$ but can be used for all
reduced configuration variables $\qredOnline\in\RR{n}$ that may arise during the
online evaluation, e.g., at each nonlinear iteration and time instance
considered while numerically solving the equations of motion.
Additionally, we set the reference configuration to equilibrium, which
results in  $\tilde{\cvector}=0$ and guarantees solvability of the
the two-term Taylor expression
with $\nsample = \nstate$.

\subsection{Implementation and cost}\label{sec:procedure:rbsPotEn}

Procedure \ref{procedure:rbsPotEn} summarizes the offline/online strategy for
implementing the RBS strategy for approximating the potential energy.
\begin{algorithm2e}\label{procedure:rbsPotEn}
\caption{\textsf{Reduced-basis sparsification for potential energy}}
\BlankLine
\tcm{\textbf{\textsf{Offline stage}}}
\BlankLine
\setcounter{AlgoLine}{0}
  \label{step:potEnSnapshot} 
	Determine the sampling matrix $\sampleMat$.\;
\setcounter{AlgoLine}{0}
\BlankLine
\tcm{\textbf{\textsf{Online stage}} \textsf{(given $\paramOnline$)
}}
\BlankLine
Compute $\podstate^T
\nabla_{\q\q}V\left(\initialQOn;\paramOnline\right)\podstate$.
\;\label{step:redPotEnGrad}
Compute $\sampleMatMinT \nabla_{\q\q}  V\left(\initialQOn
;\paramOnline\right)\sampleMatMin$.
\;\label{step:samplePotEnGrad}
Solve Equation \eqref{eq:optMatch} for $\sparsePotEnSmallOn$.\;\label{step:finishApproxPotEn}
For any $\qredOnline\in\RR{\nstate}$, set
$\potEnRedApprox(\qredOnline;\paramOnline)
=\potEn(\initialQOn+\sampleMat\sparsePotEnSmallOn\qredOnline;\paramOnline)$,
and compute the gradient as
$\nabla_\qred\potEnRedApprox\left(\qredOnline;\paramOnline\right) =
\sparsePotEnSmallOn^T\sampleMat^T\nabla_q\potEn(\initialQOn+\sampleMat\sparsePotEnSmallOn\qredOnline;\paramOnline)$
\;\label{step:finalApproxPotEn}

\end{algorithm2e}
This method satisfies the online computational cost requirements of
problem (P\ref{probPotEn}) with one exception: 
online step \ref{step:redPotEnGrad} incurs an $N$-dependent operation
count. However, online steps 1--3 depend only on the
online point $\paramOnline$ and not on the reduced configuration variables
$\qredOnline$. Thus, these steps are performed only once per
parameter instance, and their cost can be amortized over all online-queried
values of $\qredOnline$.
As a result, this does not preclude significant computational savings, as will
be shown in the numerical results reported in Section \ref{sec:experiments}.
Note that online
step~\ref{step:samplePotEnGrad} is equivalent to computing just $\mathcal
O(\nstate^2)$
entries of $\nabla_{\q\q} V$, which can be completed at a cost independent of
$N$.
 Step~\ref{step:finishApproxPotEn} requires $\mathcal
O(\nstate^3)$ operations.

\begin{remark}
Most nonlinear reduced-order modeling methods
\cite{astrid2007mpe,ryckelynck2005phm,chaturantabut2010journal,galbally2009non,drohmannEOI,CarlbergGappy,carlbergJCP}
assume `$H$-independence' \cite{drohmannEOI}, which states that the Jacobian
of the vector-valued nonlinear function  is sparse; in the present context,
this corresponds to sparsity of the matrix $\nabla_{\q\q}\potEn$. When this assumption
holds, the proposed methodology incurs low online computational cost. This efficiency results from the fact
that 
computing $\sampleMatT\nabla_\q
\potEn(\initialQOn+\sampleMat \sparsePotEnSmallOn\qredOnline;\paramOnline)$  in Step \ref{step:finalApproxPotEn} of
Procedure \ref{procedure:rbsPotEn}
requires that only $\nsample$ components of the gradient $\nabla_\q\potEn$ be
evaluated; if $H$-independence holds, then these $\nsample$ components depend
on only $\mathcal
O(\nsample)$ components of the argument $\initialQOn+\sampleMat
\sparsePotEnSmallVar\qred$, leading to an $N$-independent operation count. 

Unfortunately, $H$-independence does not hold for some problems in Lagrangian
dynamics. For example molecular-dynamics models can be characterized by
a potential that includes interaction terms between all particles, resulting in
a dense matrix $\nabla_{\q\q}\potEn$. Here, the proposed method can
still achieve efficiency by `centering' the configuration space at equilibrium
such that $\initialQ = 0$, $\forall \param\in\paramDomain$.  
In this case, the method requires computing only $\nsample$
components of the argument $\initialQOn+\sampleMat
\sparsePotEnSmallVar\qred$ in Step \ref{step:finalApproxPotEn} of
Procedure \ref{procedure:rbsPotEn} regardless of the sparsity of the matrix
$\nabla_{\q\q}\potEn$. This efficiency is achievable due to the fact that the
method injects `sparsification'
in the argument of the nonlinear function.  This ability to achieve an
$N$-independent operation count when $H$-independence is violated
distinguishes this method from others in the literature.

\end{remark}

\section{Lagrangian dynamics formulation}\label{sec:prob}

We have now developed techniques to approximate parameterized reduced
symmetric-positive-definite matrices and potential functions. In this section, we show how these
methods enable us to achieve the objective of this work: preserving Lagrangian structure
in model reduction for nonlinear mechanical systems. We begin by presenting
the Lagrangian-dynamics formulation for such systems and highlighting critical
problem structure.  Later, Section \ref{sec:existMR} describes existing nonlinear
model-reduction techniques and explains how they destroy structure in this
context. Section \ref{sec:newMethod} presents the proposed
structure-preserving methodology, which employs the approximation techniques
proposed in Sections \ref{sec:matrixApprox} and \ref{sec:potEnGen}.

We consider parameterized, nonlinear \emph{simple mechanical systems}, with a
particular focus on structural-dynamics models constructed by a finite-element
formulation.  Such models are defined by a triple
$(\Q,g,\potEn)$ parameterized by system parameters $\param\in\paramDomain$. The parameters may describe
variations in shape and material
properties, for example. The triple is composed of:
 \begin{itemize} 
  \item 
A differentiable configuration manifold $\Q$.  We take $\Q =
\RR{N}$ where $N$ denotes the number of degrees of freedom in the
model, considered to be `large' in this work.
\item A parameterized Riemannian metric $g(\vVec , \w ;\param)$,
where $\vVec $ and $\w $ belong to the tangent bundle of $\Q$.
We take $g(\vVec , \w ; \param) = \vVec ^T\Mparam \w $, 
where $\Mparam$ denotes the $N\times N$ parameterized symmetric positive-definite mass matrix.
 \item A parameterized potential-energy function 
 $\potEn:\Q\times \paramDomain\rightarrow \RR{}$.
	 \end{itemize}


The kinetic energy of a simple mechanical system can be expressed as
$T(\dot\q;\param)=\half g(\dot
		\q ,\dot \q ;\param) = \half\dot
		\q ^T\Mparam\dot \q $, where
$\q:\left[0,\lastT\right]\rightarrow\Q$ denotes the time-dependent
configuration variables and 
$\lastT$ denotes the final time.
This leads to the following expression for the
Lagrangian, which represents the difference between the kinetic and potential
energies:
	\begin{align}
	L(\q ,\dot \q ;\param) &=\frac{1}{2}g\left(\dot
	\q,\dot\q;\param\right)-\potEn\left(\q;\param\right)\\
	\label{eq:lagrangian}&= \half\dot \q ^T\Mparam\dot \q  - \potEn(\q ;\param).
	\end{align}
In many cases, the non-conservative forces\footnote{Conservative forces can be
handled by directly including them in the Lagrangian.} consist of
an applied external
force 
and a dissipative force arising from Rayleigh viscous damping. This
dissipative force derives from a positive-semidefinite dissipation
function\footnote{Non-viscously damped systems can also often be derived by a
positive-semidefinite dissipation function \cite{adhikari2000damping}.}
 \begin{equation}\label{eq:dissFun}
 \dissFun{\dot \q }\equiv \frac{1}{2}\dot \q ^T\Cparam \dot \q ,
  \end{equation} 
 where
 $\Cparam = \rayleighM \Mparam + \rayleighK
 \nabla_{\q\q}\potEn(\initialQ;\param)$ denotes a parameterized $N\times N$ symmetric
 positive-semidefinite matrix with $\rayleighM\in\RR{}$ and
 $\rayleighK\in\RR{}$. 
Here, $\initialQNo:\paramDomain\rightarrow\RR{N}$ denotes the (parameterized) equilibrium
configuration such that $\nabla_\q\potEn(\initialQ;\param) = 0$. So, we consider
 non-conservative forces of the form $\fextparam-\nabla_{\dot \q }
 \dissFun{\dot \q }$, where $\fext$ denotes the external force that is derived from
 the Lagrange--D'Alembert variational principle.

Given the Lagrangian \eqref{eq:lagrangian}, one can derive the equations of
motion for a simple mechanical system subject to an external force and Rayleigh
viscous damping from the forced Euler--Lagrange equation
 \begin{equation} \label{eq:fomLag1}
 \frac{d}{dt}\nabla_{\dot \q }L (\q ,\dot \q ;\param)- \nabla_\q L(\q ,\dot \q ;\param)=
 \fextparam-\nabla_{\dot \q } \dissFun{\dot \q }.
  \end{equation} 
Substituting Eqs.~\eqref{eq:lagrangian} and 
 \eqref{eq:dissFun} into Eq.~\eqref{eq:fomLag1} leads
 to the familiar equations of motion 
 \begin{equation} \label{eq:sdmEom}
 \Mparam\ddot \q  + \Cparam\dot \q  + \nabla_\q  \potEnParam = \fextparam.
  \end{equation} 

Conservative mechanical systems, where $\fextparam=0$ and $\Cparam = 0$,
exhibit important
properties and can be characterized using the Hamiltonian formulation of
classical mechanics discussed in Appendix \ref{app:Hamiltonian}. For example, these systems conserve energy and quantities
associated with symmetry, and their time-evolution maps are symplectic.
Because these properties are intrinsic characteristics of the mechanical
systems, it is desirable for numerical methods to preserve
these properties. As mentioned in the
introduction, the
class of structure-preserving time integrators has been developed for this purpose.
This class of integrators ensures that the numerical solution preserves essential properties such as
energy conservation, momentum conservation, and symplecticity 
\cite{hairer2006geometric,marsden2001discrete}.

For this reason, we aim to develop a reduced-order model that preserves the
structure of the mechanical system, yet is computationally inexpensive to
simulate. This will ensure that the reduced-order model preserves these
characteristic properties. Further, the reduced-order equations of motion for these
can be solved with a structure-preserving time integrator; this will ensure that the
numerical solution computed using the reduced-order model will also preserve
these properties. The properties of the system we seek to preserve
are those enumerated in Section \ref{sec:intro}: a configuration space, a parameterized Riemannian metric, a parameterized potential-energy function,
	a parameterized positive-semidefinite dissipation function, and
	an external force derived from the Lagrange--D'Alembert 
	principle.  The first three properties constitute the
	 parameterized triple that ensures the model describes a simple mechanical
	 system; the last two characterize the 
	 non-conservative forces.

\section{Existing model-reduction techniques}\label{sec:existMR}
Model-reduction techniques aim to generate a low-dimensional model that is
inexpensive to evaluate, yet captures the essential features of the
high-fidelity model. These methods first conduct a computationally
expensive offline stage during which they perform analyses (e.g., solving the
equations of motion, modal analyses) for a training set
$\paramTrain_{i=1}^\nTrain\subset\paramDomain$. Then, these methods employ the
data generated during these analyses to define a  configuration manifold of
reduced dimension, as well as other approximations to achieve efficiency in
the presence of nonlinearities or arbitrary parameter dependence.  This low-dimensional configuration manifold is
subsequently employed to generate a low-dimensional model that can be used to
perform inexpensive analyses for any specified point
$\paramOnline\in\paramDomain$ during the online stage.

When the configuration space is Euclidean (as is the case for
the models considered herein), the configuration space of reduced dimension
$\nstate\ll N$ can be
expressed as
\begin{equation} \label{eq:relate}
\Qredfull\equiv \{\qRef +
	\podstate \qred\ |\ \qred \in \Qred\},
\end{equation} 
where $\qRef:\paramDomain\rightarrow\RR{N}$ denotes the (parameterized)
reference
configuration about which the affine reduced subspace is centered,
$\Qred= \RR{\nstate}$, and $\podstate \in\RR{N\times\nstate}_*$ defines
the reduced basis represented as a (typically dense) matrix. This leads to the
following expression for the generalized coordinates and their derivatives:
 \begin{gather} 
\label{eq:qSpace1}\q  = \qRef + \podstate \qred\\
\dot \q  = \podstate \dot \qred\\
\label{eq:qSpaceLast}\ddot \q  = \podstate \ddot \qred.
  \end{gather} 
Thus, the low-dimensional configuration space can be described in terms of
low-dimensional generalized coordinates $\qred\in\Qred$ or
in terms of original coordinates by Eq.\ \eqref{eq:qSpace1}.
The basis $\podstate$ can be determined by a variety of
techniques, including proper orthogonal decomposition and modal decomposition.

\subsection{Galerkin projection}\label{sec:galerkin}
Model reduction based on Galerkin projection preserves Lagrangian
structure. As pointed out by Lall et al.\
 \cite{lall2003structure}, the Galerkin projection must  be carried out
on the Euler--Lagrange equation \eqref{eq:fomLag1}---not the first-order
state-space form---in order to preserve this structure.

Following their approach,
Galerkin-projection-based methods replace the original configuration space
$\Q$ by the reduced-order configuration space $\Qredfull$ and subsequently
derive the equations of motion in the usual way using a set of
lower-dimensional generalized coordinates.
In this way, the resulting model has an identical structure to the original
problem.

For simple mechanical systems subject to non-conservative forces, this amounts to defining the Lagrangian as
 \begin{align} \label{eq:redLagrangian1}
 \Lred(\qred,\dot\qred;\param)&\equiv L(\qRef + \podstate
 \qred,\podstate\dot\qred;\param)\\
 \label{eq:redLagrangian}&=\frac{1}{2}\dot\qred^T\podstate^T\Mparam\podstate\dot\qred -
 \potEn(\qRef+\podstate\qred;\param) 
 \end{align} 
 and the dissipation function as
 \begin{align} 
 \dissFunRed{\dot \qred} &\equiv\dissFun{\podstate\dot\qred}\\
 &=\frac{1}{2}\dot\qred^T\podstate^T\Cparam\podstate\dot\qred.
 \end{align} 
 The external force, which is derived based on the
 Lagrange--D'Alembert variational principle, is transformed by relation
 \eqref{eq:qSpace1} into
 \begin{equation} 
 \fextredparam \equiv \podstate^T\fextparamRomRef.
  \end{equation} 
 Following Section \ref{sec:prob}, the forced Euler--Lagrange equation applied to
the Lagrangian $\Lred$, the dissipation function $\dissFunRedNo$, and the
external force $\fextred$ leads to the reduced-order equations of motion
 \begin{equation} \label{eq:reducedOrderEOM}
 \frac{d}{dt}\nabla_{\dot \qred}\Lred (\qred,\dot \qred;\param)-
 \nabla_{\qred}\Lred(\qred,\dot \qred;\param) +
 \nabla_{\dot \qred}\dissFunRed{\dot \qred}=\fextredparam.
  \end{equation} 
	This can be rewritten as
 \begin{equation} \label{eq:lagrangeGal}
 \podstate^T\Mparam\podstate\ddot\qred +\podstate^T\Cparam\podstate\dot\qred+
 \podstate^T\nabla_\q  \potEn(\qRef+\podstate\qred;\param)
  = \podstate^T\fextparamRomRef.
  \end{equation} 
	Note that Eq.\ \eqref{eq:lagrangeGal} could have also been derived by
	applying Galerkin projection to the original Euler--Lagrange equation
	\eqref{eq:sdmEom}, i.e., making substitutions
	\eqref{eq:qSpace1}--\eqref{eq:qSpaceLast} and left multiplying the system of
	equations by $\podstate^T$.

Thus, the Galerkin reduced-order model preserves the problem structure because it
preserves all five Lagrangian properties:
\begin{enumerate} [I.]
  \item a configuration space $\Qred = \RR{\nstate}$, which
	relates to the original configuration space by  Eq.\ \eqref{eq:relate},
	\item a parameterized Riemannian metric
	$\metricRed\left(\vVec _r,\w _r;\param\right) = \vVec _r^T\podstate^T\Mparam\podstate
	\w _r$,
	\item a parameterized potential-energy function
	$\potEnRed(\q _r;\param) =   \potEn(\qRef + \podstate \q _r;\param)$,
	\item a parameterized positive-semidefinite dissipation function
$\dissFunRedNo(\dot
\qred;\param) = \half\dot\qred^T\podstate^T\Cparam\podstate\dot\qred$, and
	\item an external force $\fextred$ derived from applying the
	Lagrange--D'Alembert principle to the original external force $\fext$, but
	restricted to variations in the configuration space $\Qredfull$.
	 \end{enumerate}

\subsubsection{Computational bottleneck}\label{sec:bottleneck}
Although the equations of motion \eqref{eq:lagrangeGal} are low
dimensional,
they remain  computationally expensive to solve when the operators exhibit arbitrary parameter
dependence and the potential is nonlinear. The reason
is simple: computing the low-dimensional components of
\eqref{eq:lagrangeGal} incurs large-scale operations due to the density of
$\podstate$.  
For example, the following steps are required to compute
$\podstate^T\MparamOn\podstate$ for a specific $\paramOnline\in\paramDomain$
during the online stage:
 \begin{enumerate} [i.]
  \item Compute
$\MparamOn$, which incurs $\mathcal O(N\sparsity)$ flops, 
where $\sparsity$ denotes the average number of nonzeros per row of the matrix
$\MparamOn$.
\item Compute the product $\MparamOn\podstate$, which incurs $\mathcal
O(N\sparsity\nstate)$
flops.
\item Compute the product $\podstate^T(\MparamOn\podstate)$, which incurs
$\mathcal O(N\nstate^2)$ flops. 
	 \end{enumerate}
Thus, the cost scales with the large dimension $N$ of the original
configuration manifold. The same analysis holds for the product
$\podstate^T\CparamOn\podstate$. 

If the potential energy $\potEn$ exhibits a (general) nonlinear dependence on
coordinates $\q$, the situation worsens. In this case, the vector
$\nabla_\q\potEn(\qRef+\podstate\qred;\paramOnline)$ and 
product $\podstate^T\nabla_\q\potEn$ must be computed
for every instance of 
$\qred$.
Similarly, $\podstate^T\fextparamOn$ must be
computed for every time instance.
Thus, a dimension reduction is generally insufficient to generate models
with computational complexity independent of $N$.

\begin{remark}
If the mass matrix is affine in functions of
the parameters $\Mparam =
\sum_i\alpha_i(\param)\Mi$ with $\alpha_i:\paramDomain\rightarrow \RR{}$ and
$\Mi\in\RR{N\times N}$, then products $\podstate^T\Mi\podstate$ can be
assembled offline, and $\podstate^T\MparamOn\podstate =
\sum_i\alpha_i(\paramOnline)\left[\podstate^T\Mi\podstate\right]$ can be
computed in $\mathcal O(\nstate^2)$ floating-point operations (flops) during
the online stage \cite{ito1998reduced,prud2002reliable}. Similar
low-complexity results can be obtained for the other terms if they can be
similarly expressed in separable form. However, affine
parameter dependence is a quite limiting scenario and
does not generally hold.
\end{remark}

\subsection{Complexity reduction}\label{sec:existingMethods}
Several techniques have been developed to mitigate the computational
bottleneck described in Section \ref{sec:bottleneck}.  Before applying
projection, these 
methods compute (or sample) only a few entries of the vector-valued functions;
other entries are not computed. In effect,
this complexity-reduction strategy is equivalent to employing a sparse
left-projection test basis. Such methods have been successfully applied to ODEs
that do not exhibit particular structure.  However, when applied to mechanical
systems described by Lagrangian mechanics, these techniques destroy Lagrangian
structure.

\subsubsection{Collocation}\label{sec:coll}
Collocation approaches \cite{astrid2007mpe,ryckelynck2005phm} compute only a
subset of the full-order equations of motion
\eqref{eq:sdmEom} before applying Galerkin projection. That is, the
reduced-order equations of motion \eqref{eq:lagrangeGal} are approximated by
 \begin{align} \label{eq:lagrangeGalColl}
 \begin{split}
 \podstate^T\sampleMat \sampleMatT\Mparam\podstate\ddot\qred
 +\podstate^T\sampleMat \sampleMatT\Cparam\podstate\dot\qred+
 \podstate^T\sampleMat \sampleMatT\nabla_\q
 \potEn(\qRef+\podstate\qred;\param)&\\
  = \podstate^T\sampleMat \sampleMatT \fextparamRomRef&.
  \end{split} 
  \end{align} 
Recall that the sampling matrix $\sampleMat$ consists of $\nsample$ selected columns of
the identity matrix. If one considers the matrix $\sampleMat
\sampleMatT\podstate$ as defining a basis for a test space,
Eq.~\eqref{eq:lagrangeGalColl} can be viewed as a Petrov--Galerkin projection.

Computing the components of Eq.~\eqref{eq:lagrangeGalColl} is inexpensive in
the case of $H$-independence, i.e., when
the matrices $\M$,  $\C$, $\nabla_{\q\q}\potEn$, $\nabla_\q \fext$, and
$\nabla_{\dot \q }\fext$ are sparse.  To see this, consider the first term in
Eq.~\eqref{eq:lagrangeGalColl}:  computing
$\left(\podstate^T\sampleMat\right)
\left(\sampleMatT\MparamOn\right)\podstate$ for specific online point
$\paramOnline\in\paramDomain$ incurs $\mathcal O(\nsample\sparsity\nstate +
\nsample\nstate^2)$ flops when operations are carried out in the order implied
by the parentheses.  This cost is small if the sparsity measure of $\M$ is
small, i.e., $\sparsity\ll N$.
 
However, this cost-reduction approach destroys the problem's structure, as it
does not preserve the following Lagrangian properties described in Section
\ref{sec:prob}:
 \begin{enumerate} 
	\item[\ref{prop:metric}.] The approximated reduced mass matrix $\podstate^T\sampleMat \sampleMatT\Mparam\podstate$ is not
	symmetric, so it does not define a metric.
	\item[\ref{prop:potential}.]
	The term 
 $\podstate^T\sampleMat \sampleMatT\nabla_{\q\q}\potEn(\qRef+\podstate\qred;\param)\podstate$
 is not symmetric, so it cannot be the Hessian of a potential-energy function.
 \item[\ref{prop:diss}.]The
approximated reduced damping matrix  $\podstate^T\sampleMat \sampleMatT\Cparam\podstate$ is
not symmetric, so it does not derive from a dissipation function.
	 \end{enumerate}
Note that Property \ref{prop:config} is trivially satisfied, as the
configuration space can be described as $\Qred = \RR{\nstate}$ and relates to
the original configuration space by Eq.\ \eqref{eq:relate}. Further, Property
\ref{prop:ext} is satisfied, because the non-conservative forces can be
derived by applying the  Lagrange--D'Alembert variational principle to a
modified external force $\sampleMat \sampleMatT\fextparam$, but restricted
to variations in the (true) configuration space $\Qredfull$.
\subsubsection{DEIM/gappy POD}\label{sec:eim}
Methods based on the discrete empirical interpolation method
 \cite{chaturantabut2010journal,galbally2009non,drohmannEOI} or
gappy POD \cite{sirovichOrigGappy,CarlbergGappy,carlbergJCP} approximate via
least-squares regression or interpolation
the nonlinear vector-valued functions appearing in
Eq.~\eqref{eq:sdmEom}; these include $ \Mparam\ddot \q $, $\Cparam\dot \q $,
$\nabla_\q  \potEnParam$, and $\fextparam$. Because these approaches
construct a separate approximation for each term in the governing equations,
they often achieve higher accuracy than collocation.

During the offline stage, these methods construct an orthogonal basis $\podf
\in \RR{N \times \nf}$ with $\nf \le \nsample$ for each nonlinear
function
$\f(t;\param)$ appearing in the equations of motion.
The basis $\podf$ can be computed empirically via proper
orthogonal decomposition (POD), in which case the approximation technique is
referred to as `gappy POD' \cite{sirovichOrigGappy}. This consists of two
steps: 1) collect snapshots $\snaps{\f} =\{\f(t;\param)\ | \
t\in\tTrain(\param),\ \param\in\paramTrain\}$, where
$\tTrain(\param)\subset\timeDom$ designates the time instances taken by the
time-integration method for the training simulation; and 2) compute
$\podf$ by Algorithm \ref{PODSVD} of Appendix~\ref{app:POD} using $\snaps{\f}$
and an energy criterion $\energyCrit_\f\in\left[0,1\right]$ as inputs. 

During the online stage, these methods approximate the nonlinear function as
        \begin{equation}\label{eq:gappyApprox}
\f(t;\param)\approx \podf  [ \sampleMatT \podf ]^+ \sampleMatT \f(t;\param)
        \end{equation}
where a superscript $+$ denotes the Moore--Penrose pseudoinverse and $ [
\sampleMatT \podf ]^+ \sampleMatT \f $ is simply the solution to the linear least-squares
 problem 
\begin{equation}\underset{\f_r\in\RR{\nf}}{\mathrm{minimize\ }} \|\sampleMatT \f -
  \sampleMatT \podf \f_r \|_2^2 . 
\end{equation} 
Notice that when $\nf = \nsample$, the least-squares residual is zero
(assuming the $\sampleMatT \podf$ has full column rank) and so  the
above procedure corresponds to interpolation.

As with collocation, this approximation technique leads to computational-cost
savings during the online stage if computing $\sampleMatT\f(t;\param)$ incurs
a flop count independent of $N$, i.e., $\f(t;\param)$ exhibits
$H$-independence.  Substituting  least-squares approximations
for the nonlinear functions into Eq.~\eqref{eq:lagrangeGal} yields the
	approximated reduced-order equations of motion
\begin{equation}
  \label{eq:lsqEq}
  \leftGappy{\MEmptyparam
	\ddot\qred}\Mparam\podstate \ddot\qred+
	\leftGappy{\CEmptyparam\dot\qred}\Cparam\podstate \dot\qred
	+ \leftGappy{\nabla_\q  \potEnEmptyParam}\nabla_\q 
	\potEn(\qRef+\podstate\qred;\param) =
	\leftGappy{\fextEmptyparam}\fextparamRom.
\end{equation}
Here, we have used the notation
\begin{equation}
\leftGappy{\fEmptyparam}\equiv 
\podstate^T\podArgsSub{\fEmptyparam}\left[\sampleMatT\podArgsSub{\fEmptyparam}\right]^+
\sampleMatT,
\end{equation}
and the subscript of $\leftGappy{}$ and $\podArgsNo$ denotes the function
for which the approximation has been constructed.

Unfortunately, this approximation method also
destroys the Lagrangian structure. As before, Lagrangian properties
\ref{prop:metric}--\ref{prop:diss} are lost because the reduced mass, stiffness, and
damping matrices are not symmetric.
However, Property \ref{prop:config} is preserved. Property \ref{prop:ext} is
also preserved, because the non-conservative external force can be derived by the
Lagrange--D'Alembert principle applied to the modified external force
$\tildefextparam =
\podfn\left[\sampleMatT\podArgsSub{\fext}\right]^+\sampleMatT\fextparam$ with
variations restricted to the configuration space $\Qredfull$.

\section{Efficient, structure-preserving model reduction}\label{sec:newMethod}

The main idea of the proposed approach is to directly approximate the
quantities defining the Lagrangian structure of the Galerkin-projection
reduced-order model, and subsequently derive the equations of motion. Section
\ref{sec:galerkin} enumerates these quantities for the simple mechanical
systems considered herein: the Riemannian metric $\metricRed$,
the potential-energy function $\potEnRed$, the semidefinite dissipation
function $\dissFunRedNo$, and the external force $\fextred$.  Approximations
to these ingredients should
1) preserve salient properties, 2) lead to
computationally inexpensive reduced-order-model simulations, and 3) incur
minimal approximation error.

To this end, we propose a model defined by
\begin{enumerate} [I.]
  \item a configuration space $\Qred = \RR{\nstate}$, which
	relates to the original coordinates by  Eq.\ \eqref{eq:relate},
	\item an approximated Riemannian metric $\metricRedApprox$,
	\item an approximated potential-energy function
	$\potEnRedApprox$,
	\item an approximated positive-semidefinite dissipation function
$\dissFunRedApproxNo$, and
	\item\label{ing:fextredApprox} an approximated external force $\fextredApprox$ derived from applying the
	Lagrange--D'Alembert principle to an approximated force $\fextApprox$ represented in
	the original coordinates, but
	limited to variations in the reduced configuration space $\Qredfull$. 
	\end{enumerate}

\noindent We can derive the equations of motion by applying
the forced Euler--Lagrange equation with these approximations:
 \begin{equation} \label{eq:eomApprox}
 \frac{d}{dt}\nabla_{\dot \qred}\Lredapprox (\qred,\dot \qred;\param)-
 \nabla_{\qred}\Lredapprox(\qred,\dot \qred;\param) +
 \nabla_{\dot \qred}\dissFunRedApprox{\dot \qred}=\fextredApproxparam,
  \end{equation} 
	where the approximated Lagrangian is defined as
 \begin{equation} \label{eq:redLagApprox}
 \tildeLred(\qred,\dot \qred;\param) \equiv\half\metricRedApprox(\dot\qred,\dot\qred;\param)
 - \potEnRedApprox(\qred;\param).
  \end{equation} 
	Note that Eq.\ \eqref{eq:eomApprox} approximates Eq.\
	\eqref{eq:reducedOrderEOM}, while Eq.\ \eqref{eq:redLagApprox} approximates
	Eq.\ \eqref{eq:redLagrangian}.

Figure \ref{fig:structurePres} depicts the strategy graphically.
 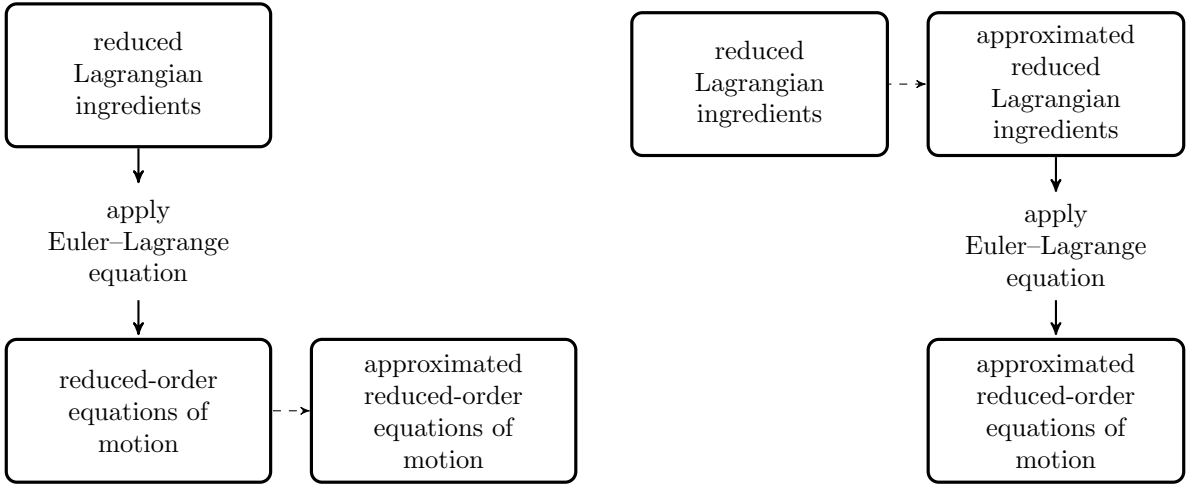
\begin{figure}[htbp] 
  \centering 
\begin{center}
\subfigure[\textbf{Existing complexity-reduction methods} (see Section
\ref{sec:existingMethods}). By approximating the equations of motion, such
methods destroy Lagrangian structure.]{
	\begin{tikzpicture}
  [node distance=.5cm,
  start chain=going below,,inner sep=5pt]
	 \node[punktchain, join, text width=9em,minimum height=1.9cm] (ingred) {reduced\\ Lagrangian\\ ingredients};
		\node[approxMeth,join, text width=9em](eulerLag){apply\\ Euler--Lagrange\\ equation};
		\node[punktchain,join, text width=9em,minimum height=1.9cm,minimum height=1.9cm](eom){reduced-order\\ equations of\\ motion};
		\begin{scope}[start branch=venstre,
			every join/.style={->, dashed, shorten <=0.3pt}, ]
			\node[punktchain, on chain=going right, join=by {->}, text width=9em,minimum height=1.9cm]
					(proposed) {approximated\\ reduced-order\\ equations of\\ motion};
		\end{scope}
	\end{tikzpicture}
}
\quad
\subfigure[\textbf{Proposed approach}. By approximating Lagrangian ingredients before
deriving the equations of motion, the approach preserves Lagrangian structure.]{
	\begin{tikzpicture}
  [node distance=.5cm,
  start chain=going below,]
	 \node[punktchain, join, text width=9em,minimum height=1.9cm] (ingred) {approximated\\ reduced\\ Lagrangian\\ ingredients};
		\begin{scope}[start branch=venstre,
			every join/.style={->, dashed, shorten <=0.3pt}, ]
			\node[punktchain, on chain=going left, join=by {<-}, text width=9em,minimum height=1.9cm]
					(proposed) {reduced\\ Lagrangian\\ ingredients};
		\end{scope}
		\node[approxMeth,join, text width=9em](eulerLag2){apply\\ Euler--Lagrange\\ equation};
		\node[punktchain,join, text width=9em,minimum height=1.9cm](approxEOM){approximated reduced-order\\ equations
		of\\ motion};
	\end{tikzpicture}
}
	\end{center}
\caption{Comparing existing complexity-reduction approaches with the proposed approach. A
dashed arrow implies a complexity-reduction approximation.} 
\label{fig:structurePres} 
\end{figure} 
 The next sections describe two proposed methods that align
with this strategy for structure preservation.
For reference, Table \ref{tab:methods} reports components of the equations of
motion for these methods, as well as for the model-reduction methods discussed
in the previous sections. 
 \begin{table}[htd] 
 \centering 
 \scriptsize{\begin{tabular}{|c|c|c|c|c|c|c|} 
  \hline 
\multirow{2}{*}{method} & mass & damping & potential-energy & external &
struct. & low\\
& matrix & matrix & gradient & force & pres.?& cost?\\
\hline
 Galerkin& $\podstate^T\Mparam\podstate$ & $\podstate^T\Cparam\podstate$ &
 $\podstate^T\nabla_\q  \potEn(\initialQ+\podstate\qred;\param)$ &
 $\podstate^T\fext $& yes& no\\
	& & & & & &\\
collocation & $\podstate^T\sampleMat \sampleMatT\Mparam\podstate$ &
$\podstate^T\sampleMat \sampleMatT\Cparam\podstate$ &
$\podstate^T\sampleMat \sampleMatT\nabla_\q 
\potEn(\initialQ+\podstate\qred;\param)$ &
$\podstate^T\sampleMat \sampleMatT\fext $ & no&yes\\
	& & & & & &\\
 gappy POD& $\leftGappy{\MEmptyparam
	\ddot\qred}\Mparam\podstate $ & $\leftGappy{\CEmptyparam\dot\qred}\Cparam\podstate$ & $\leftGappy{\nabla_\q  \potEnEmptyParam}\nabla_\q 
	\potEn(\initialQ+\podstate\qred;\param)$ &
	$\leftGappy{\fextEmptyparam}\fext$ & no& yes\\ 
	& & & & & &\\
\multirow{2}{*}{proposal 1 }&
\multirow{2}{*}{$\sparseMass^T\Mparam\sparseMass$ }&
$\alpha \sparseMass^T\Mparam\sparseMass +$
 & \multirow{2}{*}{$
\sparsePotEnNo^T\nabla_\q\potEn(\initialQ+\sparsePotEnNo\qred;\param)$ }&
\multirow{2}{*}{$\leftGappy{\fextEmptyparam}\fext$}& \multirow{2}{*}{yes
}& \multirow{2}{*}{yes}\\
&  &
$\beta\sparsePotEnNo^T\nabla_{\q\q}\potEn(\initialQ;\param)\sparsePotEnNo$
 & &
&  & \\
	& & & & & &\\
\multirow{2}{*}{proposal 2 }&
\multirow{2}{*}{$\sum\limits_{i=1}^\nM
\Mcoeffi(\param)\podstate^T\Mbasis{i}\podstate $}&
$\alpha \sum\limits_{i=1}^\nM
\Mcoeffi(\param)\podstate^T\Mbasis{i}\podstate +$
 & \multirow{2}{*}{$
\sparsePotEnNo^T\nabla_\q\potEn(\initialQ+\sparsePotEnNo\qred;\param)$ }&
\multirow{2}{*}{$\leftGappy{\fextEmptyparam}\fext$}& \multirow{2}{*}{yes
}& \multirow{2}{*}{yes}\\
&  &
$\beta\sparsePotEnNo^T\nabla_{\q\q}\potEn(\initialQ;\param)\sparsePotEnNo$
 & &
&  & \\
\hline
  \end{tabular} }
  \caption{Terms appearing in the equations of motion for various model-reduction techniques, including the two proposed structure-preserving methods.} 
  \label{tab:methods} 
  \end{table} 
 
\subsection{Riemannian-metric approximation
$\metricRedApprox$}\label{sec:matApprox}

The function $\metricRed:\left(\vVec _r,\w _r;\param\right) \mapsto \vVec _r^T\podstate^T\Mparam\podstate
	\w _r$ is defined by a low-dimensional symmetric
positive-definite matrix $\podstate^T\Mparam\podstate$. Thus, the task of
approximating this matrix is consistent with problem (P\ref{probMat}) of
Section \ref{sec:matrixApprox}; we therefore propose computing an approximated
Riemannian metric
$\metricRedApprox:\RR{\nstate}\times\RR{\nstate}\times\paramDomain\rightarrow
\RR{}$
as 
\begin{align} 
\label{eq:metricRedApprox}\metricRedApprox(\vVec _1,\vVec
_2;\param) \equiv&\vVec _1^T\matApproxM\vVec _2,
\end{align} 
where $\matApproxM$ is an $\nstate\times\nstate$ matrix that must be
symmetric and positive definite. The first method
(proposal 1 in Table \ref{tab:methods}) employs the reduced-basis
sparsification technique, i.e., it approximates $\matApproxM$  via
Eq.~\eqref{eq:spApproxSparse}. The second method (proposal 2 in
Table \ref{tab:methods}) employs matrix gappy POD and approximates this
matrix by Eq.~\eqref{eq:spApprox}. Procedures \ref{procedure:rbs} (Section
\ref{sec:sparseMat}) and \ref{procedure:matrixGappy} (Section
\ref{sec:lsrecon}) provide the offline and online steps to
implement these approximations.

\subsection{Potential-energy-function approximation
$\potEnRedApprox$}\label{sec:podEnRedApprox}

Noting that only $\nabla_\qred\potEnRedApprox $ appears in the reduced-order
equations of motion (see Eqs.\ \eqref{eq:eomApprox}--\eqref{eq:redLagApprox}),
problem (P\ref{probPotEn}) of Section \ref{sec:potEnGen} applies to this
scenario, and so we approximate the potential energy according to the method
described in that section. In particular, Eq.~\eqref{eq:potEnSparse} defines
the approximated reduced potential energy. Further, we set the reference
configuration to equilibrium $\qRefNo=\initialQNo$ to avoid the limitations
associated with other choices (see the discussion in Section
\ref{sec:potEnGen}). Procedure \ref{procedure:rbsPotEn}
of Section \ref{sec:procedure:rbsPotEn} describes the offline/online decomposition for implementing this
approximation.

\subsection{Dissipation-function approximation
$\dissFunRedNo$}\label{sec:diss}
To maintain the Rayleigh-damping structure, we simply approximate the damping
matrix as a linear combination of the approximated mass matrix and Hessian of
the potential at
equilibrium
 \begin{equation} 
\dissFunRedApprox{\vVec} = \frac{\rayleighM}{2}\vVec^T\matApproxM\vVec +
\frac{\rayleighK}{2}\vVec^T\nabla_{\qred\qred}\potEnRedApprox(0;\param)\vVec,
  \end{equation} 
	where $\rayleighM$ and $\rayleighK$ are the Rayleigh damping coefficients
	defined in Section \ref{sec:prob}.

\subsection{External-force approximation
$\fextredApprox$}\label{sec:extForceApp}
The following form of the approximated external force preserves 
structure, i.e., ensures it is derived from applying the
	Lagrange--D'Alembert principle to an approximated force $\fextApprox$ limited to variations in the reduced configuration space $\Qredfull$:
 \begin{equation} \label{eq:extForceApprox}
	\fextredApproxparam\equiv \podstate^T\fextApproxparam.
  \end{equation} 
Thus, the task of generating this
approximation can be reduced to computing $\fextApproxparam$---an approximation to the vector-valued function
$\fextparam$. That is, we assign no special mathematical properties to $\fext$
aside from the fact that it is a vector. One way to accomplish this is by the
DEIM/gappy POD approach described in 
Section \ref{sec:eim}. 


The error in this approximation 
can be bounded using a result derived from the error
in the gappy POD approximation of $\fext$ (e.g., see \cite[Appendix
D]{carlbergJCP}).  We obtain
\begin{equation}
\|\fextredApprox - \fextred\|_2
\leq \|\left(\I - \podArgsSub{\fext}\left[\sampleMatT
\podArgsSub{\fext}\right]^+\sampleMatT\right)\fext \|_2 \leq
\|\Rmat^{-1}\|_2\|\left(\I - \podArgsSub{\fext}\podArgsSub{\fext}^T\right)\fext\|_2,
\end{equation}
where $\podArgsSub{\fext}$ is an orthogonal basis used to represent the
external force, and $\sampleMatT\podArgsSub{\fext} = \Qmat \Rmat$ is the thin QR
matrix factorization. This result assumes that
$\sampleMatT\podArgsSub{\fext}$ has full rank.
Thus, the accuracy of this approximation relies both on the sampling matrix
$\sampleMatT$ and how close $\fext$ is to the range of $\podArgsSub{\fext}$. To
achieve accuracy, we compute $\podArgsSub{\fext}$ via POD, which minimizes
the average value of $\|\left(\I -
\podArgsSub{\fext}\podArgsSub{\fext}^T\right)\fext\|_2^2$ over the training
data.

\subsubsection{Exactness conditions}\label{sec:exactFext}
Exactness conditions are similar to those described in Section
\ref{sec:matGappyExact} for the matrix gappy POD approximation. 
In the general case where $\nsample < N$, if $\fext = 0$, then the approximation is exact, i.e., $\fextredApprox =
\fextred$. If instead $\fext$ has at least one non-zero entry, then sufficient
conditions for an exact approximation are 1)
$\fext\in\range{\podArgsSub{\fext}}$ and 2)
$\sampleMatT\podArgsSub{\fext}$ has full column rank.  The
first of these conditions holds, for example, when $\podArgsSub{\fext}$ is computed via
POD, the POD basis is not truncated, $\fext$ is independent of $\q $ and $\dot \q $, $\paramOnline
\in\paramTrain$, and if a snapshot of the external force was collected at the
considered time instance. The second of these can be enforced by 
the method for choosing $\sampleMat$, which is beyond the scope of this paper.
In the full-sampling case where $\nsample = N$, condition 2 holds
automatically, so we only require condition 1 in this case.

\subsubsection{Implementation}
Procedure \ref{procedure:externalForce} provides the offline and online steps for
implementing the external-force approximation.
\begin{algorithm2e}\label{procedure:externalForce}
\caption{\textsf{External-force approximation via gappy POD}}
\BlankLine
\tcm{\textbf{\textsf{Offline stage}}}
\BlankLine
	Collect snapshots of the external force $\snaps{\fext}\equiv\{\fextparam\
	|\ \param\in\paramTrain,\ t \in\tTrain(\param)\}$\;
	Compute a POD basis $\podArgsSub{\fext}$ using Algorithm \ref{PODSVD}
	with inputs $\snaps{\fext}$ and $\energyCrit_{\fext}\in\left[0,1\right]$.\;
	Determine the sampling matrix $\sampleMat$.\;
	Compute the low-dimensional matrix
	$\leftGappy{\fext}
	=\podstate^T\podArgsSub{\fext}\left[\sampleMatT\podArgsSub{\fext}\right]^+$.\;
\setcounter{AlgoLine}{0}
\BlankLine
\tcm{\textbf{\textsf{Online stage}} \textsf{(given $\paramOnline$)}}
\BlankLine
  Compute $\nsample\ll N$ entries of the external force
	$\sampleMatT\fextparamOn$.\;
  Compute the low-dimensional matrix--vector product 
	$\leftGappy{\fext}\left[\sampleMatT\fextparamOn\right]$.\;
\end{algorithm2e}

%

\section{Numerical experiments}\label{sec:experiments}


Although the Galerkin and proposed reduced-order
models have a \emph{theoretical} advantage over the gappy POD and collocation
reduced-order models in terms of preserving Lagrangian
structure, it is unclear if this translates to improved
numerical results in practice. This section investigates this question by applying the
model-reduction techniques to a practical problem: the clamped--free truss
structure shown in Figure \ref{fig:structure}.

\begin{figure}[htbp] \centering
\includegraphics[width=.6\textwidth]{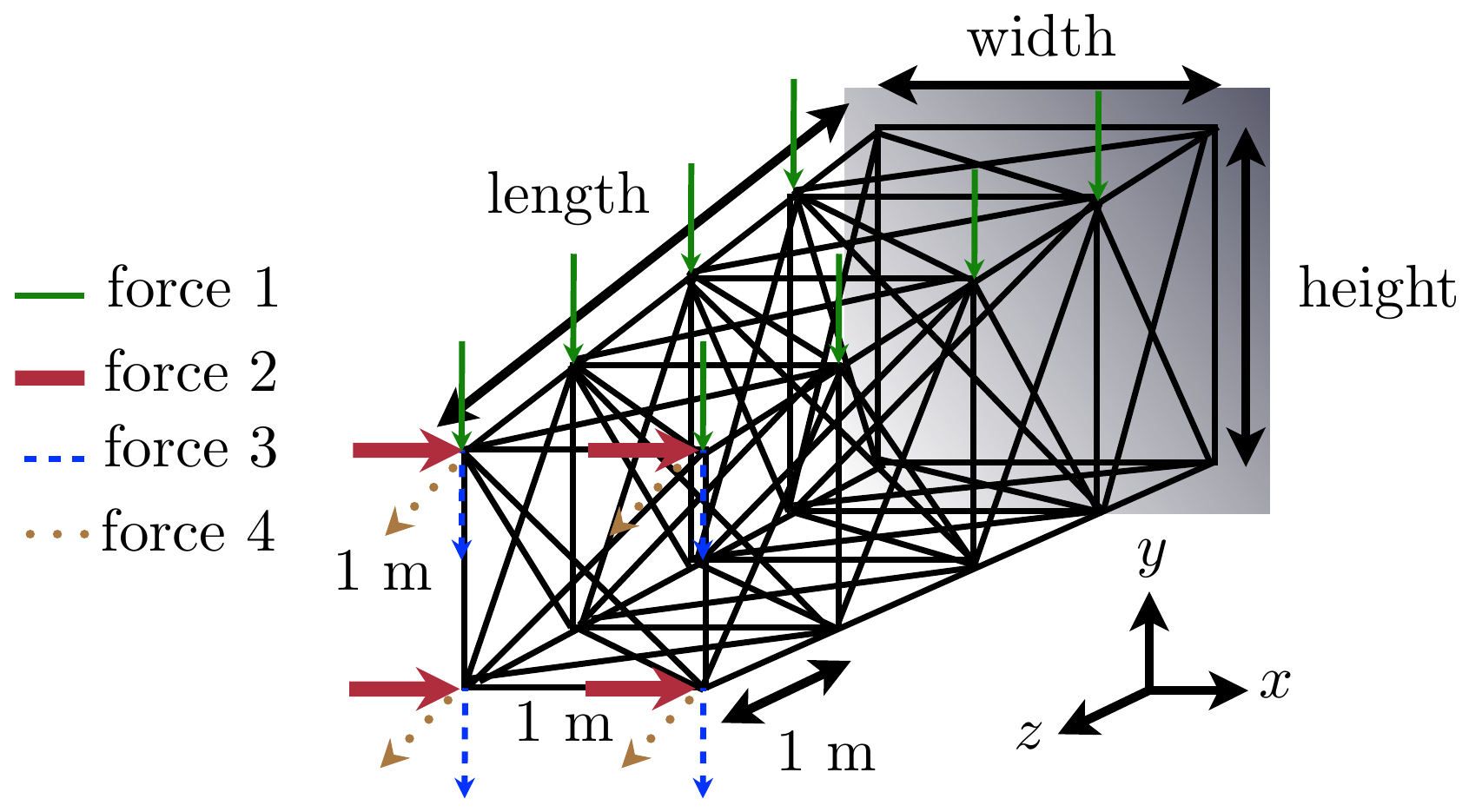}
\caption{Clamped--free parameterized truss structure}
\label{fig:structure}
\end{figure}
We set the material properties to those of aluminum, i.e., density
$\rho = 2700\ \mathrm{kg/m}^3$ and 
elastic modulus $E=62\times 10^9$ Pa.
The external force is composed of four components:
 \begin{equation} 
\forcing(\param,t) = \sum_{i=1}^4\forceTime_i(\param,t)\forceDist_i,
 \end{equation} 
where 
$\forceDist_i\in\RR{N}$, $i=1,\ldots, 4$ correspond to unit loads uniformly distributed
across designated nodes and $\forceTime_i:\paramDomain\times
\left[0,\lastT\right]\rightarrow\RR{}$, $i=1,\ldots, 4$
denote the component-force
magnitudes. Figure
~\ref{fig:structure} depicts the spatial distribution of the forces, which lead
to vectors $\forceDist_i$, $i=1,\ldots, 4$ through the finite-element
formulation described below. The parameterized, time-dependent magnitudes of these forces
are
\begin{gather} 
\forceTime_i(\param,t) = 
\begin{cases}
\forceMag_i\left(\param\right)\sin\left(\lambda_i(\param)\left(t - \lastT/4\right)\right),\quad
t\geq \lastT/4\\
0,\quad \mbox{otherwise}
\end{cases}
,
\end{gather} 
where $\forceMag_i:\paramDomain\rightarrow \RR{}$ and
$\lambda_i:\paramDomain\rightarrow \RR{}$, $i=1,\ldots, 4$ denote the maximum force
magnitudes and force frequencies, respectively.
Similarly, the initial condition is composed of
four components
 \begin{equation} 
\initialCondition = \sum_{i=1}^4\initialCondMag_i(\param)\initialCond_i,
 \end{equation} 
 where $\initialCond_i$ is the steady-state displacement of the truss subjected
 to load $\forceDist_i\forceMag_i\left(\paramNom\right)$ with $\paramNom =
 \left(0,\ldots,0\right)$ denoting the nominal point in parameter space. The equilibrium
 configuration is simply the undeformed truss represented by $\initialQ = 0$;
 thus, the configuration space is centered at equilibrium.

The truss is parameterized by 16 parameters
$\param\equiv\left(\param_1,\ldots,\param_{16}\right)\in\left[-1,1\right]^{16}$
that affect the geometry, initial condition, and applied force as described in
Table \ref{tab:parameterization}.



 \begin{table}[htd] 
 \centering 
 \scriptsize{\begin{tabular}{|c|c|c|c|c|c|c|} 
  \hline 
\multirow{3}{*}{length (m)} & bar & \multirow{3}{*}{width (m)}& \multirow{3}{*}{height (m)}&initial
condition
& external-force & external-force  \\
& cross-sectional & & & max magnitude (N)
&magnitude & frequency
 \\
&area ($\mathrm{m}^2$) & & &$\initialCondMag_i$ & $\forceMag_i$ &$\lambda_i$ \\
\hline
$200 + 50\param_1$ & $0.0025(1 + 0.5\param_2)$ & $10(1 + \param_3)$ & $10(1 +
\param_4)$&$\fNomi(1
+ 0.5 \param_{i+4})$ & $\fNomi (1 + 0.5\mu_{i+8})$& $3\omega_0(1 +
0.5\param_{i+12})$ \\
\hline 
  \end{tabular} }
	\caption{Effect of parameters on truss geometry, initial conditions, and
	applied forces. Here, $\fNomi$, $i=1,\ldots,4$ denote the nominal force
	magnitudes (to be specified within each experiment) and $\omega_0$ denotes the
	lowest-magnitude eigenvalue of the structure at the nominal point
	$\paramNom$.}
\label{tab:parameterization}
	\end{table}

The problem is discretized by the finite-element method. The model consists
of sixteen three-dimensional bar elements per bay with three degrees of
freedom per node; this results in 12 degrees of freedom per bay. We consider a
problem with 250 bays, which leads to $N=3\times 10^3$ degrees of freedom in the
full-order model. The bar elements model geometric nonlinearity, which results
in a high-order nonlinearity in the potential-energy $\potEn$. 

This discretization leads to a model that corresponds to a Lagrangian
dynamical system, with configuration manifold $\Q = \RR{N}$, Riemannian metric
$g(\vVec , \w ; \param) = \vVec ^T\Mparam \w $, nonlinear potential-energy
function $\potEn$ and dissipation function $\dissFun{\dot \q }\equiv
\frac{1}{2}\dot \q ^T\Cparam \dot \q$. Here, $\Cparam = \alpha \Mparam + \beta
\nabla_{\q\q}V(0;\param)$ corresponds to Rayleigh damping. Here, $\alpha$ and
$\beta$ are chosen such that the damping ratio is a specified value $\zeta$
for the uncoupled ODEs associated with the
smallest two eigenvalues of the matrix pencil
$\left(\M(\paramNom),\nabla_{\q\q}V\left(0;\paramNom\right)\right)$
 \cite{chowdhury2003computation}.

To numerically solve the Lagrangian equations of motion in the time interval
$[0,\lastT]$ with $\lastT=25$ seconds, we employ the
implicit midpoint rule (a symplectic integrator). This ensures that the
numerical solution will yield symplectic time-evolution maps in the
conservative case. We employ a globalized Newton solver with a More--Thuente
linesearch \cite{poblano} to solve the system of nonlinear algebraic equations
arising at each time step. Convergence of the Newton iterations is declared
when the residual norm reaches $10^{-6}$ of its value computed using a zero
acceleration and the values of the displacement and velocity at the beginning
of the timestep.  The linear system arising at each Newton iteration is solved
directly.

The experiments compare the performance of four reduced-order models:
Galerkin projection (Section \ref{sec:galerkin}), collocation (Section
\ref{sec:coll}), and gappy POD (Section \ref{sec:eim}), and the proposed
structure-preserving methods. All reduced-order models (ROMs) employ the same POD
reduced basis $\podstate$, which is computed by applying Algorithm \ref{PODSVD}
with snapshots of the configuration variables and an energy criterion
$\energyCrit_\q\in\left[0,1\right]$ specified within each experiment. The POD bases $\podf$ employed by the gappy POD
approach (see Section \ref{sec:eim}) are generated in the same way. In all cases, snapshots are only collected for the
first half of the time interval at the training points; as a result, the
second half of the time interval can be considered predictive---even for the training
set. 

Reduced-order models with complexity reduction employ the same sampling matrix
$\sampleMat$, which is generated using GNAT's greedy sample-mesh algorithm
\cite{carlbergJCP}. These models are also implemented using the sample-mesh
concept \cite{carlbergJCP}.  
To solve optimization problems \eqref{eq:optMatchMat}, 
we use the Poblano toolbox for unconstrained optimization \cite{poblano}. The
initial guess for each of these problems is chosen as
$\sampleMat^T\sampleMat \podstate$.  In practice, we always found the
constraints to be inactive at the unconstrained solution to
\eqref{eq:optProblem2}; therefore, this reduces to a linear
least-squares problem that we solve directly.

To compare the performance of the reduced-order models, we will consider the
response quantity of interest to be the $y$-displacement of the bottom-left node of the
end face of the truss in Figure \ref{fig:structure}; we denote this
(parameterized, time-dependent)
quantity by $y\in\RR{}$. The reported
errors will be a normalized 1-norm (in time) of the error in this quantity:
\begin{equation} \label{eq:error}
\mathrm{error} =
\frac{\sum\limits_{t\in\tTrain(\paramOnline)}|y_{\mathrm{ROM}}(t;\paramOnline)-y_{\mathrm{HFM}}(t;\paramOnline)|}{|\tTrain(\paramOnline)|\left(\max\limits_{t\in\tTrain(\paramOnline)}y_{\mathrm{HFM}}(t;\paramOnline)-\min\limits_{t\in\tTrain(\paramOnline)}y_{\mathrm{HFM}}(t;\paramOnline)\right)}.
\end{equation} 
Here, $y_{\mathrm{ROM}}$ denotes the response computed by a reduced-order
model, $y_{\mathrm{HFM}}$ is the high-fidelity `truth' response, and $\tTrain(\paramOnline)\subset\left[0,\lastT\right]$ denotes the
time instances selected by the time integrator for online point
$\paramOnline$.\footnote{We employ this error measure because it is
insensitive to shifts in the average value of the displacement, unlike other
measures such as the average 1-norm.} In addition to the error in
Eq.~\ref{eq:error}, we will compare the speedup achieved by the reduced-order
models, measured as the ratio of the reduced-order-model simulation time to
the full-order-model simulation time. All computations are carried out in
Matlab on a Mac Pro with 2 $\times$ 2.93 GHz 6-Core Intel Xeon processors and
64 GB of memory.
\subsection{Conservative case}\label{sec:conservative}
We first consider the conservative case characterized by zero damping
$\zeta=0$ and no external
forces $\param_i=-2$ for $i=9,\ldots,16$. 
This scenario is particularly interesting, as the
full-order model corresponds to a conservative Lagrangian dynamical system
characterized by energy conservation and symplectic time-evolution maps, and
our method can also be interpreted as preserving
Hamiltonian structure (see Appendix \ref{app:Hamiltonian}).
Because we numerically solve the equations of motion using the implicit
midpoint rule, which is a symplectic integrator, the
numerical solution is also characterized by a sympletic time-evolution map. This will
also hold for reduced-order models that preserve Lagrangian structure,
i.e., the Galerkin reduced-order model and the two proposed techniques (see
Table \ref{tab:methods}). Note also that the dynamics of undamped, unforced structures
are typically quite stiff, which often leaves reduced-order models prone to
instabilities.
As a result, we are free to vary parameters $\param_i$, $i=1,\ldots, 8$,
which affect only the geometry and initial condition. We set the nominal
forces that affect the initial condition to $\fNomnum{1} = \fNomnum{2} = 2\mathrm{kg}
\times 9.81 \mathrm{m/s}^2$ and $\fNomnum{3} = \fNomnum{4} = 0.4 \mathrm{kg} \times 9.81 \mathrm{m/s}^2$.

We first perform a timestep-verification study for the nominal point
$\paramNom$ characterized by $\paramNom_i=0$, $i=1,\ldots 8$ to ensure we employ
an appropriate timestep in the numerical experiments. Results are shown in
Figure \ref{fig:consTimestepVerification}. A timestep size of $\Delta t =
0.008$ seconds yields an observed convergence rate in the time-averaged tip
displacement of $1.98$, which is close to the asymptotic rate of convergence of the
implicit midpoint rule, and an approximated error in the time-averaged tip
displacement using Richardson
extrapolation of $5.16\times 10^{-7}$. We can therefore declare this to be an
appropriate timestep size for the numerical experiments. Further, we note that
the average number of Newton iterations per timestep is $3.15$, so the
geometric nonlinearity in the potential-energy function is significant.

\begin{figure}[htbp]
\begin{centering}
{\includegraphics[width=.45\textwidth]{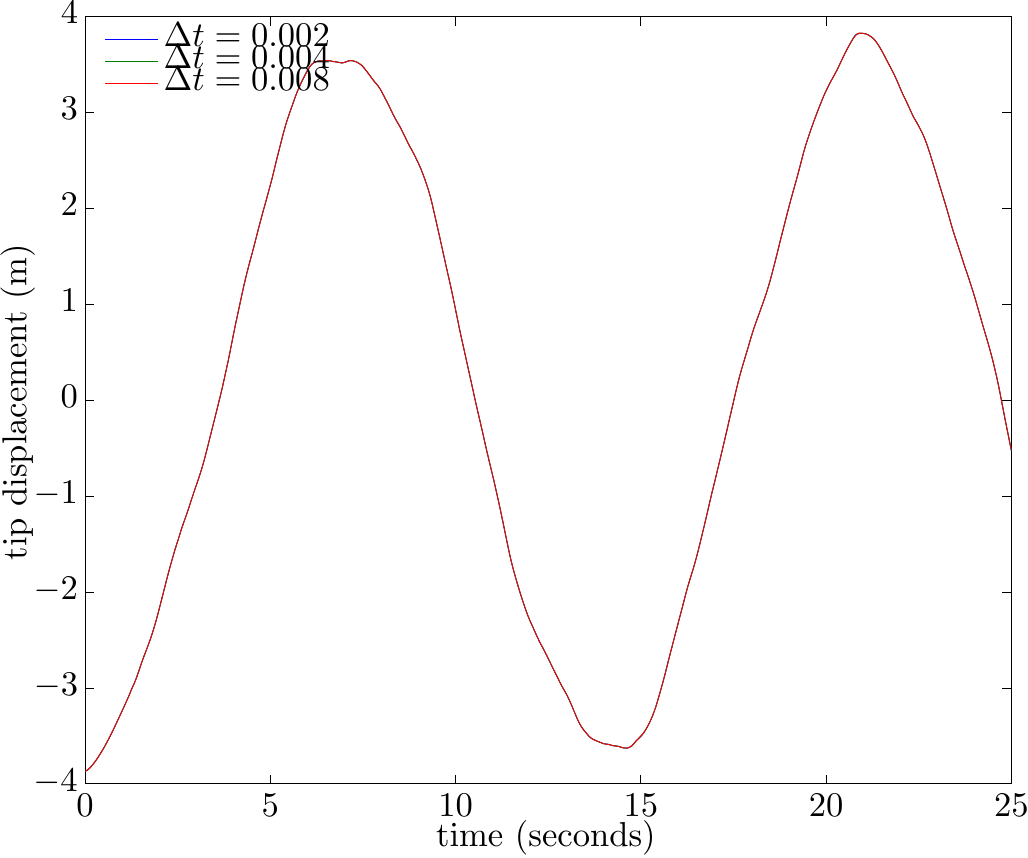}}\quad
{\includegraphics[width=.45\textwidth]{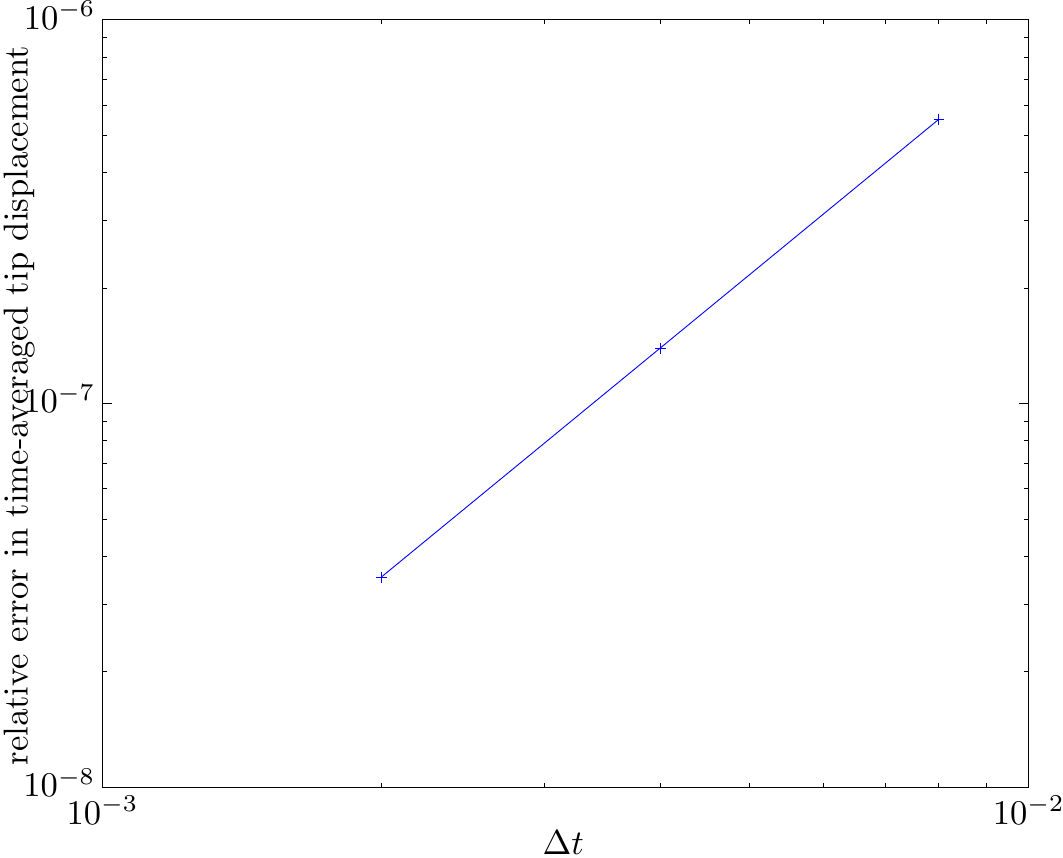}}
\caption{Conservative case: Timestep verification study at the nominal
point. A timestep of $\Delta t = 0.008$ seconds gives a
convergence rate of $1.98$ and approximate relative error of $5.16\times
10^{-7}$. Note that the three responses in the left figure nearly overlap.}
\label{fig:consTimestepVerification}
\end{centering}
\end{figure}

\subsubsection{Fixed parameters}
We now test the model-reduction techniques in a fixed-parameters scenario. That
is, we employ the nominal point in the parameter space for both the training and online
points: $\paramTrain = \paramNom$ and $\paramOnline = \paramNom$. Recall
that we only collect snapshots for the first half of the time interval, so the
second half can be considered a predictive regime. Note that 
the two proposed structure-preserving methods are the same for this case: they
both exactly approximate the reduced mass matrix when the parameters are fixed.

The POD reduced basis $\podstate$ is generated using an energy criterion of
$\energyCrit_\q = 1-10^{-5}$ in
Algorithm \ref{PODSVD} of Appendix \ref{app:POD}; this leads
to a basis dimension of only $\nstate = 11\ll N$. The gappy POD-based
reduced-order model employs an energy criterion of $1$ (i.e., no truncation) for its
reduced bases $\podf$ (see Section \ref{sec:eim}). Figures
\ref{fig:consNonPredict} and \ref{fig:consNonPredictPerformError} report
results for the reduced-order models as the number of sample indices
varies.\footnote{In all response plots, a `flat line' indicates that the
nonlinear solver failed to converge after 500 Newton iterations at three
different time steps.}
	
First, note that the Galerkin reduced-order model is accurate (relative error
of $5.42\%$); however, it yields a speedup of only 1.69.
This is to be expected, as it preserves Lagrangian structure, but has no
complexity-reduction mechanism (see Section \ref{sec:bottleneck}). In addition, the proposed reduced-order
model---which also preserves structure, yet has a complexity-reduction
mechanism---yields a stable and accurate response regardless of the number of
sample points chosen. For example, $0.4\%$ sampling yields a relative error of
7.3\% and a speedup of 207.0. Sampling $2\%$ of the indices yields an error
of $0.71\%$ and a speedup of 34.5, and sampling $5\%$ of the indices leads to
$0.48\%$ error and a speedup of 15.7. Note that sampling beyond $5\%$ does not
improve the method's accuracy; however, it degrades the speedup, as it
requires computing more entries of the vector-valued functions.
	
The other complexity-reducing
reduced-order models (gappy POD and collocation) are \emph{always} unstable
except for collocation in
the full-sampling case, when it is equivalent to Galerkin. This 
clearly highlights the practical benefits of preserving structure in model reduction,
as existing structure-destroying complexity-reduction methods failed, even in the relatively
simple scenario of fixed parameter values.\footnote{We will show in Section
\ref{sec:noncon_fixed} that introducing dissipation improves the performance
of both collocation and gappy POD. Note that gappy POD was also unstable for
other attempted energy criteria of $\energyCrit_\f = 1-10^{-9}$ and
$\energyCrit_\f = 1-10^{-8}$.}

\begin{figure}[htbp]
\centering
\subfigure[0.4\% sampling]
{\includegraphics[width=.32\textwidth]{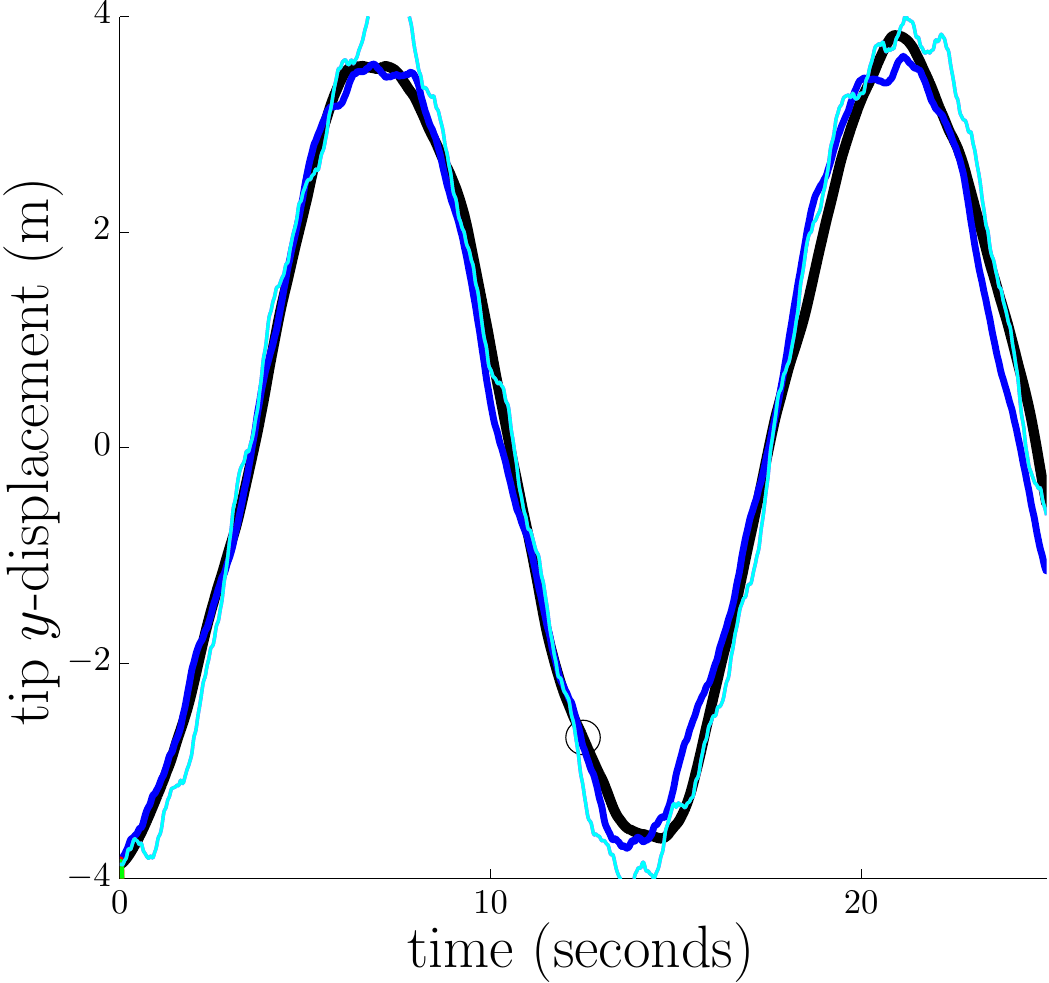}}
\subfigure[2\% sampling]
{\includegraphics[width=.32\textwidth]{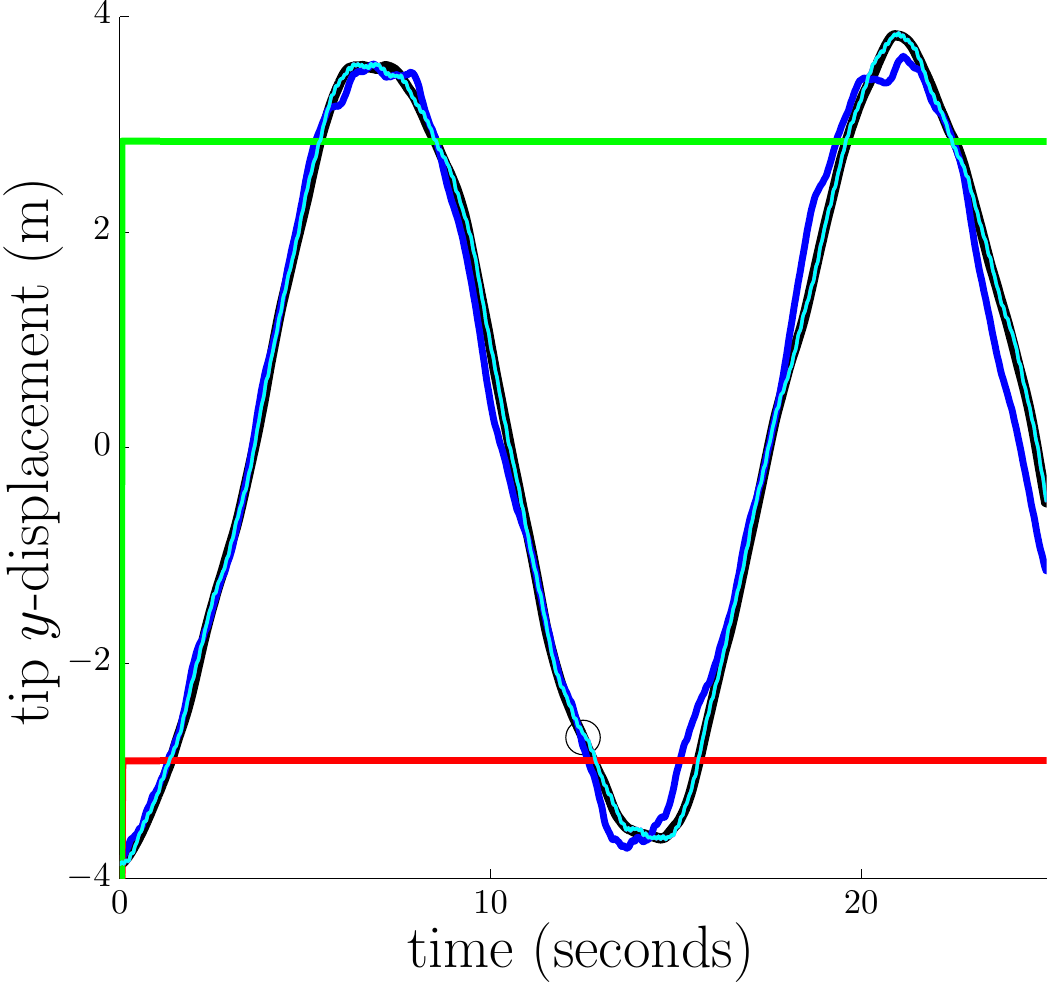}}
\subfigure[100\% sampling]
{\includegraphics[width=.32\textwidth]{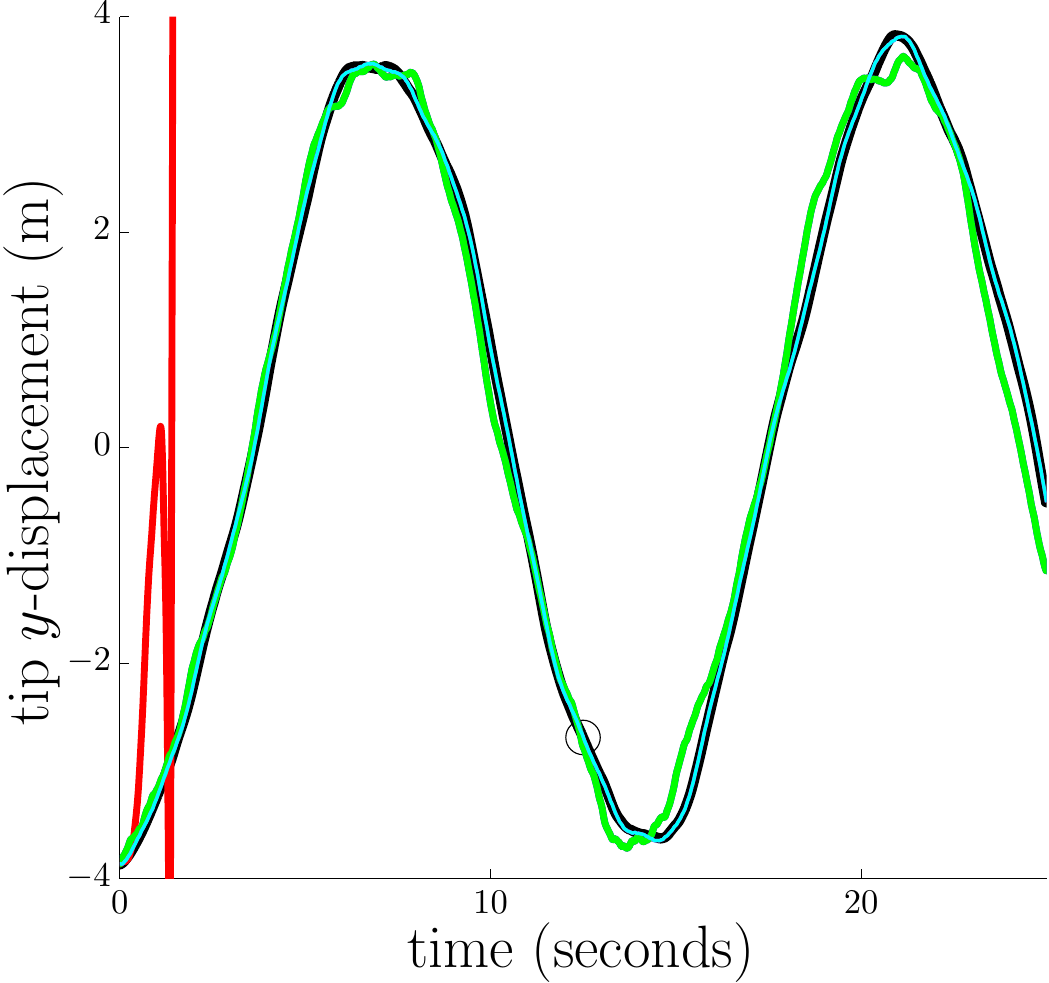}}
\caption{Conservative, fixed-parameters case: reduced-order model responses as
a function of sampling percentage $\nsample/N\times 100\%$. Legend: full-order model (black),
Galerkin ROM (dark blue), structure-preserving ROM (light blue), gappy POD ROM
(red), collocation ROM (green), end of training time interval (black circle).}\label{fig:consNonPredict}
\end{figure}
							  
\begin{figure}[htbp]
\centering
\subfigure
{\includegraphics[width=.48\textwidth]{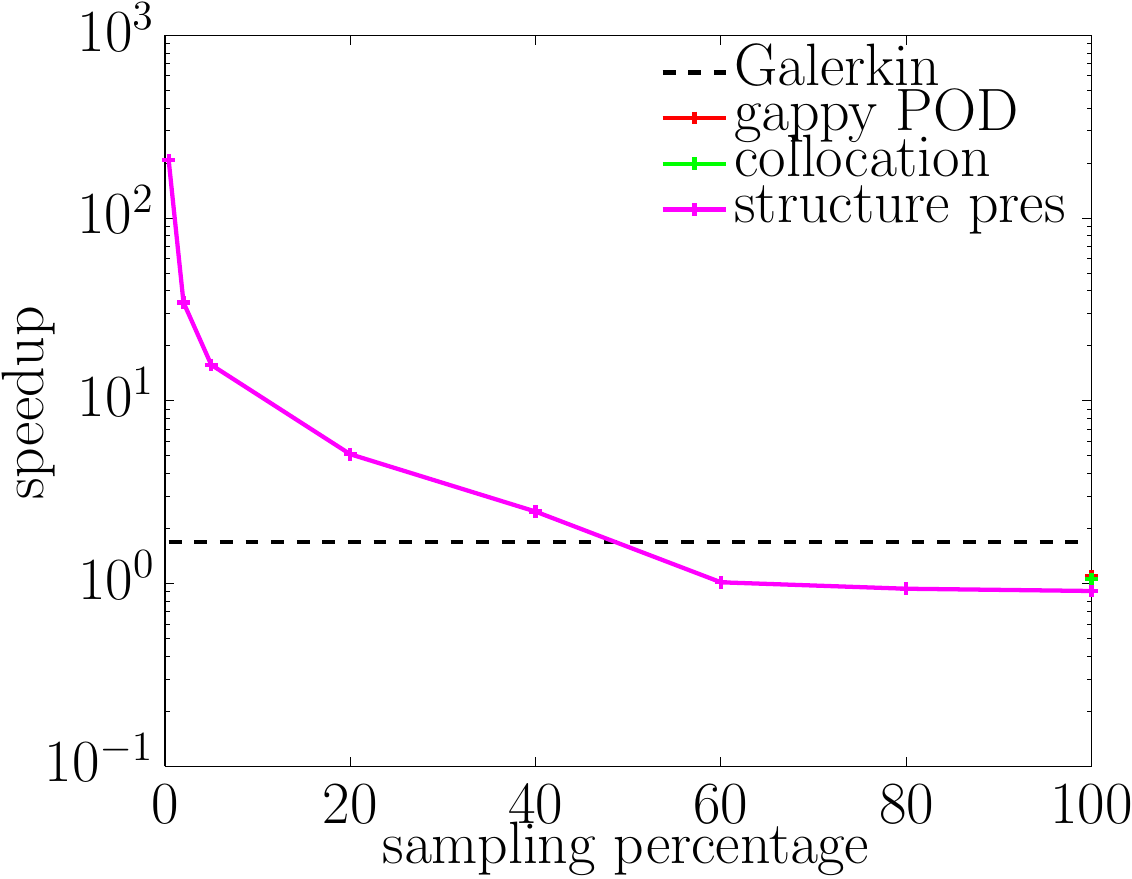}}
\subfigure
{\includegraphics[width=.48\textwidth]{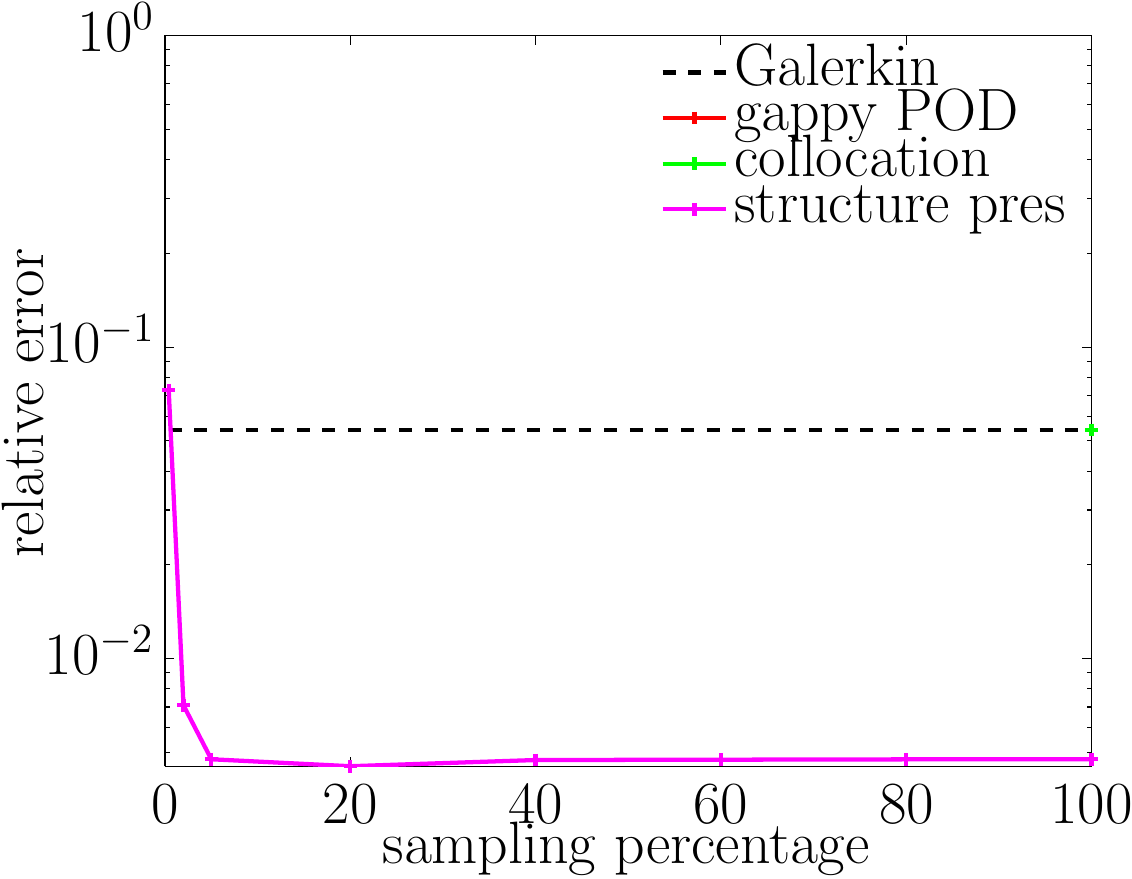}}
\caption{Conservative, fixed-parameters case: reduced-order model performance as
a function of sampling percentage $\nsample/N\times 100\%$. Missing data
points for gappy POD and collocation ROMs indicate unstable responses.}\label{fig:consNonPredictPerformError}
\end{figure}

\subsubsection{Varying parameters}
We now consider a fully predictive scenario with $\paramOnline\not\in \paramTrain$.
We use $\nTrain = 6$ training points and determine $\paramTrain$ using Latin hypercube
sampling. The online points are subsequently chosen randomly in the parameter
space.
Figure \ref{fig:consPredict} depicts the tip displacement for the
training points. Note that the responses are
significantly different from one another. The two proposed
structure-preserving reduced-order models will now be different from one another,
as the parameters are varying, which  means that the parameterized mass matrix
will be approximated differently by the two techniques (see Table
\ref{tab:methods}). \begin{figure}[htbp] 
\centering 
\includegraphics[width=7cm]{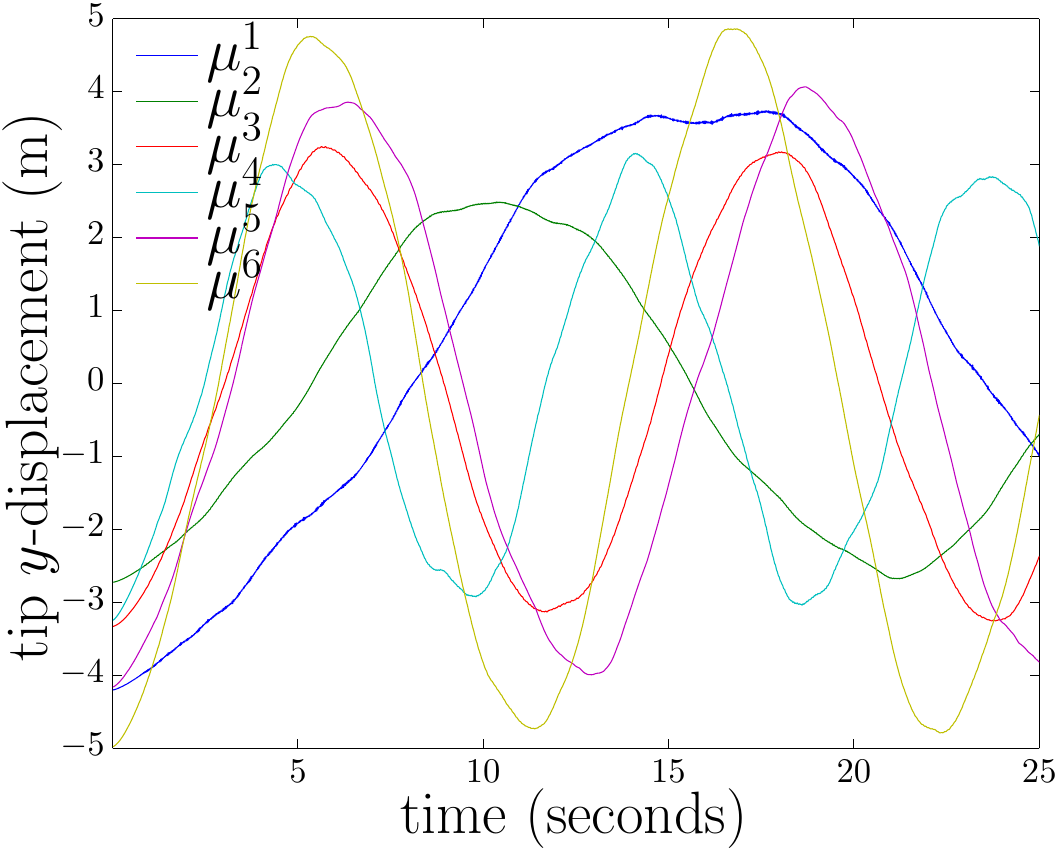} 
\caption{Conservative, varying-parameters case: tip displacement for the training
set $\paramTrain$. } 
\label{fig:consPredict} 
\end{figure} 

The reduced-order models employ a POD reduced basis with a truncation energy
criterion of $\energyCrit_\q = 1-10^{-6}$, which yields a basis dimension of
$\nstate = 147\ll N$. Again, the gappy POD-based reduced-order model employs a
truncation criterion of $\energyCrit_\f = 1$ for its reduced bases. Figure
\ref{fig:consPredictROM} reports the tip displacements generated by the
reduced-order models for the three randomly chosen online points, and
Figure \ref{fig:consPredictPerformError} reports the speedup and errors
achieved by the reduced-order models as a function of the number of sample indices.

Again, note that the Galerkin reduced-order model is stable and accurate, as
it generates relative errors of 18.7\%, 14.5\%, and 9.16\% at the three online
points, respectively. However, it yields discouraging speedups of 0.81
(i.e., the simulation was \emph{slower} than for the full-order model), 1.61,
and 1.32 at these points. The proposed structure-preserving methods are always
stable and quite accurate. They yield nearly the same performance, although
method two (which employs the matrix gappy POD approximation) generates lower
errors for online points with $4.9\%$ sampling.  From Figure
\ref{fig:consPredictROM}, note that the high-frequency oscillations that
characterize the proposed methods' responses are smoothed out when the
sampling percentage reaches 20\%. In particular, proposed method 2 generates
speedups of 15.9, 28.5, and 26.2 and relative errors of 11.6\%, 13.0\%, and
11.6\% for 4.9\% sampling. For 20\% sampling, the method generates speedups of
4.84, 9.82, and 7.72, and relative errors of 1.51\%, 5.83\%, and 1.09\%.

In this example, the gappy POD reduced-order model is unstable for all sampling percentages,
and the collocation reduced-order model is only stable for 100\% sampling (at which point it
is mathematically equivalent to the Galerkin reduced-order model). This is not surprising, as
these methods do not preserve problem structure, nor do they guarantee energy
conservation. This poor performance can be attributed to the stiff dynamics
that characterize the considered conservative Lagrangian dynamical system,
which lead to instabilities for both reduced-order models.

This example strongly showcases the practical importance of preserving Lagrangian
structure: the proposed structure-preserving reduced-order models are the only models
that yield both \emph{fast} and \emph{accurate} results.


\begin{figure}[htbp] \centering
\subfigure[$4.9\%$
sampling]
{\includegraphics[width=.3\textwidth]{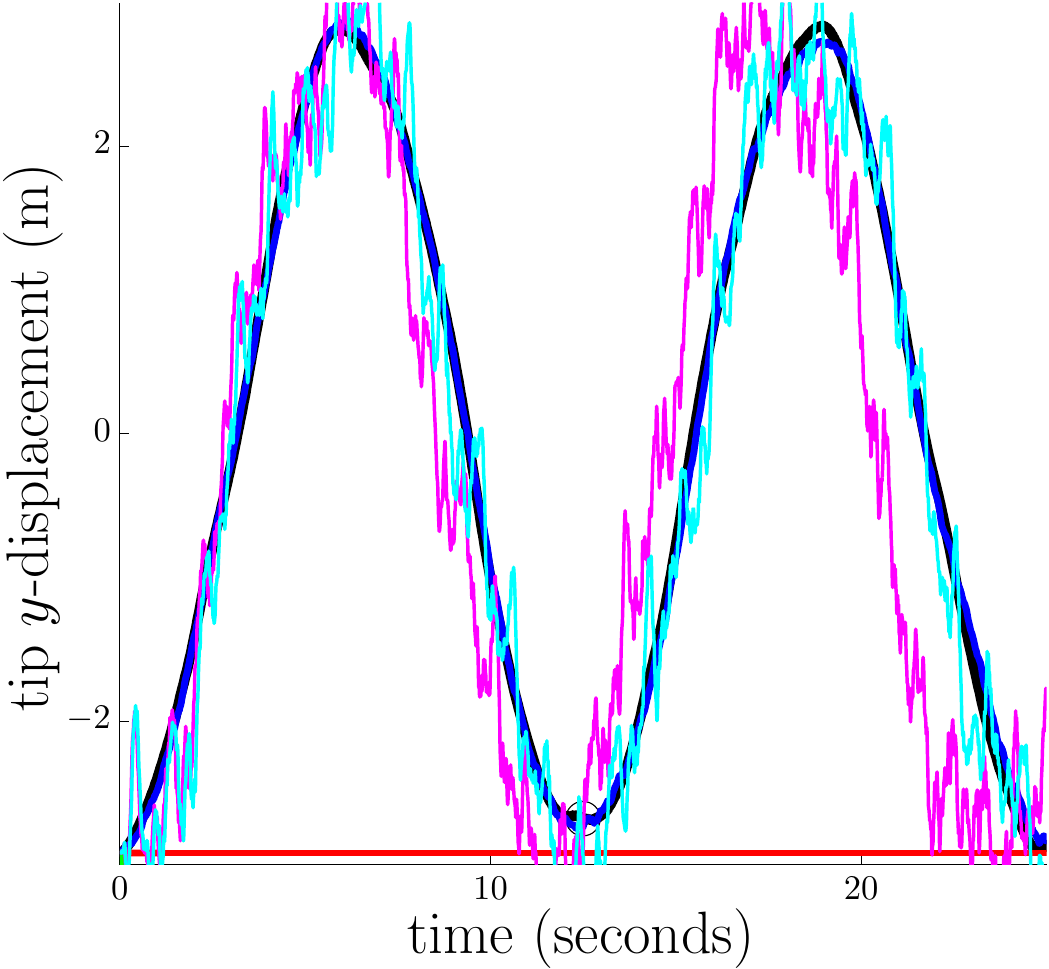}
\includegraphics[width=.3\textwidth]{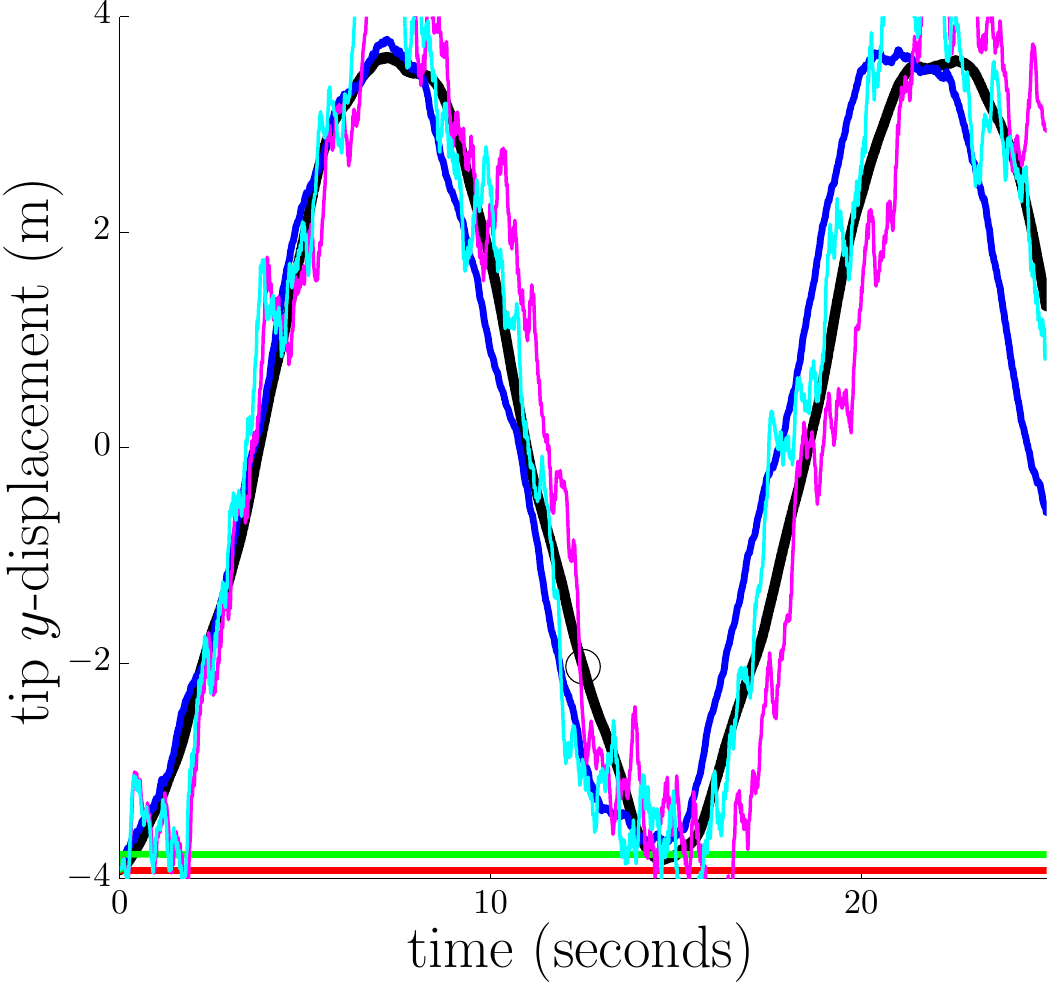}
\includegraphics[width=.3\textwidth]{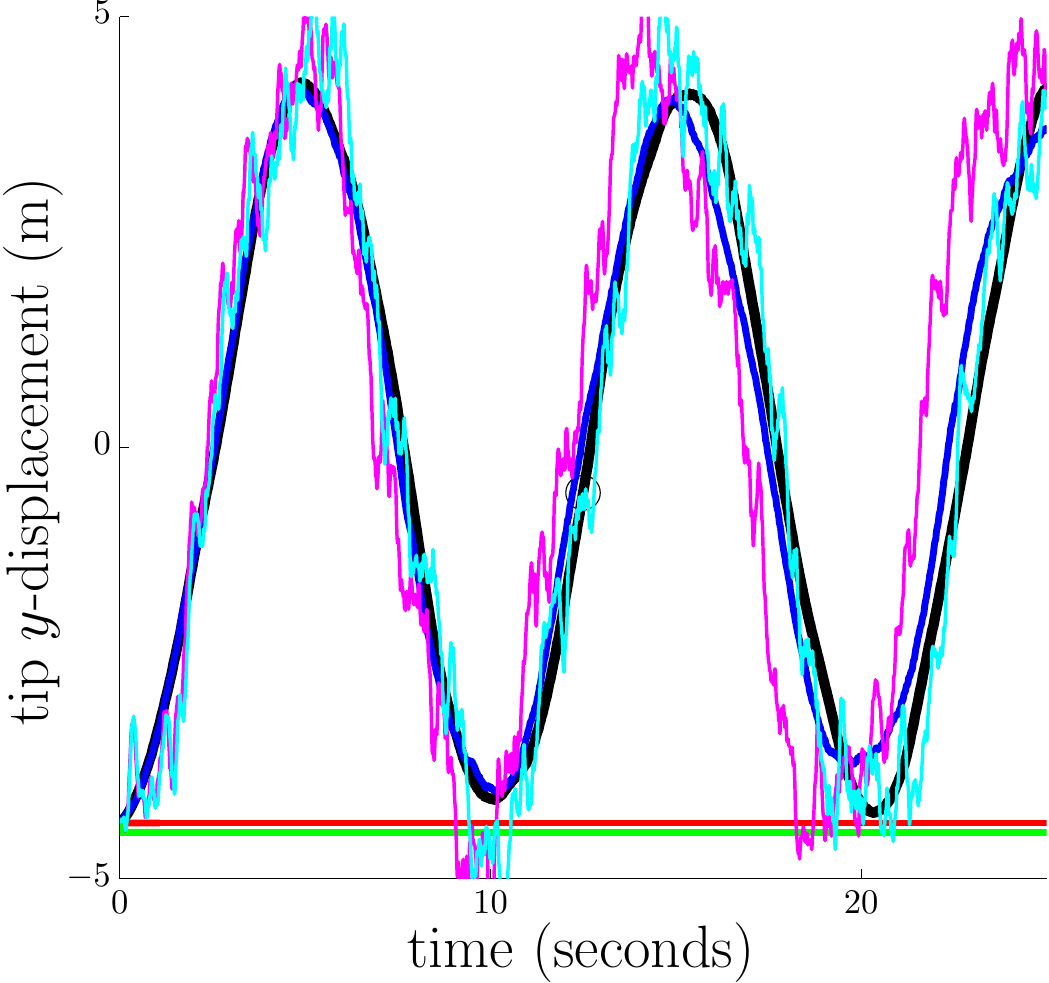}}
\subfigure[$20\%$ sampling]
{\includegraphics[width=.3\textwidth]{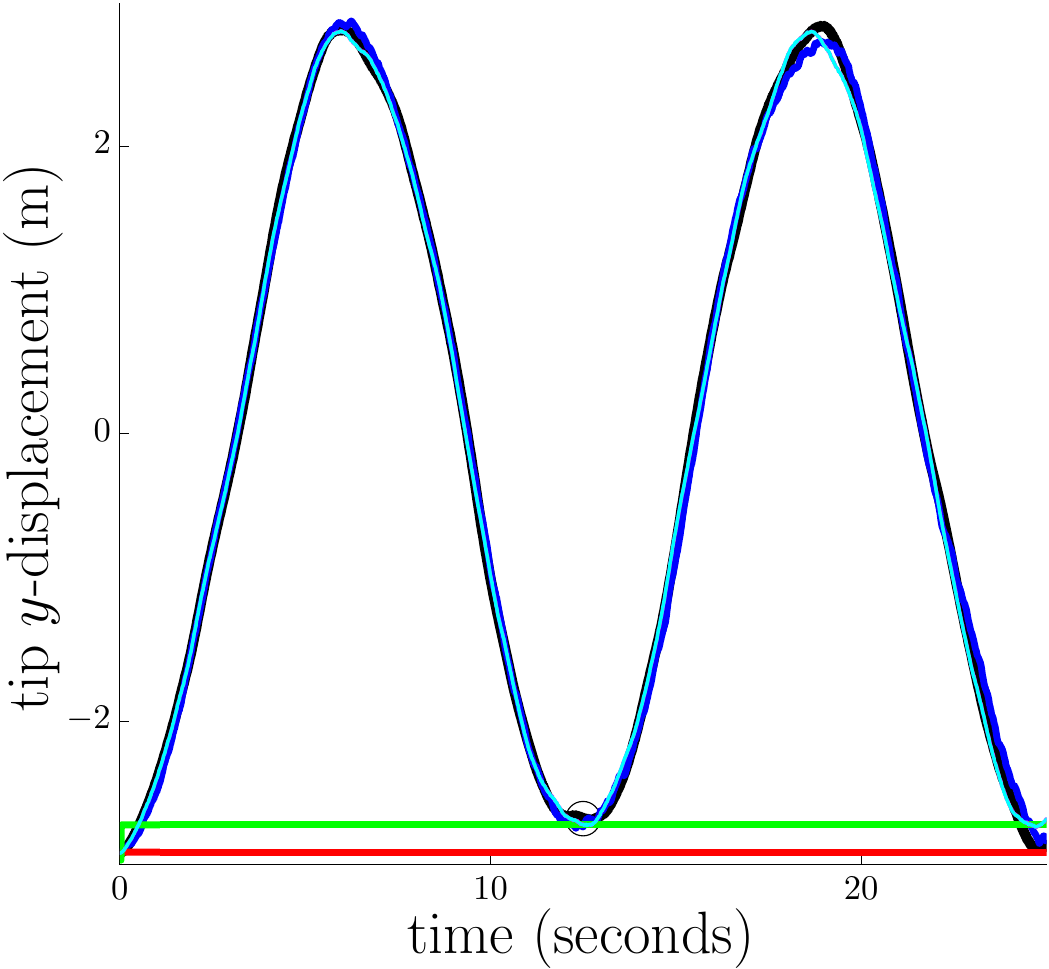}
\includegraphics[width=.3\textwidth]{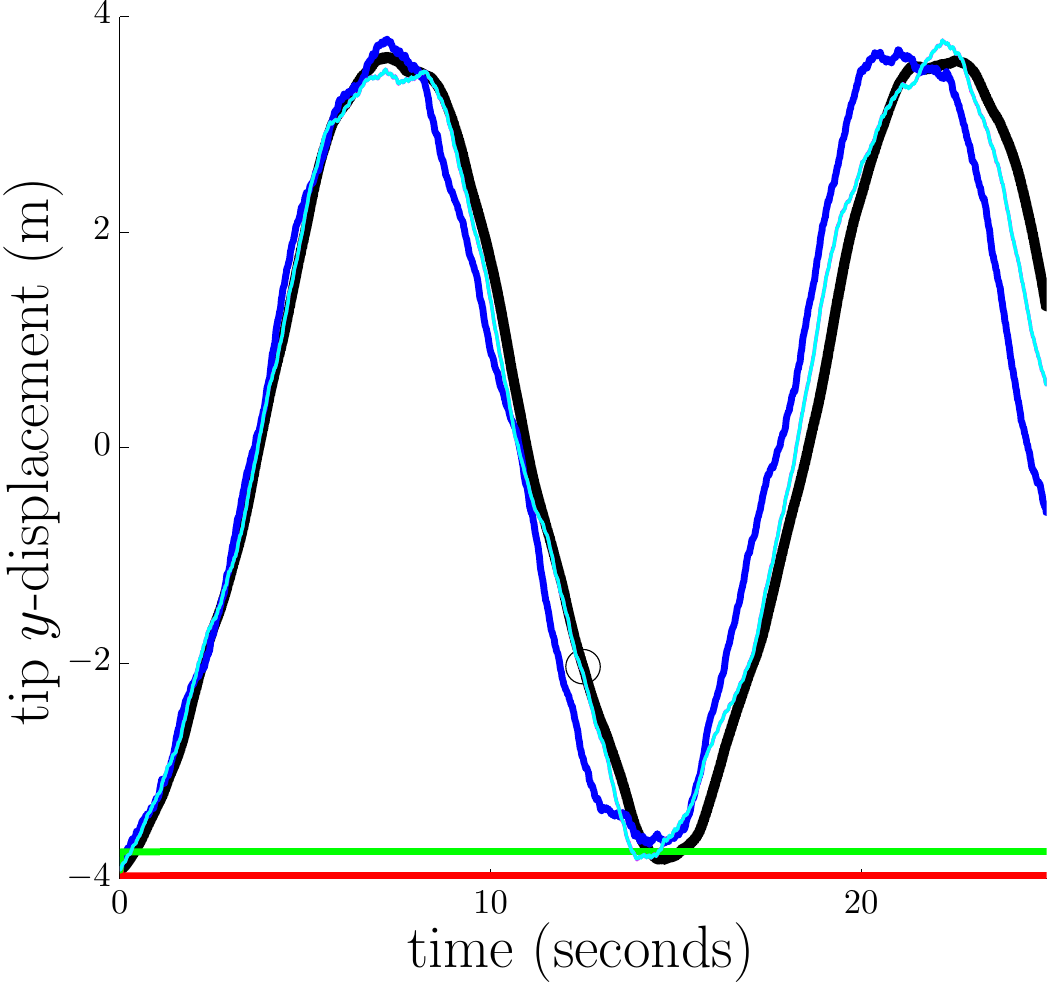}
\includegraphics[width=.3\textwidth]{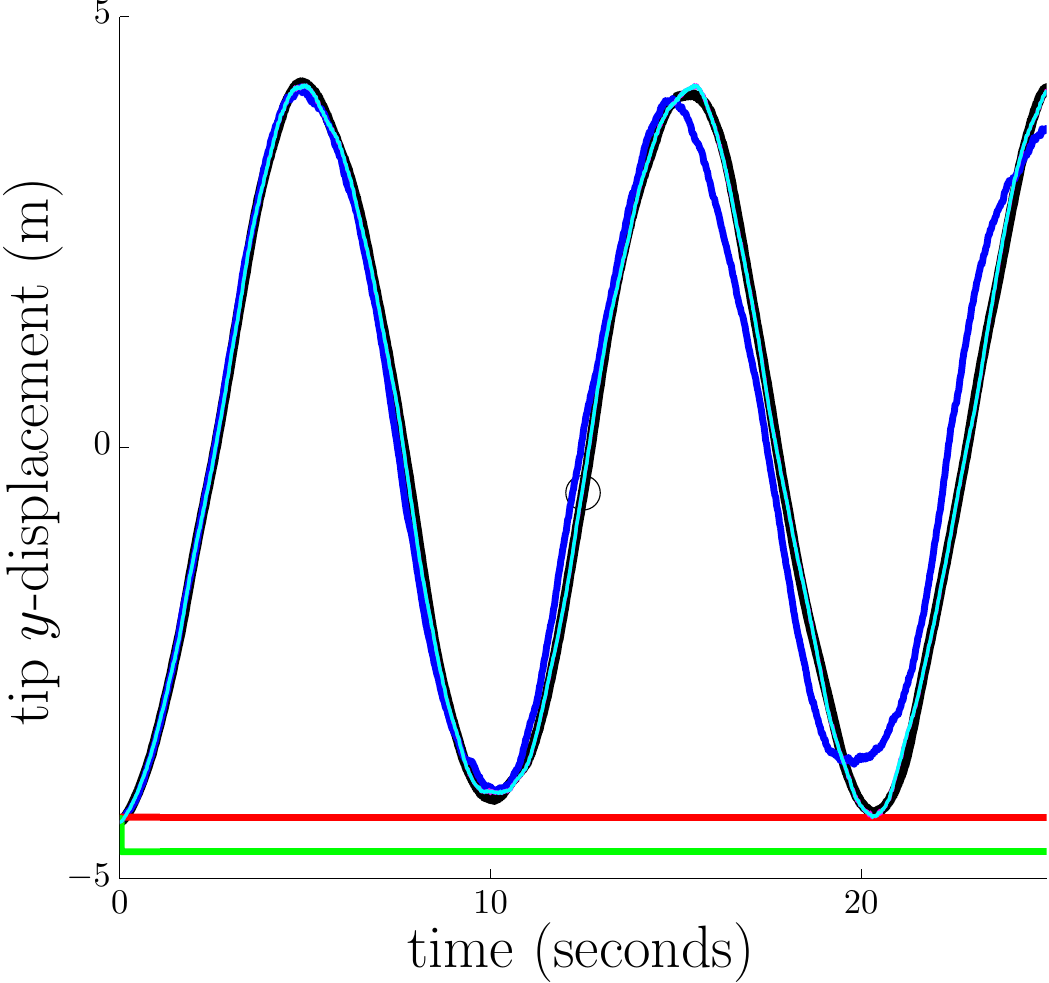}}
\subfigure[$100\%$ sampling]
{\includegraphics[width=.3\textwidth]{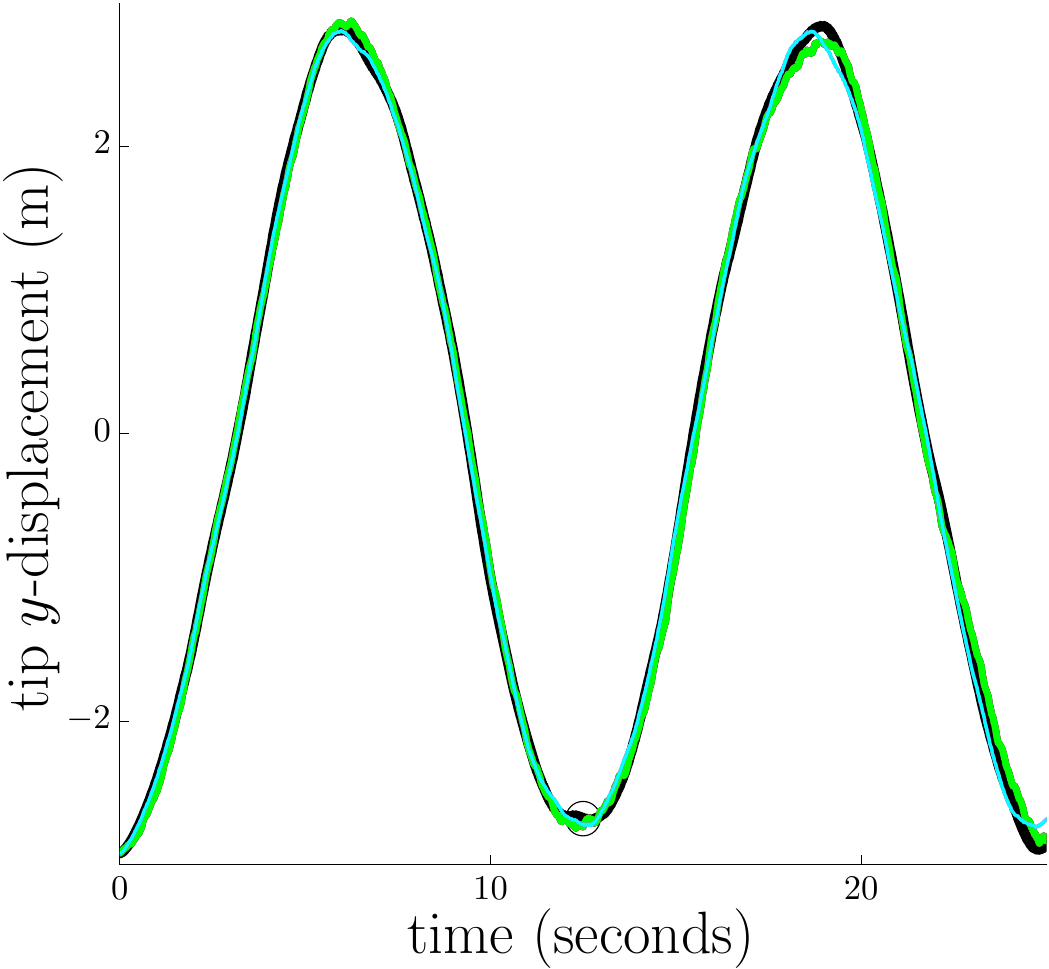}
\includegraphics[width=.3\textwidth]{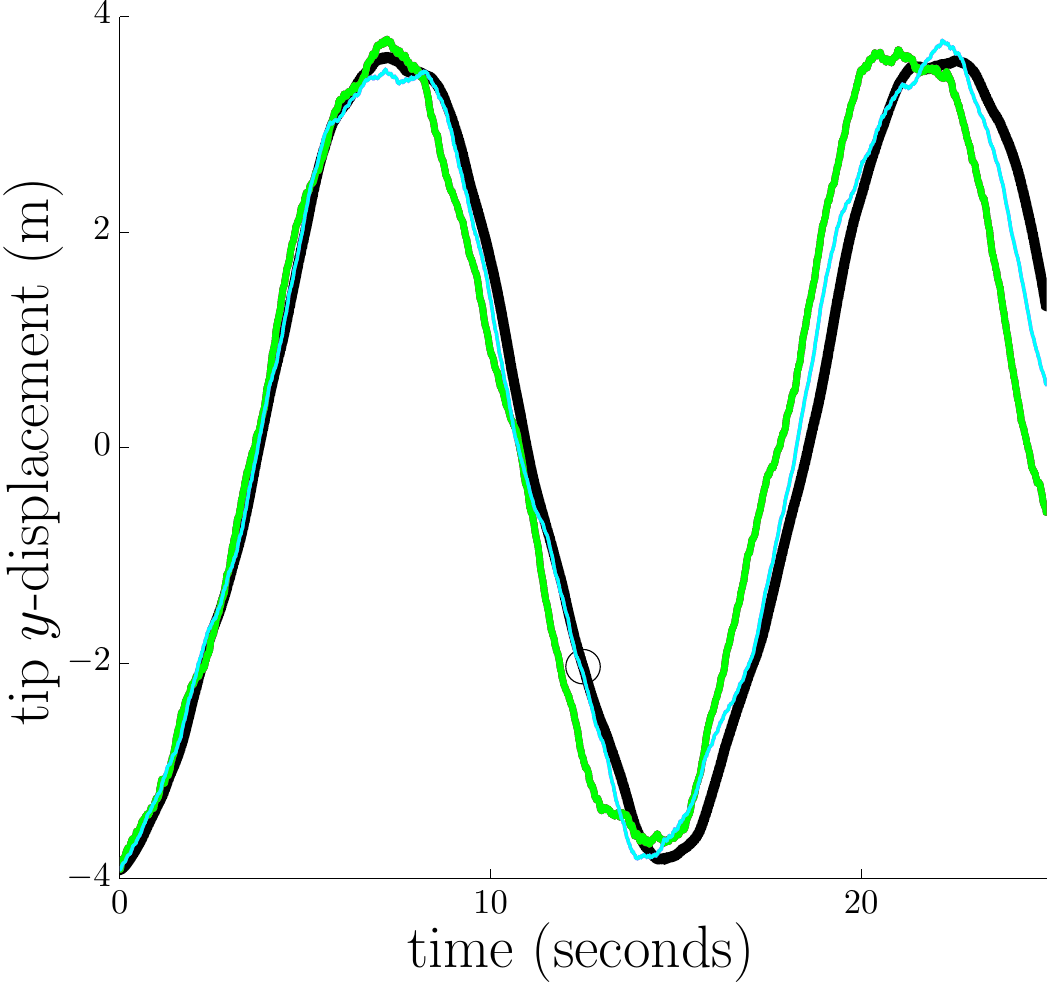}
\includegraphics[width=.3\textwidth]{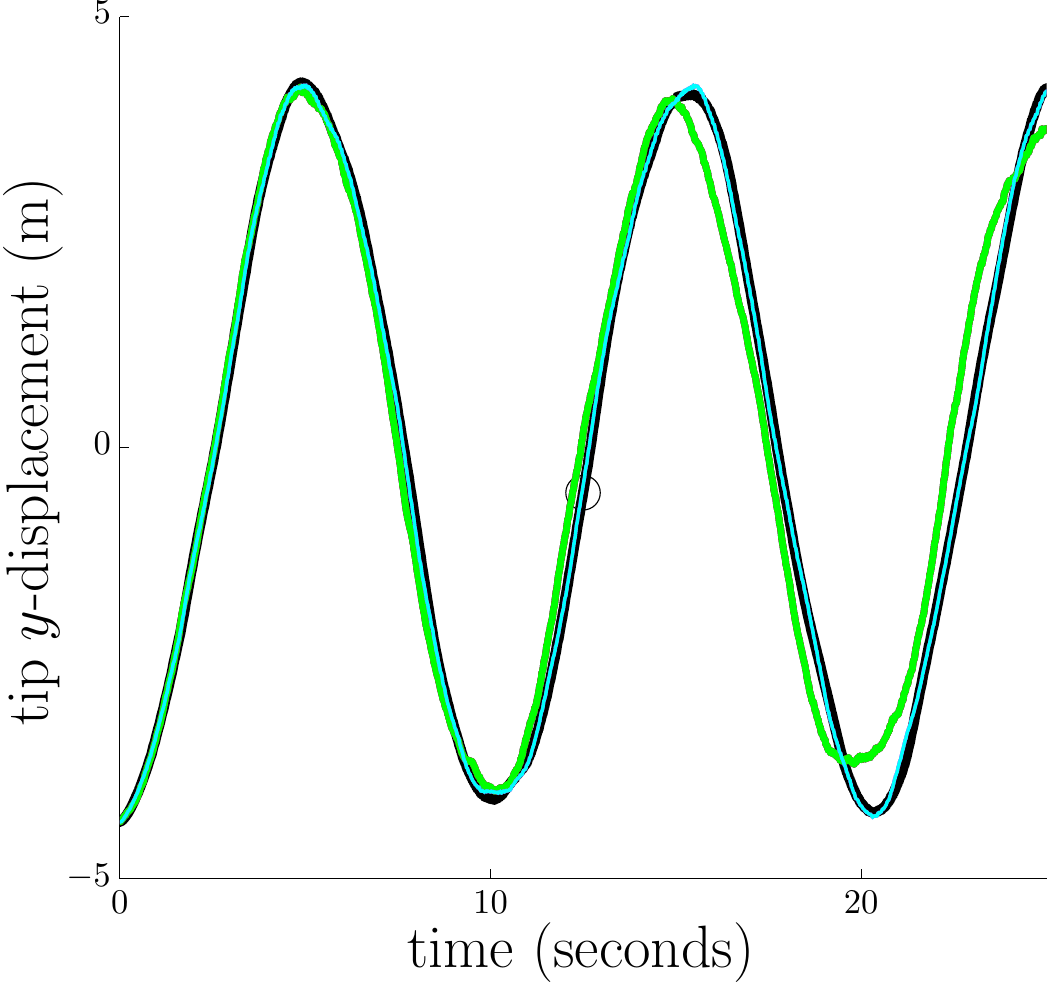}}
\caption{Conservative, varying-parameters case: reduced-order model responses as
a function of sampling percentage $\nsample/N\times 100\%$ for three randomly chosen online points. Legend: full-order model (black),
Galerkin ROM (dark blue), structure-preserving ROM method 1 (magenta), structure-preserving ROM method 2 (light blue), gappy POD ROM
(red), collocation ROM (green), end of training time interval (black circle).}\label{fig:consPredictROM}
\end{figure}

\begin{figure}[htbp]
\centering
\subfigure[online point 1]
{\includegraphics[width=.48\textwidth]{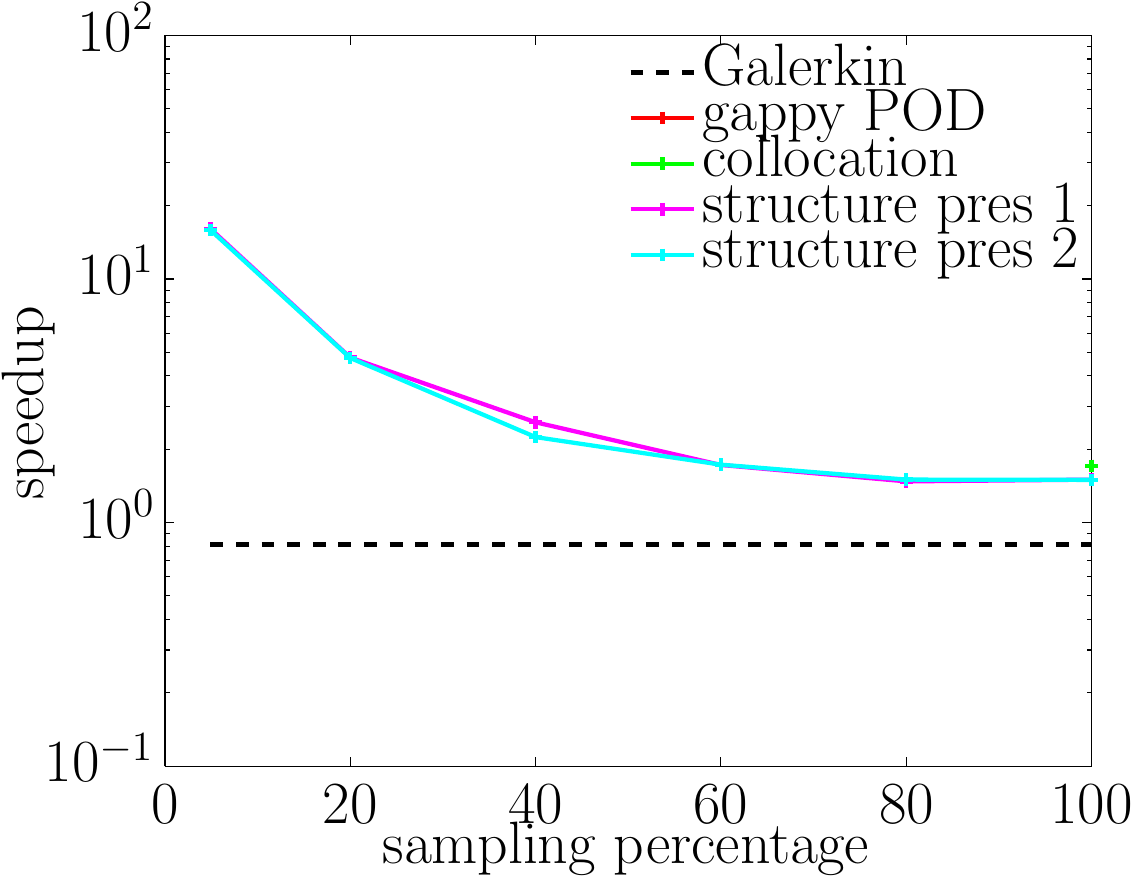}
\includegraphics[width=.48\textwidth]{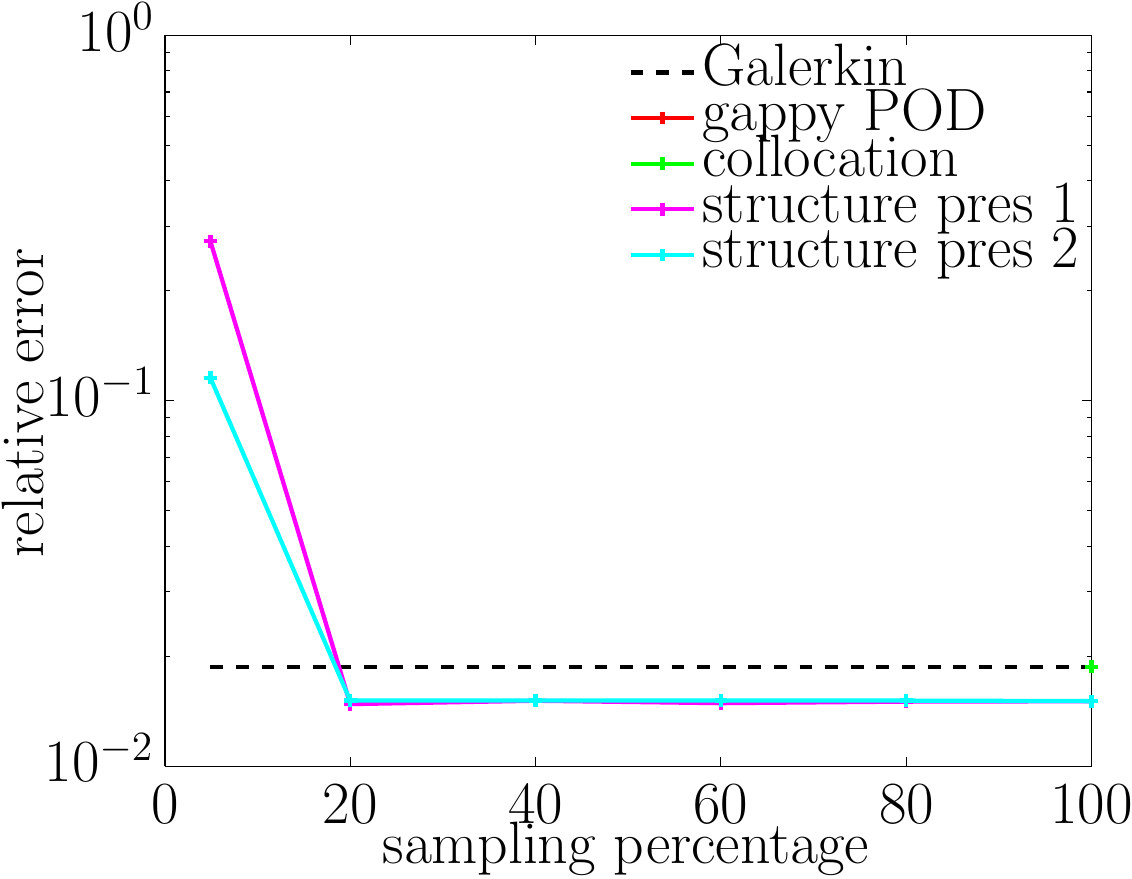}}
\subfigure[online point 2]
{\includegraphics[width=.48\textwidth]{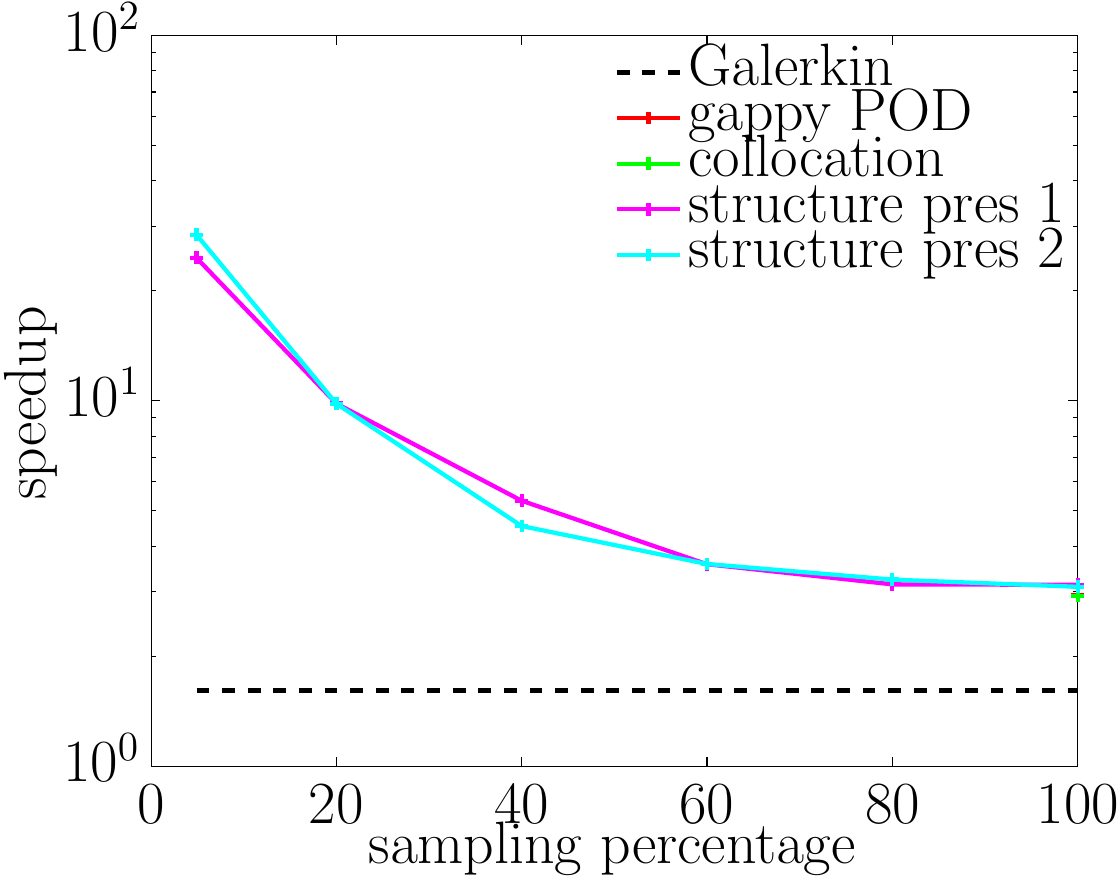}
\includegraphics[width=.48\textwidth]{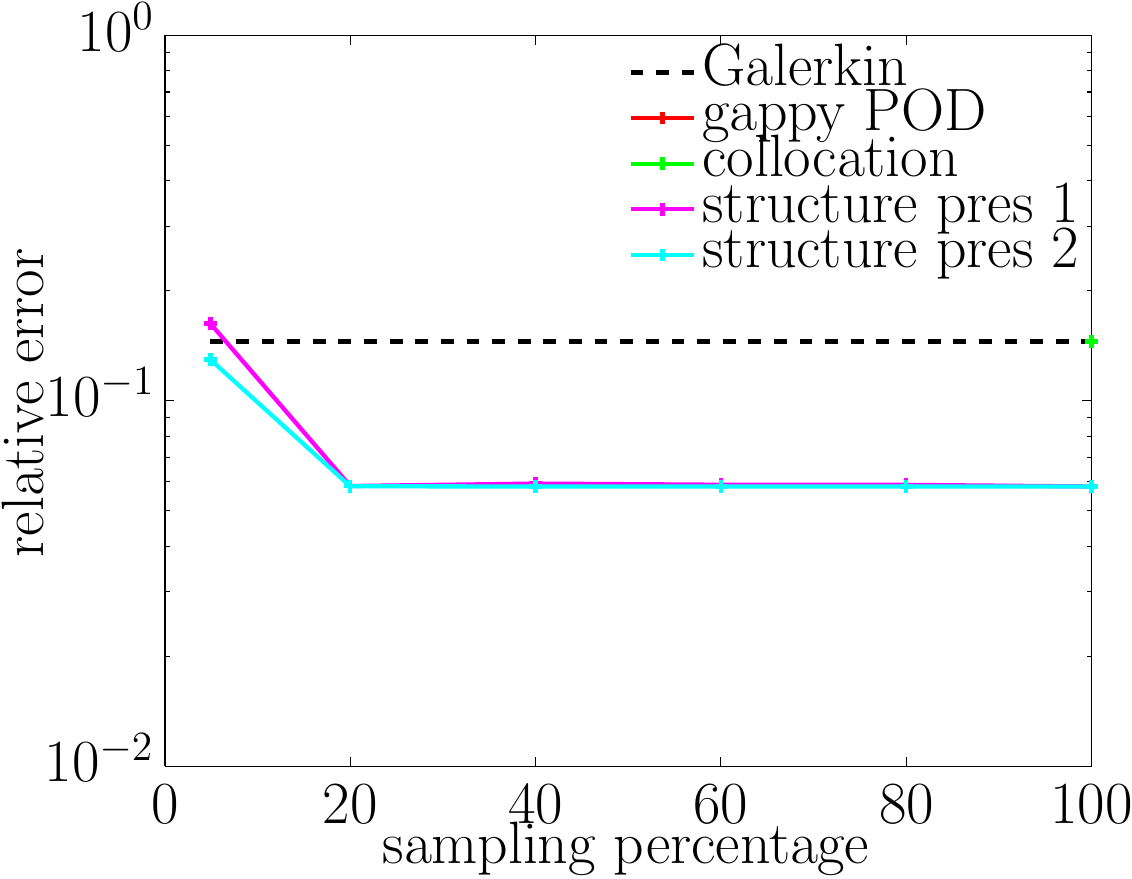}}
\subfigure[online point 3]
{\includegraphics[width=.48\textwidth]{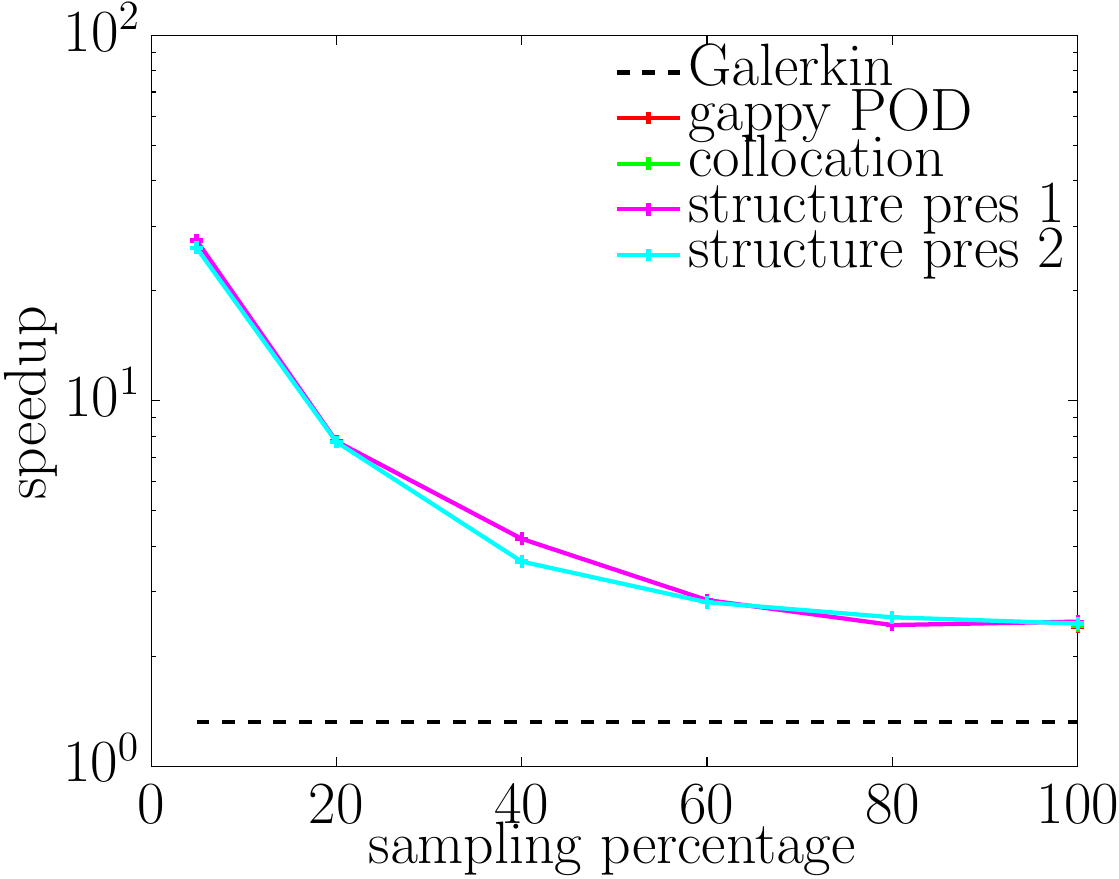}
\includegraphics[width=.48\textwidth]{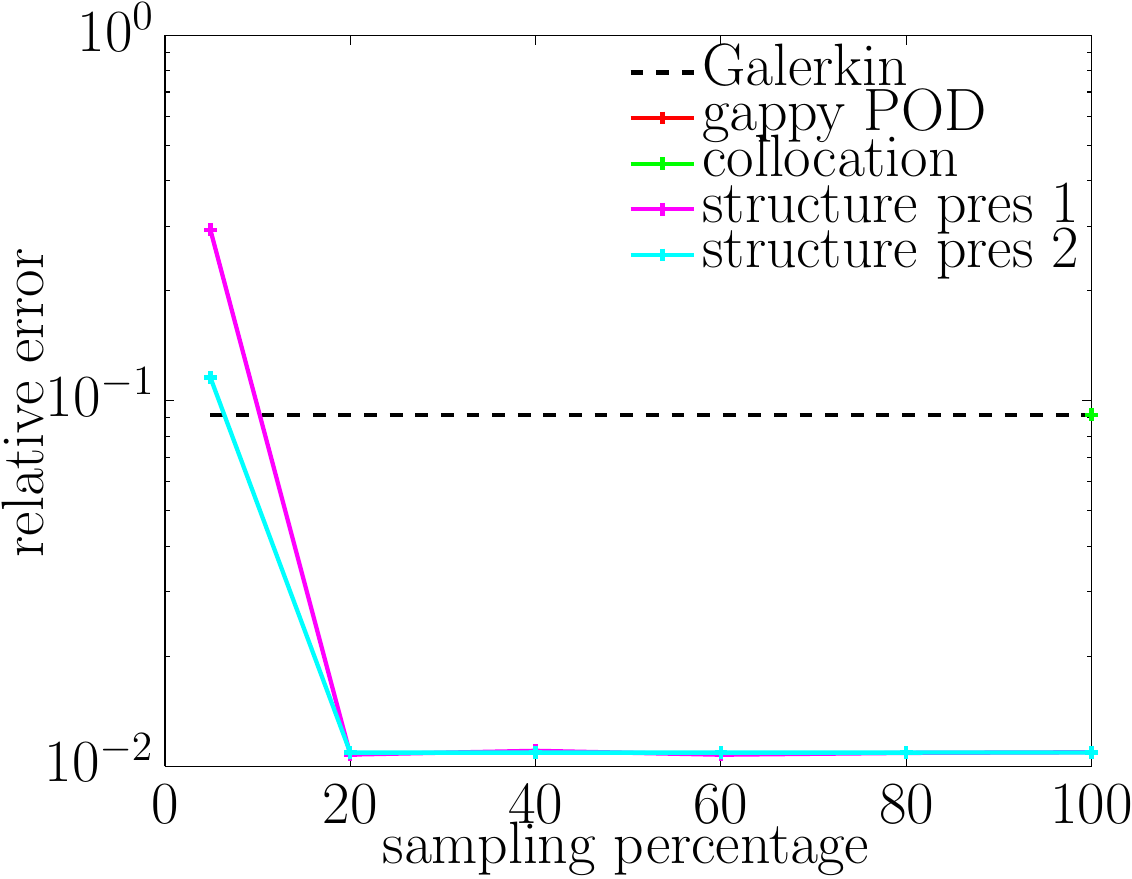}}
\caption{Conservative, varying-parameters case: reduced-order model performance as
a function of sampling percentage $\nsample/N\times 100\%$.}\label{fig:consPredictPerformError}
\end{figure}


\subsection{Non-conservative case}\label{sec:noncon_fixed}
We now consider the non-conservative case in which the non-conservative
dissipative and external forces are nonzero. That is, we set
$\zeta=\mathrm{sin}(5^\circ)$
and all parameters $\param_i$, $i=1,\ldots,16$ are free to vary.  We again set
the nominal forces to $\bar f_1 = \bar f_2 = 2\mathrm{kg} \times 9.81
\mathrm{m/s}^2$ and $\bar f_3 = \bar f_4 = 0.4 \mathrm{kg} \times 9.81
\mathrm{m/s}^2$.

As before, we perform a timestep-verification study for the nominal point
$\paramNom$ characterized by $\paramNom_i=0$, $i=1,\ldots 16$ to discover 
an appropriate timestep. 
 A timestep size of $\Delta t =
0.1$ seconds leads to 
an approximated error using Richardson
extrapolation of $1.07\times 10^{-4}$. We can therefore declare this to be an
appropriate timestep size for the numerical experiments. Further, we note that
the average number of Newton iterations per timestep is $2.56$, so the
nonlinearity remains significant.


\subsubsection{Fixed parameters}

We again test the different methods in the fixed-parameters case where
$\paramTrain = \paramNom$ and $\paramOnline = \paramNom$. As above, we only
collect snapshots for the first half of the time interval, and the proposed
structure-preserving methods yield the same results.

The POD reduced basis $\podstate$ is generated using an energy criterion of
$\energyCrit_\q = 1-10^{-5}$, which leads
to a basis dimension of $\nstate = 6\ll N$. The gappy POD-based
reduced-order model employs an energy criterion of $1$ for its
reduced bases $\podf$. Figures
\ref{fig:nonConsNonPredict} and \ref{fig:nonconsNonPredictPerformError} report
results for the reduced-order models as the number of sample indices varies.

Again, the Galerkin reduced-order model is accurate, with a relative error of
1.57\%, but produces a speedup of only 1.33. The proposed structure-preserving
method is always stable as expected. Its performance is dependent upon the
sampling percentage, with (arguably) the best performance achieved for 2\%
sampling (6.28\% error and 36.5 speedup). For 0.2\% sampling, the method
produces 16.1\% error and a speedup of 251; 20\% sampling leads to 5.39\%
error and a speedup of 4.6.

The gappy POD reduced-order model is unstable for 0.2\%, 2\%, and 5\%
sampling, but stabilizes at 20\%; compared to the conservative case, this
stability likely results from less stiff dynamics due to the presence of
damping. This yields its best performance of 1.53\% error, but only a 4.1
speedup.\footnote{A truncation criterion of $1$ yielded the best performance
for Gappy POD. For $\energyCrit_\f = 1-10^{-9}$, Gappy POD was unstable for
	all sampling percentages. It was also unstable for all sampling percentages
	when it employed an energy criterion of $\energyCrit_\f = 1-10^{-8}$.} The
	collocation reduced-order model is stable only for full sampling, when it is
	equivalent to Galerkin.

\begin{figure}[htbp]
\centering
\subfigure[0.2\% sampling]
{\includegraphics[width=0.32\textwidth]{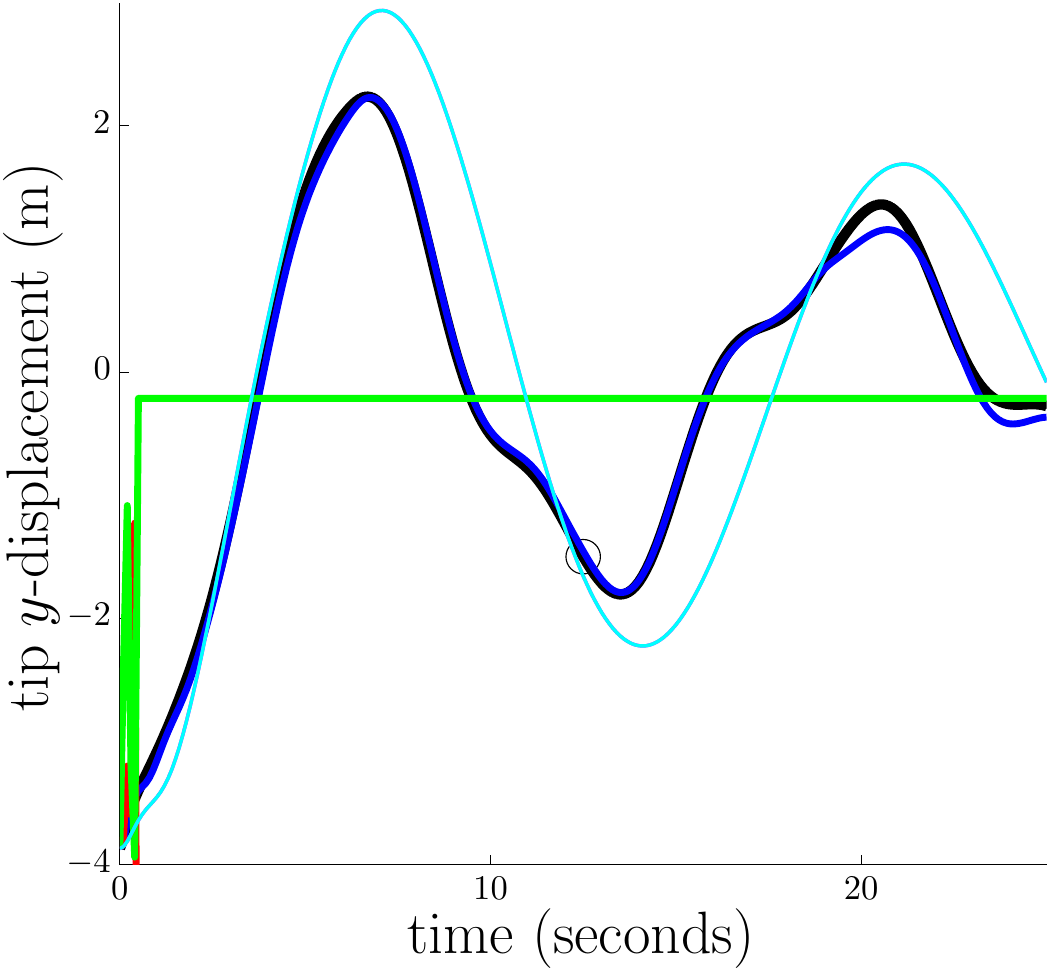}}
\subfigure[5\% sampling ]
{\includegraphics[width=0.32\textwidth]{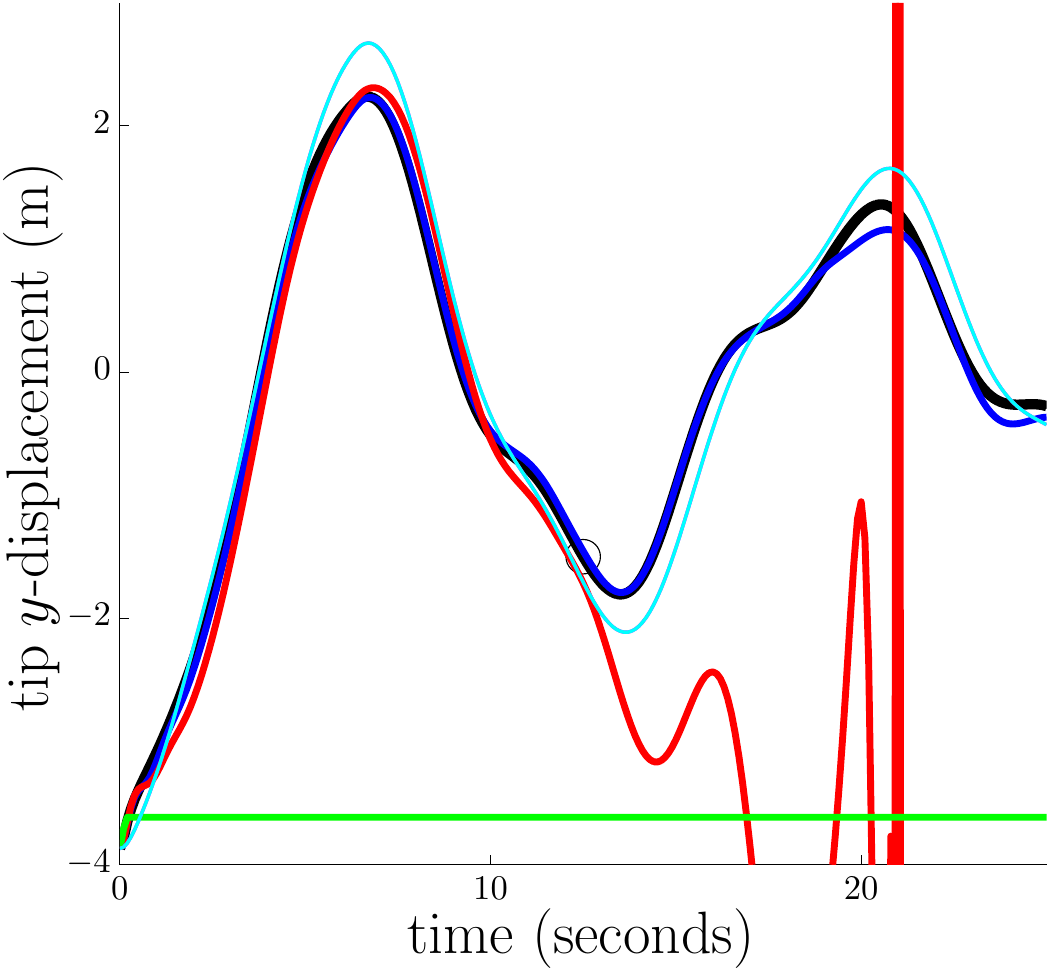}}
\subfigure[20\% sampling]
{\includegraphics[width=0.32\textwidth]{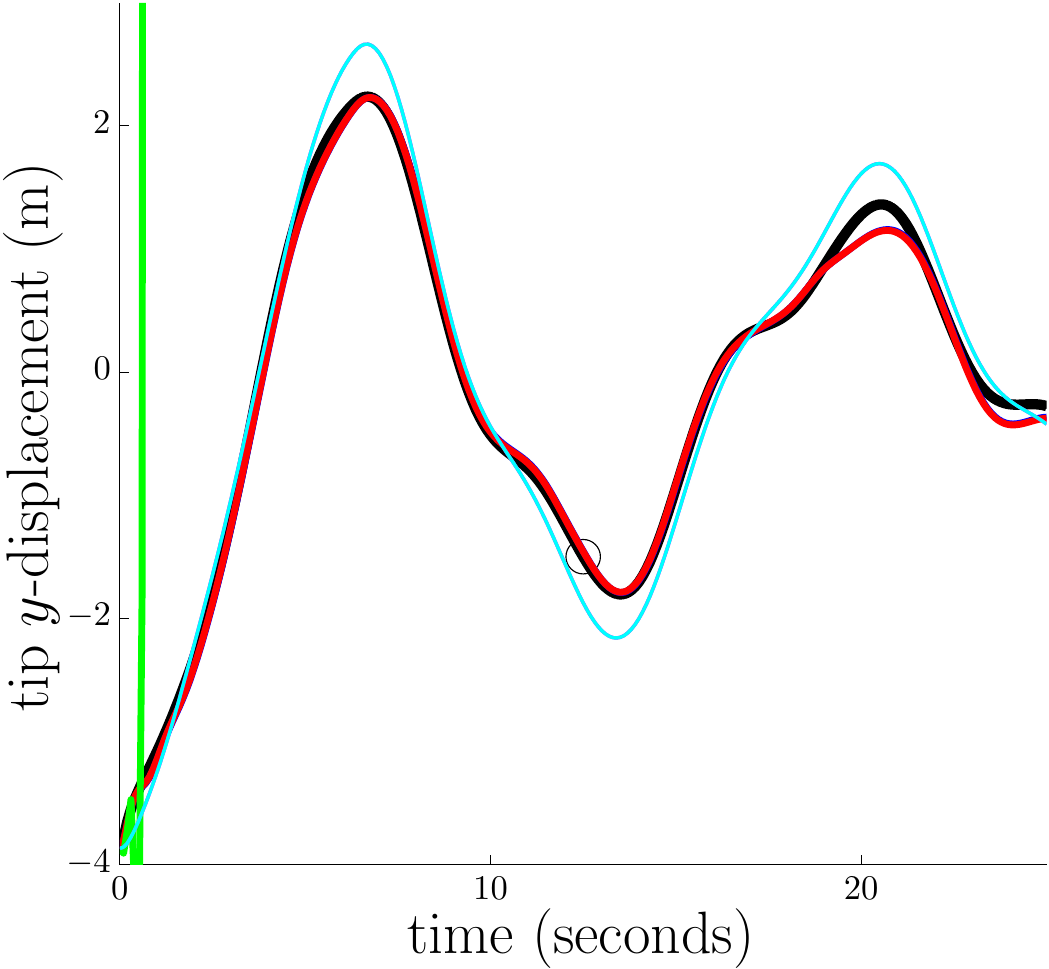}}
\caption{Non-conservative, fixed-parameters case: reduced-order model responses as
a function of sampling percentage $\nsample/N\times 100\%$. Legend: full-order model (black),
Galerkin ROM (dark blue), structure-preserving ROM (light blue), gappy POD ROM
(red), collocation ROM (green), end of training time interval (black circle).}\label{fig:nonConsNonPredict}
\end{figure}

\begin{figure}[htbp]
\centering
{\includegraphics[width=.48\textwidth]{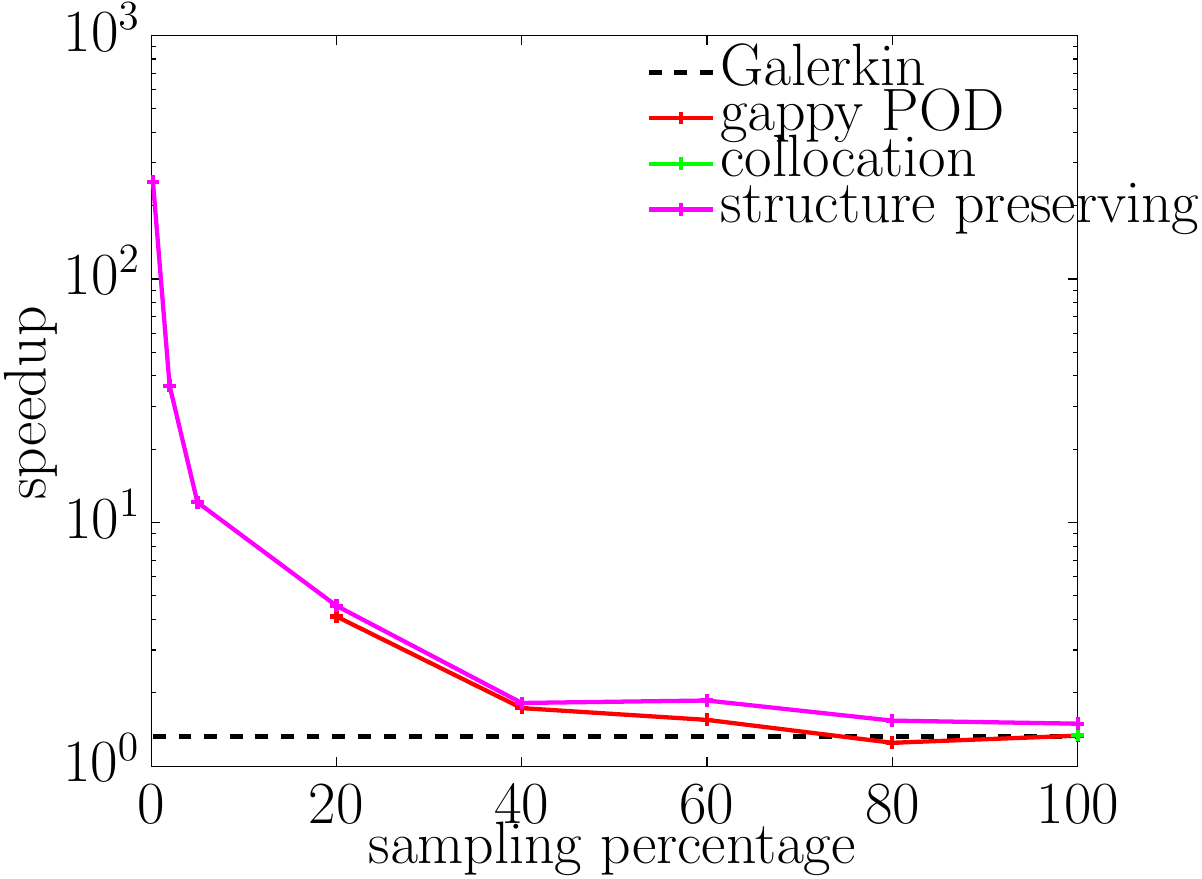}}
{\includegraphics[width=.48\textwidth]{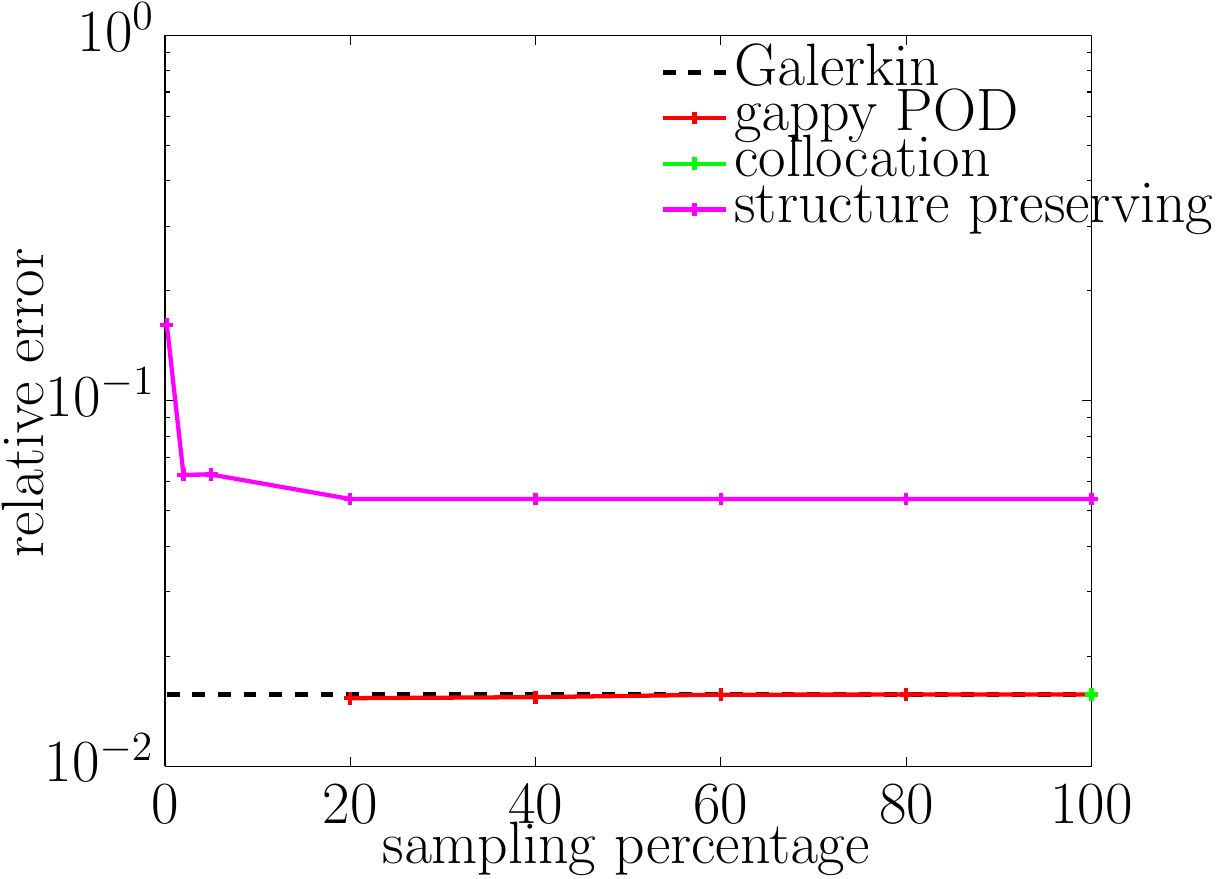}}
\caption{Non-conservative, fixed-parameters case: reduced-order model performance as
a function of sampling percentage $\nsample/N\times 100\%$. Missing data
points for gappy POD and collocation ROMs indicate unstable responses.}\label{fig:nonconsNonPredictPerformError}
\end{figure}

\subsubsection{Varying parameters}\label{sec:nonconPredict}
We now consider the parameter-varying case where $\paramOnline\not\in \paramTrain$.
We again employ $\nTrain = 6$ training points and determine $\paramTrain$ using Latin hypercube
sampling. We choose the online points randomly in the parameter space. Figure
\ref{fig:nonconsPredict} shows the tip displacement for the
training points; clearly, the responses are
significantly different from one another.

\begin{figure}[htbp] 
\centering 
\subfigure[Section \ref{sec:noncon_fixed} experiments ]{
\includegraphics[width=7cm]{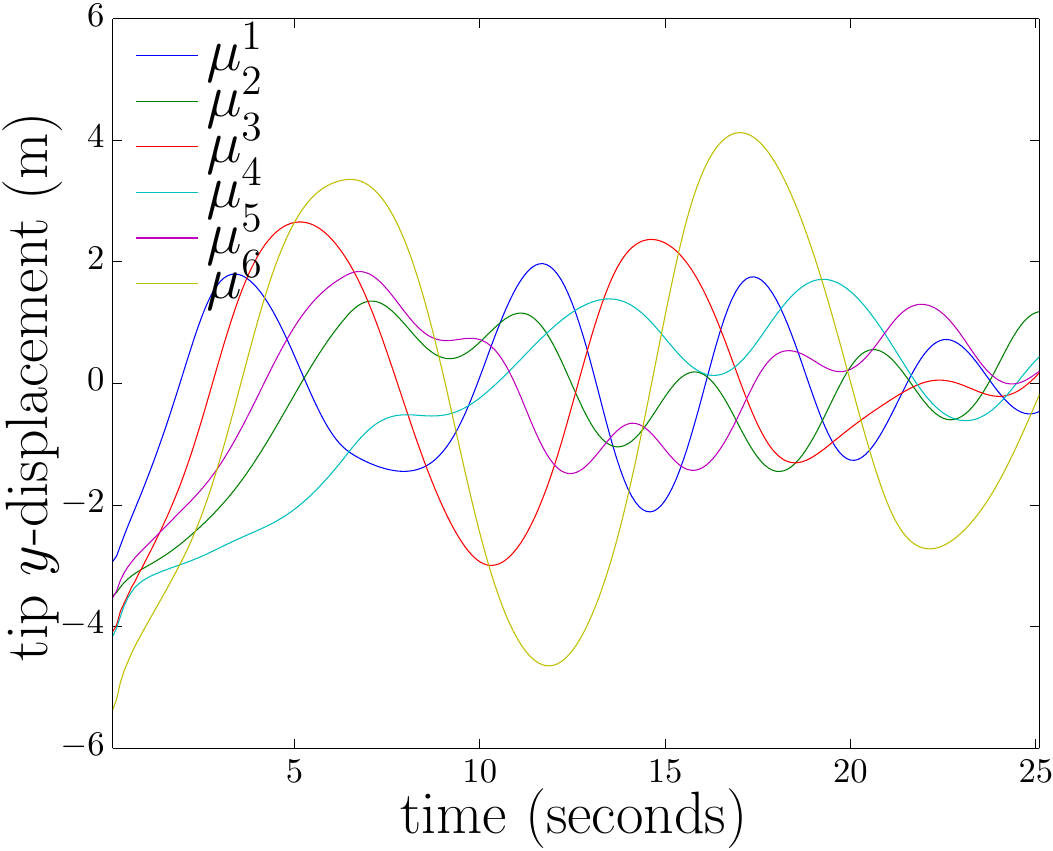}
\label{fig:nonconsPredict} 
}
\subfigure[Section \ref{sec:highlyNonlinear} experiments (higher
nonlinearity)]{
\includegraphics[width=7cm]{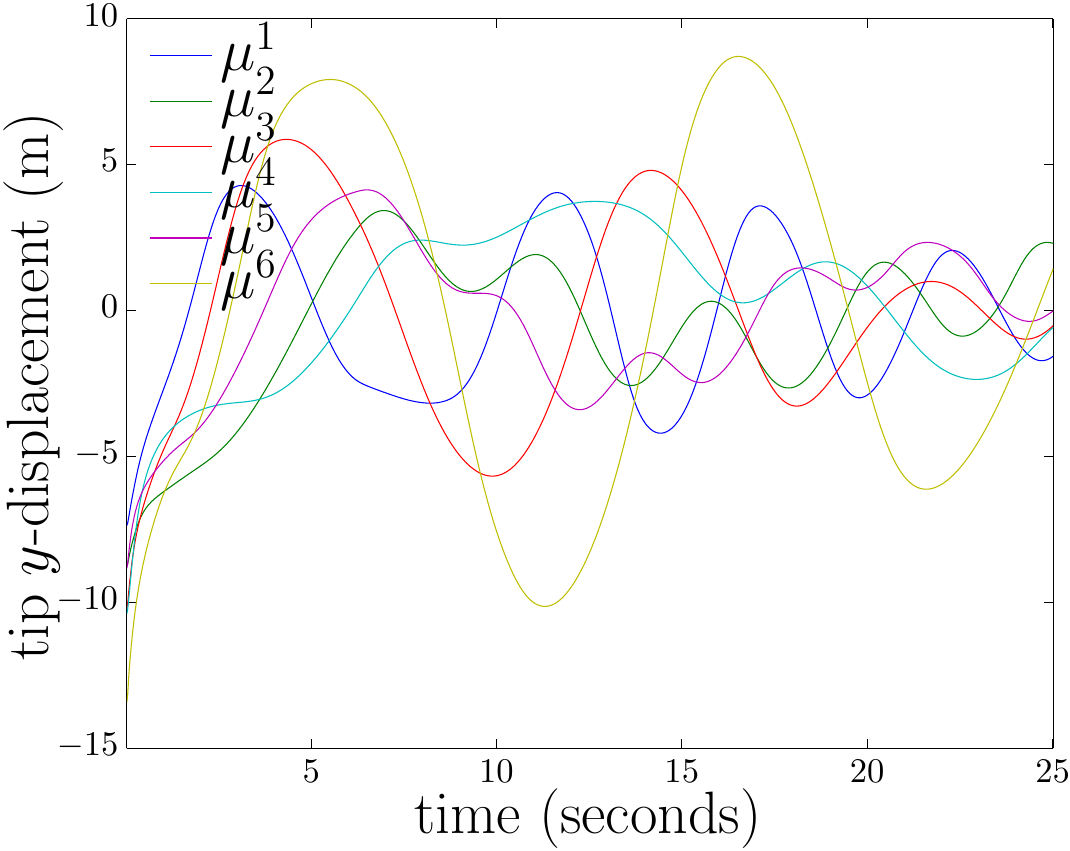}
\label{fig:nonconsNonlinPredict} 
}
\caption{Non-conservative, parameter-varying case: tip displacement for the training
set $\paramTrain$ for two sets of experiments. } 
\end{figure} 

Because we are in a fully predictive scenario, the two proposed
structure-preserving reduced-order models again yield different results. All
reduced-order models employ an energy criterion of $\energyCrit_\q =
1-10^{-5}$, which leads to a basis dimension of $\nstate = 12$. We employ
$\energyCrit_\f = 1$ for the gappy POD reduced-order model.

Figures \ref{fig:nonConsPredictROM} and \ref{fig:nonconsPredictPerformError}
report the results for this predictive study at the online points. At all
three points, Galerkin is accurate (relative errors of 9.8\%, 7.5\%, and 13.5\%), but does
not yield significant speedups (speedups of 1.2, 1.4, and 1.1). As is apparent from
the plots, the two proposed structure-preserving methods yield nearly the same
performance. At 0.4\% sampling, method 1 produces relative errors of 11.0\%,
2.82\%, and 
10.3\% and speedups of 73.3, 96.3, and 82.3. At 2\% sampling, method 1 yields
relative errors of 10.9\%, 4.38\%, and  7.97\% and speedups of 19.2, 21.6, and 16.8.

In this example, gappy POD does not stabilize until 40\% sampling, at which
point the speedup is less than 1. Thus, gappy POD does not yield performance
improvement for this problem. Collocation stabilizes at 80\%
sampling, and also fails to generate any performance improvement.

\begin{figure}[htbp] \centering
\subfigure[predictions at three randomly chosen online points, $0.4\%$
sampling]
{\includegraphics[width=.3\textwidth]{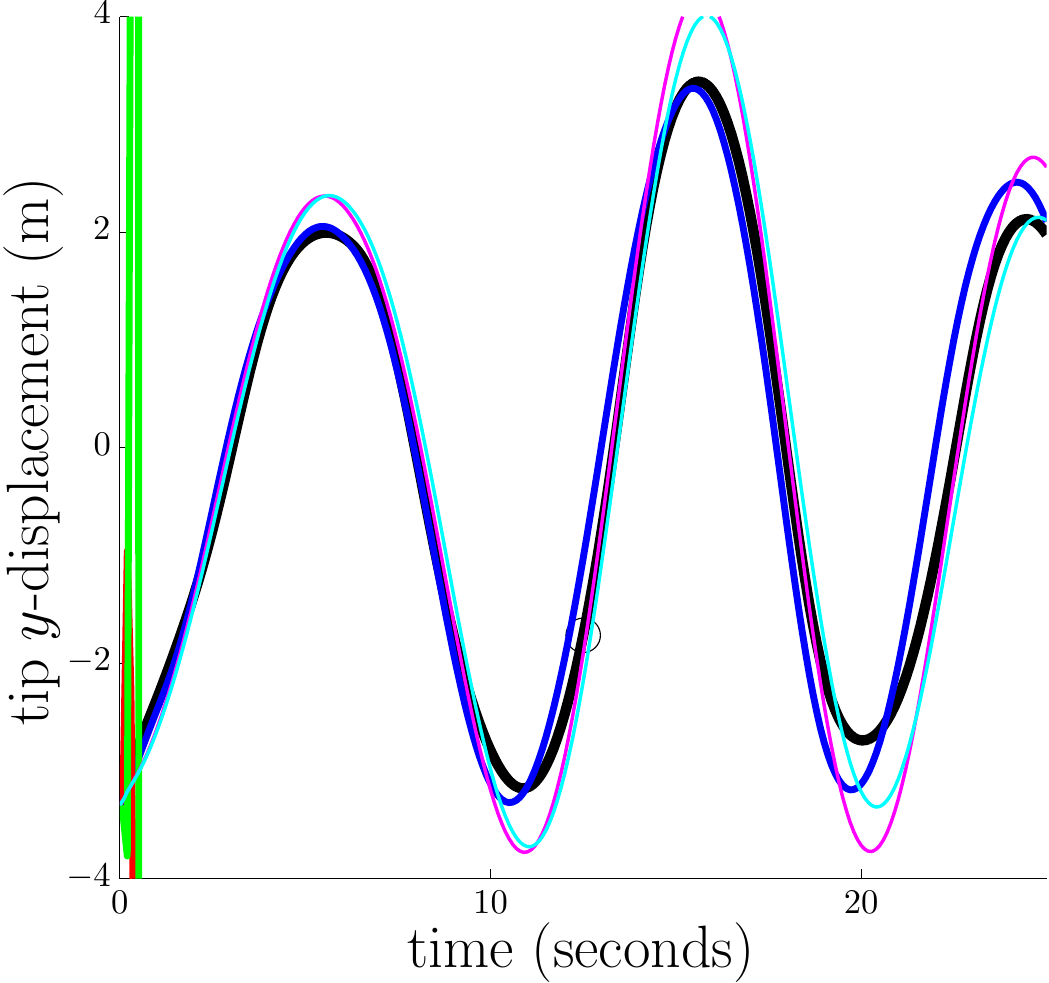}
\includegraphics[width=.3\textwidth]{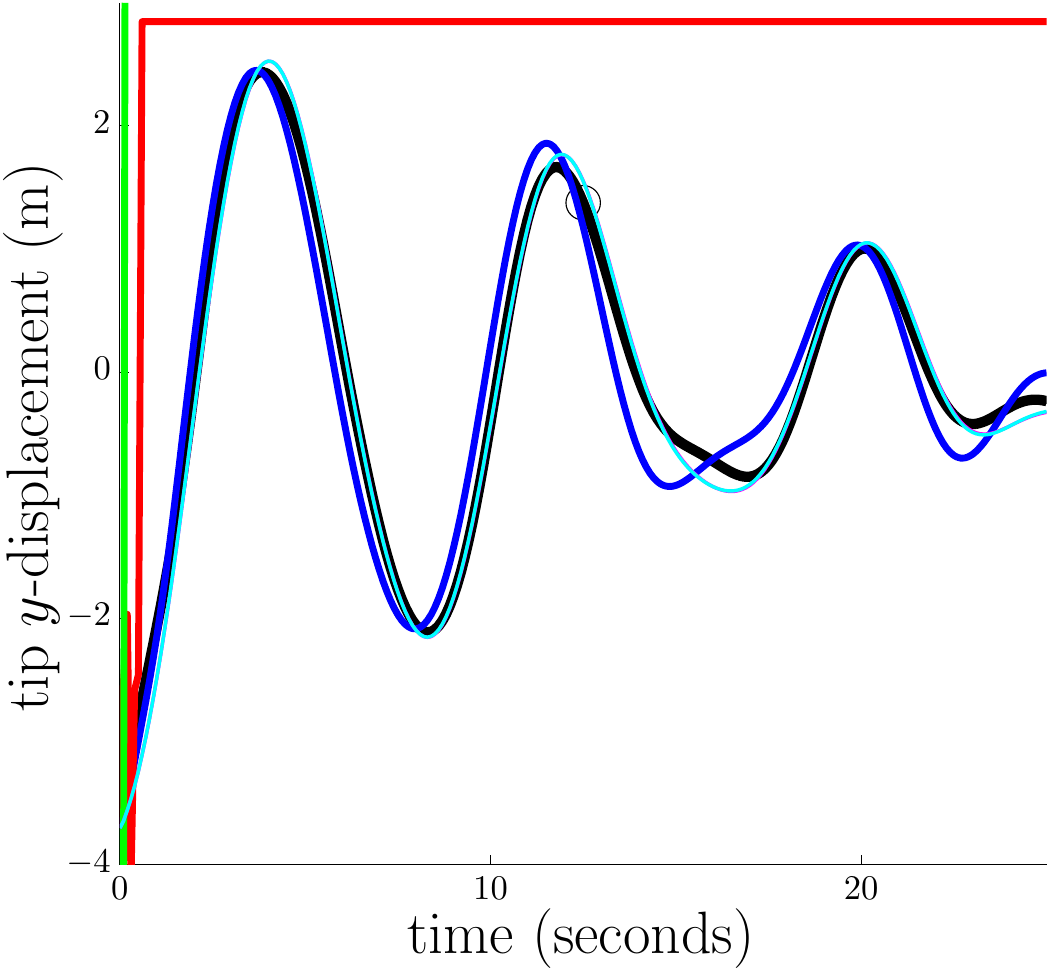}
\includegraphics[width=.3\textwidth]{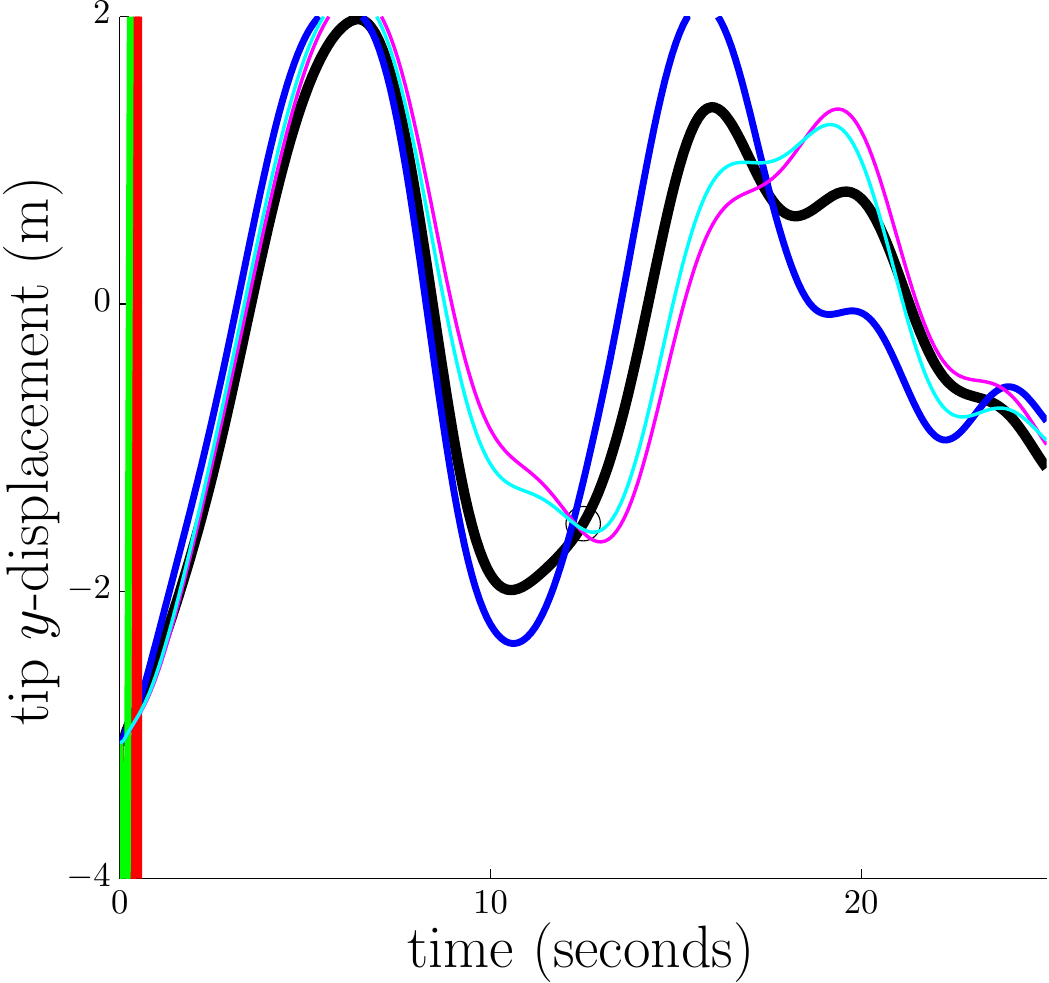}}
\subfigure[predictions at three randomly chosen online points, $40\%$ sampling]
{\includegraphics[width=.3\textwidth]{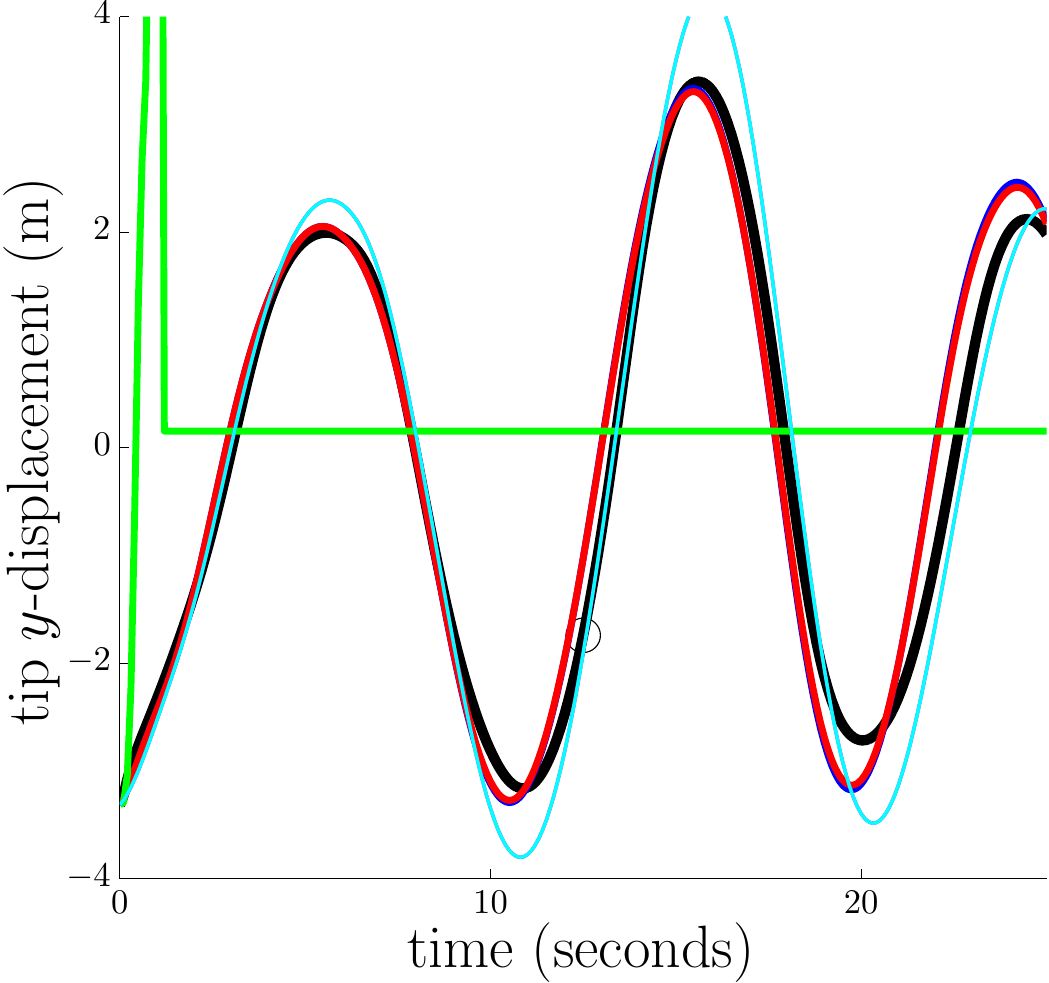}
\includegraphics[width=.3\textwidth]{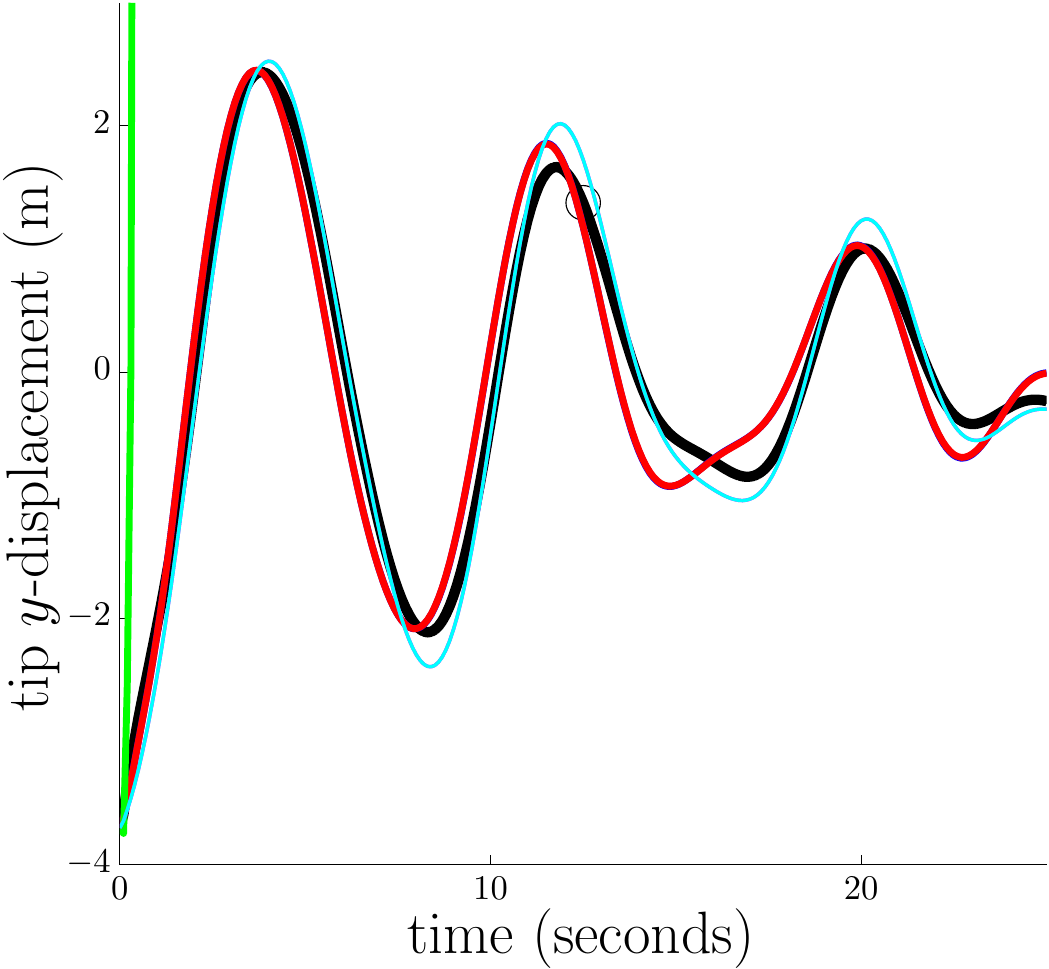}
\includegraphics[width=.3\textwidth]{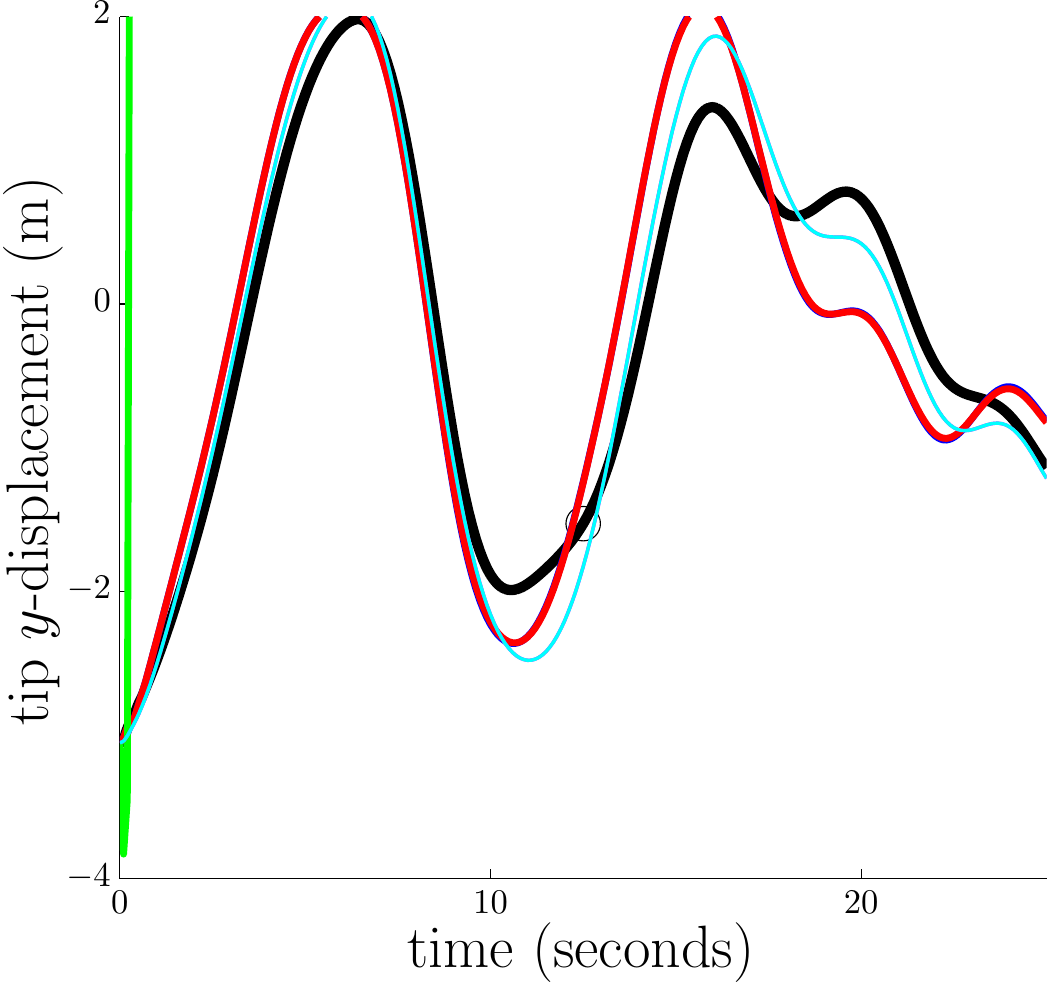}}
\subfigure[predictions at three randomly chosen online points, $80\%$ sampling]
{\includegraphics[width=.3\textwidth]{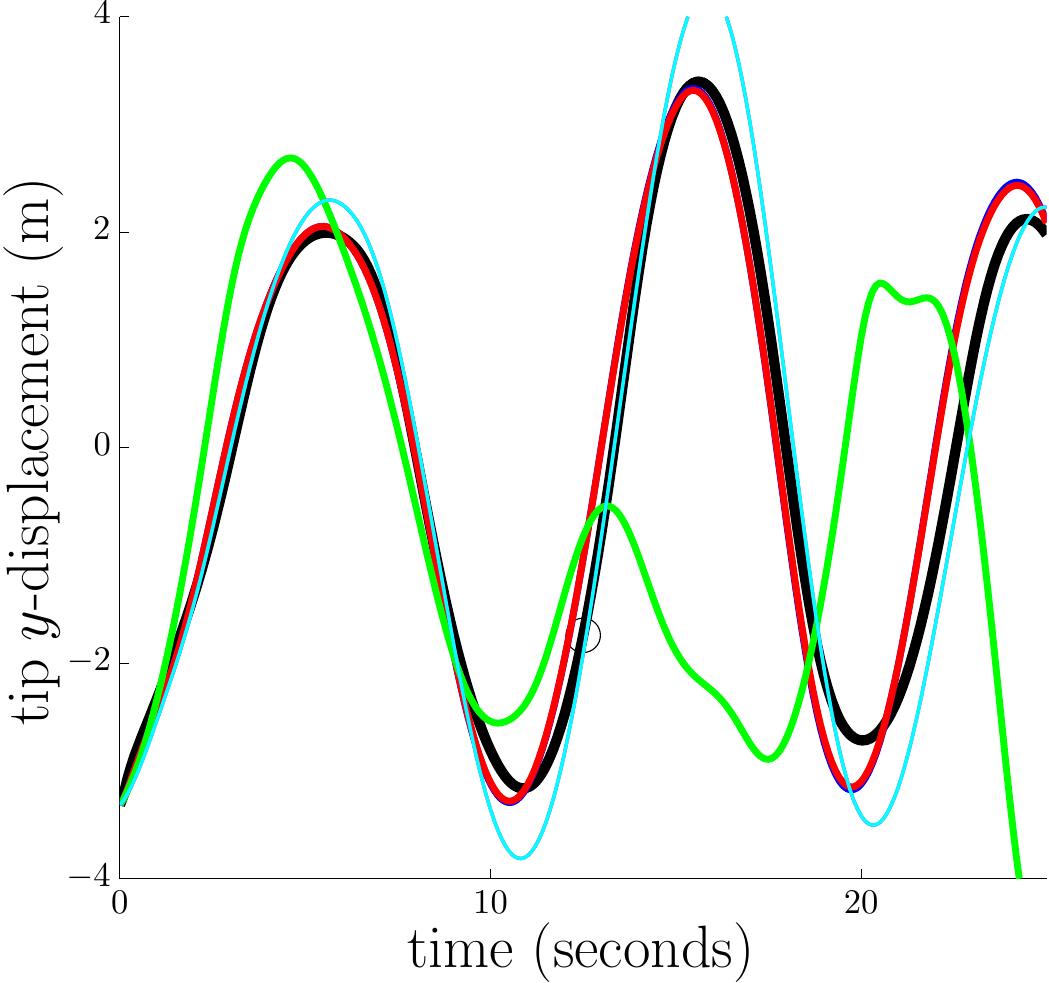}
\includegraphics[width=.3\textwidth]{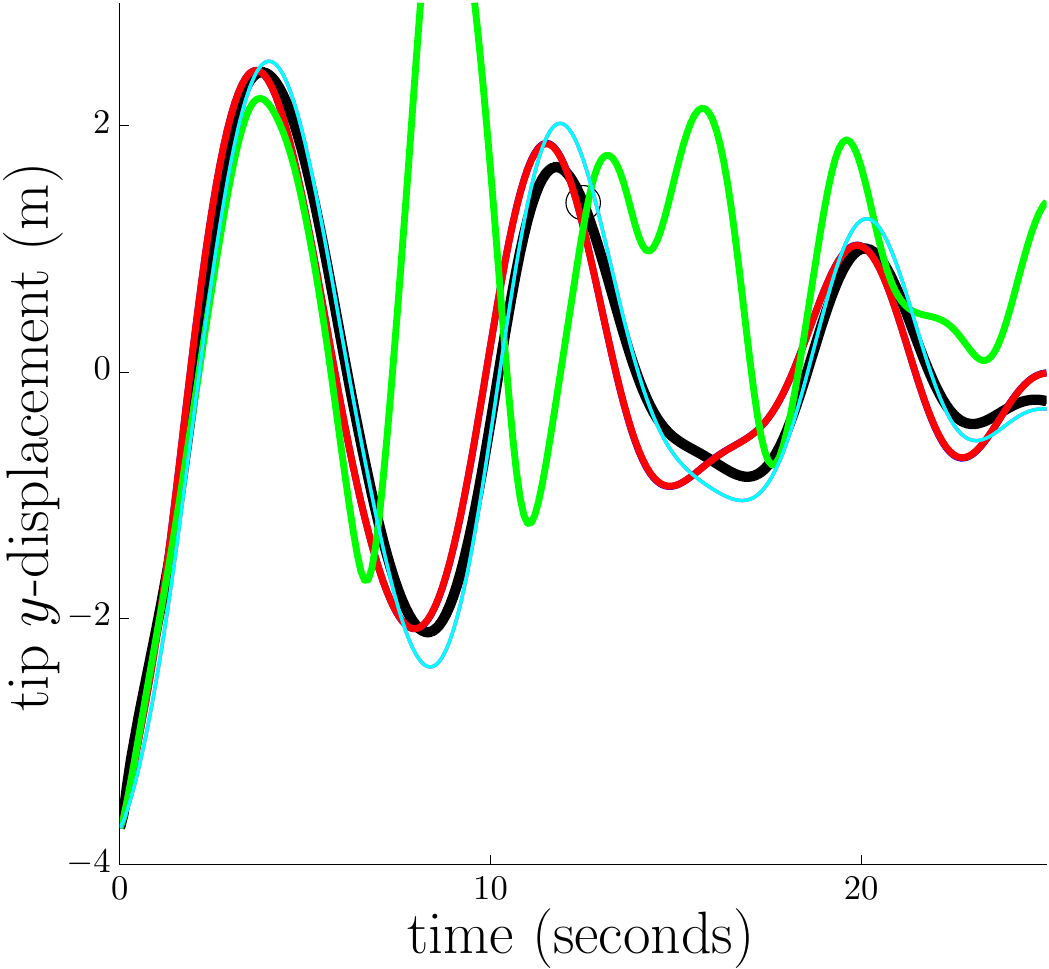}
\includegraphics[width=.3\textwidth]{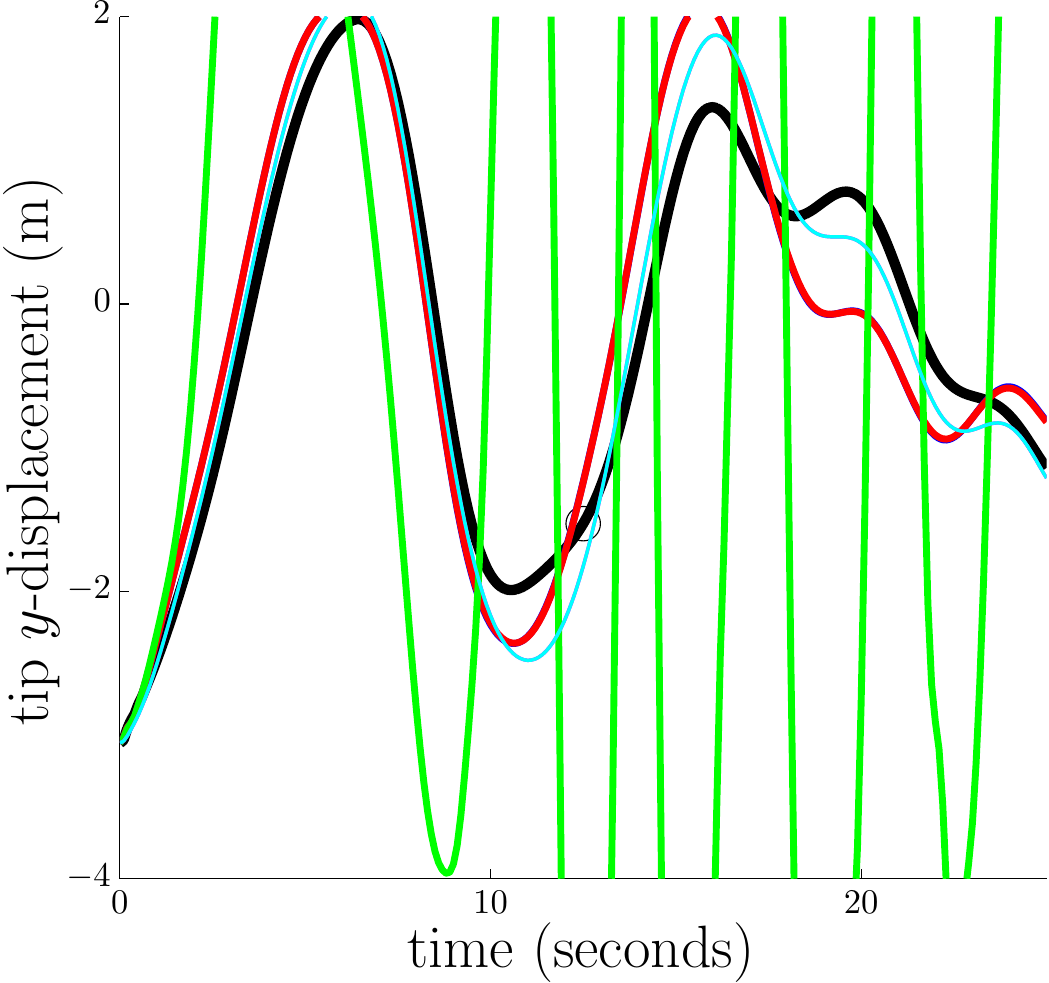}}
\caption{Non-conservative, parameter-varying case: reduced-order model responses as
a function of sampling percentage $\nsample/N\times 100\%$. Legend: full-order model (black),
Galerkin ROM (dark blue), structure-preserving ROM method 1 (magenta), structure-preserving ROM method 2 (light blue), gappy POD ROM
(red), collocation ROM (green), end of training time interval (black circle).}\label{fig:nonConsPredictROM}
\end{figure}

\begin{figure}[htbp]
\centering
\subfigure[online point 1]
{\includegraphics[width=.48\textwidth]{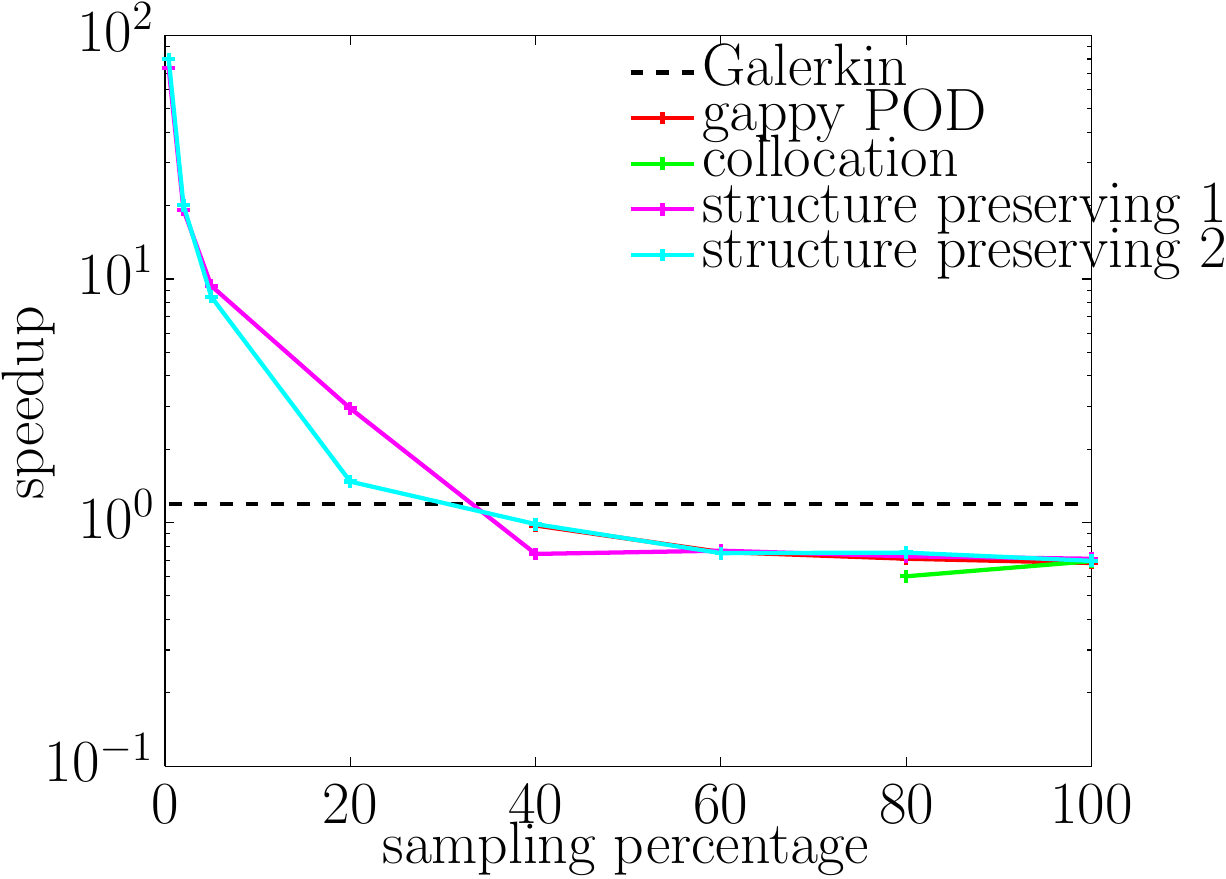}
\includegraphics[width=.48\textwidth]{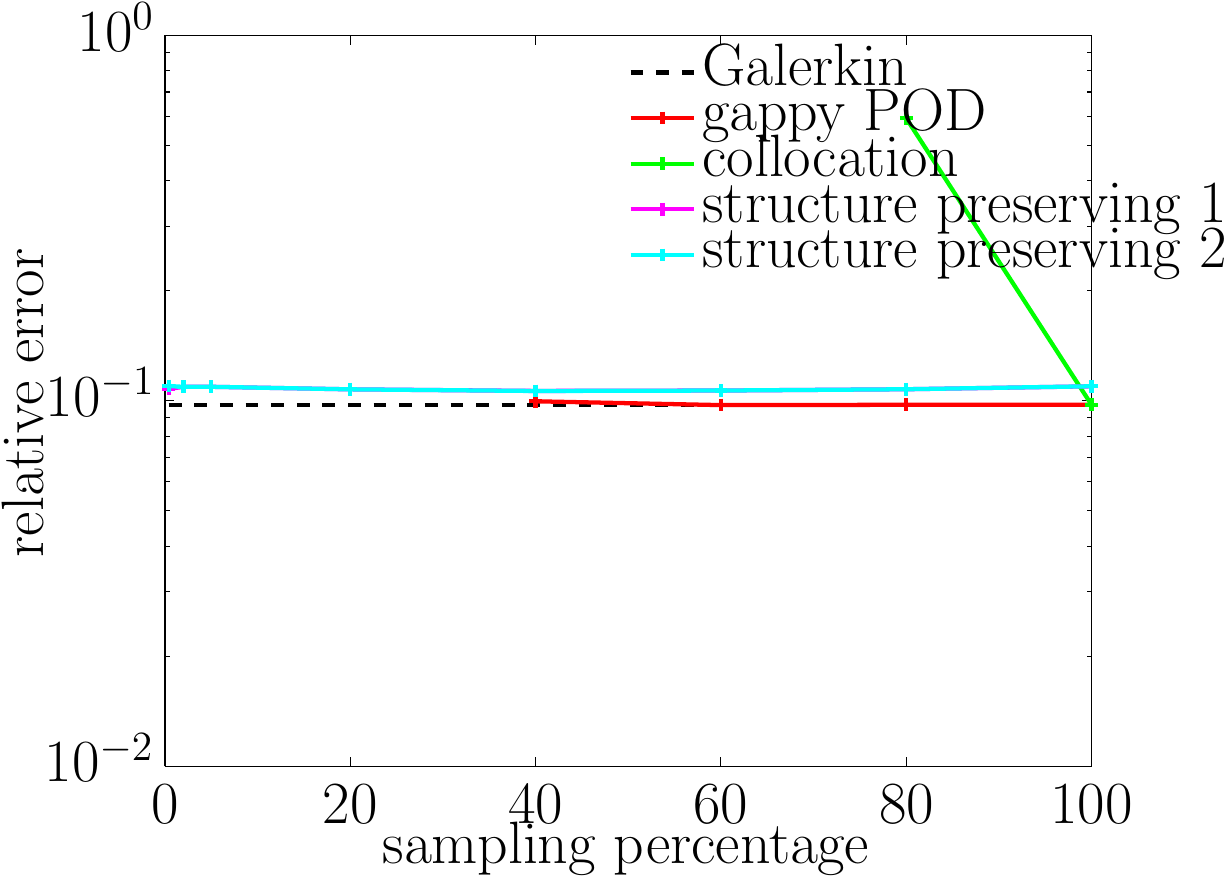}}
\subfigure[online point 2]
{\includegraphics[width=.48\textwidth]{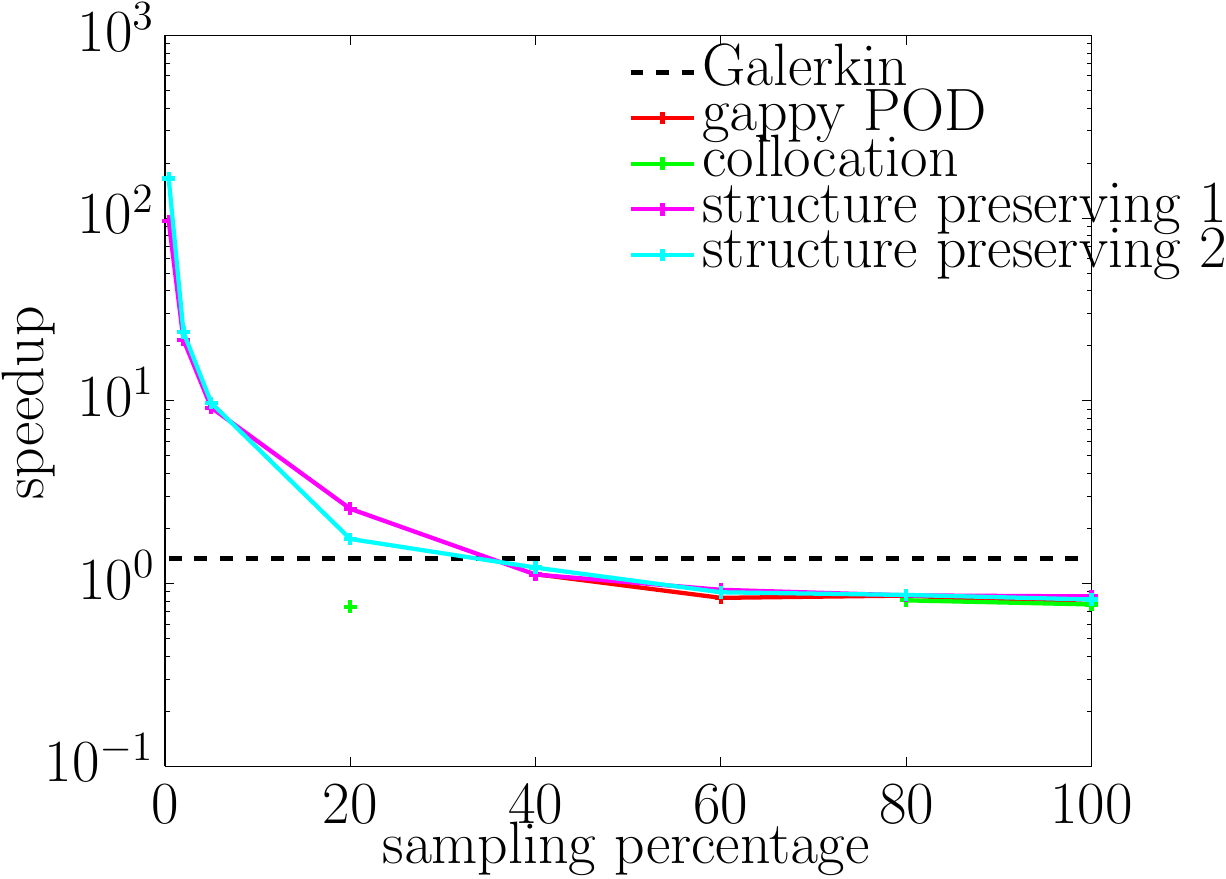}
\includegraphics[width=.48\textwidth]{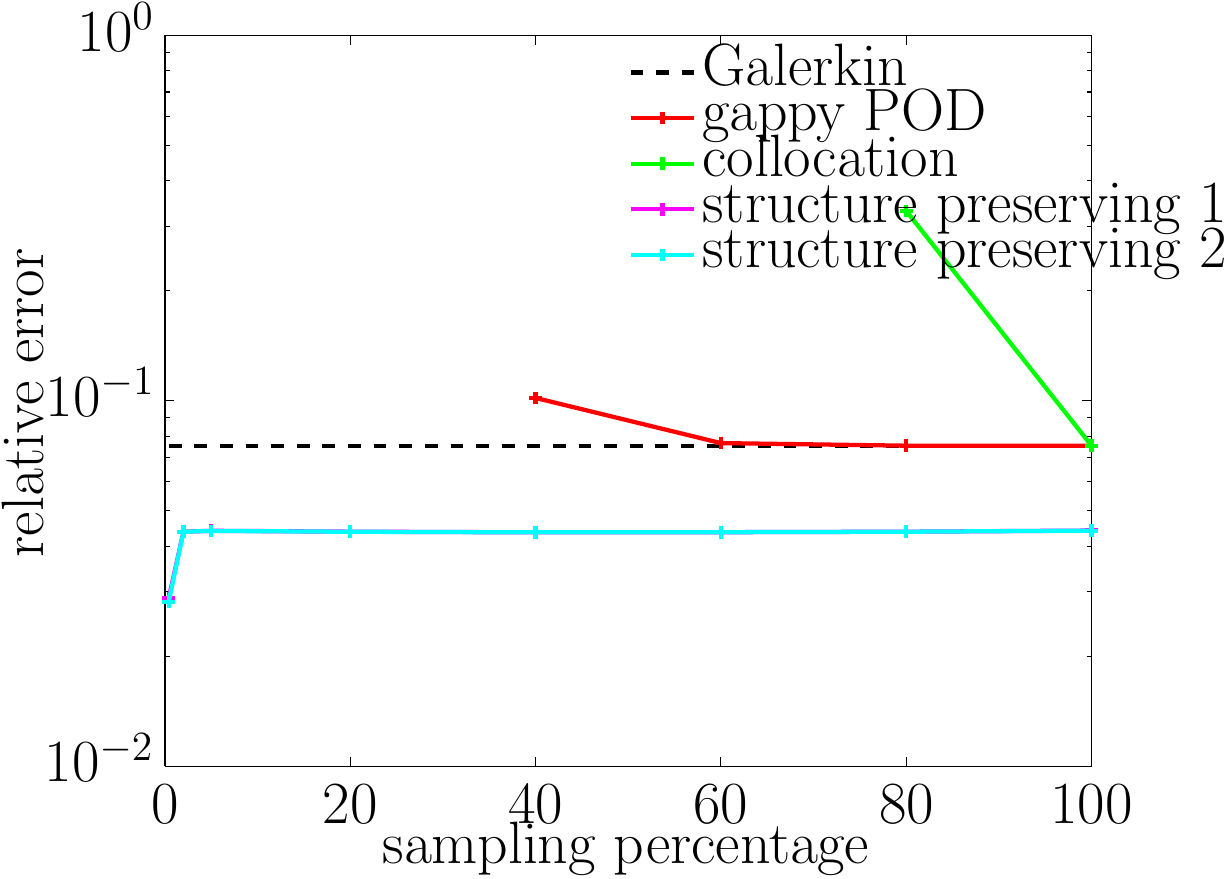}}
\subfigure[online point 3]
{\includegraphics[width=.48\textwidth]{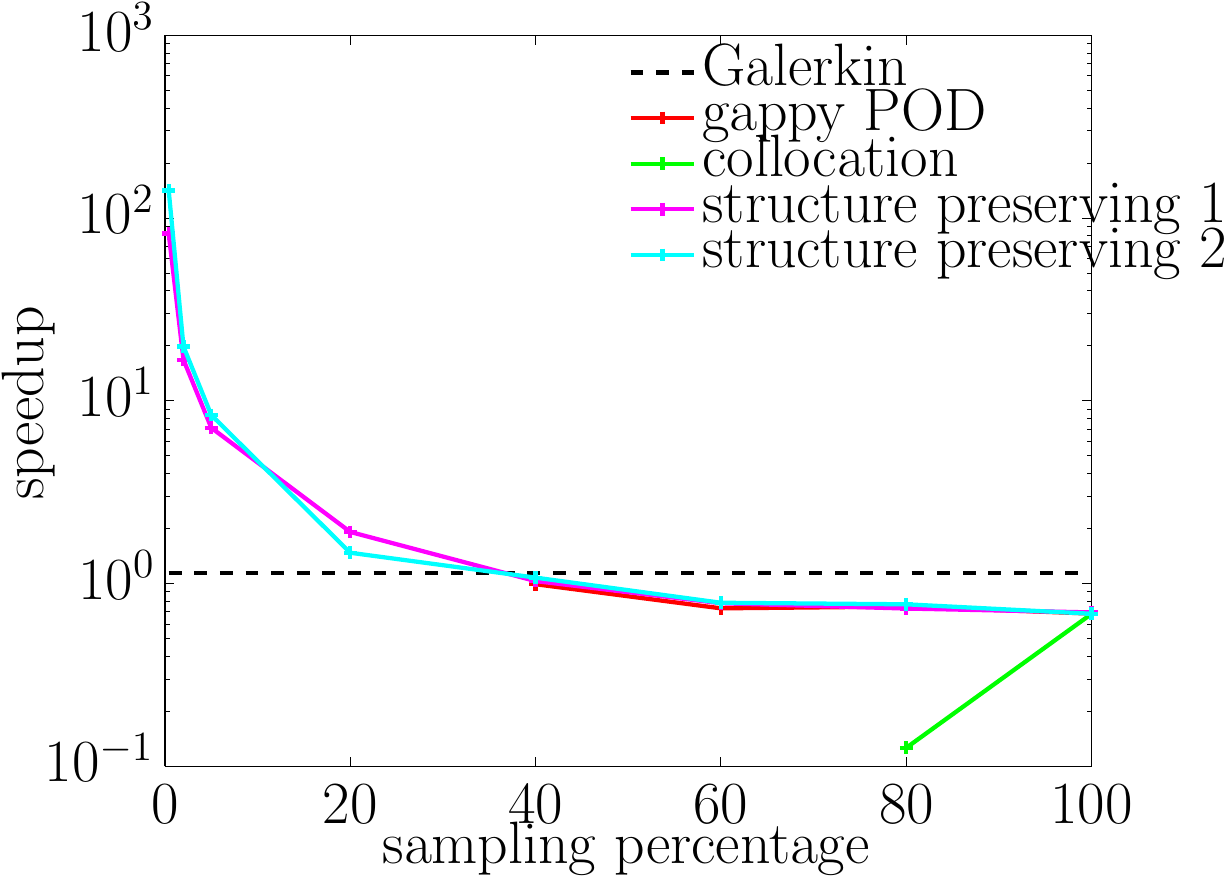}
\includegraphics[width=.48\textwidth]{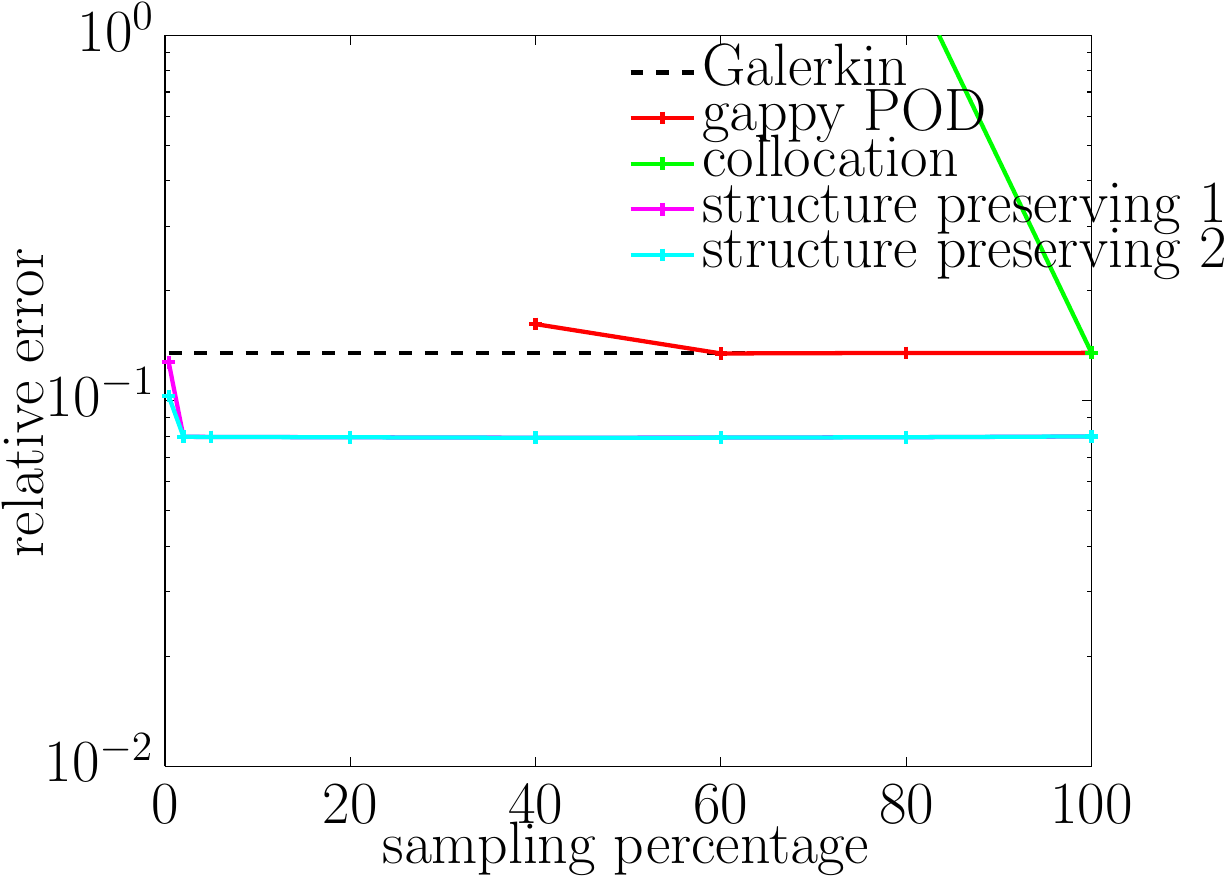}}
\caption{Non-conservative, parameter-varying case: reduced-order model performance as
a function of sampling percentage $\nsample/N\times 100\%$.}\label{fig:nonconsPredictPerformError}
\end{figure}

\subsection{Effect of nonlinearity}\label{sec:highlyNonlinear}
We now aim to characterize the dependence of problem nonlinearity on the
proposed methods' performances. Recall from Section \ref{sec:potEnGen} that
the potential-energy approximation is computed by matching the gradient of the potential
energy to first order about the equilibrium configuration $\initialQOn$.
In the presence of stronger nonlinearity, we expect the configuration to
deviate further from equilibrium, which should degrade the accuracy of the
approximation.

To numerically assess the effect of nonlinearity, we repeat the experiments
from Section \ref{sec:nonconPredict} using the same training and online
points, but we increase the nominal forces 
by a factor of 2.5 to $\fNomnum{1} = \fNomnum{2} = 5\mathrm{kg}
\times 9.81 \mathrm{m/s}^2$ and $\fNomnum{3} = \fNomnum{4} = 1 \mathrm{kg} \times 9.81 \mathrm{m/s}^2$.
We first perform a timestep-verification study for the nominal point
$\paramNom$. As expected,
a smaller timestep size of $\Delta t =
0.025$ seconds is required, as it
corresponds to an approximated error using Richardson
extrapolation of $3.62\times 10^{-4}$. 


Figure \ref{fig:nonconsNonlinPredict} displays the tip displacement for the
training points. Note that the responses are similar to those for the previous
study (see Figure \ref{fig:nonconsPredict}), but have larger magnitudes and
therefore imply a stronger geometric nonlinearity.  The reduced-order models
employ a POD reduced basis of dimension $\nstate = 14$, which was obtained by
an energy criterion of $\energyCrit_\q = 1-10^{-5}$; gappy POD uses
$\energyCrit_\f=1$ for its nonlinear-function bases.

Figures \ref{fig:nonConsNonlinPredictROM} and
\ref{fig:nonconsNonlinPredictPerformError} report the reduced-order models'
performances for this problem. As in the previous case, Galerkin is accurate
(relative errors of 8.3\%, 3.0\%, and 10.0\% at the online points), but does
not generate significant speedups (1.67, 1.71, and 1.0). The proposed
structure-preserving techniques again yield very similar results to each
other; however, the errors are significantly larger than in the the
experiments from Section \ref{sec:nonconPredict} characterized by a
less severe nonlinearity. For 0.5\% sampling, proposed method 1 yields
relative errors of 21.3\%, 11.1\%, and 15.9\% at the online points and speedups
of 116.4, 160, and 98.9. Thus, increasing the nonlinearity in the problem does
have a deleterious effect on the methods' performances. 

However, it is important to note that other complexity-reducing reduced-order
models fail to generate significant performance improvement on this more
highly nonlinear problem. In particular collocation is always unstable for a
sampling percentage less than 60\%, and gappy POD is always unstable when the
percentage is less than 80\%. As a result, the best speedup obtained by either
of the methods is only 2.77 (collocation for 60\% sampling for online
point 2).


\begin{figure}[htbp] \centering
\subfigure[predictions at three randomly chosen online points, $0.5\%$
sampling]
{\includegraphics[width=.3\textwidth]{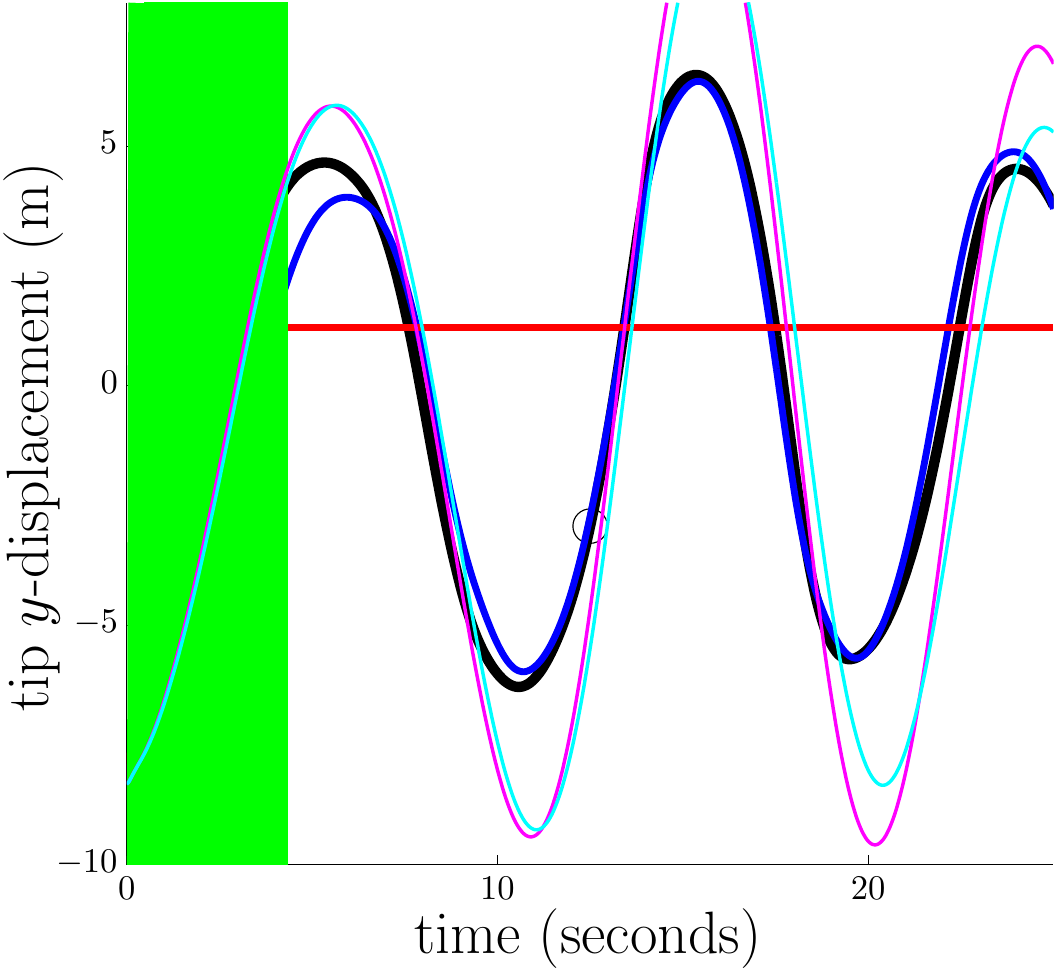}
\includegraphics[width=.3\textwidth]{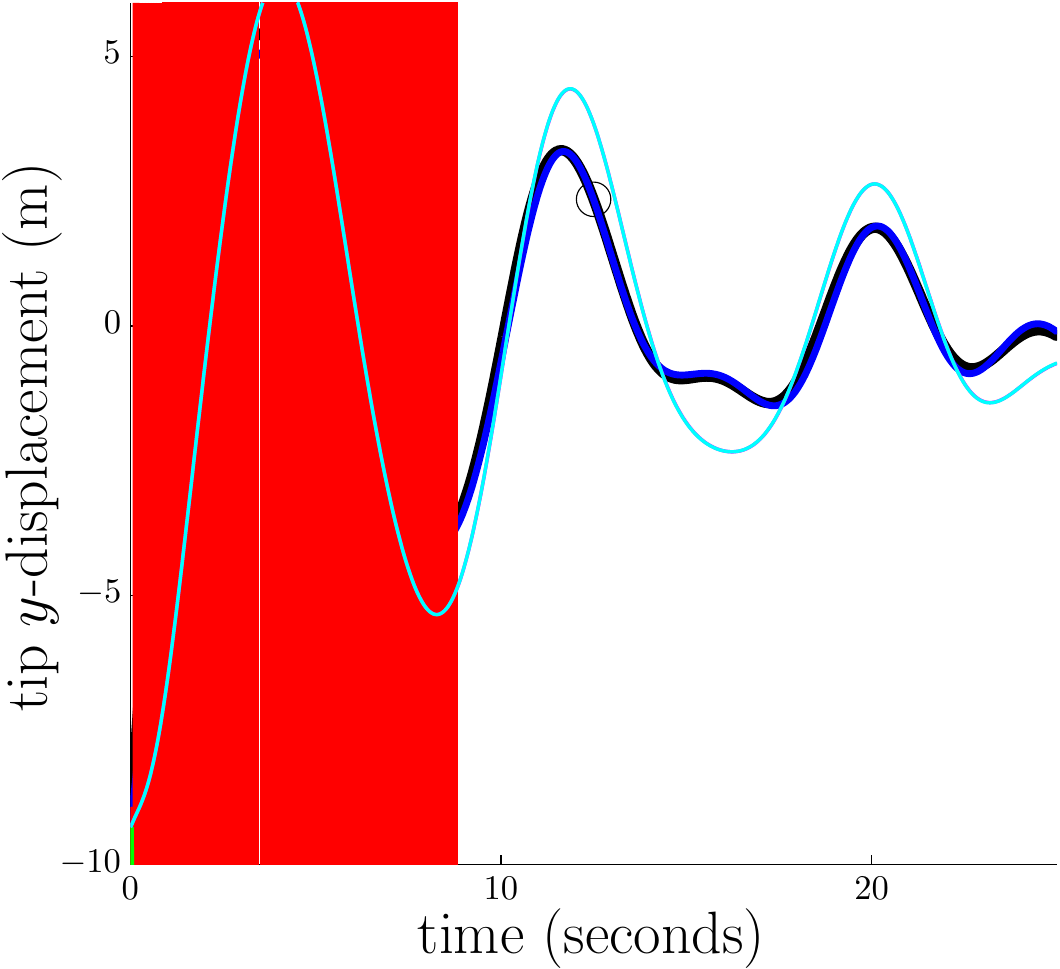}
\includegraphics[width=.3\textwidth]{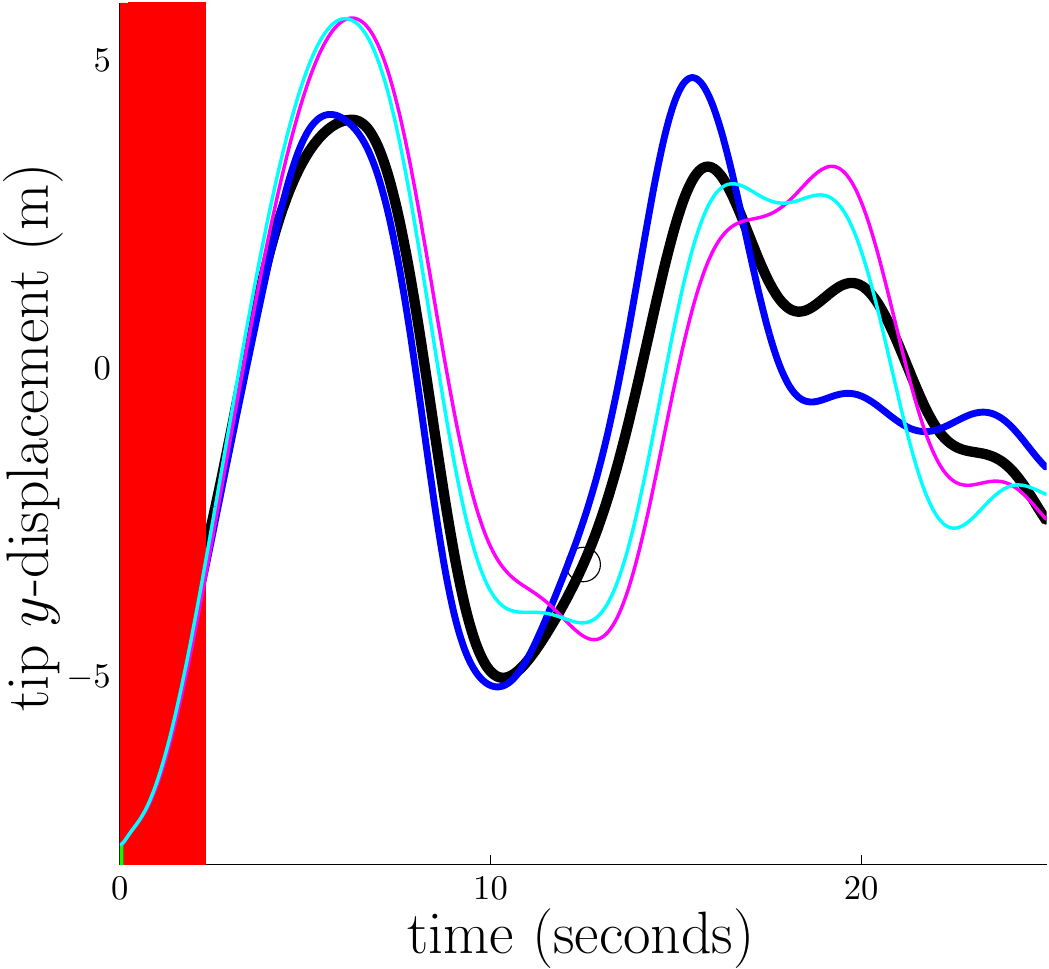}}
\subfigure[predictions at three randomly chosen online points, $60\%$ sampling]
{\includegraphics[width=.3\textwidth]{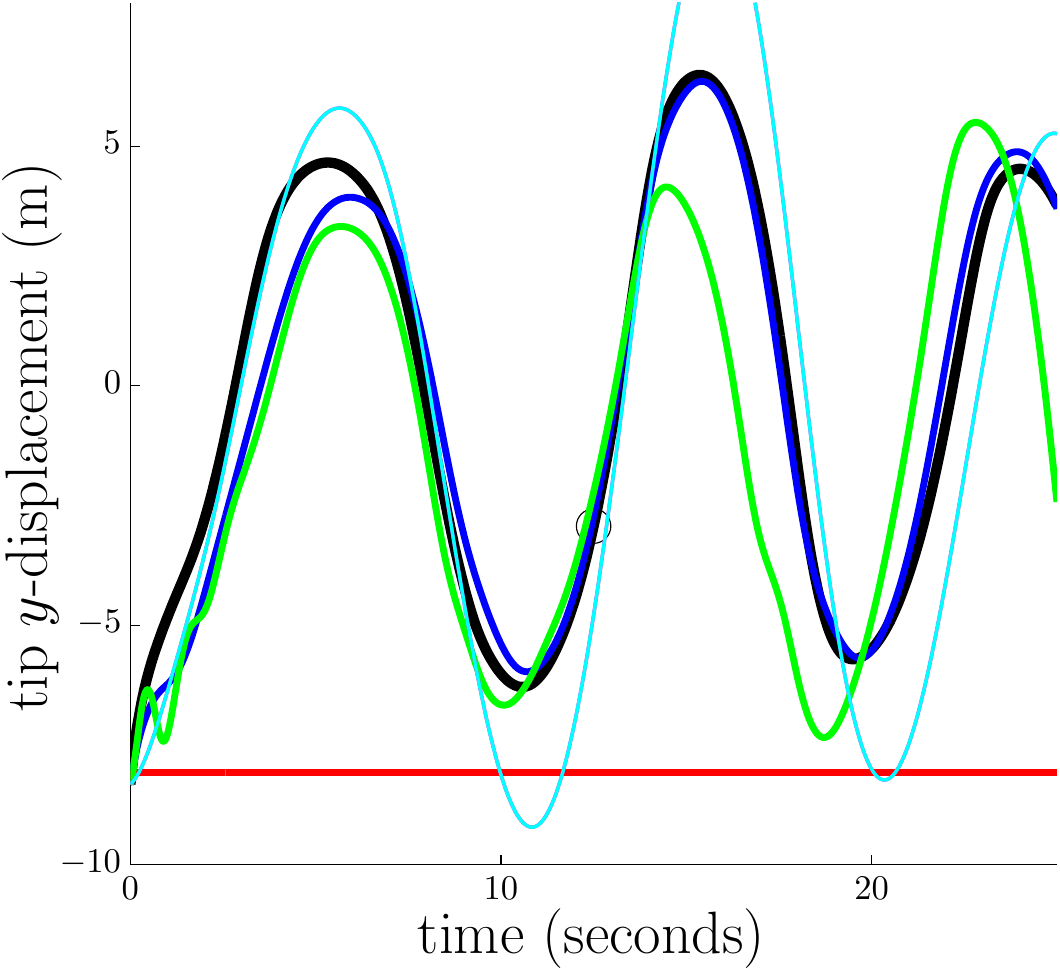}
\includegraphics[width=.3\textwidth]{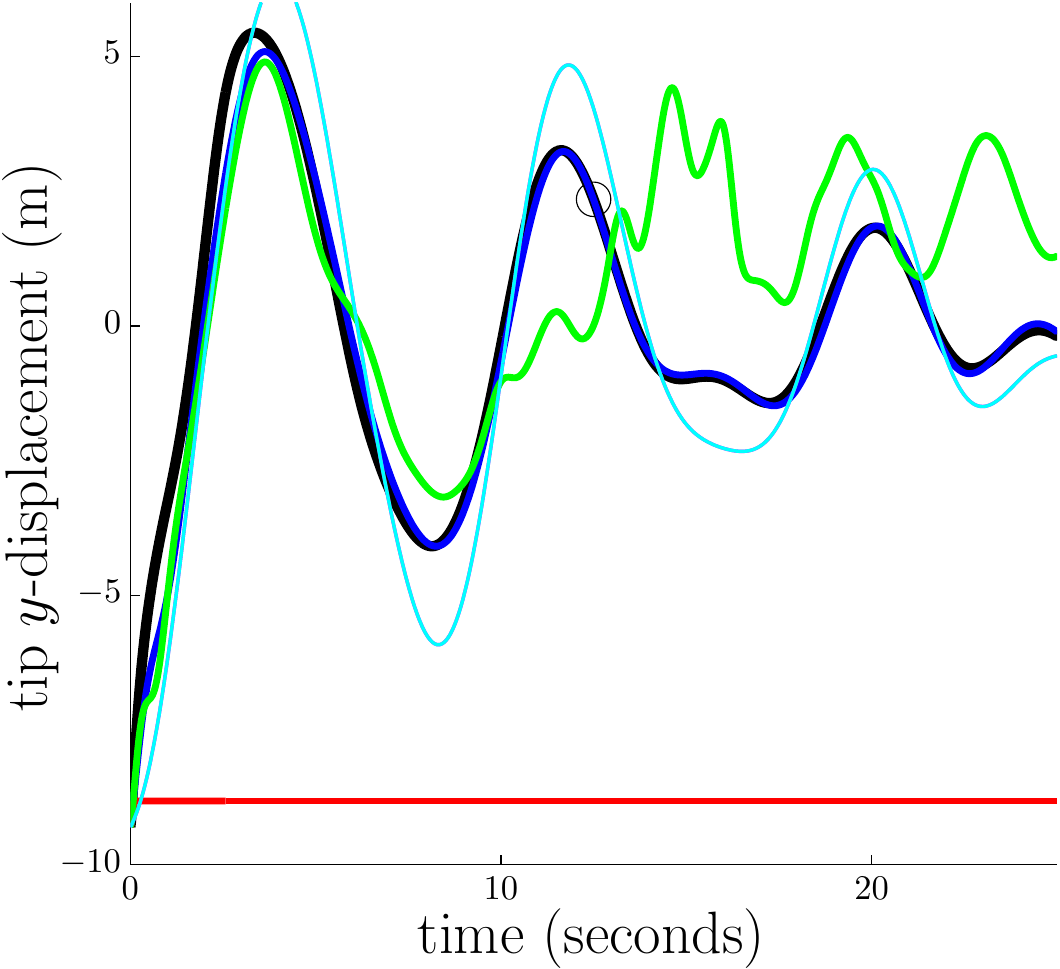}
\includegraphics[width=.3\textwidth]{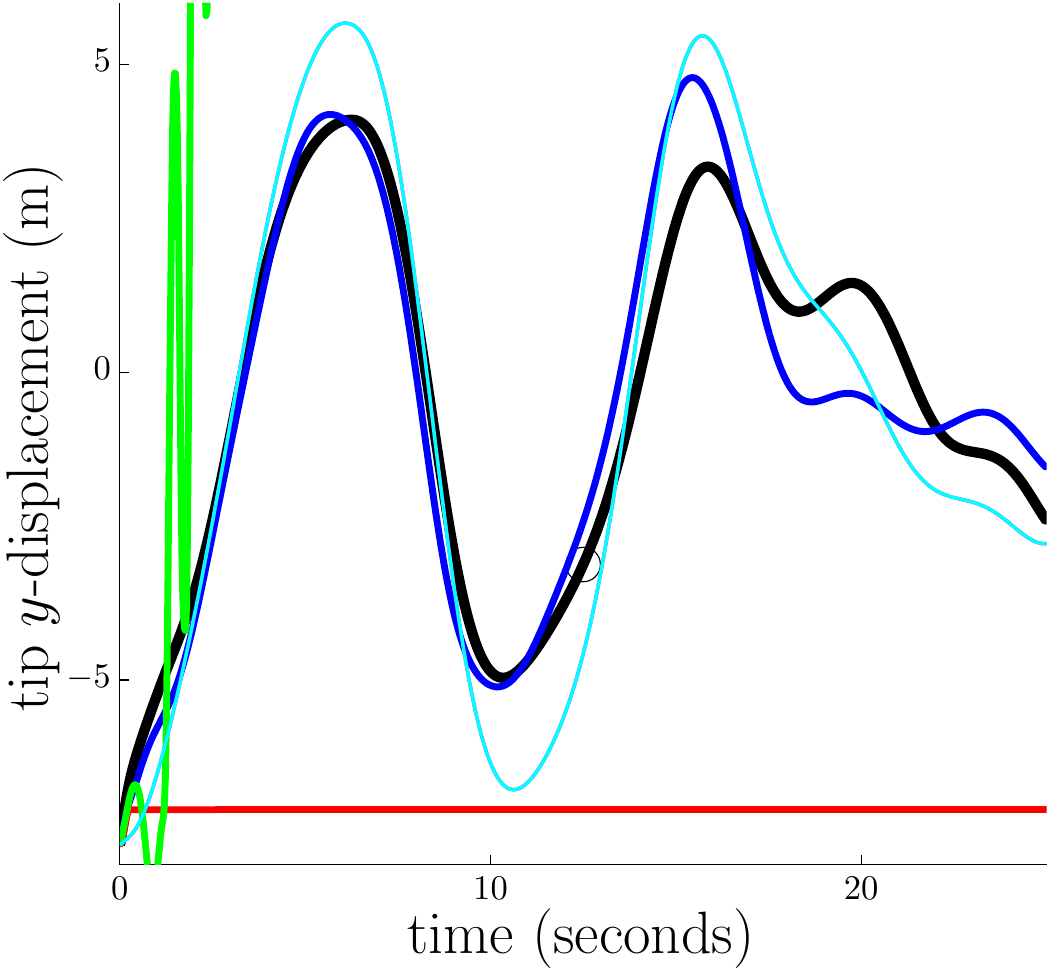}}
\subfigure[predictions at three randomly chosen online points, $80\%$ sampling]
{\includegraphics[width=.3\textwidth]{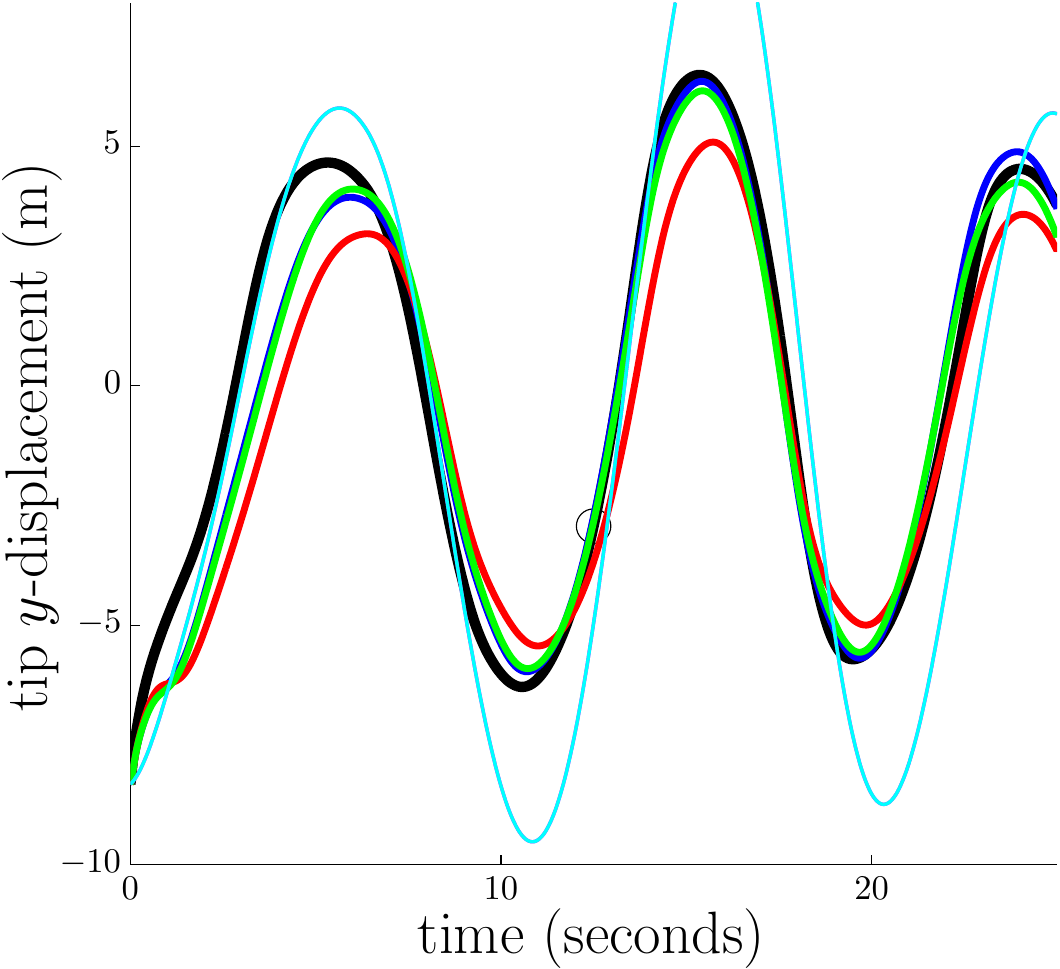}
\includegraphics[width=.3\textwidth]{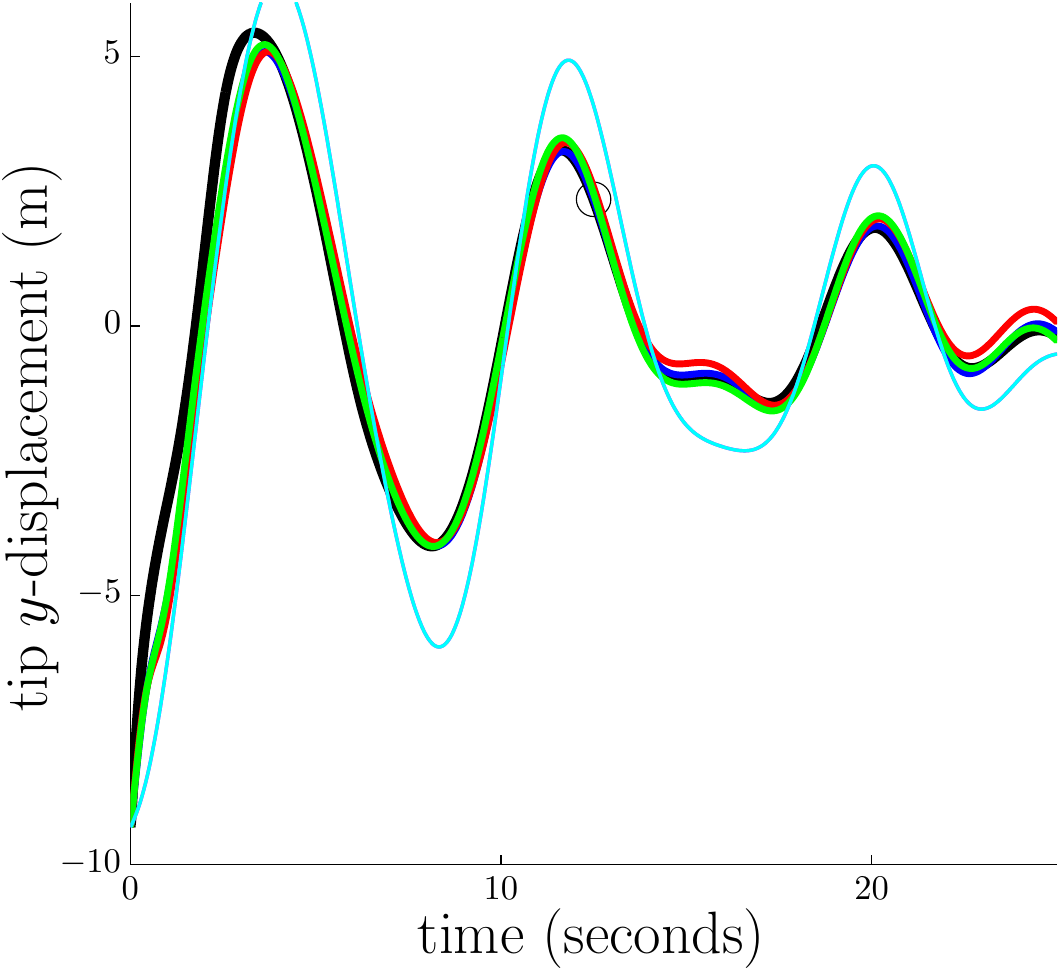}
\includegraphics[width=.3\textwidth]{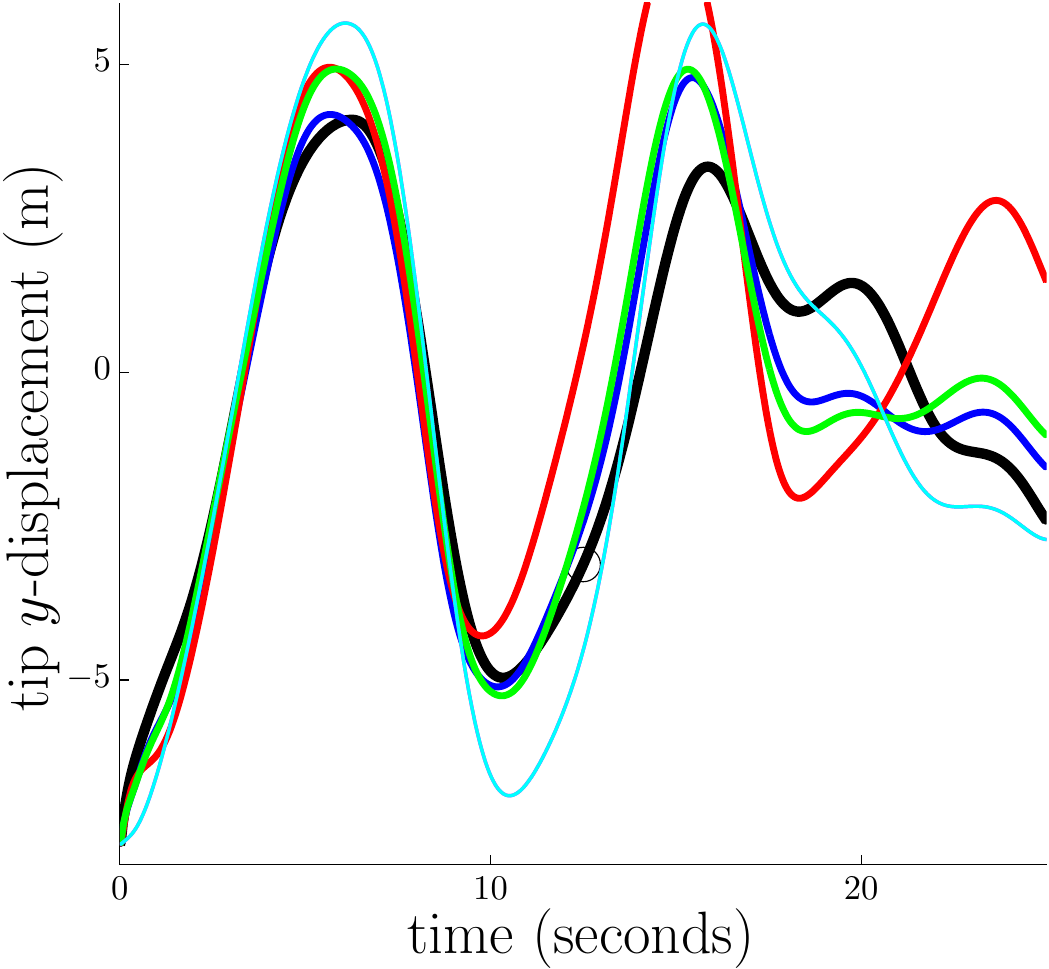}}
\caption{Non-conservative, highly nonlinear parameter-varying case: reduced-order model responses as
a function of sampling percentage $\nsample/N\times 100\%$. Legend: full-order model (black),
Galerkin ROM (dark blue), structure-preserving ROM method 1 (magenta), structure-preserving ROM method 2 (light blue), gappy POD ROM
(red), collocation ROM (green), end of training time interval (black circle).}\label{fig:nonConsNonlinPredictROM}
\end{figure}

\begin{figure}[htbp]
\centering
\subfigure[online point 1]
{\includegraphics[width=.48\textwidth]{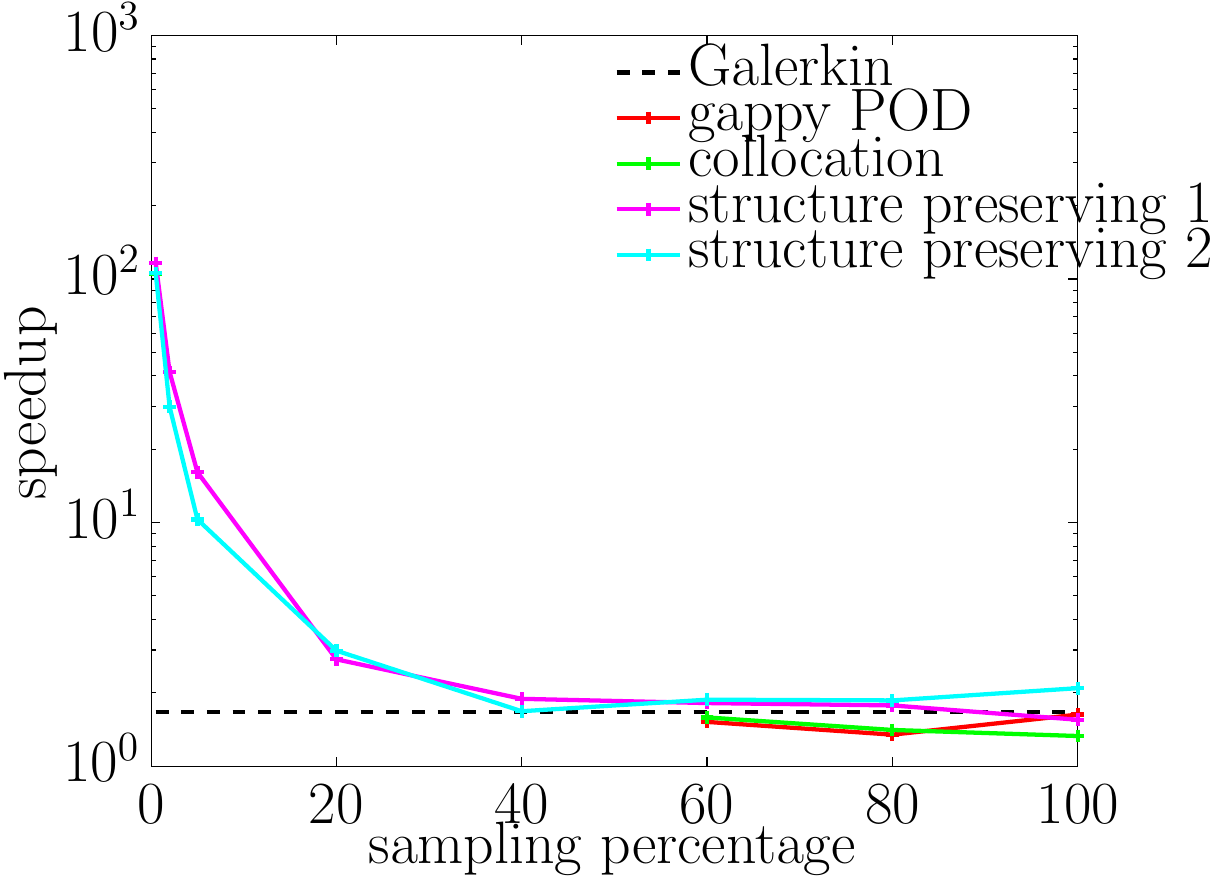}
\includegraphics[width=.48\textwidth]{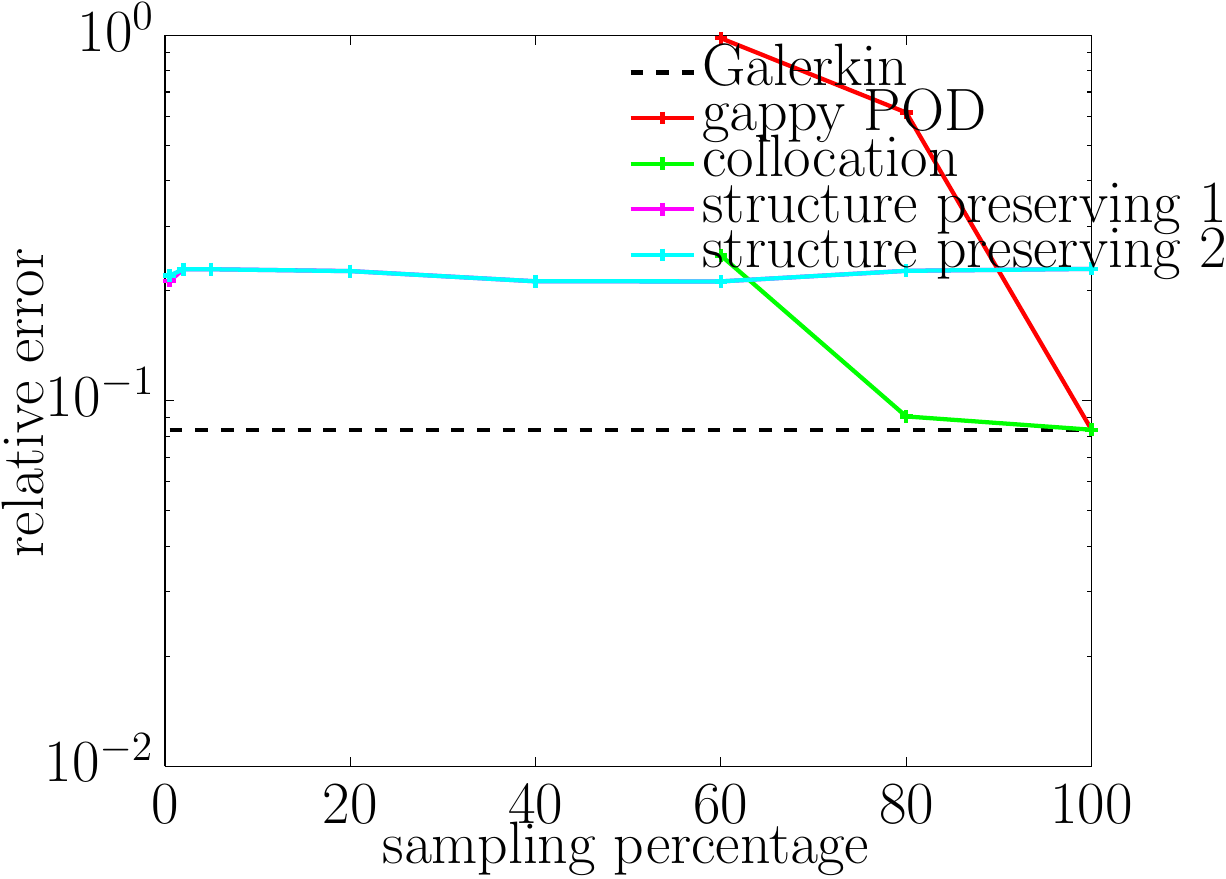}}
\subfigure[online point 2]
{\includegraphics[width=.48\textwidth]{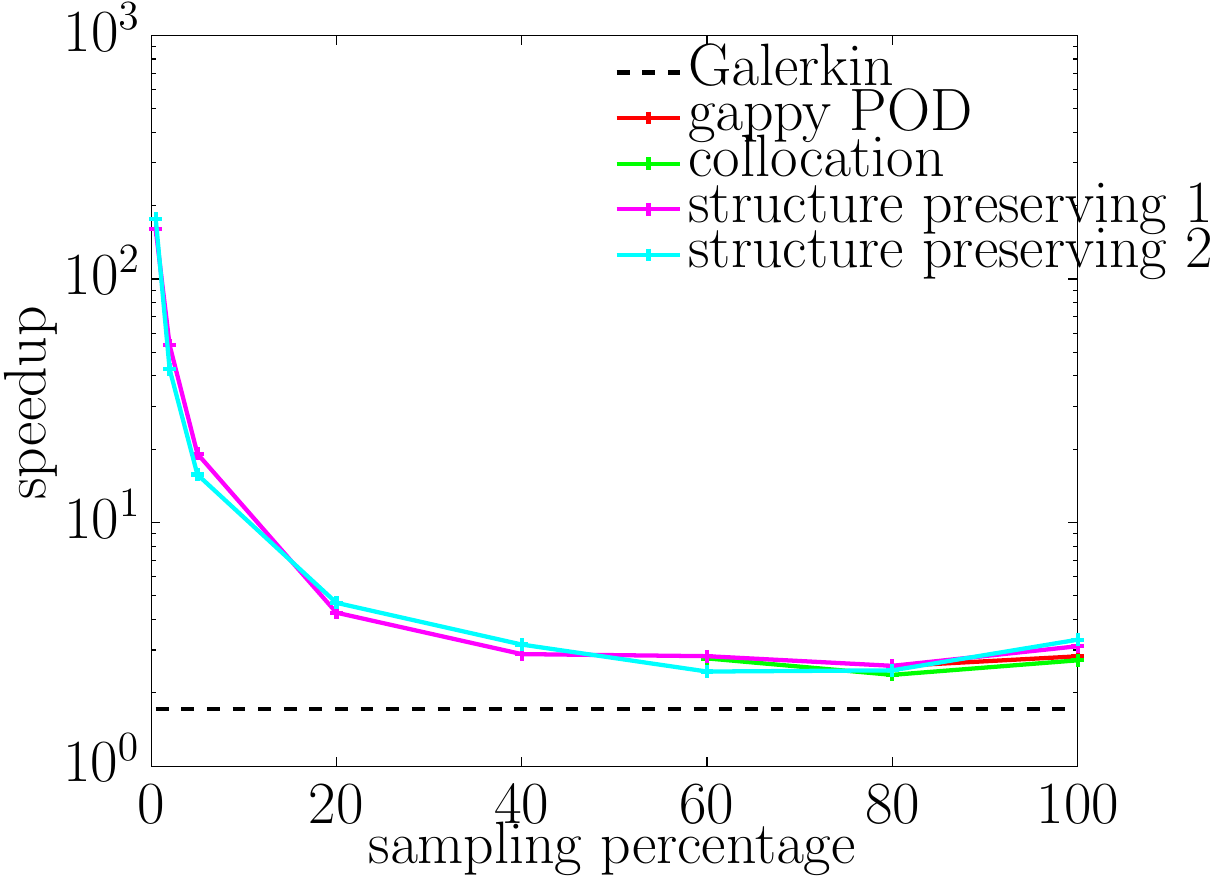}
\includegraphics[width=.48\textwidth]{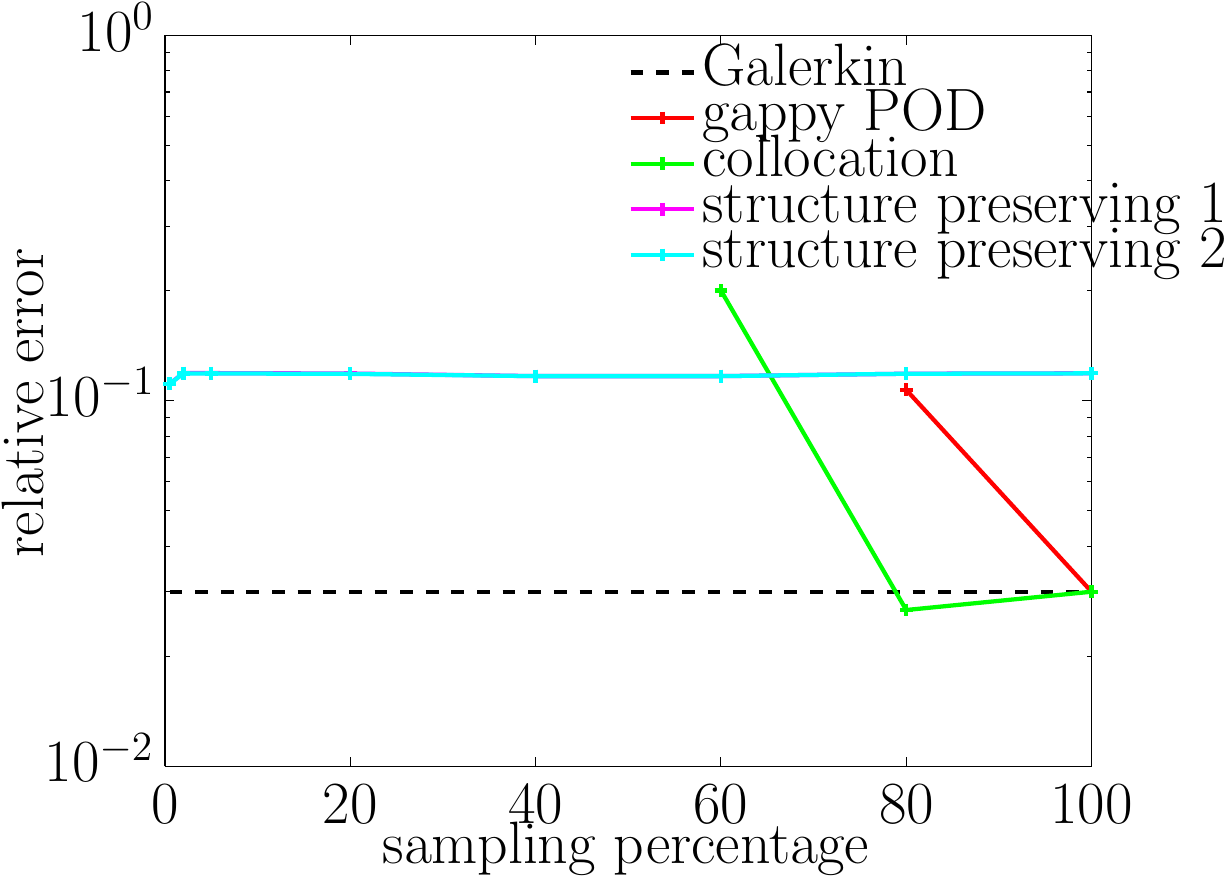}}
\subfigure[online point 3]
{\includegraphics[width=.48\textwidth]{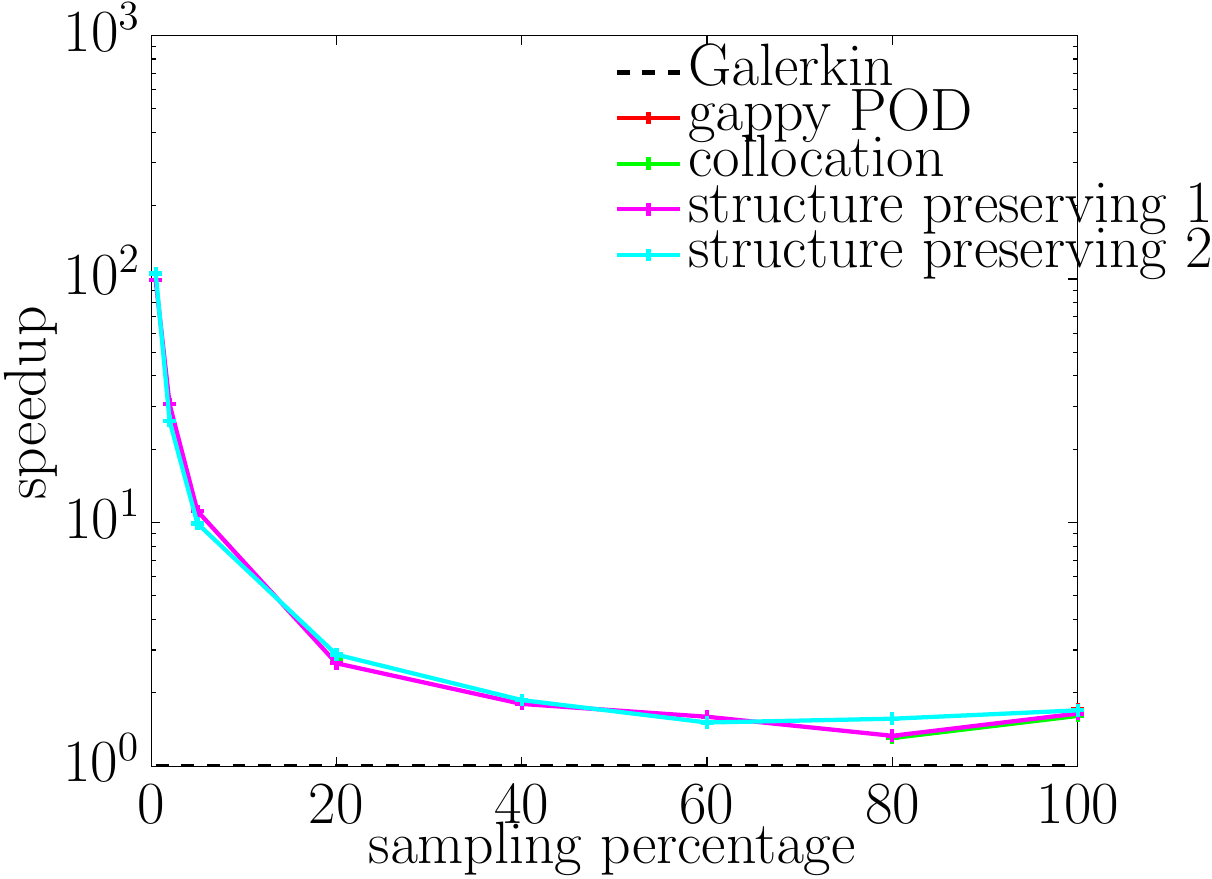}
\includegraphics[width=.48\textwidth]{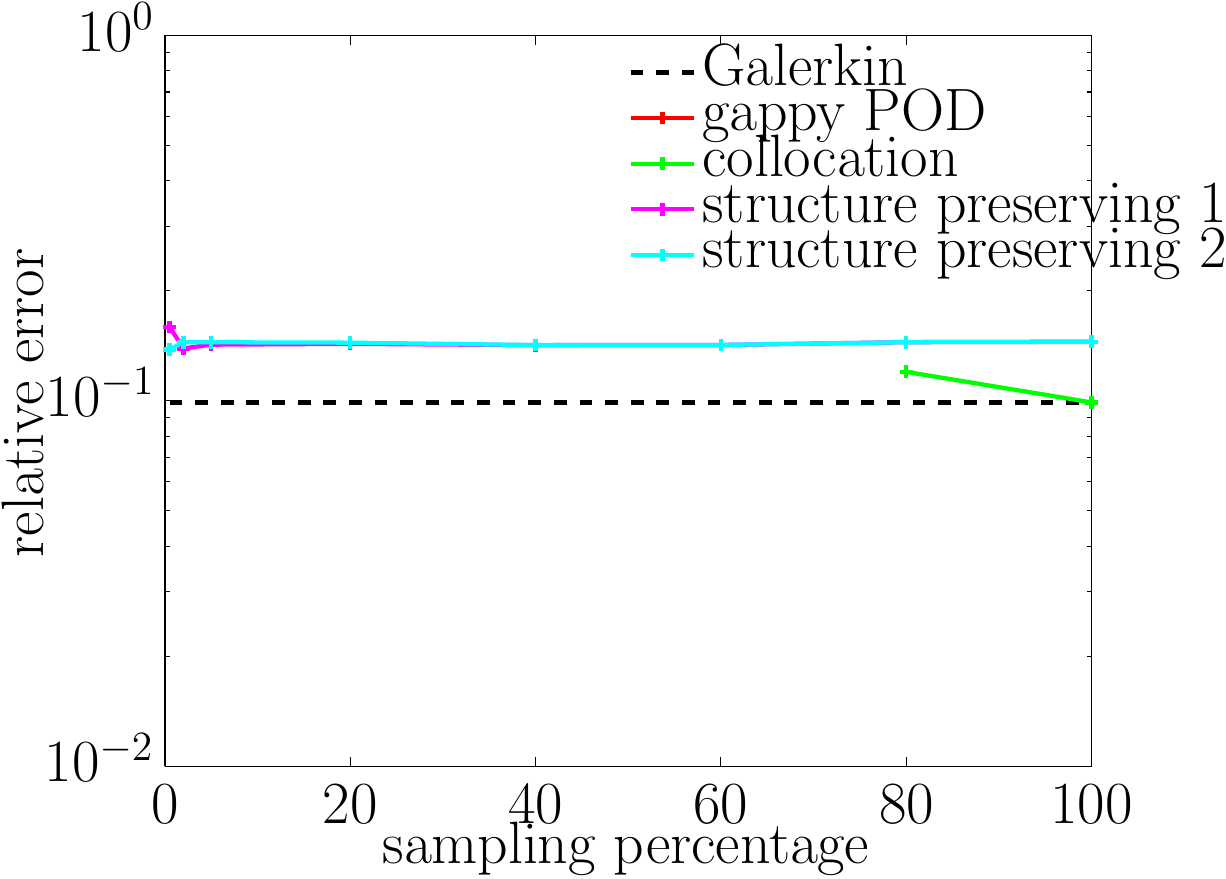}}
\caption{Non-conservative, highly nonlinear parameter-varying case: reduced-order model performance as
a function of sampling percentage $\nsample/N\times 100\%$.}\label{fig:nonconsNonlinPredictPerformError}
\end{figure}

\section{Conclusions}\label{sec:conclusions}
This paper has presented an efficient structure-preserving model-reduction
strategy applicable to simple mechanical systems. The methodology directly approximates
the quantities that define the problem's Lagrangian structure and subsequently
derives the equations of motion, while ensuring
low online computational cost. The method is distinct from typical model-reduction
methods for nonlinear ODEs; these methods are typically based on collocation
and DEIM/gappy POD techniques that approximate the equations of motion and destroy Lagrangian structure.
At the core of the methodology are the reduced-basis sparsification (RBS) and
matrix gappy POD techniques for approximating parameterized reduced matrices
while preserving symmetry and positive definiteness; we also employed the former method to
	preserve potential-energy structure.

Numerical experiments on a geometrically nonlinear parameterized truss
structure highlight the method's benefits: preserving Lagrangian structure
ensured the method always generated stable responses that were often very
accurate. Other model-reduction
techniques were often unstable; achieving stability usually required too many
sample indices to lead to significant performance gains for those methods. The
experiments also showed that both RBS and matrix gappy POD led to nearly the
same performance across a range of experiments. 

Future work includes devising a method to improve the method's robustness in
the presence of strong nonlinearity (e.g., by non-local approximation of the
potential-energy function), applying the method to a
truly large-scale problem, devising a technique-specific method for choosing
the sample indices, and deriving error bounds and error estimates that
rigorously assess the accuracy of the method's predictions.  Finally, the RBS
and matrix gappy POD methods are relevant to a wider class of problems than
model reduction for Lagrangian systems; future work will investigate to their
applicability to other scenarios.

\appendix

\section{Hamiltonian dynamics}\label{app:Hamiltonian}
When the Hamiltonian formulation of classical mechanics is taken, the proposed
structure-preserving reduced-order models also preserve problem structure. For
simplicity, we consider conservative systems with no dissipation or applied
external forces. The Hamiltonian ingredients for conservative simple
mechanical systems are then the same as those for Lagrangian dynamical
systems as described in Section \ref{sec:intro}:
 \begin{itemize} 
  \item 
A differentiable configuration manifold $\Q$, which we set to $\Q =
\RR{N}$.
\item A parameterized Riemannian metric $g(\vVec , \w ;\param)$, which we set
to 
$g(\vVec , \w ; \param) = \vVec ^T\Mparam \w $, 
where $\Mparam$ denotes the $N\times N$ parameterized symmetric positive-definite mass matrix.
 \item A parameterized potential-energy function 
 $\potEn:\Q\times \paramDomain\rightarrow \RR{}$.
	 \end{itemize}
Again, the kinetic energy can be expressed as $T(\dot\q;\param) = \half g(\dot \q ,\dot \q
;\param) = \half\dot \q ^T\Mparam\dot \q $, and the Lagrangian becomes
	$L(\q ,\dot \q ;\param) = \half\dot \q ^T\Mparam\dot \q  - \potEn(\q
	;\param)$.

The conjugate momenta $\p:\left[0,\lastT\right]\times
\paramDomain\rightarrow\RR{N}$ can then be derived as
\begin{align}
\p(t;\param)&=\frac{\partial L}{\partial \dot \q}\left(t;\param\right)\\
&=\Mparam\dot\q(t;\param).
\end{align}
In terms of the conjugate momenta, the kinetic energy then becomes
$T_\q(\p;\param) = \p^T\Mparam^{-1}\p$. By definition,
the Hamiltonian $H:\RR{N}\times\RR{N}\times\paramDomain\rightarrow\RR{}$
is the Legendre transformation of the Lagrangian function:
\begin{align}
H(\p,\q;\param) &= \dot\q^T\p - L(\q,\dot\q,t)\\
&= \frac{1}{2}\p^T\Mparam^{-1}\p + V(\q;\param).
\end{align}
Equivalently, it is the \emph{total energy}, or sum of the kinetic and
potential energies for classical mechanical systems.
The equations of motion can then be obtained by applying Hamilton's equations
of motion
 \begin{align} 
\dot\q &= \frac{\partial H}{\partial \p}\\
\dot\p &= -\frac{\partial H}{\partial \q},
  \end{align} 
which are equivalent to 
 \begin{gather} 
\dot\q = \Mparam\q\\
\Mparam\ddot\q  + \nabla_\q V(\q;\param)=0
  \end{gather} 
	from the definition of the Hamiltonian and conjugate momenta. Note that
	these are equivalent to the equations of motion for a simple mechanical system 
	\eqref{eq:sdmEom} in the conservative case derived using the Lagrangian
	formalism.

\subsection{Galerkin reduced-order model}
To construct the structure-preserving Galerkin reduced-order model for the
Hamiltonian formalism, we follow this same recipe, but with the reduced
ingredients defined previously. First, we define the
  reduced configuration space $\Qred\in\RR{\nstate}$ with 
$\Qredfull\equiv \{\qRef + \podstate \qred\ |\ \qred \in \Qred\}$, and
subsequently the reduced Lagrangian from
Eqs.~\eqref{eq:redLagrangian1}--\eqref{eq:redLagrangian}
 \begin{align} 
 \Lred(\qred,\dot\qred;\param)&\equiv L(\qRef + \podstate
 \qred,\podstate\dot\qred;\param)\\
 \label{eq:approxRedLagrangian}&=\frac{1}{2}\dot\qred^T\podstate^T\Mparam\podstate\dot\qred -
 \potEn(\qRef+\podstate\qred;\param).
 \end{align} 
 Then, the reduced conjugate momenta $\pRed:\left[0,\lastT\right]\times
\paramDomain\rightarrow\RR{\nstate}$ can be computed as
\begin{align}
\pRed(t;\param)&=\frac{\partial L_r}{\partial \dot \qr}\left(t;\param\right)\\
&=\podstate^T\Mparam\podstate\dot\qr(t;\param),
\end{align}
and the reduced Hamiltonian is, by definition,
\begin{align}
H_r(\pRed,\qr;\param) &= \dot\qr^T\pRed - L(\qr,\dot\qr,t)\\
&= \frac{1}{2}\pRed^T\left(\podstate^T\Mparam\podstate\right)^{-1}\pRed + V(\qRef+\podstate\qred;\param).
\end{align}
Applying Hamilton's equations of motion then yields
 \begin{align} 
\dot\qr &= \frac{\partial H_r}{\partial \pRed}\\
\dot\pRed &= -\frac{\partial H_r}{\partial \qr},
  \end{align} 
or equivalently
 \begin{gather} 
\dot\qr = \podstate^T\Mparam\podstate\qr\\
\podstate^T\Mparam\podstate\ddot\qr  + \podstate^T\nabla_\q
V(\qRef+\podstate\qred;\param)=0
  \end{gather} 
Again, these are equivalent to the Galerkin reduced-order equations of motion
\eqref{eq:lagrangeGal} obtained from the Lagrangian perspective in the absence of non-conservative
forces. Clearly, this formulation for a reduced-order model preserves
Hamiltonian structure, as the equations of motion are consistent with the
Hamiltonian formalism.

\subsection{Proposed structure-preserving methods}
The proposed methods preserve Hamiltonian structure. The derivation follows
that of the Galerkin reduced-order model in the previous section, but with the
reduced Lagrangian replaced by the \emph{approximated} reduced Lagrangian from
Eq.~\eqref{eq:approxRedLagrangian}:
 \begin{align} 
\Lredapprox(\qred,\dot \qred;\param) &= \half\metricRedApprox(\dot\qred,\dot\qred;\param)
 - \potEnRedApprox(\qred;\param)  \\
&= \half\dot\qred^T\matApproxM\dot\qred
 - \potEnRedApprox(\qred;\param), 
 \end{align} 
 where $\matApproxM$ is an $\nstate\times\nstate$ symmetric positive
 definite matrix generated by one of the methods presented in Section \ref{sec:matrixApprox},
 and $\potEnRedApprox(\qred;\param)$ is an approximated potential approximated
 according to the method outlined in Section \ref{sec:potEnGen}.

The approximated reduced conjugate momenta are then
\begin{align}
\tilde\pRed(t;\param)&=\frac{\partial \Lredapprox}{\partial \dot \qr}\left(t;\param\right)\\
&=\matApproxM\dot\qr(t;\param),
\end{align}
and the approximated reduced Hamilton is
\begin{align}
\tilde H_r(\tilde\pRed,\qr;\param) &= \dot\qr^T\tilde\pRed - L(\qr,\dot\qr,t)\\
&= \frac{1}{2}\tilde\pRed^T\matApproxM^{-1}\tilde\pRed +
\potEnRedApprox(\qred;\param).
\end{align}
Applying Hamilton's equations yields the following equations of motion
 \begin{gather} 
\dot\qr = \matApproxM\qr\\
\matApproxM\ddot\qr  + \nabla_\qr
\potEnRedApprox(\qred;\param)=0,
  \end{gather} 
	which are computationally inexpensive to solve, and also retain
	Hamiltonian structure, as they are consistent with the Hamiltonian
	formalism. Note that again these are equivalent to the equations of motion
	obtained with the proposed strategy in the Lagrangian case with
	non-conservative forces set to zero.

\section{Proper orthogonal decomposition} \label{app:POD}
Algorithm \ref{PODSVD} describes the method for computing a
proper orthogonal decomposition (POD) basis given a set of snapshots. The method
amounts to computing the singular value decomposition of the
snapshot matrix; the left singular vectors define the POD basis.

\begin{algorithm}[htbp]
\caption{Proper-orthogonal-decomposition basis computation (normalized
snapshots)}
\begin{algorithmic}[1]\label{PODSVD}
\REQUIRE Set of snapshots $\snapsNo\equiv\{\snapvec _i\}_{i=1}^\nsnap\subset\RR{N}$,
energy criterion $\energyCrit\in[0,1]$
\ENSURE  $\podArgs{\snapsNo}{\energyCrit}$
\STATE\label{step:SVD} Compute the thin singular value decomposition $
\snapmat= \U \Sigma \V^T $, where $\snapmat\equiv\left[\snapvec _1/\|\snapvec _1\|\ \cdots\
\snapvec _{n_\snapvec }/\|\snapvec _{n_\snapvec }\|\right]$.
\STATE Choose dimension of truncated basis 
$\nstate = \nenergy(\energyCrit)$, where 
 \begin{align} 
 \nenergy(\energyCrit) &\equiv \min_{i\in \mathcal V(\energyCrit)}i\\
 \mathcal V(\energyCrit)&\equiv \{n\in\{1,\ldots,\nsnap\}\ | \
 \sum_{i=1}^n\sigma_i^2/\sum_{j=1}^{\nsnap}\sigma_j^2\geq\energyCrit\},
  \end{align} 
	and $\Sigma \equiv \mathrm{diag}\left(\sigma_i\right)$ with
	$\sigma_1\geq\cdots\geq \sigma_\nsnap\geq 0$.
\STATE $\podArgs{\snapsNo}{\energyCrit}=\vecmat{\leftSing}{\nstate}$, where $\U
\equiv\vecmat{\leftSing}{\nsnap}$.
\end{algorithmic}
\end{algorithm}

\section{Solving the matrix gappy POD optimization
problem}\label{app:optimization}
This approach reformulates the constraints of problem \eqref{eq:optProblem} 
in terms of eigenvalues of the reduced matrix.
That is, problem \eqref{eq:optProblem} is reformulated as
\begin{align}\label{eq:optProblemMod}
\begin{split}
\underset{x\equiv\left(x_1,\ldots,x_\nA\right)}{\mathrm{minimize}}\quad&
\|
\sampleMatT\Aparam\sampleMat 
-
\sum\limits_{k=1}^\nA
\sampleMatT\Abasis{k} \sampleMat x_k
\|_F^2\\
\mathrm{subject\ to}\quad &\approxlambda_j(x)\geq \epsilon>0, \quad j =
1,\ldots,\nstate.\\
\end{split}
\end{align}
Here, $\approxlambda_j(x)$, $j=1,\ldots,n$ are the eigenvalues of 
the low-dimensional matrix $\sum\limits_{i=1}^\nA \podstate^T\Abasis{i}\podstate
x_i$ and $\epsilon$ denotes a numerical threshold for defining a full-rank matrix.
This problem can be numerically solved, e.g.,  using a gradient-based algorithm. 

The gradient of the quadratic objective function is obvious. The gradient of the
constraint can be derived by assuming distinct eigenvalues:
 \begin{align} \label{eq:eigDerivDistinct1}
\frac{\partial
\approxlambda_j}{\partial x_i} &= \approxeig_j^T\frac{\partial\left(\sum\limits_{k=1}^\nA \podstate^T\Abasis{k}\podstate
x_k\right)}{\partial x_i}\approxeig_j\\
\label{eq:eigDerivDistinct2}&= \approxeig_j^T\left(\podstate^T\Abasis{i}\podstate
\right)\approxeig_j.
 \end{align} 
Here, $\approxeig_j$ is the eigenvector associated with eigenvalue
$\approxlambda_j$. This indicates that computing the gradient $\frac{\partial
\approxlambda_j}{\partial x_i}$ is inexpensive and requires the
following steps:
 \begin{enumerate} 
 \item Compute the eigenvector $\approxeig_j\in\RR{n}$  of the  matrix
 $\sum\limits_{k=1}^\nA \podstate^T\Abasis{k}\podstate
x_k$.
 \item Compute the low-dimensional matrix--vector product
$\w  = \left(\podstate^T\Abasis{i}\podstate
\right)\approxeig_j$.
 \item Compute the low-dimensional vector--vector product $\approxeig_j^T\w$.
 \end{enumerate}

We propose using the unconstrained solution to problem \eqref{eq:optProblemMod} as the
initial guess. In practice, this solution is often feasible, so it is
typically unnecessary to handle the constraints directly.  In the rare
cases where it is necessary to deal with 
multiple equal eigenvalues---or a number of nearby eigenvalues---the methods
presented by Andrew and Tan \cite{AT98} can be used to produce a numerically stable gradient of the
constraint; this was not required in the numerical experiments reported in
Section \ref{sec:experiments}.


\section{Proofs}

\subsection{Proof of Theorem \ref{thm:quadCase}}\label{app:quadCase}

The proof relies on a generalization of the 
well known Cauchy interlacing thereom. To prove this generalization, we 
use a
theorem (Theorem 4.3.10) from Ref.~\cite{horn1990matrix} that we restate
below.
\begin{theorem}
Let two sequences of interlacing real numbers be given by
$(\lambda^{(r)}_i)_{i=1}^\nstate$
and $(\lambda^{(s)}_i)_{i=1}^\nsample$ as described  by inequality~\eqref{new interlace}
when $ \nsample = \nstate +1$.
Define $\LambdaMat^{(r)} = \diag(\lambda^{(r)}_i)$ and
$\LambdaMat^{(s)} = \diag(\lambda^{(s)}_i)$. 
Then, there exists
a real number $\alpha\in\RR{}$ and a vector $\y \in \RR{\nstate} $ such that 
$\LambdaMat^{(s)}$ 
are the eigenvalues of the real symmetric matrix 
$$
\hat{\B }^{(\bordered)} \equiv \left (
\begin{matrix}
\LambdaMat^{(r)} & \y \cr \y^T & \alpha \end{matrix} \right ) .
$$
\end{theorem}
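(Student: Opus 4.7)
The plan is to construct $\alpha \in \RR{}$ and $\y \in \RR{\nstate}$ explicitly by matching the characteristic polynomial of $\hat{\B}^{(\bordered)}$ to the prescribed target $q(\lambda) \equiv \prod_{j=1}^{\nstate + 1}(\lambda - \lambda^{(s)}_j)$. Expanding the block determinant via a Schur-complement expansion about the bottom-right entry and letting $p(\lambda) \equiv \prod_{i=1}^{\nstate}(\lambda - \lambda^{(r)}_i)$, one obtains
$$\det(\lambda I - \hat{\B}^{(\bordered)}) = (\lambda - \alpha)\,p(\lambda) - \sum_{i=1}^{\nstate} y_i^2 \prod_{k \neq i}(\lambda - \lambda^{(r)}_k),$$
a monic polynomial of degree $\nstate+1$ whose $\nstate + 1$ free scalars $(\alpha, y_1^2, \ldots, y_\nstate^2)$ can be pinned down by imposing $\nstate + 1$ scalar identities.

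Assuming first that the $\lambda^{(r)}_i$ are pairwise distinct, I would evaluate the matching identity at $\lambda = \lambda^{(r)}_j$, which annihilates $p(\lambda)$ and all but the $j$th term of the sum, yielding the closed form
$$y_j^2 = -\frac{q(\lambda^{(r)}_j)}{\prod_{k \neq j}(\lambda^{(r)}_j - \lambda^{(r)}_k)}.$$
The scalar $\alpha$ is then fixed by comparing the $\lambda^{\nstate}$ coefficients (the trace identity), giving $\alpha = \sum_j \lambda^{(s)}_j - \sum_i \lambda^{(r)}_i$. The hard part---and the only place where the interlacing hypothesis enters---is verifying that the formula for $y_j^2$ is non-negative so that a real $y_j$ exists. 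This reduces to a sign tally: under $\lambda^{(s)}_i \le \lambda^{(r)}_i \le \lambda^{(s)}_{i+1}$, exactly $\nstate + 1 - j$ of the factors $\lambda^{(r)}_j - \lambda^{(s)}_i$ (those with $i > j$) are non-positive while the remaining $j$ are non-negative, and the denominator contains exactly $\nstate - j$ negative factors among $\nstate - 1$ total; the resulting overall sign is $(-1)\cdot(-1)^{\nstate+1-j}/(-1)^{\nstate-j} = +1$.

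Once $(\alpha, \y)$ are chosen in this way, the full polynomial identity follows automatically: the difference of the two monic degree-$(\nstate+1)$ polynomials has degree at most $\nstate - 1$ yet vanishes at the $\nstate$ distinct points $\lambda^{(r)}_j$, hence is identically zero, so every coefficient agrees and $\hat{\B}^{(\bordered)}$ has exactly the stipulated spectrum. The degenerate case of repeated $\lambda^{(r)}_i$ would be handled by a short continuity argument---perturb $\LambdaMat^{(r)}$ to nearby simple values while preserving interlacing with $\LambdaMat^{(s)}$, apply the construction, and take the limit using compactness of the candidate $(\alpha,\y)$---or by induction: if $\lambda^{(r)}_j = \lambda^{(r)}_{j+1}$, interlacing forces some $\lambda^{(s)}_k$ to coincide with this common value, so setting $y_j = 0$ peels off that diagonal eigenvalue and reduces the problem by one dimension. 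The main obstacle throughout is the sign analysis certifying $y_j^2 \geq 0$; the remaining steps are essentially polynomial bookkeeping.
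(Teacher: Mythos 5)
Your proof is correct, but note that the paper does not actually prove this statement: it restates it verbatim as Theorem 4.3.10 of Horn and Johnson \cite{horn1990matrix} and uses it as a black box, so you have in effect supplied the proof the paper delegates to its reference. Your argument is the standard one for that textbook result: the Schur-complement expansion of $\det(\lambda I - \hat{\B}^{(\bordered)})$ is valid (first for $\lambda$ away from the $\lambda^{(r)}_i$, then for all $\lambda$ by polynomial continuation), the formula $y_j^2 = -q(\lambda^{(r)}_j)/\prod_{k\neq j}(\lambda^{(r)}_j-\lambda^{(r)}_k)$ together with the trace identity for $\alpha$ pins down all $\nstate+1$ unknowns, and your sign tally is right --- the numerator $-q(\lambda^{(r)}_j)$ carries sign $-(-1)^{\nstate+1-j}$ and the denominator carries sign $(-1)^{\nstate-j}$, so interlacing indeed forces $y_j^2\ge 0$. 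The closing step (the difference of the two monic polynomials has degree at most $\nstate-1$ once the traces are matched and vanishes at the $\nstate$ distinct points $\lambda^{(r)}_j$) is sound, and either of your two routes for repeated $\lambda^{(r)}_i$ works; the deflation route is the cleaner one, since interlacing forces $\lambda^{(s)}_{j+1}$ to coincide with a repeated $\lambda^{(r)}_j$, and for the compactness route one should note that $\|\y\|_2^2$ is controlled by $\mathrm{tr}\bigl((\hat{\B}^{(\bordered)})^2\bigr)$ and hence stays bounded under the perturbation.
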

The following corollary is a direct consequence of the above theorem.
\begin{corollary}
Given $\hat{\B }^{(s)} \in \spd{\nsample}$ 
and   $\hat{\B }^{(r)} \in \spd{\nsample-1}$, where $\spd{k}$ denotes
the set of $k\times k$ symmetric
 positive-definite matrices,
whose eigenvalues interlace, then
\begin{equation} \label{the corollary}
\exists~\U _{\nsample} ~\mbox{such that }
\U _{\nsample}^T \hat{\B }^{(s)} \U _{\nsample} = \hat{\B }^{(r)}
~~~~\mbox{with}~~~~\U_{\nsample}^T\U_{\nsample}=I.
\end{equation}
\end{corollary}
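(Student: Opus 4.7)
The plan is to build $\U_{\nsample}$ by composing three orthogonal similarities: one that diagonalizes $\hat{\B}^{(s)}$ into the bordered form guaranteed by the preceding theorem, one that picks out the leading $(\nsample-1) \times (\nsample-1)$ principal block, and one that diagonalizes $\hat{\B}^{(r)}$. Concretely, let $\{\lambda_i^{(r)}\}_{i=1}^{\nsample-1}$ and $\{\lambda_i^{(s)}\}_{i=1}^{\nsample}$ denote the (sorted) eigenvalues of $\hat{\B}^{(r)}$ and $\hat{\B}^{(s)}$ respectively, and set $\LambdaMat^{(r)} = \diag(\lambda_i^{(r)})$.

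First, I invoke the preceding theorem (with its $\nstate$ playing the role of $\nsample-1$): the interlacing hypothesis yields some $\alpha \in \RR{}$ and $\y \in \RR{\nsample-1}$ such that the bordered matrix
\begin{equation*}
\hat{\B}^{(\bordered)} \equiv \begin{pmatrix} \LambdaMat^{(r)} & \y \\ \y^T & \alpha \end{pmatrix}
\end{equation*}
has eigenvalues $\{\lambda_i^{(s)}\}_{i=1}^{\nsample}$. Since $\hat{\B}^{(\bordered)}$ and $\hat{\B}^{(s)}$ are real symmetric with the same spectrum, by the spectral theorem they are orthogonally similar: there exists an orthogonal $\Qmat \in \RR{\nsample \times \nsample}$ with $\Qmat^T \hat{\B}^{(s)} \Qmat = \hat{\B}^{(\bordered)}$. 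Similarly, the spectral theorem applied to $\hat{\B}^{(r)}$ gives an orthogonal $\mathbf{W} \in \RR{(\nsample-1) \times (\nsample-1)}$ with $\hat{\B}^{(r)} = \mathbf{W} \LambdaMat^{(r)} \mathbf{W}^T$.

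Second, observe that $\LambdaMat^{(r)}$ is the leading principal block of $\hat{\B}^{(\bordered)}$, so $\LambdaMat^{(r)} = \IO \, \hat{\B}^{(\bordered)} \IOT$, where $\IO \in \RR{(\nsample-1)\times \nsample}$ is the truncation operator. Chaining these three identities yields
\begin{equation*}
\hat{\B}^{(r)} \;=\; \mathbf{W} \, \IO \, \Qmat^T \, \hat{\B}^{(s)} \, \Qmat \, \IOT \, \mathbf{W}^T,
\end{equation*}
which motivates the definition $\U_{\nsample} \equiv \Qmat \, \IOT \, \mathbf{W}^T \in \RR{\nsample \times (\nsample-1)}$. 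A direct computation using orthogonality of $\Qmat$ and $\mathbf{W}$ and the identity $\IO \, \IOT = \I$ confirms both $\U_{\nsample}^T \hat{\B}^{(s)} \U_{\nsample} = \hat{\B}^{(r)}$ and $\U_{\nsample}^T \U_{\nsample} = \I$, completing the proof.

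The only nontrivial step is invoking the preceding theorem, which delivers the bordered matrix; the remainder is an assembly of standard spectral-theorem factorizations. The main technical care required is bookkeeping of dimensions (the corollary's $\nsample$ corresponds to the theorem's $\nstate + 1$) and verifying that the interlacing hypothesis on the eigenvalues of the SPD matrices $\hat{\B}^{(r)}, \hat{\B}^{(s)}$ translates directly into the interlacing required by the theorem.
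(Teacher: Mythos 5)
Your proof is correct and follows essentially the same route as the paper's: invoke the bordered-matrix theorem, use equality of spectra to get an orthogonal similarity between $\hat{\B}^{(s)}$ and $\hat{\B}^{(\bordered)}$, extract $\LambdaMat^{(r)}$ as the leading principal block via $\IO\,\hat{\B}^{(\bordered)}\IOT$, and conjugate by the eigenvector matrix of $\hat{\B}^{(r)}$ to assemble $\U_{\nsample}$. The only difference is cosmetic: you bundle the paper's two eigenvector matrices $\Qmat^{(\bordered)}$ and $\Qmat^{(s)}$ into a single orthogonal similarity $\Qmat$, which slightly streamlines the bookkeeping.
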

\begin{proof}
Using the above theorem, a matrix $\hat{\B }^{(\bordered)}\in\spd{\nsample}$ exists
that shares the same eigenvalues with $\hat{\B }^{(s)}$.
Let $\Qmat ^{(\bordered)}$, 
$\Qmat ^{(s)}$, and $\Qmat ^{(r)}$ be the (square, orthogonal) matrices of eigenvectors for 
$\hat{\B }^{(\bordered)}$, 
$\hat{\B }^{(s)}$, and $\hat{\B }^{(r)}$, respectively. 
Then, 
$$
     \left(\Qmat ^{(\bordered)}\right)^T \hat{\B }^{(\bordered)} \Qmat
		 ^{(\bordered)}   = \LambdaMat^{(s)}  = \Qmat ^{(s)}\hat{\B }^{(s)}
		 (\Qmat ^{(s)})^T,
$$
which implies that
$$
\hat{\B }^{(\bordered)} = 
  \Qmat ^{(\bordered)}(\Qmat ^{(s)})^T\hat{\B }^{(s)}\Qmat ^{(s)}(\Qmat ^{(\bordered)})^T .
$$
From the definition of $\hat{\B }^{(\bordered)}$ it also follows that
\begin{displaymath}
  \IO \hat{\B }^{(\bordered)}\IOT = \LambdaMat^{(r)},
\end{displaymath}
where $I$ is the $(\nsample-1)\times(\nsample-1)$ identity matrix and
$\mathbf{0}$ is the zero column vector of length $\nsample-1$, and thus 
\begin{equation}\label{eq:br}
\hat{\B }^{(r)} = 
  \Qmat ^{(r)}\LambdaMat^{(r)}(\Qmat ^{(r)})^T = 
  \Qmat ^{(r)}\left[ \IO \hat{\B }^{(\bordered)}\IOT \right] (\Qmat ^{(r)})^T .
\end{equation}
Combining the above, we can write
\begin{displaymath}
  \hat{\B }^{(r)} = \Qmat ^{(r)} \IO
  \Qmat ^{(\bordered)}(\Qmat ^{(s)})^T\hat{\B }^{(s)}\Qmat ^{(s)}(\Qmat ^{(\bordered)})^T \IOT (\Qmat ^{(r)})^T
\end{displaymath}
and so \eqref{the corollary} is satisfied taking
$\U _{\nsample} = \left[\Qmat ^{(r)} \IO
  \Qmat ^{(\bordered)}(\Qmat ^{(s)})^T\right]^T$.
\end{proof}

The generalization of the Cauchy interlacing thereom now follows.
\begin{theorem}
Given two matrices $\hat{\B }^{(s)} \in \spd{\nsample} $ and 
$\hat{\B }^{(r)} \in \spd{\nstate } $ (with $\nsample \ge \nstate $), then
\begin{equation} \label{reformulated exactness}
\exists~\U  ~~\mbox{such that}~~~
\U ^T \hat{\B }^{(s)} \U  = \hat{\B }^{(r)}  ~~~~\mbox{with}~~~~ \U ^T \U  = \I 
\end{equation}
if and only if the eigenvalues ${\lambda}_i^{(r)}$, $i=1,\ldots,\nstate$
	interlace the eigenvalues ${\lambda}_i^{(s)}$, $i=1,\ldots,\nsample$
	defined as
\begin{align} \label{reformulated eigenvalues}
\hat{\B }^{(r)} \hat{\x}_i^{(r)} &= {\lambda}_i^{(r)} \hat{\x}_i^{(r)},\quad
i=1,\ldots,\nstate\\
\hat{\B }^{(s)} \hat{\x}_i^{(s)} &= {\lambda}_i^{(s)} \hat{\x}_i^{(s)} ,\quad
i=1,\ldots,\nsample.
\end{align}
The definition of interlacing is given by inequality~\eqref{new interlace}.
\end{theorem}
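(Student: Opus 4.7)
The plan is a proof by induction on $d := \nsample - \nstate$, using the corollary (the $d=1$ case proved just above) as the engine of the inductive step.

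For the forward direction ($\Rightarrow$), suppose $\U \in \RR{\nsample\times\nstate}$ has orthonormal columns with $\U^T\hat{\B}^{(s)}\U = \hat{\B}^{(r)}$. I would extend $\U$ to a square orthogonal matrix $[\U~\mathbf{W}] \in \RR{\nsample\times\nsample}$. Then $[\U~\mathbf{W}]^T\hat{\B}^{(s)}[\U~\mathbf{W}]$ is orthogonally similar to $\hat{\B}^{(s)}$, hence shares its spectrum, and has $\hat{\B}^{(r)}$ as its leading $\nstate\times\nstate$ principal submatrix. The classical Cauchy interlacing theorem for Hermitian principal submatrices then immediately yields \eqref{new interlace}.

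For the reverse direction ($\Leftarrow$), the base case $d=0$ reduces interlacing to $\lambda_i^{(s)} = \lambda_i^{(r)}$, and one takes $\U = \Qmat^{(s)}(\Qmat^{(r)})^T$, where $\Qmat^{(s)}$ and $\Qmat^{(r)}$ denote eigenvector matrices of $\hat{\B}^{(s)}$ and $\hat{\B}^{(r)}$, respectively. For the inductive step, assume the theorem for gap $d-1$. Given interlacing with gap $d$, I would construct an intermediate SPD matrix $\hat{\B}^{(m)} \in \spd{\nstate+1}$ whose spectrum $(\mu_i)_{i=1}^{\nstate+1}$ interlaces $(\lambda_i^{(s)})$ with gap $d-1$ and is interlaced by $(\lambda_i^{(r)})$ with gap $1$. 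The induction hypothesis then furnishes $\U_1$ with $\U_1^T\hat{\B}^{(s)}\U_1 = \hat{\B}^{(m)}$, the corollary provides $\U_2$ with $\U_2^T\hat{\B}^{(m)}\U_2 = \hat{\B}^{(r)}$, and $\U := \U_1\U_2$ satisfies $\U^T\U = \I$ and $\U^T\hat{\B}^{(s)}\U = \hat{\B}^{(r)}$.

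The main obstacle is producing such an intermediate spectrum. My plan is to select each $\mu_i$ from the interval $[\max(\lambda_i^{(s)}, \lambda_{i-1}^{(r)}),\, \min(\lambda_{i+d-1}^{(s)}, \lambda_i^{(r)})]$, using the conventions $\lambda_0^{(r)} := -\infty$ and $\lambda_{\nstate+1}^{(r)} := +\infty$. Non-emptiness reduces to four inequalities: two follow from sortedness of the spectra, while the remaining two, $\lambda_i^{(s)} \le \lambda_i^{(r)}$ and $\lambda_{i-1}^{(r)} \le \lambda_{(i-1)+d}^{(s)} = \lambda_{i+d-1}^{(s)}$, are precisely the hypothesized interlacing bounds at indices $i$ and $i-1$. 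Taking $\hat{\B}^{(m)} := \mathrm{diag}(\mu_1,\dots,\mu_{\nstate+1})$ yields the desired intermediate matrix, and positive definiteness is automatic since $\mu_i \ge \lambda_i^{(s)} > 0$. A short verification then confirms that both induced interlacing relations hold, closing the induction.
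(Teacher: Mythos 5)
Your proof is correct and takes essentially the same route as the paper's: the forward direction is the classical Cauchy interlacing theorem applied after extending $\U$ to a square orthogonal matrix, and the converse is an induction that constructs an intermediate SPD (diagonal) matrix whose spectrum is chosen from intervals whose nonemptiness follows from the hypothesized interlacing, then composes the orthogonal factors supplied by the bordered-matrix corollary. The only organizational difference is that your intermediate matrix has dimension $\nstate+1$ (one above the target, with the corollary applied last), whereas the paper peels one dimension off $\hat{\B}^{(s)}$ at each step (intermediate of dimension $\nsample-1$, corollary applied at the top); the interval bookkeeping is mirror-image but equivalent.
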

\begin{proof}
It is well known that if $\hat{\B }^{(s)} \in \spd{\nsample} $ is given
along with an orthogonal $\nsample\times\nstate$ matrix $\U$ (with $\nsample \ge \nstate $),
then the eigenvalues of $\U ^T \hat{\B }^{(s)} \U  $ must interlace those
of $\hat{\B }^{(s)}$. This is referred to as the 
Cauchy interlacing 
theorem (e.g., see \cite{parlett1980symmetric}).

The converse of the Cauchy interlacing theorem is less widely known.
The case $\nsample = \nstate $ follows trivially using an eigenvalue decomposition.
The case $\nsample = \nstate +1$ corresponds to the above corollary.
The proof is completed by generalizing the corollary to 
the $\nsample > \nstate +1$ case.
This follows from an inductive argument where one 
considers a projection that reduces the matrix dimension of $\hat{\B }^{(s)}$ by one. 
According to the above corollary, we have a great deal of flexibility in choosing this lower dimensional
matrix if its eigenvalues interlace those of the higher dimension matrix. We then choose a lower dimensional
matrix whose eigenvalues not only interlace those of $\hat{\B }^{(s)}$ but whose eigenvalues are also
interlaced by those of $\hat{\B }^{(r)}$. That is,
$$
\lambda_i^{(s)} \le \mu_i \le \lambda_{i+1}^{(s)} 
~~~\mbox{and}~~~
\mu_i \le \lambda_i^{(r)} \le \mu_{i+\nsample-\nstate -1},
$$
where $\mu_i$ denotes the $i$th smallest eigenvalue
of the intermediate matrix.
Rewriting this we obtain the following intervals for the eigenvalues
$\mu_i$:
$$
\mu_i \ge 
\left \{
\begin{array}{cc}
\max(\lambda_i^{(s)},\lambda_{i-\nsample+\nstate +1}^{(r)})  & i \ge \nsample-\nstate  \\
\lambda_i^{(s)}  & i < \nsample-\nstate  
\end{array}
\right .
$$
and
$$
\mu_i \le 
\left \{
\begin{array}{cc}
\min(\lambda_{i+1}^{(s)}, \lambda_i^{(r)}) & i \le \nstate  \\
\lambda_{i+1}^{(s)} & i > \nstate  
\end{array}
\right .  .
$$
Using the interlacing property for 
$\hat{\B }^{(s)}$ and $\hat{\B }^{(r)}$,
one can verify that the intervals for the $\mu_i$ are nonempty.
That is, $\lambda_{i+1}^{(s)}  \ge \lambda_i^{(s)}$
and for those $i$ such that $\lambda_{i-\nsample+\nstate +1}^{(r)}$ is
defined, we have
$
\lambda_{i+1}^{(s)}  \ge \lambda_{i-\nsample+\nstate +1}^{(r)},~
\lambda_i^{(r)}   \ge \lambda_i^{(s)}, ~\mbox{and}~
\lambda_i^{(r)}   \ge \lambda_{i-\nsample+\nstate +1}^{(r)} $
Thus, there exists an orthogonal matrix $\U _{\nsample}$ such that the
$(\nsample \hskip -.03in -\hskip -.03in 1) \times (\nsample\hskip -.03in
-\hskip -.03in 1)$ matrix $ \U _{\nsample}^T \hat{\B }^{(s)} \U _{\nsample} $
has eigenvalues that 
are interlaced by those
of $\hat{\B }^{(r)}$.
We repeat this procedure each time reducing the matrix dimension by one until
the final reduction where we take the 
lower dimension matrix to be equal to $\hat{\B }^{(r)} $.  This implies that there exists a set of projection
matrices such that $\U ^T \hat{\B }^{(s)} \U  $ is equal to  $\hat{\B }^{(r)} $ where
$\U  = \U _{\nsample} \U _{\nsample-1} ... \U _{\nstate +1}$.

\end{proof}

%
\REMOVE{
\begin{equation} \label{two matrix approximation}
S^T \D_s S = \D_r ~~~~\mbox{and}~~~~ S^T \B _s S = \B _r 
\end{equation}
where  $\D_s, \B _s, \D_r, ~\mbox{and}~ \B _r$ are all given (and symmetric).
If the above holds, then it is easy to show that
That is, we can approximate a one-parameter sequence of reduced matrices
given a one-parameter sequence of sample matrices.
I'm not sure how interesting this would be for the paper, so it would
be nice to discuss when Kevin returns.

What can be said about Eqs.~\eqref{two matrix approximation}? The first
step is to show the equivalence between Eqs.~\eqref{two matrix approximation}
and the following problem:
\begin{equation} \label{transformed approximation}
\U ^T \hat{\B }_s \U  = \hat{\B }_r  ~~~~\mbox{with}~~~~ \U ^T \U  = \I 
\end{equation}
where
$\U  = \Lmat _s^T S \Lmat _r^{-T} , \D_s = \Lmat _s \Lmat _s^T , \D_r = \Lmat _r \Lmat _r^T , 
\hat{\B }_s = \Lmat _s^{-1} \B _s \Lmat _s^{-T)}, ~\mbox{and}~
\hat{\B }_r = \Lmat _r^{-1} \B _r \Lmat _r^{-T)}$.
This can be done easily by just plugging in the above definitions
into \eqref{transformed approximation}.

The second part is to give necessary and sufficient conditions for
\eqref{transformed approximation} to hold. In particular, there
is the well-known Cauchy Interlacing Theorem which states that
it is necessary that the eigenvalues of $\hat{\B }_r$ must interlace
those of $\hat{\B }_s$ in order for there to exist a $\U $ such
that \eqref{transformed approximation}
holds. The definition of interlace is
\begin{equation} \label{interlace}
\lambda^{(s)}_i \le \lambda^{(r)}_i \le \lambda^{(s)}_{i+\nsample-\nstate } 
~~~for~i=1,...,\nstate 
\end{equation} 
where 
$\lambda^{(s)}_k$ ($\lambda^{(r)}_k$) is the $k$th eigenvalue
of $\hat{\B }_s$ ($\hat{\B }_r$) and the eigenvalues are indexed in order of
increasing magnitude.
In my opinion, this necessary interlacing condition is quite reasonable
in that it basically says that the eigenvalues of the sample matrix must
in some sense capture or encompass the eigenvalues of the reduced matrix.

While the above Cauchy Interlacing Theorem is quite well-known, I could not 
find out anything on the reverse condition. That is, if we are given two 
symmetric matrices whose eigenvalues interlace (with $\nstate  < \nsample$), 
does this mean that there
always exists a matrix $\U $ such that \eqref{transformed approximation} is true?
I suspect that it is true, but I don't have a proof of it. What is easy
to show is that if $\nsample \ge 2 \nstate $ and we have the interlacing property, then 
such a $\U $ must always exist.
An informal proof follows. Let $\LambdaMat_r$ correspond to the
diagonal matrix with $(\LambdaMat_r)_{kk} = \lambda^{(r)}_k$.
Define an $\nsample \times \nstate $ matrix $\hat{\U }$ such that its $i$th column
is a linear combination of the $i$th and $i\hskip -.05in +\hskip -.05in
\nsample\hskip -.05in-\hskip -.05in \nstate$th  eigenvector
of $\hat{\B }_s$ which has been normalized to have length one. That is,
the $i$th column of $\hat{\U }$ is 
$ \alpha_i v^{(s)}_i + \sqrt{1- \alpha_i^2}  v^{(s)}_{i+\nsample-\nstate } $
where $0 \le \alpha_i \le 1$.
The orthogonality of $\hat{\B }_s$'s eigenvectors establishes that 
$\hat{\U }^T \hat{\U } = \I$. It is then easy to show that 
$\hat{\U }^T \hat{\B }_s \hat{\U }$ is diagonal and the
$(i,i)$th entry is 
$\alpha_i^2 \lambda^{(s)}_i + (1-\alpha_i^2) \lambda^{(s)}_{i+\nsample-\nstate } $. 
If we have interlacing and $\alpha_i$ is chosen as
$$
\alpha_i = \sqrt{\frac{\lambda^{(s)}_{i+\nsample-\nstate } - \lambda^{(r)}_{i}}
                     {\lambda^{(s)}_{i+\nsample-\nstate } - \lambda^{(s)}_{i}}} ,
$$
then in fact 
$$
\hat{\U }^T \hat{\B }_s \hat{\U } = \LambdaMat_r .
$$
Finally,
$$
\U ^T \hat{\B }_s \U  = \hat{\B }_r  
$$
where
$\U  = \hat{\U } \Qmat $ and $\Qmat ^T \hat{\B }_r \Qmat  = \LambdaMat_r $. As
$\U ^T \U  = \I$, it
follows that $\nsample \ge 2n$ and interlacing is sufficient to guarantee
that \eqref{transformed approximation} holds.

While the above Cauchy Interlacing Theorem is quite well-known, its converse
is less well known.  That is, if we are given two 
symmetric matrices whose eigenvalues interlace (with $\nstate  < \nsample$), 
does this mean that there
always exists a matrix $\U $ such that \eqref{transformed approximation} is true?
It turns out that
this converse is also true and can be proved easily for the case of $\nsample = \nstate  + 1$ using Theorem 4.3.10 from [horn1990matrix] which
is stated now in our notation.
\begin{theorem}
Let two sequences of interlacing real numbers be given by $\lambda^{(r)}_i$ and $\lambda^{(s)}_i$ as described above when $\nsample = \nstate  + 1$.
Define $\LambdaMat^{(r)} = \diag(\lambda^{(r)}_1, \lambda^{(r)}_2, ... , \lambda^{(r)}_\nstate )$. Then, there exists
a real number $a$ and a real vector $\y \in \RR{\nstate} $ such that $\lambda^{(s)}_1, \lambda^{(s)}_2, ... , \lambda^{(s)}_{\nstate +1}$
are the eigenvalues of the real symmetric matrix 
$$
\hat{\B }^{(\bordered)} \equiv \left (
\begin{matrix}
\LambdaMat^{(r)} &\y \cr\y^T & a \end{matrix} \right ) .
$$
\end{theorem}
\vskip .1in
\noindent 
}

Equipped with the generalized Cauchy interlacing theorem, we now
prove the exactness condition for the $\Aparam$ term which is
restated in 
slightly simplified notation.
\begin{theorem}
Let $\Aparam $ have the form
\begin{equation}
\Aparam = \honeparam~\Aone + \htwoparam~\Atwo
\end{equation}
where $\Aone \in \spd{N} $, $\Atwo \in \spd{N} $,
and $\hone,\htwo: \paramDomain \rightarrow \RR{} $. 
\vskip .1in
\noindent
Then, 
\begin{equation} \label{appendix quadratic exact}
\exists~\sparseASmall ~~\mbox{such that}~~~
\sparseASmallT\myspace\sampleMatT\myspace\Aparam\myspace\sampleMat\myspace\sparseASmall
= \podstate^T \myspace\Aparam\myspace\podstate,
~~~~\forall \param \in \paramDomain
\end{equation}
if and only if the eigenvalues of the general matrix pencil
\begin{equation} \label{Br eigenvalues} 
\B ^{(r)} \x_i^{(r)} = \lambda_i^{(r)} \D^{(r)} \x_i^{(r)},\quad
i=1,\ldots,\nstate
\end{equation}
interlace the eigenvalues of 
\begin{equation} \label{Bs eigenvalues}
\B ^{(s)} \x_i^{(s)} = \lambda_i^{(s)} \D^{(s)} \x_i^{(s)} ,\quad
i=1,\ldots,\nsample
\end{equation}
where
$$
\begin{array}{lllllll}
\D^{(r)}&=&\left [ \podstate^T \Aone \podstate \right ] , &
&\D^{(s)}&=&\left [ \sampleMatT \Aone \sampleMat \right ] , \\[10pt]
\B ^{(r)}&=&\left [ \podstate^T \Atwo \podstate \right ] ,&  ~~\mbox{and}~~
&\B ^{(s)}&=&\left [ \sampleMatT \Atwo \sampleMat \right ] .
\end{array}
$$
The definition of interlacing is given by
\begin{equation} \label{appendix new interlace}
\lambda^{(s)}_i \le \lambda^{(r)}_i \le \lambda^{(s)}_{i+\nsample-\nstate } 
~~~for~i=1,...,\nstate  
\end{equation} 
where
the eigenvalues are indexed in order of
increasing magnitude.
\end{theorem}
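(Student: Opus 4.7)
The plan is to reduce the exactness statement to an instance of the generalized Cauchy interlacing theorem that was just proved above, via a Cholesky-based change of variables analogous to the one used in Section~\ref{sec:exactPotEn}. First, I would exploit the ``for all $\param$'' quantifier: because $\Aparam = \honeparam \Aone + \htwoparam \Atwo$ and the scalar functions $\hone,\htwo$ can be assumed to attain linearly independent value pairs on $\paramDomain$ (otherwise the second term can be folded into the first and the result is trivial), the existence of a single $\sparseASmall$ satisfying \eqref{appendix quadratic exact} for all $\param$ is equivalent to the simultaneous pair of identities
\begin{equation}\label{eq:twoSepa}
\sparseASmallT \sampleMatT \Aone \sampleMat \sparseASmall = \podstate^T \Aone \podstate
\quad\text{and}\quad
\sparseASmallT \sampleMatT \Atwo \sampleMat \sparseASmall = \podstate^T \Atwo \podstate.
\end{equation}
This is a ``two matrix simultaneous matching'' problem of the same flavor as \eqref{original conditions}.

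Next I would symmetrically whiten the $\Aone$ equation using Cholesky factorizations. Let $\Lmat^{(s)}$ be the lower-triangular Cholesky factor of $\sampleMatT \Aone \sampleMat \in \spd{\nsample}$ and $\Lmat^{(r)}$ be the lower-triangular Cholesky factor of $\podstate^T \Aone \podstate \in \spd{\nstate}$ (both are positive definite because $\Aone \in \spd{N}$ and $\sampleMat,\podstate$ have full column rank). Define the transformed variable
\begin{equation}
\U \equiv (\Lmat^{(s)})^T \sparseASmall (\Lmat^{(r)})^{-T} \in \RR{\nsample \times \nstate}.
\end{equation}
A direct substitution shows that the first identity in \eqref{eq:twoSepa} is equivalent to $\U^T \U = \I$, while the second identity becomes $\U^T \hat{\B}^{(s)} \U = \hat{\B}^{(r)}$ with
\begin{equation}
\hat{\B}^{(s)} \equiv (\Lmat^{(s)})^{-1} \sampleMatT \Atwo \sampleMat (\Lmat^{(s)})^{-T}, \qquad
\hat{\B}^{(r)} \equiv (\Lmat^{(r)})^{-1} \podstate^T \Atwo \podstate (\Lmat^{(r)})^{-T}.
\end{equation}
Both $\hat{\B}^{(s)} \in \spd{\nsample}$ and $\hat{\B}^{(r)} \in \spd{\nstate}$, so the simultaneous matching problem has been recast into exactly the hypothesis of the generalized Cauchy interlacing theorem stated earlier in the appendix.

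Applying that theorem, a $\U$ with the required orthogonality and similarity properties exists if and only if the eigenvalues of $\hat{\B}^{(r)}$ interlace those of $\hat{\B}^{(s)}$ in the sense of \eqref{appendix new interlace}. The final step is to identify these ordinary eigenvalues with the generalized eigenvalues appearing in the statement. This follows because $\hat{\B}^{(s)} \y = \lambda \y$ is equivalent, via the substitution $\x^{(s)} = (\Lmat^{(s)})^{-T}\y$, to the generalized eigenvalue problem $\sampleMatT \Atwo \sampleMat \, \x^{(s)} = \lambda \, \sampleMatT \Aone \sampleMat \, \x^{(s)}$, and analogously for the reduced pair; thus $\{\lambda_i^{(s)}\}$ and $\{\lambda_i^{(r)}\}$ in the theorem statement coincide with the ordinary eigenvalues of $\hat{\B}^{(s)}$ and $\hat{\B}^{(r)}$. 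Combining these equivalences establishes the ``if and only if'' claim. Finally, one recovers $\sparseASmall$ from $\U$ via $\sparseASmall = (\Lmat^{(s)})^{-T} \U (\Lmat^{(r)})^{T}$.

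The main obstacle I anticipate is the first reduction step: carefully justifying that satisfying \eqref{appendix quadratic exact} for all $\param\in\paramDomain$ forces the two separate identities in \eqref{eq:twoSepa}. This requires a mild nondegeneracy assumption on $(\hone,\htwo)$ (e.g.\ that $\paramDomain$ contains points where the vector $(\honeparam,\htwoparam)$ spans $\RR{2}$); without it, one could only deduce matching along a one-dimensional family of linear combinations. The Cholesky manipulations and the eigenvalue identification are essentially mechanical once this reduction is in place, since they mirror the transformation already used in the analysis of \eqref{FirstOrderSimpleForm}.
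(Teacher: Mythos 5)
Your proposal follows essentially the same route as the paper's proof: you reduce the for-all-$\param$ exactness condition to the simultaneous pair of matching identities for the $\Aone$ and $\Atwo$ terms, whiten with the Cholesky factors of $\podstate^T\Aone\podstate$ and $\sampleMatT\Aone\sampleMat$ so that the problem becomes finding an orthogonal $\U$ with $\U^T\hat{\B}^{(s)}\U=\hat{\B}^{(r)}$, identify the generalized eigenvalues of the pencils with the ordinary eigenvalues of the transformed matrices, and invoke the generalized Cauchy interlacing theorem established earlier in the appendix. The only notable difference is that you explicitly flag the nondegeneracy requirement on $(\hone,\htwo)$ needed to justify splitting the single parameterized identity into two separate ones, a point the paper passes over with the phrase ``for any functions''; this is a reasonable and slightly more careful treatment of the same step.
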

\begin{proof}
Clearly \eqref{appendix quadratic exact} can only hold for any $\param \in
\paramDomain$ and any functions $\honeparam$ and $\htwoparam$ if and only if
\begin{equation} \label{two matrix match}
\sparseASmallT \D^{(s)} \sparseASmall = \D^{(r)}
~~\mbox{and}~~
\sparseASmallT \B ^{(s)} \sparseASmall = \B ^{(r)} .
\end{equation}
Using a carefully chosen linear transformation, it follows that
proving the theorem is equivalent to proving the following:
\begin{equation} \label{final reformulated exactness}
\exists~\U  ~~\mbox{such that}~~~
\U ^T \hat{\B }^{(s)} \U  = \hat{\B }^{(r)}  ~~~~\mbox{with}~~~~ \U ^T \U  = \I 
\end{equation}
if and only if the eigenvalues $\lambda_i^{(r)}$ interlace the eigenvalues of
	$\lambda_i^{(s)}$, where the eigenvalues previously defined in
	Eqs.~\eqref{Br eigenvalues}--\eqref{Bs eigenvalues} also satisfy
	\begin{align} \label{final reformulated eigenvalues}
	\begin{split}
\hat{\B }^{(r)} \hat{\x}^{(r)} &= {\lambda}^{(r)} \hat{\x}^{(r)}\\
\hat{\B }^{(s)} \hat{\x}^{(s)} &= {\lambda}^{(s)} \hat{\x}^{(s)}.
	\end{split}
\end{align}
The linear transformation relies on Cholesky factorizations
given by 
$\D^{(s)} = \Lmat ^{(s)} (\Lmat ^{(s)})^T $ and $\D^{(r)} = \Lmat ^{(r)} (\Lmat ^{(r)})^T $.
These factors lead to the following definitions 
$$
\begin{array}{llllll}
\hat{\B }^{(s)} &=& (\Lmat ^{(s)})^{-1} \B ^{(s)} (\Lmat ^{(s)})^{-T}, &
\hat{\x}^{(s)} &=& (\Lmat ^{(s)})^T \x^{(s)} , \\[5pt]
\hat{\B }^{(r)} &=& (\Lmat ^{(r)})^{-1} \B ^{(r)} (\Lmat ^{(r)})^{-T}, & 
\hat{\x}^{(r)} &=& (\Lmat ^{(r)})^T \x^{(r)}, ~~\mbox{and}\\[5pt]
\U  &=& (\Lmat ^{(s)})^T \sparseASmall (\Lmat ^{(r)})^{-T},
\end{array}
$$
which can be used in Eqs.~\eqref{Br eigenvalues},
\eqref{Bs eigenvalues} and \eqref{two matrix match}
to obtain Eqs.~\eqref{final reformulated exactness}
and \eqref{final reformulated eigenvalues}.
The proof is completed by recognizing that
Eqs.~\eqref{final reformulated exactness} and \eqref{final reformulated eigenvalues}
correspond to the generalized Cauchy interlacing thereorm.

\REMOVE {
It is well known that if we are given an orthogonal 
matrix $\U  \in \RR{\nsample \times \nstate }$ (with $\nsample > \nstate $) such 
that $\U ^T \hat{\B }^{(s)} \U  = \hat{\B }^{(r)} $, then the eigenvalues of  $\hat{\B }^{(r)} $ must interlace those
of $\hat{\B }^{(s)}$. This is often referred to as the 
Cauchy interlacing theorem\cite{parlett1980symmetric}.
To complete the proof, one must show that the converse of the Cauchy
interlacing theorem is also true. 

That is, given two symmetric matrices
$\hat{\B }^{(s)} $ and $\hat{\B }^{(r)}$ 
 whose eigenvalues interlace (with $\nsample \ge \nstate $), 
there always exist a matrix $\U $ such that 
$\U ^T \hat{\B }^{(s)} \U  = \hat{\B }^{(r)} $.
While the above Cauchy Interlacing Theorem is quite established, its converse
is less well known and can easily be proved 
for the case of 
$\nsample \hskip -.04in = \hskip -.04in \nstate  \hskip -.04in + \hskip -.04in 1$
using Theorem 4.3.10 from \cite{horn1990matrix} which
is stated now with our notation.
}

\end{proof}

\subsection{Proof of Theorem
\ref{matrixGapExact}}\label{sec:matrixGapExactProof}
\begin{proof}
If condition 2 holds, then the unconstrained solution to problem \eqref{eq:optProblem} is
 \begin{equation} \label{eq:gappyUnconSol}
\AcoeffParamOn =
\left(\sampleMatVectorizedT\podVectorizedA\right)^+\sampleMatVectorizedT\vectorize{\AparamOn}.
 \end{equation}

If condition 1 holds, then the vectorized matrix can be expressed as 
 \begin{equation}\label{eq:cond1result} 
\vectorize{\AparamOn} = \podVectorizedA\vectorAparamCoeff,
 \end{equation}
 or equivalently
 \begin{equation}\label{eq:cond1result2} 
\AparamOn = \sum_{i=1}^\nA\vectorAparamCoeffi\Abasisi,
 \end{equation}
 where $\vectorAparamCoeffNo\equiv\entrytuple{\vectorAparamCoeffiNo}{\nA}$.
 Substituting Eq.~\eqref{eq:cond1result} into Eq.~\eqref{eq:gappyUnconSol}
 gives $\AcoeffParamOn = \vectorAparamCoeff$ and so Eq.~\eqref{eq:spApprox}
 yields
 \begin{align} \label{eq:exactApproxMatGap}
\matApproxAOn&=\sum\limits_{i=1}^\nA
\vectorAparamCoeffi\podstate^T\Abasis{i}\podstate.
  \end{align} 
Comparing Eqs.~\eqref{eq:exactApproxMatGap} and \eqref{eq:cond1result2} gives
the exactness result: $\matApproxAOn=\podstate^T\AparamOn\podstate^T$.
\end{proof}

\subsection{Proof of Lemma \ref{quadratic solvability}}\label{app:ray}
\begin{proof}
The equation  $\widetilde{\X }^T \widetilde{\X } = I  $
simply states that the $\nsample $ columns
of $\widetilde{\X }$ are orthogonal and so any orthogonal
matrix $\widetilde{\X }$ satisfies the first part of (\ref{lemma version}). 
Thus, solvability amounts to finding an orthogonal matrix
$\widetilde{\X }$ such that 
$\tilde{\podstate}^T \tilde{\cvector } = \widetilde{\X }^T \sampleMatT   \tilde{\cvector }$.
For a solution to exist, however, it is obviously necessary that
$|| \tilde{\podstate}^T \tilde{\cvector } ||_2 = || \widetilde{\X }^T \sampleMatT   \tilde{\cvector } ||_2$. 
If the vector $\sampleMatT   \tilde{\cvector }$ lies within the span of the basis
defined by the columns of $\widetilde{\X }^T$,
then  $\widetilde{\X }^T \sampleMatT   \tilde{\cvector }$ preserves its 2-norm 
and so $ || \widetilde{\X }^T \sampleMatT   \tilde{\cvector } ||_2 = || \sampleMatT   \tilde{\cvector } ||_2$.
That is, application of $\widetilde{\X }^T $ corresponds to a rotation of
$\sampleMatT   \tilde{\cvector }$ about the origin and so length is preserved. If instead 
the vector $\sampleMatT   \tilde{\cvector }$ lies only partially within the span of the 
orthogonal basis, then 
$ || \widetilde{\X }^T \sampleMatT   \tilde{\cvector } ||_2 < || \sampleMatT   \tilde{\cvector } ||_2$.
That is,  application of $\widetilde{\X }^T $ corresponds to a rotation of
the component of $\sampleMatT   \tilde{\cvector }$ lying within the span of the
orthogonal
basis. This implies that a necessary condition for a solution
to (\ref{lemma version}) is that 
\begin{equation} \label{necessary}
|| \tilde{\podstate}^T \tilde{\cvector } ||_2 \le || \sampleMatT   \tilde{\cvector } ||_2 .
\end{equation}
\vskip .1in
\noindent
\underline{Case 1: $\nsample  = \nstate $}
\newline 
\noindent
$\sampleMatT   \tilde{\cvector }$ must lie within the range of $\widetilde{\X } $ (as it is a full
rank square matrix) and so it is necessary to have equality in 
(\ref{necessary}) when $\nsample  = \nstate $.
One possible $\widetilde{\X }$ in this case is obtained by first
defining a $\Qmat _1 \in \RR{\nstate \times \nstate }$ and a 
$\Qmat _2 \in \RR{\nstate  \times \nstate }$ such that the first row of 
$\Qmat _1$ is $\alpha_1 \tilde{\cvector }^T \tilde{\podstate} $ with $\alpha_1 = 1/\|\tilde{\cvector }^T \tilde{\podstate}\|_2$.
Likewise, the first row of $\Qmat _2$ 
is taken as $\alpha_1\tilde{\cvector }^T \sampleMat $; note that $\alpha_1$ also normalizes this row
because we assume $\|\tilde{\podstate}^T\tilde{\cvector }\|_2 = \|\sampleMatT  \tilde{\cvector }\|_2$.
All remaining rows are 
chosen so that both $\Qmat _1$ and $\Qmat _2$ are orthogonal matrices.
This gives 
$$
\Qmat _1 \tilde\podstate^T \tilde{\cvector } = || \tilde\podstate^T \tilde{\cvector }  ||_2 \evector_1  = || \sampleMatT   \tilde{\cvector }  ||_2 \evector_1
= \Qmat _2 \sampleMatT   \tilde{\cvector },
$$
where $\evector_1$ is the first canonical unit vector of length $\nstate $ (first element is one and all other
$\nstate -1$ components are zero). A suitable $\widetilde{\X }$ that solves
(\ref{lemma version}) is then given by $\widetilde{\X }^T = \Qmat _1^T \Qmat _2$.
Thus, equality in (\ref{necessary}) is also sufficient when $\nsample  = \nstate $.
\vskip .1in
\noindent
\underline{Case 2: $\nsample  > \nstate $}
\newline 
\noindent
The matrix $\widetilde{\X }$ is now rectangular.  One possible $\widetilde{\X }$ is 
obtained by defining $\Qmat _1$ as before while instead defining an
$\nsample  \times \nsample $ orthogonal matrix $\Qfull$ with the first row
again set to $\alpha_2 \tilde{\cvector }^T \sampleMat  $ with $\alpha_2 = 1/\|\sampleMatT  \tilde \cvector \|_2$.
This gives
$$
\Qmat _1 \tilde\podstate^T \tilde{\cvector } = || \tilde\podstate^T \tilde{\cvector }  ||_2 \evector_1 
~~\mbox{and}~~
\Qfull \sampleMatT   \tilde{\cvector } = || \sampleMatT   \tilde{\cvector }  ||_2 \tilde{e}_1
$$
where $\tilde{e}_1 \in \RR{\nsample \times 1}$ is the first canonical 
unit vector of length $\nsample $.
If $|| \sampleMatT   \tilde{\cvector }  ||_2 = || \tilde\podstate^T \tilde{\cvector }  ||_2 $, then a suitable
$\tilde{\podstate}$ solving (\ref{lemma version}) is given by 
taking $\Qmat _2$ to be the first $\nstate $ rows of $\Qfull$
(as $\Qmat _1 \tilde\podstate^T \tilde{\cvector } = \Qmat _{2} \sampleMatT   \tilde{\cvector }$) and taking 
$\widetilde{\X }^T = \Qmat _1^T \Qmat _2$.
If $|| \sampleMatT   \tilde{\cvector }  ||_2 > || \tilde\podstate^T \tilde{\cvector }  ||_2 $, then 
we define a vector $y$ as an arbitrary linear combination of
the last $\nsample  - \nstate $ rows of $\Qfull$ such that $y$ has unit
norm. The first row of $\Qmat _2$ is then 
taken as 
$$
(\Qmat _2)_1 = \alpha_3 \frac{\sampleMatT   \tilde{\cvector }}{|| \sampleMatT   \tilde{\cvector } ||_2 } +
\sqrt{1 - \alpha_3^2}~y,
$$
where 
$(\Qmat )_k$ denotes the $k$th row of a matrix $\Qmat $
and $\alpha_3 = || \tilde{\podstate}^T \tilde{\cvector } ||_2 / 
             || \sampleMatT    \tilde{\cvector } ||_2 $.
The remaining rows of $\Qmat _2$ are simply $(\Qfull)_k$ for 
$k = 2 , \ldots, \nstate $.
It is easy to verify that 
$\Qmat _2$ is again orthogonal and that 
$\Qmat _{2} \sampleMatT   \tilde{\cvector } = || \tilde\podstate^T \tilde{\cvector }  ||_2 \evector_1 $. Thus,
$\Qmat _{2} \sampleMatT   \tilde{\cvector } = || \tilde\podstate^T \tilde{\cvector }  ||_2 \evector_1 $ and
$\widetilde{\X }^T = \Qmat _1^T \Qmat _2$ is a possible solution  implying
that $|| \sampleMatT   \tilde{\cvector }  ||_2 \ge || \tilde\podstate^T \tilde{\cvector }  ||_2 $
is a necessary and sufficient condition when $\nsample  > \nstate $.
\end{proof}

\section*{Acknowledgments}
The authors acknowledge Julien Cortial for both insightful discussions and for
providing the original nonlinear-truss code that was modified to generate the
numerical results. The authors also acknowledge Clancey Rowley for useful
comments received at the 2013 SIAM Conference on Computational Science and
Engineering.

This research was supported in part by an appointment to the Sandia National
Laboratories Truman Fellowship in National Security Science and Engineering,
sponsored by Sandia Corporation (a wholly owned subsidiary of Lockheed Martin
Corporation) as Operator of Sandia National Laboratories under its U.S.
Department of Energy Contract No. DE-AC04-94AL85000.  The authors also
acknowledge support by the Department of Energy Office of Advanced
Scientific Computing Research under contract 10-014804. 

\bibliography{references}

\begin{thebibliography}{10}

\bibitem{adhikari2000damping}
{\sc S.~Adhikari}, {\em Damping models for structural vibration}, PhD thesis,
  Cambridge University, September 2000.

\bibitem{an2008optimizing}
{\sc S.S. An, T.~Kim, and D.L. James}, {\em Optimizing cubature for efficient
  integration of subspace deformations}, ACM Transactions on Graphics (TOG), 27
  (2008), p.~165.

\bibitem{AT98}
{\sc Alan~L. Andrew and Roger C.~E. Tan}, {\em Computation of derivatives of
  repeated eigenvalues and the corresponding eigenvectors of symmetric matrix
  pencils}, SIAM J. Matrix Anal. Appl., 20 (1998), pp.~78--100.

\bibitem{astrid2007mpe}
{\sc P.~Astrid, S.~Weiland, K.~Willcox, and T.~Backx}, {\em Missing point
  estimation in models described by proper orthogonal decomposition}, IEEE
  Transactions on Automatic Control, 53 (2008), pp.~2237--2251.

\bibitem{structurePreserveBeattie}
{\sc C.~Beattie and S.~Gugercin}, {\em Structure-preserving model reduction for
  nonlinear port-hamiltonian systems}, in Decision and Control and European
  Control Conference (CDC-ECC), 2011 50th IEEE Conference on, IEEE, 2011,
  pp.~6564--6569.

\bibitem{CarlbergGappy}
{\sc K.~Carlberg, C.~Bou-Mosleh, and C.~Farhat}, {\em {Efficient non-linear
  model reduction via a least-squares Petrov--Galerkin projection and
  compressive tensor approximations}}, International Journal for Numerical
  Methods in Engineering, 86 (2011), pp.~155--181.

\bibitem{carlbergJCP}
{\sc K.~Carlberg, C.~Farhat, J.~Cortial, and D.~Amsallem}, {\em The {GNAT}
  method for nonlinear model reduction: effective implementation and
  application to computational fluid dynamics and turbulent flows}, Journal of
  Computational Physics, 242 (2013), pp.~623--647.

\bibitem{carlbergStructureAiaa}
{\sc K.~Carlberg, R.~Tuminaro, and P.~Boggs}, {\em Efficient
  structure-preserving model reduction for nonlinear mechanical systems with
  application to structural dynamics}, in AIAA Paper 2012-1969, 53rd
  AIAA/ASME/ASCE/AHS/ASC Structures, Structural Dynamics and Materials
  Conference, Honolulu, Hawaii, April 23--26 2012.

\bibitem{chaturantabut2010journal}
{\sc S.~Chaturantabut and D.~C. Sorensen}, {\em Nonlinear model reduction via
  discrete empirical interpolation}, SIAM Journal on Scientific Computing, 32
  (2010), pp.~2737--2764.

\bibitem{chowdhury2003computation}
{\sc I.~Chowdhury and S.P. Dasgupta}, {\em Computation of {R}ayleigh damping
  coefficients for large systems}, The Electronic Journal of Geotechnical
  Engineering, 8 (2003).

\bibitem{drohmannEOI}
{\sc M.~Drohmann, B.~Haasdonk, and M.~Ohlberger}, {\em Reduced basis
  approximation for nonlinear parametrized evolution equations based on
  empirical operator interpolation}, SIAM Journal on Scientific Computing, 34
  (2012), pp.~A937--A969.

\bibitem{poblano}
{\sc D.M. Dunlavy, T.G. Kolda, and E.~Acar}, {\em Poblano v 1.0: A {M}atlab
  toolbox for gradient-based optimization}, Sandia National Laboratories,
  Albuquerque, NM and Livermore, CA, Tech. Rep. SAND, 1422 (2010).

\bibitem{sirovichOrigGappy}
{\sc R.~Everson and L.~Sirovich}, {\em {K}arhunen--{L}o\`{e}ve procedure for
  gappy data}, Journal of the Optical Society of America A, 12 (1995),
  pp.~1657--1664.

\bibitem{galbally2009non}
{\sc D.~Galbally, K.~Fidkowski, K.~Willcox, and O.~Ghattas}, {\em {Non-linear
  model reduction for uncertainty quantification in large-scale inverse
  problems}}, International Journal for Numerical Methods in Engineering, 81
  (2009), pp.~1581--1608.

\bibitem{hairer2006geometric}
{\sc E.~Hairer, C.~Lubich, and G.~Wanner}, {\em Geometric numerical
  integration: structure-preserving algorithms for ordinary differential
  equations}, vol.~31, Springer Verlag, 2006.

\bibitem{horn1990matrix}
{\sc R.A. Horn and C.R. Johnson}, {\em Matrix Analysis}, Cambridge University
  Press, 1990.

\bibitem{ito1998reduced}
{\sc K.~Ito and SS~Ravindran}, {\em A reduced basis method for control problems
  governed by {PDE}s}, Control and estimation of distributed parameter systems,
   (1998), pp.~153--168.

\bibitem{lall2003structure}
{\sc S.~Lall, P.~Krysl, and J.E. Marsden}, {\em Structure-preserving model
  reduction for mechanical systems}, Physica D: Nonlinear Phenomena, 184
  (2003), pp.~304--318.

\bibitem{marsden2001discrete}
{\sc J.E. Marsden and M.~West}, {\em Discrete mechanics and variational
  integrators}, Acta Numerica, 10 (2001), p.~514.

\bibitem{parlett1980symmetric}
{\sc Beresford~N Parlett}, {\em The symmetric eigenvalue problem}, vol.~7,
  SIAM, 1980.

\bibitem{prud2002reliable}
{\sc C.~Prud'homme, D.~V. Rovas, K.~Veroy, L.~Machiels, Y.~Maday, A.~T. Patera,
  and G.~Turinici}, {\em {Reliable real-time solution of parameterized partial
  differential equations: Reduced-basis output bound methods}}, Journal of
  Fluids Engineering, 124 (2002), pp.~70--80.

\bibitem{maday2002reliable}
{\sc C.~Prud'Homme, Dimitrios~V Rovas, Karen Veroy, Luc Machiels, Yvon Maday,
  Anthony~T Patera, Gabriel Turinici, et~al.}, {\em Reliable real-time solution
  of parametrized partial differential equations: Reduced-basis output bound
  methods}, Journal of Fluids Engineering, 124 (2001), pp.~70--80.

\bibitem{ryckelynck2005phm}
{\sc D.~Ryckelynck}, {\em A priori hyperreduction method: an adaptive
  approach}, Journal of Computational Physics, 202 (2005), pp.~346--366.

\bibitem{tonnThesis}
{\sc T.~Tonn}, {\em Reduced-Basis Method (RBM) for Non-Affine Elliptic
  Parameterized {PDE}s}, PhD thesis, Ulm University, December 2011.

\bibitem{vandereyckenMatCom}
{\sc B.~Vandereycken}, {\em Low-rank matrix completion by {R}iemannian
  optimization}, SIAM Journal on Optimization, 23 (2013), pp.~1214---1236.

\bibitem{wirtzDeim}
{\sc D.~Wirtz, D.~C. Sorensen, and B.~Haasdonk}, {\em A-posteriori error
  estimation for {DEIM} reduced nonlinear dynamical systems}, Preprint Series,
  Stuttgart Research Centre for Simulation Technology,  (2012).

\end{thebibliography}
\bibliographystyle{siam}
\end{document}